\newtheorem{theorem}{Theorem}[section]
\newtheorem{condition}{Condition}[section]
\newtheorem{corollary}{Corollary}[section]
\newtheorem{example}{Example}[section]
\newtheorem{lemma}{Lemma}[section]
\newtheorem{proposition}{Proposition}[section]
\newtheorem{remark}{Remark}[section]
\newenvironment{proof}[1][Proof]{\noindent\textbf{#1.} }{\ \rule{0.5em}{0.5em}}
\DeclareMathOperator{\bS}{\pmb{S}}
\DeclareMathOperator{\bu}{\pmb{u}}
\DeclareMathOperator{\bv}{\pmb{v}}
\DeclareMathOperator{\bV}{\pmb{V}}
\DeclareMathOperator{\bx}{\pmb{x}}
\DeclareMathOperator{\bX}{\pmb{X}}
\DeclareMathOperator{\bY}{\pmb{Y}}
\DeclareMathOperator{\bZ}{\pmb{Z}}
\DeclareMathOperator{\bM}{\pmb{M}}
\DeclareMathOperator{\bN}{\pmb{N}}
\DeclareMathOperator{\bB}{\pmb{B}}
\DeclareMathOperator{\bC}{\pmb{C}}
\DeclareMathOperator{\bmu}{\pmb{\mu}}
\DeclareMathOperator{\bnu}{\pmb{\nu}}
\DeclareMathOperator{\b1}{\pmb{1}}
\DeclareMathOperator{\bzero}{\pmb{0}}
\title{Least squares for cardinal paired comparisons data}
\author{Rahul Singh$^{1}$, George Iliopoulos$^2$ and Ori Davidov$^3$}
\date{$^1${  Department of Mathematics, Indian Institute of Technology Delhi, Delhi 110016, India}\\
$^2$Department of Statistics and Insurance Science, School of Finance and Statistics, University of Piraeus, 80 Karaoli and Dimitriou str., 18534 Piraeus, Greece\\
$^3$Department of Statistics, University of Haifa, Mount Carmel, Haifa 3498838 Israel\\
{\small
E-mail: \texttt{wrahulsingh@gmail.com} (R Singh), 
\texttt{geh@unipi.gr} (G Iliopoulos), 
\texttt{davidov@stat.haifa.ac.il} (O Davidov)}}
\begin{document}

\maketitle

\begin{abstract}
{  Least square estimators for} graphical models for cardinal paired comparison data with and without covariates are rigorously analyzed. Novel, graph--based, necessary and sufficient conditions that guarantee strong consistency, asymptotic normality and the exponential convergence of the estimated ranks are emphasized. A complete theory for models with covariates is laid out. In particular, conditions under which covariates can be safely omitted from the model are provided. The methodology is employed in the analysis of both finite and infinite sets of ranked items where the case of large sparse comparison graphs is addressed. The proposed methods are explored by simulation and applied to the ranking of teams in the National Basketball Association (NBA).

\medskip
   
{\textit{Key-Words}}: Graph Laplacian, High dimensional inference, Large sample properties, Least square ranking, Linear models, Regression.
\end{abstract}
 
\section{Introduction}

There are many situations in which ranking a set of items is desired. Examples include the evaluation of political candidates, sports, information retrieval, and a variety of modern internet and e-commerce applications (e.g., Cremonesi et al., 2010, Buhlmann and Huber 1963, Govan 2008, Barrow et al. 2013, Xu et al. 2014, Cururingu 2015). The theory and methodology of ranking methods have been extensively studied from a variety of perspectives by researchers in the fields of mathematics (Langville and Mayer 2012), economics and social choice theory (Sen 1986, Slutzki and Volij 2005), machine learning (Ailon et al. 2008, Furnkranz and Hullermeier, 2010), psychology (Davis-Stober 2009, Regenwetter et al 2011) as well as many other disciplines. The focus of most of the existing literature, both old and new, has been procedural rather than inferential. For a statistical perspective on ranking methods, of a somewhat different flavor ours, see the books by Marden (1995) and Alvo and Yu (2014). 

A ranking of a set of items can be inferred from different types of data including scores (Balinski and Lariki, 2010) and ranked lists (Marden, 1995). In particular, paired comparison data (PCD) is obtained if all comparisons involve only two items (David, 1988). Suppose that there are $K$ items labelled $1,2,\ldots,K$ which we would like to rank. Let $Y_{ijk}$ denote the outcome of the $k^{th}$ comparison among items $i$ and $j$. The random variable (RV) $Y_{ijk}$ may be binary, ordinal or cardinal.   In this paper we consider cardinal, i.e., continuous PCD. We assume that the observations $Y_{ijk}$ for $1\leq i\neq j\leq K$ and $k=1,\ldots,n_{ij}$ satisfy
\begin{equation} \label{model.Y_ijl}
Y_{ijk} = \mu_{ij} + \epsilon_{ijk}, 
\end{equation}
where 
$$\mu_{ij}=\mathbb{E}(Y_{ijk})=\mu_i-\mu_j$$ 
and $\epsilon_{ijk}$ are independent zero mean RVs. The parameters $\mu_1,\ldots,\mu_K$ are refered to as the merits or scores. If we further assume that $\epsilon_{ijk}$ are IID $\mathcal{N}(0,\sigma^{2})$ RVs then \eqref{model.Y_ijl} is a specially parameterized homoscedastic normal linear model. Clearly this parametrization results in a global ranking referred to as a total linear ranking, cf. Oliveira et al. (2018) and the references therein, in which the merits quantify the degree of preference. We refer to such models as linear models on graphs or more compactly as graph--LMs. It is useful to note that we may view $Y_{ijk}$ as the outcome of the $k^{th}$ ``game'' between team $i$ and team $j$ where team $i$ scored $a$ points while team $j$ scored $b$ points. Then naturally $Y_{ijk}=a-b$ and $Y_{jik} = b-a$ and thus $Y_{ijk}=-Y_{jik}$. Each item is viewed as a vertex of a graph and the set of vertices is denoted by $\mathcal{V}=\{1,\ldots,K\}$. Let $n_{ij}=n_{ji}$ denote the number of comparisons between item $i$ and item $j$. We do not assume that $n_{ij}>0$ for all or even most pairs $(i,j)$. If $n_{ij} > 0$, then items $i$ and $j$ are connected by an edge denoted by $(i,j)$. Let $\mathcal{E}$ denote the set of edges. The structure $\mathcal{G}=(\mathcal{V},\mathcal{E})$ is called a graph (Gould, 2012). We note that some authors consider each individual comparison between items $i$ and $j$ as an edge. Thus $n_{ij} > 1$ would imply multiple edges connecting items $i$ and $j$. Such a structure is sometimes referred to as a multigraph. The distinction between a graph and a multigraph is inconsequential from our perspective. With each edge $(i,j)$ we associate a random sample $\mathcal{Y}_{ij} = (Y_{ij1},\ldots,Y_{ijn_{ij}})$ of $n_{ij}$ comparisons; the set of all samples is denoted by $\mathcal{Y}$. We call the pair 
\begin{align}
\label{graph:model:def}
    (\mathcal{G},\mathcal{Y})
\end{align}
a pairwise comparison graph (PCG). 

Least square estimators for the model (\ref{model.Y_ijl})  have a long history and have been studied in diverse fields (cf. Mosteller 1951, Kwiesielewicz 1996, Csato 2015). Surprisingly, despite the simplicity of model \eqref{model.Y_ijl}, the extensive associated literature, including the many papers on least square estimators, the statistical properties of the least squares estimator have not been properly studied. This paper fills this gap by developing a graph--based statistical theory necessary for conducting inference on the vector of merits. As shown model \eqref{model.Y_ijl}, as well as its extensions, exhibit unique features which complicate and enliven their analysis. In particular our contributions are: 

\begin{enumerate}

\item  The large sample properties of the least square estimator for finite comparison graphs are established using novel necessary and sufficient graph--based conditions. The relationship between the topology of the comparison graph and the quality of the resulting estimators is emphasized. A strong law for the merit parameter is established and the derived rankings are shown to converge at an exponential rate. 

\item Although some papers on graph--LMs with covariates exist, e.g., Harville (2003) who considered a single binary covariate, to the best of our knowledge this is the first paper to rigorously examine such models. In particular, conditions under which models with covariates can be analyzed are provided and analyzed. Situations in which covariates can be omitted from the model are emphasized. 

\item Finally, an analysis of graph-LMs when the number of items compared grows to infinity is provided. It is shown that up to a logarithmic factor connectivity is sufficient for ensuring that asymptotic normality and uniform consistency.

\end{enumerate}

This paper focuses on cardinal PCD satisfying \eqref{model.Y_ijl} with $\mu_{ij}=\mu_i-\mu_j$. More generally, the outcomes $Y_{ijk}$ may be binary or ordinal. In binary PCD $Y_{ijk}\in\{0,1\}$ and the probability that item $i$ is preferred over item $j$ is $\mathbb{P}(Y_{ijk}=1)=F(\lambda_i-\lambda_j)$ where $F$ is a distribution function, symmetric about $0$, i.e., $F(x)+F(-x)=1$ for all $x\in\mathbb{R}$, and $\lambda_1,\ldots,\lambda_K$ are the merit parameters, cf. Oliveira et al. (2018). Specifically, when $F$ is the standard logistic distribution we obtain the relation
\begin{equation} \label{Eq.BT.model}
\mathbb{P}(Y_{ijk}=1) = \frac{\exp(\lambda_i-\lambda_j)}{1+\exp(\lambda_i-\lambda_j)} = \frac{\mu_i}{\mu_i+\mu_j},   
\end{equation}
where $\mu_i=\exp(\lambda_i)$ for all $i=1,\ldots,K$. The $\mu's$ are also referred to as the merits. This is the well--known Bradley--Terry model (BT) (Bradley and Terry 1952, Hunter 2004, Cattelan 2012) whereas when $F$ is the standard normal DF we obtain the so--called Thurstone model (Thurstone 1927, Böckenholt, 2006). Both of these are instances of discrete choice models (Luce 1959, Bierlaire 1998) with a binary choice. The relationship between models \eqref{model.Y_ijl} and \eqref{Eq.BT.model} will be discussed in a comprehensive manner in Section 8 of this paper.
\medskip

The paper is organized as follows. The least square estimator is introduced and its elementary properties are explored in Section 2. Section 3 investigates the large sample properties of the LSE. Section 4 and 5 extend model \eqref{model.Y_ijl} in two directions. In Section 4 models with covariates are analyzed and in Section 5 the number of items compared is allowed to grow to infinity. Simulation results are presented in Section 6 and an illustrative example is discussed in Section 7. We conclude in Section 8 with a brief summary and discussion. In particular, we indicate how the results from Sections 2--7 extend to binary models such as \eqref{Eq.BT.model} and beyond. All proofs are collected in the Supplement \eqref{section:supplement} and R-scripts for reproducing all experiments are available at \url{https://github.com/rahulstats/GLM-I}.

\section{Least squares on graphs: estimation and elementary properties}

Model (\ref{model.Y_ijl}) is the simplest possible graphical linear model (graph--LM) in which a global linear ranking is assumed. Although this model has been widely studied in diverse fields (cf. Mosteller 1951, Kwiesielewicz 1996, Csato 2015) it is surprising that a variety of very basic statistical questions have not been adequatly addressed. 

Consider the objective function
\begin{equation} \label{Q(mu)}
Q( \bmu ) =\sum_{1\leq i<j\leq
K}\sum_{k=1}^{n_{ij}}(Y_{ijk}-\left( \mu _{i}-\mu _{j}\right) )^{2},
\end{equation}
which is nothing but a sum of squares over the PCG \eqref{graph:model:def} where $\bmu^\top=\left(\mu _{1},\ldots ,\mu_{K}\right)$ is the vector of merits. Clearly, if for all $(i,j)\in\mathcal{E}$ and  $k=1,\ldots ,n_{ij}$ we have $Y_{ijk}\sim \mathcal{N}( \mu _{i}-\mu_{j},\sigma ^{2})$ then \eqref{Q(mu)} is the kernel of the likelihood for a normal linear model. Further note that $Q(\bmu) =Q(\pmb{\mu}+c\pmb{1})$ for any $c\in \mathbb{R}$ where $\b1 = (1,\ldots,1)^\top\in\mathbb{R}^K$ and consequently $\bmu$ is not estimable unless a constraint is imposed. Therefore, we define the least squares estimator (LSE) by
\begin{equation}  \label{def.mu.hat}
\widehat{\bmu}=\arg \min \{Q( \bmu) :
\bv^{\top}\bmu=0\}  
\end{equation}
where $\bv$ is a preselected vector satisfying $\b1^{\top}\bv\neq 0$, i.e., $\bv$ is not a contrast, cf. Remark \ref{rem.on.v}. We require the following notations. For any $i\neq j$ let $S_{ij}=\sum_{k=1}^{n_{ij}}Y_{ijk}$ and $S_{i}=\sum_{j\neq i}S_{ij}.$ Further define,
\begin{equation*}
\bS=(S_{1},\ldots ,S_{K})^{\top},
\end{equation*}
and { let $\bN=(N_{ij})$ be the $K\times K$ Laplacian of the graph $\mathcal{G}$. Its elements are
\begin{equation*}
N_{ij}=\left\{ 
\begin{array}{ccc}
\sum_{j}n_{ij} & \text{if} & i=j \\ 
-n_{ij} & \text{if} & i\neq j
\end{array}
\right.,
\end{equation*}}
i.e., the $i^{th}$ diagonal element of the Laplacian is $n_{i}=\sum_{j}n_{ij}$ which is the degree of vertex $i$. The total number of paired comparisons is denoted by {  $n = \sum_{1 \leq i < j \leq K} n_{ij}$.} Laplacians play an important role in graph theory (Bapat 2010). It is easy to verify that they are symmetric positive semidefinite with rows and columns that sum to $0$. The vertices $i$ and $j$ are said to be connected if there exists a sequence of edges  $(i,v_1),(v_1,v_2),\ldots,(v_{l-1},v_l),(v_l,j)$. Such a sequence is called a path. If all pairs of vertices are connected then $\mathcal{G}$ is a connected graph. 
 
\begin{theorem} \label{Thm-UniqueE}
A unique solution to \eqref{def.mu.hat} exists if and only if $\mathcal{G}$ is connected in which case the LSE is given by
\begin{equation} \label{mu.hat}
\widehat{\bmu}=\bN^{+}\bS-\frac{\bv^\top\bN^{+}\bS}{\bv^\top\b1}\b1. 
\end{equation}
where $\bN^{+}$ is the Moore--Penrose inverse of $\bN$.
\end{theorem}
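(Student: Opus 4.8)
The plan is to analyze the constrained minimization in \eqref{def.mu.hat} via the normal equations and the spectral properties of the Laplacian $\bN$. First I would compute the gradient of $Q$ in \eqref{Q(mu)}. Writing $Q(\bmu)$ out and differentiating with respect to $\mu_i$, the stationarity condition becomes $n_i\mu_i - \sum_{j\neq i} n_{ij}\mu_j = S_i$ for each $i$, i.e. the system $\bN\bmu = \bS$. Since $Q$ is convex (it is a sum of squares of linear forms in $\bmu$), every solution of $\bN\bmu=\bS$ is a global unconstrained minimizer, and the set of such minimizers is nonempty iff $\bS\in\mathrm{range}(\bN)$. The key linear-algebra fact I would invoke is the standard one about graph Laplacians (Bapat 2010): $\bN$ is symmetric positive semidefinite, $\b1\in\ker\bN$, and $\dim\ker\bN$ equals the number of connected components of $\mathcal{G}$; in particular $\ker\bN=\mathrm{span}\{\b1\}$ exactly when $\mathcal{G}$ is connected.

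Next I would handle the two directions of the ``if and only if''. If $\mathcal{G}$ is disconnected, then $\ker\bN$ has dimension $\ge 2$, so the unconstrained minimizer set is either empty or an affine subspace of dimension $\ge 2$; intersecting with the single hyperplane $\{\bv^\top\bmu=0\}$ leaves, when nonempty, an affine set of dimension $\ge 1$, so the solution is not unique — hence connectivity is necessary. Conversely, suppose $\mathcal{G}$ is connected. I first need $\bS\in\mathrm{range}(\bN)$: since $\mathrm{range}(\bN)=\ker(\bN)^\perp=\b1^\perp$ by symmetry, it suffices to check $\b1^\top\bS=0$. This follows from the antisymmetry $Y_{ijk}=-Y_{jik}$ (equivalently $S_{ij}=-S_{ji}$), which gives $\sum_i S_i=\sum_i\sum_{j\neq i}S_{ij}=0$. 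So the general unconstrained minimizer is $\bmu = \bN^{+}\bS + c\b1$ for $c\in\mathbb{R}$, using the standard property $\bN^{+}\bS$ is the minimum-norm solution and $\bN^{+}\bS\in\b1^\perp$. Finally I impose the constraint $\bv^\top\bmu=0$: substituting gives $\bv^\top\bN^{+}\bS + c\,\bv^\top\b1=0$, and since $\bv^\top\b1\neq 0$ by assumption this determines $c=-\bv^\top\bN^{+}\bS/(\bv^\top\b1)$ uniquely, yielding \eqref{mu.hat} and establishing both existence and uniqueness.

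The only real subtlety — the main obstacle, such as it is — is being careful that the existence of an unconstrained minimizer is not automatic: one genuinely must verify $\bS\in\mathrm{range}(\bN)$, and this is where the structural assumption $Y_{ijk}=-Y_{jik}$ enters; without it the normal equations could be inconsistent and no LSE would exist regardless of connectivity. Everything else is routine: convexity reduces the optimization to the linear system $\bN\bmu=\bS$, the component-count theorem for Laplacians pins down $\ker\bN$, and the Moore–Penrose inverse packages the minimum-norm solution so that the affine constraint can be solved in one line. I would also remark in passing that the formula \eqref{mu.hat} is independent of which particular representative $\bN^{+}\bS$ one picks among solutions of $\bN\bmu=\bS$, since changing it by $c\b1$ changes $\widehat{\bmu}$ by $c\b1 - c(\bv^\top\b1/\bv^\top\b1)\b1 = \bzero$, which confirms internal consistency of the statement.
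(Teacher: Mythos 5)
Your proposal is correct and follows essentially the same route as the paper's proof: reduce to the normal equations $\bN\bmu=\bS$ (which are consistent because $\b1^{\top}\bS=0$ by the antisymmetry $S_{ij}=-S_{ji}$), identify the solution set as $\bN^{+}\bS+\ker(\bN)$, and use $\bv^{\top}\b1\neq 0$ to pin down the constant along $\b1$ when $\mathcal{G}$ is connected. The only cosmetic difference is in the disconnected case, where the paper explicitly constructs a family of minimizers by shifting each connected component by $\alpha_c\b1_c$ while you instead invoke the standard fact that $\dim\ker(\bN)$ equals the number of components and count dimensions; these are the same argument, and your intersection with the constraint hyperplane is in fact always nonempty since $\b1\in\ker(\bN)$ and $\bv^{\top}\b1\neq 0$, so your ``when nonempty'' hedge is not needed.
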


The proof of Theorem \ref{Thm-UniqueE} shows that if $\mathcal{G}$ is not connected then there are at least two distinct collections of items $I$ and $J$, say, that can not be directly compared, i.e., the difference $\mu _{i}-\mu _{j}$ is not estimable whenever $i\in I$ and $j\in J$. Surprisingly, formula \eqref{mu.hat} has not been previously derived; however it was mentioned by Ghosh and Davidov (2020) who studied alternatives to the least square estimator. Typically, it has been implicitly assumed that $\bv=\b1$, e.g., Csato (2015), in which case (\ref{mu.hat}) reduces to $\bN^{+}\bS$. It is also worth noting that most papers on the LSE end here, i.e., they make no attempt at studying the statistical properties of \eqref{mu.hat}. Further note that
\begin{equation*}
\left\Vert \widehat{\bmu}\right\Vert _{2}^{2}=\bS^\top
\bN^{+}\bN^{+}\bS-2\frac{\bv^\top
\bN^{+}\bS}{\bv^\top\b1}
\bS^\top\bN^{+}\b1+K\left(\frac{\bv^\top
\bN^{+}\bS}{\bv^\top\b1}\right)^{2}.
\end{equation*}
The second term on the right--hand--side above equals zero for all $\bv$; the third term is zero if and only if $\bv$ is in the kernel of $\bN^{+}.$ However, if $\mathcal{G}$ is connected then $\mathrm{\ker }(\bN^{+})=\mathrm{\ker }(\bN)={\rm span}(\pmb{1})$ (Bapat, 2010). Thus, $\bN^{+}\bS$ is the minimum norm solution to the unconstrained minimization problem given in (\ref{Q(mu)}). { Next, we investigate the relationship among estimators obtained under different constraints. 
\begin{proposition} \label{Prop.LSE.uv}
Let $\widehat{\bmu}(\bv)=\arg \min \{Q\left(\bmu\right) :\bv^\top\bmu=0\}$ and $\widehat{\bmu}(\pmb{u})=\arg \min \{Q\left( \pmb{
\mu }\right) :\pmb{u}^\top\bmu=0\}$ where $\bv^{\top}\b1\neq 0$ and $\pmb{u}^{\top}\b1\neq 0$. Then
\begin{align*}
\widehat{\bmu}(\pmb{u}) = \pmb{C}_{u}\pmb{C}_{v}^{+}\widehat{\bmu}(\pmb{v})   
\end{align*} where $\pmb{C}_{\pmb{u}}=\pmb{I}-\b1\pmb{u}^{\top}/(\pmb{u}^{\top}\b1)$ and $\pmb{C}_{v}^{+}=\pmb{I}-\pmb{v}\pmb{v}^{\top}/(\pmb{v}^{\top}\pmb{v})$ are idempotent matrices and $\pmb{C}_{v}^{+}$ is the Moore-Penrose inverse of $\pmb{C}_{\pmb{v}}$.
\end{proposition}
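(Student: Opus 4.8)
\medskip
\noindent\textbf{Proof proposal.} The plan is to derive this as a short consequence of Theorem~\ref{Thm-UniqueE} together with the translation invariance $Q(\bmu)=Q(\bmu+c\b1)$, so I do not expect a serious obstacle: the only step that needs care is the bookkeeping with the non-symmetric (oblique) operator $\pmb{C}_{\pmb{v}}=\pmb{I}-\b1\bv^{\top}/(\bv^{\top}\b1)$ and checking that the extra factor $\pmb{C}_{v}^{+}$ is absorbed harmlessly. First I would rewrite Theorem~\ref{Thm-UniqueE} compactly: since $\pmb{u}^{\top}\b1\neq0$ and $\bv^{\top}\b1\neq0$ the operators $\pmb{C}_{\pmb{u}},\pmb{C}_{\pmb{v}}$ are well defined and $\widehat{\bmu}(\bv)=\pmb{C}_{\pmb{v}}\bN^{+}\bS$, $\widehat{\bmu}(\pmb{u})=\pmb{C}_{\pmb{u}}\bN^{+}\bS$. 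Next I would dispatch the purely algebraic assertions by direct multiplication: $\pmb{C}_{\pmb{u}}^{2}=\pmb{C}_{\pmb{u}}$ and $(\pmb{C}_{v}^{+})^{2}=\pmb{C}_{v}^{+}$ are immediate, and the identities I actually need are $\pmb{C}_{\pmb{u}}\b1=\bzero$ (because $\pmb{u}^{\top}\b1/(\pmb{u}^{\top}\b1)=1$) together with $\pmb{C}_{v}^{+}\pmb{C}_{\pmb{v}}=\pmb{C}_{\pmb{v}}$ and $\pmb{C}_{\pmb{v}}\pmb{C}_{v}^{+}=\pmb{C}_{v}^{+}$, each a one-line cancellation using $\bv^{\top}\b1\neq0$; these in turn give the generalized-inverse relations $\pmb{C}_{\pmb{v}}\pmb{C}_{v}^{+}\pmb{C}_{\pmb{v}}=\pmb{C}_{\pmb{v}}$ and $\pmb{C}_{v}^{+}\pmb{C}_{\pmb{v}}\pmb{C}_{v}^{+}=\pmb{C}_{v}^{+}$.

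Then I would assemble the claim. From $\widehat{\bmu}(\bv)=\pmb{C}_{\pmb{v}}\bN^{+}\bS$ and $\pmb{C}_{v}^{+}\pmb{C}_{\pmb{v}}=\pmb{C}_{\pmb{v}}$,
\[
\pmb{C}_{\pmb{u}}\pmb{C}_{v}^{+}\widehat{\bmu}(\bv)=\pmb{C}_{\pmb{u}}\pmb{C}_{v}^{+}\pmb{C}_{\pmb{v}}\bN^{+}\bS=\pmb{C}_{\pmb{u}}\pmb{C}_{\pmb{v}}\bN^{+}\bS,
\]
and since $\pmb{C}_{\pmb{u}}\pmb{C}_{\pmb{v}}=\pmb{C}_{\pmb{u}}-(\pmb{C}_{\pmb{u}}\b1)\bv^{\top}/(\bv^{\top}\b1)=\pmb{C}_{\pmb{u}}$ (the second term vanishes by $\pmb{C}_{\pmb{u}}\b1=\bzero$), this equals $\pmb{C}_{\pmb{u}}\bN^{+}\bS=\widehat{\bmu}(\pmb{u})$, which is the assertion.

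Finally, as a sanity check I would record a coordinate-free route that bypasses the explicit formula: by $Q(\bmu)=Q(\bmu+c\b1)$ the set of unconstrained minimizers of $Q$ is the line $\{\widehat{\bmu}(\bv)+c\b1:c\in\mathbb{R}\}$, and (by connectedness, exactly as in Theorem~\ref{Thm-UniqueE}) $\widehat{\bmu}(\pmb{u})$ is its unique member satisfying $\pmb{u}^{\top}\bmu=0$; solving $\pmb{u}^{\top}(\widehat{\bmu}(\bv)+c\b1)=0$ gives $c=-\pmb{u}^{\top}\widehat{\bmu}(\bv)/(\pmb{u}^{\top}\b1)$, i.e., $\widehat{\bmu}(\pmb{u})=\pmb{C}_{\pmb{u}}\widehat{\bmu}(\bv)$, and the factor $\pmb{C}_{v}^{+}$ may be inserted for free because the defining constraint places $\widehat{\bmu}(\bv)$ in $\{x:\bv^{\top}x=0\}=\operatorname{range}(\pmb{C}_{v}^{+})$, on which $\pmb{C}_{v}^{+}$ acts as the identity. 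In either route the only thing to watch is that $\bv^{\top}\b1\neq0$ and $\pmb{u}^{\top}\b1\neq0$ are invoked precisely where the denominators appear; everything else is routine matrix algebra.
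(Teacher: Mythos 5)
Your proposal is correct and follows essentially the same route as the paper's proof: both reduce matters to the representation $\widehat{\bmu}(\pmb{u})=\pmb{C}_{\pmb{u}}\bN^{+}\bS$ from Theorem \ref{Thm-UniqueE} and then verify the matrix identity $\pmb{C}_{\pmb{u}}\pmb{C}_{v}^{+}\pmb{C}_{\pmb{v}}=\pmb{C}_{\pmb{u}}$, with your use of $\pmb{C}_{\pmb{u}}\b1=\bzero$ merely streamlining the rank--one cancellations that the paper carries out by direct expansion. The coordinate--free sanity check is a nice addition but not a different proof; note only that, like the paper, you verify the $\{1,2\}$--inverse relations rather than all four Penrose conditions, so the assertion that $\pmb{C}_{v}^{+}$ is the Moore--Penrose inverse of $\pmb{C}_{\pmb{v}}$ (which in fact requires $(\pmb{C}_{v}^{+}\pmb{C}_{\pmb{v}})^{\top}=\pmb{C}_{v}^{+}\pmb{C}_{\pmb{v}}$ and is delicate when $\bv$ is not proportional to $\b1$) is not fully established in either argument, though this is immaterial to the displayed identity.
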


In particular, Proposition \ref{Prop.LSE.uv} shows that for any $\pmb{u}$ such that $\pmb{u}^{\top}\b1\neq 0$ we have $\widehat{\bmu}(\pmb{u}) = \pmb{C}_{\pmb{u}}(\bN^{+}\bS)$ where $\bN^{+}\bS$ is simply $\widehat{\bmu}(\pmb{1})$. Therefore if $\pmb{u}=\pmb{e}_1$, where $\pmb{e}_{1}$ is first standard basis vector for $\mathbb{R}^K$, or equivalently if we impose the constraint that $\mu _{1}=0$, then $\pmb{C}_{\pmb{e}_1}=\pmb{I}-\pmb{1}\pmb{e}_{1}^{\top}$ and the resulting LSE is $(0,(\bN_{2}^{+}-\bN_{1}^{+})\bS,\ldots ,(\bN_{K}^{+}-\bN_{1}^{+}) \bS) ^\top$ where $\bN_{j}^{+}$ is the $j^{th}$ row of $\bN^{+}$; if it is further assumed that $n_{ij}=m$ for all $i$ and $j$ then the latter simplifies to ${(K-1)}{(mK^2)^{-1}}(0,S_{2}-S_{1},\ldots ,S_{K}-S_{1})^{\top}$.} We emphasize that the choice of the constraint in (\ref{def.mu.hat}) is not of great importance since the estimated values of $\mu_{ij}$ are invariant with respect to the chosen constraint: 

\begin{corollary}
\label{Cor-EstDiff}
For any $\pmb{v}$ and $\pmb{u}$ defined in Proposition \ref{Prop.LSE.uv} and any $1\leq i,j\leq K$ we have
\begin{equation*}
\widehat{\mu }_{i}(\bv)-\widehat{\mu }_{j}(\bv)=
\widehat{\mu }_{i}(\pmb{u})-\widehat{\mu }_{j}(\pmb{u}).
\end{equation*}
\end{corollary}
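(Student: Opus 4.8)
The plan is to show that the two constrained estimators $\widehat{\bmu}(\bv)$ and $\widehat{\bmu}(\bu)$ can differ only by a scalar multiple of the all--ones vector $\b1$, after which every estimated contrast $\widehat{\mu}_i-\widehat{\mu}_j$ is automatically the same under either constraint, since such contrasts annihilate $\b1$.

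First I would use that, since the statement presupposes that both $\widehat{\bmu}(\bv)$ and $\widehat{\bmu}(\bu)$ are defined, $\mathcal{G}$ is connected and the closed form \eqref{mu.hat} of Theorem~\ref{Thm-UniqueE} applies to each. Subtracting the two expressions, the common term $\bN^{+}\bS$ cancels and
\[
\widehat{\bmu}(\bv)-\widehat{\bmu}(\bu)=\left(\frac{\bu^{\top}\bN^{+}\bS}{\bu^{\top}\b1}-\frac{\bv^{\top}\bN^{+}\bS}{\bv^{\top}\b1}\right)\b1=:c\,\b1,
\]
a scalar multiple of $\b1$ whose coefficient $c$ does not depend on any coordinate index. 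An alternative route, which reuses Proposition~\ref{Prop.LSE.uv} instead of \eqref{mu.hat}: because $\bv^{\top}\widehat{\bmu}(\bv)=0$ by definition of $\widehat{\bmu}(\bv)$, one has $\bC_{\bv}^{+}\widehat{\bmu}(\bv)=\widehat{\bmu}(\bv)-\bv\,\bv^{\top}\widehat{\bmu}(\bv)/(\bv^{\top}\bv)=\widehat{\bmu}(\bv)$, whence $\widehat{\bmu}(\bu)=\bC_{\bu}\widehat{\bmu}(\bv)=\widehat{\bmu}(\bv)-\big(\bu^{\top}\widehat{\bmu}(\bv)/(\bu^{\top}\b1)\big)\b1$; once more the two estimators differ by a multiple of $\b1$.

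Given $\widehat{\bmu}(\bu)=\widehat{\bmu}(\bv)-c\,\b1$, the conclusion is immediate: for every pair $i,j$ we have $\widehat{\mu}_i(\bu)=\widehat{\mu}_i(\bv)-c$ and $\widehat{\mu}_j(\bu)=\widehat{\mu}_j(\bv)-c$, so the constant cancels in the difference and $\widehat{\mu}_i(\bu)-\widehat{\mu}_j(\bu)=\widehat{\mu}_i(\bv)-\widehat{\mu}_j(\bv)$. There is no genuine obstacle in the argument; the only point requiring care is the appeal to connectedness, via Theorem~\ref{Thm-UniqueE}, which simultaneously guarantees that both estimators exist and that the ambiguity between them lies entirely in $\mathrm{span}(\b1)$ --- precisely the direction along which all differences $\mu_i-\mu_j$ are invariant.
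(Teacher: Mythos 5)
Your proof is correct and is essentially the paper's own argument: both rest on the closed form \eqref{mu.hat}, observing that the constraint vector enters only through a term proportional to $\b1$, which is annihilated by the contrast $\pmb{e}_i-\pmb{e}_j$. The paper simply applies $(\pmb{e}_i-\pmb{e}_j)^{\top}$ to \eqref{mu.hat} directly rather than first subtracting the two estimators, but the computation is the same.
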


Therefore, for simplicity and unless explicitly stated otherwise, we henceforth assume that $\bv=\b1$. 

\begin{remark}
Some authors consider a weighted objective function, i.e., $Q_{\pmb{w}}\left( \bmu\right)=\sum_{i<j}\sum_{k=1}^{n_{ij}}\\ w_{ij}(Y_{ijk}-\left( \mu _{i}-\mu _{j}\right))^{2}$. The weighted model is used to handle heteroscedasticity, i.e., different comparisons carry different levels of information. The analysis of the weighted and unweighted sum of squares is similar. The only difference is that in equation \eqref{mu.hat} we replace $\bS$ by $\bS_w$ and $\bN$ by $\bN_{w}$  where the $j^{th}$ element of  $\bS_w$ is $\sum_{i\neq j}w_{ji}S_{ji}$ and $\bN_{w}$ is the $K\times K$ weighted Laplacian with elements $\sum_{j}w_{ij}n_{ij}$ when $i=j$ and $-w_{ij}n_{ij}$ otherwise. Since the analysis of the weighted and unweighted cases is similar we shall henceforth consider only the unweighted model.  
\end{remark}

We will assume:

\begin{condition}
\label{iid.errors}
The errors $\epsilon_{ijk}$ are IID with zero mean and a finite variance $\sigma^2$.   
\end{condition}

Next, we address the statistical properties of (\ref{mu.hat}). Since $\mathbb{E}(Y_{ijk})=\mu _{i}-\mu _{j}$ it follows that
\begin{eqnarray*}
\mathbb{E}(\bS) &=&(\mathbb{E}(S_{1}),\ldots ,\mathbb{E}
(S_{K}))^\top=(\mathbb{E}(\sum_{j\neq 1}\sum_{k=1}^{n_{1j}}Y_{1jk}),\ldots ,
\mathbb{E}(\sum_{j\neq K}\sum_{k=1}^{n_{Kj}}Y_{Kjk}))^\top \\
&=&(\sum_{j\neq 1}n_{1j}\left( \mu _{1}-\mu _{j}\right) ,\ldots ,\sum_{j\neq K}n_{Kj}\left( \mu _{K}-\mu _{j}\right) )^\top=\bN\bmu
\end{eqnarray*}
so $\mathbb{E}(\widehat{\bmu})=\bN^{+}\pmb{N\mu }$. The matrix $\bN^{+}\bN$ is symmetric and idempotent therefore it is a projection onto\textbf{ }$\mathrm{im}\left(\bN\right)$. Moreover, since $\mathrm{\ker }\left( \bN\right) =\pmb{1}$ and $\pmb{1}^\top\bmu=0$ by assumption, it follows that $\bmu\in \mathrm{im}\left( \bN\right)$ and therefore $\bN^{+}\pmb{N\mu }=\pmb{\mu }$. Hence $\widehat{\bmu}$ is unbiased. Further note that under Condition \ref{iid.errors}, $\mathbb{V} ar(S_{i})=\sigma ^{2}n_{i}$ and $\mathbb{C}ov(S_{i},S_{j})=-\sigma
^{2}n_{ij} $ so
\begin{equation}
\mathbb{V}ar(\widehat{\bmu})=\mathbb{V}ar(\bN^{+}
\bS)=\bN^{+}(\sigma ^{2}\bN)\bN
^{+}=\sigma ^{2}\bN^{+}.  \label{var.mu.hat}
\end{equation}
It is clear from \eqref{var.mu.hat} that the precision of the LSE is a function of $\bN^{+}$ and therefore depends on the properties of the comparison graph {  a fact further reinforced by  observing that the mean squared error of $\widehat{\bmu}_{n}$ is
\begin{eqnarray*}
\mathbb{E}(\left\Vert \widehat{\bmu}_{n}-\bmu%
\right\Vert _{2}^{2}) &=&\mathbb{E}(\left\Vert \bN^{+}\bS -\bN^{+}\pmb{N\mu }\right\Vert _{2}^{2})=\mathbb{E(}
\left\Vert \bN^{+}(\bS-\pmb{N\mu })\right\Vert
_{2}^{2}) \\
&=&\mathbb{E(}(\bS-\pmb{N\mu })^\top(\bN^{+}
\bN^{+})(\bS-\pmb{N\mu })) = \mathrm{trace}(\bN^{+}\bN^{+}(\sigma^2\bN))=\sigma^2\mathrm{trace}(\bN^{+}).
\end{eqnarray*}
}

The following example shows that even when the total number of paired comparisons is fixed, the structure of the graph $\mathcal{G}$ has a powerful effect on the precision of the estimators.

\begin{example} \label{example:graphs:Laplacian:mpinv}
We evaluate the precision of the LSE on several types of graphs. These include the complete, cycle, path, star, wheel and (knockout) tournament graphs as displayed in Figure \ref{graph:topology}. In the context of the tournament graph we assume, without any loss of generality, that the item with the smaller index moves up the tournament graph. 

\medskip
\centerline{Figure \ref{graph:topology} Comes Here}
\medskip

In a complete graph with $K$ items there are $K(K-1)/2$ paired comparisons whereas in a path graph there are only $K-1$ paired comparisons. Therefore a meaningful comparison among the graphs requires that all Laplacians be scaled so that the total number of paired comparisons, which is equal to $\rm{trace}(\bN)/2$, is common to all graphs. For convenience we choose to scale up to the complete graph. Since, the standard tournament graph has $2^m$ vertices, where $m\in\mathbb{N}$, we choose $K\in \{4,8,16,32,64,128,256,512\}$. We are interested in the overall precision of $\widehat{\bmu}$ which can be assessed, cf., Atkinson et al. (2007), by various functions of the (positive) eigenvalues of $\bN^{+}$ such as their sum, product, largest and smallest value. In Figure \ref{fig:sum_ev} and \ref{fig:max_ev}, we display the sum of the eigenvalues and the maximum eigenvalues of the complete, cycle, path, star, wheel and (knockout) tournament graphs as a function of the number of items we are comparing. Larger values indicate higher variability. 

\medskip
\centerline{Figure \ref{fig:precision_analysis1} Comes Here}
\medskip

Each type of graph is associated with a line in Figures \ref{fig:sum_ev} and \ref{fig:max_ev}. The top line corresponds to the path graph on which the estimators are most variable and the bottom line is associated with the complete graph which generates the most precise estimators; the performances of (the scaled) star, wheel and knockout graphs are close to those of the complete graph. Thus, graphs with high connectivity (Cvetkovi\'c et al. 2009) result in more precise estimators. Among the displayed graphs, the path graph shows the fastest increase in variability as a function of $K$. This is due to its low connectivity. As an example consider comparing the difference between the first and last items in a path graph. Obviously these items are not directly compared. In fact,
$$ \mathbb{V}ar(\widehat{\mu}_1-\widehat{\mu}_K)=\mathbb{V}ar((\widehat{\mu}_1-\widehat{\mu}_2)+(\widehat{\mu}_2-\widehat{\mu}_3)+\cdots +(\widehat{\mu}_{K-1}-\widehat{\mu}_K)),$$
i.e., the variance of their difference is compounded over a long path. In high connectivity graphs the paths between any two items is short and therefore comparisons among items are more precise. The performance of the knockout tournament graph is better than the path and cycle graph. It is also interesting to note that all lines in Figures \ref{fig:sum_ev} and \ref{fig:max_ev} are increasing except for the line associated with the complete graph. Finally, in symmetric graphs in which all vertices have the same degree, the components of $\bmu$ are estimated with the same variance, e.g., complete graph, cycle graph. If the vertices of the graph have different degrees then the components of $\bmu$ are estimated with different variances. The merits of central items (cf., Cvetkovi\'c et al. 2009) are also estimated with more precision. For example, the merit of the item in the center of the path graph is estimated more precisely than items at the periphery of the graph. Similarly for the central item of the star graph and the wheel graph. 
\end{example}

We introduce further notations. The graph $\mathcal{G}_{2}$ is a subgraph of the graph $\mathcal{G}_{1}$, denoted $\mathcal{G}_{1}\supseteq \mathcal{G}_{2}$, if $\mathcal{V}_{1}\supseteq \mathcal{V}_{2}$ and $\mathcal{E}_{1}\supseteq \mathcal{E}_{2}$. The latter notion extends to PCG where $(\mathcal{G}_{1},\mathcal{Y}_{1})\supseteq (\mathcal{G}_{2},\mathcal{Y}_{2})$ whenever $\mathcal{G}_{1}\supseteq \mathcal{G}_{2}$ and $\mathcal{Y}_{1}\supseteq \mathcal{Y}_{2}$, {  i.e., if $Y_{ijk}\in\mathcal{Y}_{2}$ then $Y_{ijk}\in\mathcal{Y}_{1}$}. A matrix $\bV_1$ is said to be smaller than a matrix $\bV_2$, in the Loewner order, if $\bV_2-\bV_1$ is non--negative definite. This relationship is denoted by $\bV_1\preceq\bV_2$. For more on the Loewner order see Pukelsheim (2006). In particular if $\bV_1$ and $\bV_2$ are the variances of two (asymptotically) unbiased estimators (of the same quantity) then $\bV_1\preceq\bV_2$ implies that the estimator associated with $\bV_1$ is more efficient than the estimator associated with $\bV_2$. This means, as an example,  that the volume of the confidence ellipsoid associated with $\bV_1$ is smaller than the volume of the confidence ellipsoid associated with $\bV_2$. 

\begin{proposition}\label{prop1}
Let $(\mathcal{G}_1,\mathcal{Y}_1)$ and $(\mathcal{G}_2,\mathcal{Y}_2)$ be comparison graphs {  on the same set of vertices} with unique LSEs $\widehat{\bmu}_1$ and $\widehat{\bmu}_2$ whose variances are $\bV_{1}$ and $\bV_{2}$ respectively. If $(\mathcal{G}_1,\mathcal{Y}_1) \supseteq (\mathcal{G}_2,\mathcal{Y}_2)$ then 
\begin{equation}
\bV_{1} \preceq \bV_{2}.
\end{equation}
Moreover, for any continuously differentiable function $\pmb{\Phi}(\bmu)$ we have $\bV_{1}^{\pmb{\Phi}} \preceq \bV_{2}^{\pmb{\Phi}}$ where $\bV_{1}^{\pmb{\Phi}}$ and $\bV_{2}^{\pmb{\Phi}}$ are the asymptotic variances of $\pmb{\Phi}(\widehat{\bmu}_1)$ and $\pmb{\Phi}(\widehat{\bmu}_2)$ respectively. 
\end{proposition}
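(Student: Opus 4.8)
The plan is to reduce the first claim to an order comparison of the Laplacians $\bN_1,\bN_2$ of $\mathcal{G}_1,\mathcal{G}_2$ and then use connectivity to pass to Moore--Penrose inverses. By \eqref{var.mu.hat} we have $\bV_\ell=\sigma^2\bN_\ell^{+}$ for $\ell=1,2$, so it suffices to prove $\bN_1^{+}\preceq\bN_2^{+}$. First I would use the quadratic--form representation $\bx^{\top}\bN_\ell\bx=\sum_{i<j}n_{ij}^{(\ell)}(x_i-x_j)^2$. Since $(\mathcal{G}_1,\mathcal{Y}_1)\supseteq(\mathcal{G}_2,\mathcal{Y}_2)$ forces $n_{ij}^{(1)}\ge n_{ij}^{(2)}$ for every pair $(i,j)$, comparing the two sums termwise gives $\bx^{\top}\bN_2\bx\le\bx^{\top}\bN_1\bx$ for all $\bx\in\mathbb{R}^K$; equivalently $\bN_1-\bN_2$ is itself the (weighted) Laplacian of the graph of the ``extra'' comparisons and hence positive semidefinite, so $\bN_2\preceq\bN_1$.

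The nontrivial point is that the Loewner order is not, in general, reversed by the Moore--Penrose inverse, so I would next bring in connectivity. Because $\widehat{\bmu}_1$ and $\widehat{\bmu}_2$ are unique, Theorem \ref{Thm-UniqueE} forces both $\mathcal{G}_1$ and $\mathcal{G}_2$ to be connected, whence $\mathrm{ker}(\bN_1)=\mathrm{ker}(\bN_2)=\mathrm{span}(\b1)$ and both $\bN_\ell^{+}$ have range $W:=\{\bx\in\mathbb{R}^K:\b1^{\top}\bx=0\}$. On $W$ the operators $\bN_1|_W$ and $\bN_2|_W$ are genuinely positive definite and still satisfy $\bN_2|_W\preceq\bN_1|_W$, while $\bN_\ell^{+}$ acts on $W$ exactly as $(\bN_\ell|_W)^{-1}$ and vanishes on $\mathrm{span}(\b1)$. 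Using the order-reversing (antitone) property of inversion on positive definite matrices, $0\prec\bN_2|_W\preceq\bN_1|_W$ yields $(\bN_1|_W)^{-1}\preceq(\bN_2|_W)^{-1}$; since both pseudo-inverses annihilate $\b1$, this lifts to $\bN_1^{+}\preceq\bN_2^{+}$ on all of $\mathbb{R}^K$, and multiplying by $\sigma^2$ gives $\bV_1\preceq\bV_2$.

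For the second assertion, let $\pmb{J}$ denote the Jacobian matrix $\partial\pmb{\Phi}/\partial\bmu$ evaluated at the merit vector (or at its limit). By the delta method combined with the asymptotic normality of the LSE established in Section 3, the asymptotic variance of $\pmb{\Phi}(\widehat{\bmu}_\ell)$ is $\bV_\ell^{\pmb{\Phi}}=\pmb{J}\bV_\ell\pmb{J}^{\top}$. Because the Loewner order is invariant under congruence, $\bV_2-\bV_1\succeq\bzero$ implies $\pmb{J}(\bV_2-\bV_1)\pmb{J}^{\top}\succeq\bzero$, i.e. $\bV_1^{\pmb{\Phi}}\preceq\bV_2^{\pmb{\Phi}}$.

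I expect the main obstacle to be the pseudo-inverse step: one must carefully verify that $\bN_\ell^{+}$ coincides with $(\bN_\ell|_W)^{-1}$ on $W$ and is zero on $\mathrm{span}(\b1)$, and that this \emph{common} kernel is precisely what makes the Loewner order survive inversion --- without the equal-kernel hypothesis the reversal can fail. Everything else (the quadratic--form bound, the congruence argument, the delta method) is routine.
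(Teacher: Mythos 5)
Your proof is correct and follows the same overall structure as the paper's: establish $\bN_2\preceq\bN_1$ from the containment of the comparison graphs, reverse the order under pseudo-inversion using connectivity, and finish the second claim with the delta method. The one place you genuinely diverge is the inversion step. The paper obtains $\bN_1^{+}\preceq\bN_2^{+}$ by citing Theorem 3.1 of Milliken and Akdeniz (1977) on differences of generalized inverses of nonnegative definite matrices, whereas you prove it directly: both Laplacians have the common kernel $\mathrm{span}(\b1)$, so their restrictions to $W=\{\bx:\b1^{\top}\bx=0\}$ are positive definite and ordered, the antitone property of inversion applies there, and the order lifts back to $\mathbb{R}^K$ because both pseudo-inverses annihilate $\b1$ and map into $W$. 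Your version is self-contained and makes explicit exactly why the argument works (the equal-kernel hypothesis), which the paper leaves implicit in the citation; the paper's version is shorter. Also note that the paper phrases the first step slightly differently but equivalently to your quadratic-form computation: it observes that $\bN_1-\bN_2$ is itself the Laplacian of the graph of extra comparisons $(\mathcal{G}_3,\mathcal{Y}_3)$ and hence nonnegative definite --- the same fact you derive termwise. Both treatments of the second claim are the standard first-order (delta-method) congruence argument.
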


Proposition \ref{prop1} shows that the variance of the LSE or any function thereof decreases when the number of paired comparisons on each edge increases.  

\section{Large sample theory} \label{section large sample theory}

The asymptotic theory for the LSE, denoted henceforth by $\widehat{\bmu}_{n}$ to emphasize its dependence on the total number of paired comparisons, requires some additional graph--based notions. A graph $\mathcal{T}$ is called a tree if any two distinct vertices in $\mathcal{T}$ are joined by a unique path. If the tree $\mathcal{T}$ is a subgraph of $\mathcal{G}$ and connects all vertices in $\mathcal{G}$ then it is called a spanning tree. Clearly, if $\mathcal{G}$ is connected then a spanning tree $\mathcal{T}$ exists (Gould, 2012). 

\begin{condition} \label{Con(AlgConnec)}
There exists a spanning tree $\mathcal{T}\subseteq\mathcal{G}$ such that $\min \{n_{ij}:\left(i,j\right) \in \mathcal{T} \}\rightarrow \infty$ as $n\rightarrow \infty$.
\end{condition}

Condition \ref{Con(AlgConnec)} ensures that the minimum number of comparisons along some spanning tree increases to infinity. More concretely, for any pair $i,\,j\in\mathcal{V}$, either $n_{ij}\to\infty$ or there exists a path  $v_1,v_2,\ldots,v_l\in\mathcal{V}$ such that $\min\{n_{iv_1}, n_{v_1v_2},n_{v_2v_3},\ldots,n_{v_{l-1}v_l},n_{v_lj}\}\to\infty$. That is, all pairs of items are compared infinitely many times either directly or indirectly. The rate at which $n_{ij}$ where $(i,j)\in \mathcal{T}$ grows to infinity is left unspecified; in fact the rate may be edge specific. {  There may be many spanning trees on which Condition \ref{Con(AlgConnec)} holds. We define
\begin{align} \label{m:maxmin}
m =\displaystyle{\max_{\mathcal{T}\subset \mathcal{G}}}\min \{n_{ij}:(i,j)\in \mathcal{T} \}.
\end{align}
and denote a tree on which \eqref{m:maxmin} is attained by $\mathcal{T}_m$. Thus, $\mathcal{T}_m$ is a tree on which the minimum number of paired comparisons grows the fastest. }

\begin{theorem}
\label{Thm-graph.WLLN} 
If Conditions \ref{iid.errors} and \ref{Con(AlgConnec)} hold then $\mathbb{V}ar(\widehat{\bmu}_{n})\rightarrow \pmb{O}$, where $\pmb{O}$ is the ${K\times K}$ matrix of zeros and consequently $\widehat{\bmu}_{n}\rightarrow \bmu$ in probability. { Moreover} for each $i$,
{ 
\begin{equation} \label{var.mu.i.hat}
\mathbb{V}ar(\widehat{\mu }_{i,n}) = O(1/m).
\end{equation}}
Finally, if Condition \ref{Con(AlgConnec)} does not hold then $\mathbb{V}ar(\widehat{\bmu}_{n})\not\to \pmb{O}$.
\end{theorem}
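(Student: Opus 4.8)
The plan is to work directly with the exact formula $\mathbb{V}ar(\widehat{\bmu}_n)=\sigma^2\bN^+$ from \eqref{var.mu.hat}, and to control $\bN^+$ through its spectral decomposition. Write $0=\lambda_1<\lambda_2\le\cdots\le\lambda_K$ for the eigenvalues of $\bN$ with orthonormal eigenvectors $\pmb{w}_1=\b1/\sqrt K,\pmb{w}_2,\ldots,\pmb{w}_K$; then $\bN^+=\sum_{j\ge 2}\lambda_j^{-1}\pmb{w}_j\pmb{w}_j^\top$, so that $\mathrm{trace}(\bN^+)=\sum_{j\ge 2}\lambda_j^{-1}$ and the operator norm of $\bN^+$ is $1/\lambda_2$, the reciprocal of the algebraic connectivity. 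Hence $\mathbb{V}ar(\widehat{\bmu}_n)\to\pmb{O}$ is equivalent to $\lambda_2=\lambda_2(n)\to\infty$, and \eqref{var.mu.i.hat} follows once we show $\lambda_2\gtrsim m$, since $\mathbb{V}ar(\widehat\mu_{i,n})=\sigma^2(\bN^+)_{ii}\le\sigma^2\|\bN^+\|=\sigma^2/\lambda_2$.

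The heart of the argument is therefore a two-sided comparison of $\lambda_2(\bN)$ with $m$ defined in \eqref{m:maxmin}. For the lower bound $\lambda_2(\bN)\ge c\,m$: let $\mathcal{T}_m$ be a spanning tree attaining \eqref{m:maxmin}, and let $\bN_{\mathcal{T}_m}$ be the Laplacian of the subgraph consisting only of the edges of $\mathcal{T}_m$, each carrying $n_{ij}\ge m$ comparisons. Since $(\mathcal{G},\mathcal{Y})\supseteq(\mathcal{T}_m,\mathcal{Y}_{\mathcal{T}_m})$, adding edges only increases the Laplacian in the Loewner order (this is exactly the mechanism behind Proposition \ref{prop1}, or can be seen directly since each edge contributes a rank-one PSD term to $\bN$), so $\bN\succeq\bN_{\mathcal{T}_m}\succeq m\,\bN_{\mathcal{T}_m}'$ where $\bN_{\mathcal{T}_m}'$ is the unweighted tree Laplacian. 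By the Courant--Fischer min-max principle restricted to $\b1^\perp$, $\lambda_2(\bN)\ge m\,\lambda_2(\bN_{\mathcal{T}_m}')$, and $\lambda_2$ of an unweighted spanning tree on $K$ vertices is bounded below by a positive constant depending only on $K$ (for instance $\lambda_2\ge 2(1-\cos(\pi/K))$ for a path, and every tree contains a path; more crudely any connected graph on $K$ vertices has $\lambda_2\ge 2/(K(K-1))$ or $\ge 1/(K\,\mathrm{diam})$). This gives $\mathbb{V}ar(\widehat\mu_{i,n})\le\sigma^2/(c_K m)=O(1/m)$, hence also $\mathbb{V}ar(\widehat{\bmu}_n)\to\pmb O$ under Condition \ref{Con(AlgConnec)}, since that condition guarantees $m\to\infty$.

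For the final claim, the converse, suppose Condition \ref{Con(AlgConnec)} fails, i.e.\ \emph{no} spanning tree has all its edge-multiplicities tending to infinity; equivalently $\liminf m<\infty$, so along a subsequence $m$ stays bounded by some constant $M$. The combinatorial content is that then the graph can be split: along that subsequence there is a partition $\mathcal{V}=A\cup A^c$ such that every edge crossing the cut has bounded multiplicity (otherwise, contracting all edges with $n_{ij}\to\infty$ would leave a connected quotient, and lifting any spanning tree of the quotient would produce a spanning tree of $\mathcal{G}$ violating the failure of Condition \ref{Con(AlgConnec)}). Testing the Rayleigh quotient of $\bN$ against the vector $\pmb{z}$ equal to $|A^c|$ on $A$ and $-|A|$ on $A^c$ (which lies in $\b1^\perp$), one gets $\pmb{z}^\top\bN\pmb{z}=\sum_{(i,j)\in\mathrm{cut}}n_{ij}(z_i-z_j)^2=K^2\sum_{\mathrm{cut}}n_{ij}\le K^2\binom{K}{2}M$ while $\|\pmb z\|^2=|A||A^c|K\ge (K-1)K$, so $\lambda_2(\bN)\le\pmb z^\top\bN\pmb z/\|\pmb z\|^2$ stays bounded along the subsequence. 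Therefore $\|\bN^+\|=1/\lambda_2\not\to 0$, so $\mathbb{V}ar(\widehat{\bmu}_n)=\sigma^2\bN^+\not\to\pmb O$. The main obstacle is the combinatorial step isolating the bounded cut when Condition \ref{Con(AlgConnec)} fails, together with making the subsequence bookkeeping clean; the spectral estimates themselves are routine once the right test vectors are chosen.
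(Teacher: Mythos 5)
Your proof is correct, and its skeleton --- reducing everything to whether $\lambda_2(\bN)\to\infty$, via $\mathbb{V}ar(\widehat{\bmu}_n)=\sigma^2\bN^{+}$ and $\|\bN^{+}\|=1/\lambda_2$ --- is the same as the paper's. In the forward direction the two arguments coincide up to language: the paper writes $\bN=\bN_1+\bN_2$ with $\bN_1$ the Laplacian of a spanning tree carrying $t=\min\{n_{ij}:(i,j)\in\mathcal{T}\}$ comparisons per edge and applies Weyl's inequality $\lambda_2(\bN)\ge\lambda_2(\bN_1)+\lambda_1(\bN_2)=t\,\lambda_2(\bN_1/t)$, which is exactly your Loewner step $\bN\succeq m\bN_{\mathcal{T}_m}'$ followed by Courant--Fischer, and both finish with the strictly positive algebraic connectivity of an unweighted spanning tree (the paper invokes the same constant $2(1-\cos(\pi/K))$). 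Where you genuinely differ is the converse. The paper repeatedly peels spanning trees off $\mathcal{G}$ and applies the upper Weyl inequality $\lambda_2(\bN)\le\lambda_2(\bN_2)+\lambda_K(\bN_1)$ until the remainder is disconnected; you instead exhibit a single cut all of whose crossing edges have bounded multiplicity (because the subgraph of edges with $n_{ij}\to\infty$ is disconnected when Condition \ref{Con(AlgConnec)} fails) and bound $\lambda_2$ by the Rayleigh quotient of the explicit test vector $\pmb{z}$. Your route is shorter and sidesteps the termination bookkeeping of the peeling procedure; its cost is the diagonal subsequence extraction you already flag (finitely many crossing edges, each bounded along its own subsequence, hence a common subsequence exists), plus a cleaner statement of the combinatorial step: if no such cut existed, the edges with $n_{ij}\to\infty$ would connect $\mathcal{V}$ and any spanning tree inside that subgraph would witness Condition \ref{Con(AlgConnec)}; the contraction/quotient phrasing in your parenthesis muddles this, since a spanning tree of the quotient consists of edges that are \emph{not} divergent.

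Two caveats. First, ``every tree contains a path'' does not by itself lower--bound $\lambda_2$ of the tree: passing to a subgraph can only decrease the algebraic connectivity, and a tree need not contain a spanning path. Rely instead on the uniform bound $\lambda_2\ge 2(1-\cos(\pi/K))$ for connected graphs on $K$ vertices, which is what the paper uses. Second, you prove only $\lambda_2(\bN)\ge c_K m$, hence the one--sided bound $\mathbb{V}ar(\widehat{\mu}_{i,n})\le O(1/m)$; this suffices for \eqref{var.mu.i.hat} as literally stated, but the paper also establishes the matching bound $\lambda_2(\bN)\le O(m)$, i.e.\ $\lambda_2=\Theta(m)$, which is the version relied upon in \eqref{Eq.lam2=m} and in the remark following the theorem. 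Your cut device supplies this missing half in one line: the subgraph of edges with $n_{ij}>m$ cannot connect $\mathcal{V}$ (a spanning tree inside it would contradict the maximality in \eqref{m:maxmin}), so some cut has all crossing multiplicities at most $m$, and the same Rayleigh quotient computation gives $\lambda_2(\bN)\le K^{2}\binom{K}{2}m/\bigl(K(K-1)\bigr)$.
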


Theorem \ref{Thm-graph.WLLN} establishes that the LSE is consistent provided Condition \ref{Con(AlgConnec)} holds. 
In the proof of Theorem \ref{Thm-graph.WLLN} it is shown that Condition \ref{Con(AlgConnec)} is necessary and sufficient for the second smallest eigenvalue of $\bN$, known as the algebraic connectivity (Cvetkovic et al. 2009) and denoted by $\lambda_2=\lambda _{2}(\bN)$, to converge to $\infty$. Using graph decomposition arguments, and Weyl's inequality (Horn and Johnson, 2007) it can be shown that the latter implies that $\bN^{+}\to \pmb{O}$, the zero matrix, so $\widehat{\bmu}_{n}$ is consistent. Moreover, by the spectral decomposition $\bN^{+}=O(1/\lambda_2)$, {element--wise}. In addition, it is shown that
\begin{equation} \label{Eq.lam2=m}
\lambda_2 = O(m).    
\end{equation}
Thus, the variance of any merit estimator is of order $1/m$, where $m$ is defined in \eqref{m:maxmin}, and is not directly related to the sample size $n_i$ as in the usual $K$ sample case. Note that Condition \ref{Con(AlgConnec)} implies that $n_i\to\infty$ for all $i\in\mathcal{V}$; the reverse implication is not necessarily true. For example consider the graph $\mathcal{G}$ with $\mathcal{V}=\{1,2,3,4\}$ and $\mathcal{E}=\{(1,2),(2,3),(3,4)\}$ where $k=n_{12}=n_{34}$ and $n_{23}=1$. Clearly $n_{i}\to \infty$ for all $i$ as $k\to \infty$ but Condition \ref{Con(AlgConnec)} does not hold. Moreover, for this graph $\lambda_{2}=k+1-\sqrt{k^2+1}\leq 1$ for all $k\in\mathbb{N}$. Therefore $\bN^{+} \not\to \pmb{O}$ and the LSE is not consistent. It is also worthwhile noting that in general $O(m) < O(\min\{n_1,\ldots,n_K\})$, however there are situations in which either $m=O(\min\{n_{1},\ldots,n_K\})$ or $m=O(n)$.

\begin{remark}
As noted by a referee we use the $O(\cdot)$ notation correctly, albeit, somewhat casually. For example, in Equation \eqref{Eq.lam2=m} we write $\lambda_2 = O(m)$. However, in the proof of Theorem \ref{Thm-graph.WLLN} it is shown that $\lambda_2/m\to c$ for some constant $c$ which depends on the topology of the graph. Thus, it would have been more precise to write $\lambda_2 = \Theta(m)$ as common in the literature in computer science, e.g., Cormen et al. (2022). Nevertheless, we will adhere to the less precise $O(\cdot)$ notation as common in the statistical literature. 
\end{remark}

\medskip

Next, strong consistency without a variance is established. 

\begin{theorem}
\label{Thm-graph.SLLN}
If the errors in \eqref{model.Y_ijl} are IID with mean zero then $\widehat{\bmu}_{n}\rightarrow \bmu$ with probability one if and only if Condition \ref{Con(AlgConnec)} holds.
\end{theorem}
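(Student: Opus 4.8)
The plan is to fix the constraint $\bv=\b1$ (permissible by Corollary~\ref{Cor-EstDiff}; recall $\b1^{\top}\bmu=0$) and to work from the error decomposition $\widehat{\bmu}_n-\bmu=\bN^{+}\pmb{E}$, which follows from $\widehat{\bmu}_n=\bN^{+}\bS$ (Theorem~\ref{Thm-UniqueE}) and $\bN^{+}\bN\bmu=\bmu$, where $\pmb{E}=(E_1,\dots,E_K)^{\top}$ with $E_i=\sum_{j\neq i}\sum_{k=1}^{n_{ij}}\epsilon_{ijk}$. Grouping errors edge by edge, $\pmb{E}=\sum_{(a,b)\in\mathcal{E}}(\pmb{e}_a-\pmb{e}_b)\sum_{k}\epsilon_{abk}$, so for every pair $(i,j)$ one has $\widehat\mu_{i,n}-\widehat\mu_{j,n}-(\mu_i-\mu_j)=\sum_{(a,b)\in\mathcal{E}}c^{(ij)}_{ab}\sum_{k=1}^{n_{ab}}\epsilon_{abk}$, where $c^{(ij)}_{ab}=(\pmb{e}_i-\pmb{e}_j)^{\top}\bN^{+}(\pmb{e}_a-\pmb{e}_b)$. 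Since $\b1^{\top}(\widehat{\bmu}_n-\bmu)=0$ also gives $\widehat\mu_{l,n}-\mu_l=\frac1K\sum_{l'}\bigl[(\widehat\mu_{l,n}-\mu_l)-(\widehat\mu_{l',n}-\mu_{l'})\bigr]$, to prove $\widehat{\bmu}_n\to\bmu$ a.s.\ it is enough to show that each pairwise difference above converges a.s.\ to $\mu_i-\mu_j$; the necessity of Condition~\ref{Con(AlgConnec)} I will obtain by contradicting a.s.\ convergence directly.

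For sufficiency I would assume Condition~\ref{Con(AlgConnec)}, take a tree $\mathcal{T}_m$ attaining \eqref{m:maxmin}, and bound the coefficients $c^{(ij)}_{ab}$. Writing $r_{xy}=(\pmb{e}_x-\pmb{e}_y)^{\top}\bN^{+}(\pmb{e}_x-\pmb{e}_y)$ for the resistance distance, one has the identity $c^{(ij)}_{ab}=\tfrac12\bigl(r_{ib}+r_{ja}-r_{ia}-r_{jb}\bigr)$, and the triangle inequality for $r$ (Bapat, 2010) yields $|c^{(ij)}_{ab}|\le r_{ab}$. Routing a unit current through the single edge $(a,b)$ gives $r_{ab}\le 1/n_{ab}$, while routing it along the $\mathcal{T}_m$-path from $a$ to $b$ gives $r_{ab}\le(K-1)/m$; hence the $(a,b)$ term above is at most $\min\{1,\,n_{ab}(K-1)/m\}\,|\bar\epsilon_{ab}|$ in absolute value, where $\bar\epsilon_{ab}=n_{ab}^{-1}\sum_{k}\epsilon_{abk}$. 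Each term tends to $0$ a.s.: if $n_{ab}\to\infty$ then $\bar\epsilon_{ab}\to0$ by Kolmogorov's strong law and the $\min$ is at most $1$; if $n_{ab}$ stays bounded then $|\bar\epsilon_{ab}|$ is a.s.\ bounded while the $\min$ is at most $n_{ab}(K-1)/m\to0$; a routine $\varepsilon$-argument using $\sup_{N}|N^{-1}\sum_{k=1}^{N}\epsilon_{abk}|<\infty$ a.s.\ handles the general case. As $\mathcal{E}$ is finite, the displayed sum tends to $0$ a.s., so $\widehat\mu_{i,n}-\widehat\mu_{j,n}\to\mu_i-\mu_j$ a.s.\ and therefore $\widehat{\bmu}_n\to\bmu$ a.s.

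For necessity, suppose Condition~\ref{Con(AlgConnec)} fails (the errors being, as the statement implicitly assumes, not a.s.\ zero). By the proof of Theorem~\ref{Thm-graph.WLLN} this is equivalent to $m\not\to\infty$, so along some subsequence $m\le C$ for a constant $C$. As $m$ is the largest $t$ for which the spanning subgraph on the edges with $n_{ij}\ge t$ is connected, $m\le C$ forces a partition of $\mathcal{V}$ into nonempty sets $I,J$ all of whose crossing edges satisfy $n_{ij}\le C$, so that $\sum_{i\in I,\,j\in J}n_{ij}\le C\binom{K}{2}$; passing to further subsequences I may take $I,J$ and the crossing counts fixed. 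Summing the normal equations $\bN\widehat{\bmu}_n=\bS$ over $i\in I$, the within-$I$ contributions on both sides cancel by antisymmetry, leaving $\sum_{i\in I,\,j\in J}n_{ij}\bigl[(\widehat\mu_{i,n}-\mu_i)-(\widehat\mu_{j,n}-\mu_j)\bigr]=W$, where $W=\sum_{i\in I,\,j\in J}\sum_{k}\epsilon_{ijk}$. Along the subsequence $W$ equals a fixed random variable $W_\infty$; since $\mathcal{G}$ is connected this is a sum of at least one error term, and as the errors are IID, mean-zero and nondegenerate, $W_\infty$ is itself nondegenerate, hence nonzero with positive probability. Were $\widehat{\bmu}_n\to\bmu$ a.s., the left-hand side---a finite sum with coefficients at most $C$---would tend to $0$ a.s., forcing $W_\infty=0$ a.s., a contradiction; hence $\widehat{\bmu}_n\not\to\bmu$ with probability one.

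The main obstacle is the sufficiency half under a first moment only: the Loewner-order and $L^2$ tools of Theorem~\ref{Thm-graph.WLLN} are now unavailable, and one must instead exhibit $\widehat{\bmu}_n-\bmu$ as a \emph{fixed} finite combination of the edge means $\bar\epsilon_{ab}$ with coefficients that cope simultaneously with two kinds of edges---those with $n_{ab}\to\infty$, where the coefficient need only stay bounded and the strong law finishes the argument, and those with $n_{ab}=O(1)$, where $\bar\epsilon_{ab}$ does not vanish and the coefficient itself must be driven to $0$ through the fast-growing tree $\mathcal{T}_m$. Obtaining \emph{both} bounds $|c^{(ij)}_{ab}|\le 1/n_{ab}$ and $|c^{(ij)}_{ab}|\le(K-1)/m$---rather than the weaker Cauchy--Schwarz bound $\sqrt{r_{ij}r_{ab}}$, which fails when some $n_{ab}$ grows much faster than $m$---is the crux; the necessity argument and the remaining bookkeeping are comparatively routine.
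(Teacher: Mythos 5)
Your proof is correct, and it reaches the result by a genuinely different route from the paper's. Both arguments ultimately rest on the same two facts about the coefficients $c^{(ij)}_{ab}=(\pmb{e}_i-\pmb{e}_j)^{\top}\bN^{+}(\pmb{e}_a-\pmb{e}_b)$: that $n_{ab}|c^{(ij)}_{ab}|\le 1$, and that the edges with bounded $n_{ab}$ contribute coefficients that are driven to zero by the growth of the tree $\mathcal{T}_m$. The paper obtains the first fact as Lemma A.6 (all entries of $\pmb{C}\bN^{+}\pmb{C}^{\top}\pmb{D}$ bounded by one in absolute value) via determinantal formulas for the entries of $\bN^{+}$ and the all minors matrix tree theorem, and handles the second point by building the reduced Laplacian $\tilde{\bN}$ supported on the dominant spanning trees and proving $\pmb{C}\bN^{+}\pmb{C}^{\top}\pmb{D}-\pmb{C}\tilde{\bN}^{+}\pmb{C}^{\top}\tilde{\pmb{D}}\to\pmb{O}$ (Lemma A.7). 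You replace all of this with effective-resistance arguments: the identity $c^{(ij)}_{ab}=\tfrac12(r_{ib}+r_{ja}-r_{ia}-r_{jb})$ plus the triangle inequality for the resistance metric give $|c^{(ij)}_{ab}|\le r_{ab}$, and the two Rayleigh-monotonicity bounds $r_{ab}\le 1/n_{ab}$ (parallel edges) and $r_{ab}\le (K-1)/m$ (series path in $\mathcal{T}_m$) deliver both facts in a few lines. This is shorter and more transparent than the combinatorial route, at the cost of invoking the metric property of resistance distance, which is standard and available in the cited Bapat (2010). Your necessity argument---summing the normal equations $\bN\widehat{\bmu}_n=\bS$ over one side of the bottleneck cut so that the within-block terms cancel, and exhibiting a nondegenerate limiting random variable that would have to vanish---is also more explicit than the paper's brief remark that cross-cut differences cannot be consistently estimated. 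Two minor notes: since the counts $n_{ab}$ accumulate they are non-decreasing, so every edge falls into one of your two cases and the closing $\varepsilon$-argument is needed only for non-monotone designs; and, as you correctly flag, the ``only if'' direction in both your argument and the paper's implicitly requires the errors to be nondegenerate.
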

 
Thus, provided the errors have mean zero, the merits converge to their true values with probability one. The rate of convergence is explored by simulation in Section \ref{numerical.expriments}. The proof of Theorem \ref{Thm-graph.SLLN} is rather involved. In addition to the usual probabilistic arguments the proof relies on specialized results in linear algebra (Satorra and Neudecker, 2015) and graph theory. In particular we use the all minors matrix tree theorem (Chen 1976, Chaiken, 1982) and various properties of Laplacians (Chebotarev and Shamis 1998, and Bapat 2013). 

\medskip

The large sampling distribution of the LSE is explored under two different regimes.

\begin{condition}
\label{Con(Rate+CLT)}
There exists a spanning tree $\mathcal{T}\subset \mathcal{G}$ such that
\begin{equation*}
\min \{n_{ij}:(i,j)\in \mathcal{T}\}/n\to c\in(0,\infty)\text{ as }n\to\infty.
\end{equation*}
\end{condition}

It is clear that Condition \ref{Con(Rate+CLT)} is stronger than Condition \ref{Con(AlgConnec)}. Condition \ref{Con(Rate+CLT)} implies that $n_{ij}/n_{lk}\rightarrow c_{ijlk} \in (0,\infty)$ for all pairs $(i,j)$ and $(l,k)$ in $\mathcal{T}$. It can be further shown that Condition \ref{Con(Rate+CLT)} implies that for any two items $n_i/n_j\rightarrow c_{ij} \in (0,\infty)$, i.e., the total number of paired comparisons for all items grows at the same rate. Formally this means that $n_{ij}=O(n)$ for all $(i,j)\in \mathcal{T}$ and $n_i=O(n)$ for all $i \in \mathcal{V}$. However, $n_i=O(n)$ for all $i \in \mathcal{V}$ does not imply that Condition \ref{Con(Rate+CLT)} holds. {  For example, consider a path graph with four vertices for which $n_{12}=n_{34}=m^2$ and $n_{23}=m$. It is easy to verify that $n_i=O(m^2)$ whereas $\min{n_{ij}}/n \to 0$}.

\medskip

Condition  \ref{Con(Rate+CLT)} results in well--behaved variance matrices for the LSE. 

\begin{lemma} \label{cov:insure}
If Condition \ref{Con(Rate+CLT)} holds then $\pmb{\Theta }=\lim_{n}\bN/n$ exists and $\mathrm{rank}(\pmb{\Theta})=K-1$. Moreover $n\bN^{+}\to {\pmb{\Theta}^{+}}$ as $n\to\infty$ where all diagonal elements of $\pmb{\Theta }^{+}$ are positive and $\mathrm{rank}(\pmb{\Theta}^{+})=K-1$.
\end{lemma}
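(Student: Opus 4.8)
The plan is to build everything from the convergence $\bN/n \to \pmb{\Theta}$, which I would establish first. Under Condition \ref{Con(Rate+CLT)} the edge counts along the optimal spanning tree satisfy $n_{ij}/n \to c_{ij} \in (0,\infty)$; for edges not on the tree the ratios $n_{ij}/n$ are bounded in $[0,\infty)$, and by passing to a subsequence (or invoking that these ratios are determined once enough of them are) we may assume $n_{ij}/n \to \theta_{ij} \geq 0$ for every pair. Since the Laplacian is a continuous (indeed linear) function of the edge weights, $\bN/n \to \pmb{\Theta}$ where $\pmb{\Theta}$ is itself a graph Laplacian, namely the Laplacian of the weighted graph with edge weights $\theta_{ij}$. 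The key structural point is that this limiting weighted graph is \emph{connected}: it contains the spanning tree $\mathcal{T}$ with strictly positive weights $c_{ij}>0$ on each of its edges. A connected weighted graph on $K$ vertices has Laplacian of rank exactly $K-1$ with kernel $\mathrm{span}(\b1)$; this gives $\mathrm{rank}(\pmb{\Theta}) = K-1$.

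Next I would transfer the rank and limit statements to the Moore--Penrose inverses. The cleanest route is eigenvalue continuity: order the eigenvalues of $\bN/n$ as $0 = \lambda_1(\bN/n) \le \lambda_2(\bN/n) \le \cdots \le \lambda_K(\bN/n)$. Since $\bN/n \to \pmb{\Theta}$, each eigenvalue converges to the corresponding eigenvalue of $\pmb{\Theta}$, and because $\mathrm{rank}(\pmb{\Theta})=K-1$ we have $\lambda_2(\pmb{\Theta}) > 0$; hence $\lambda_2(\bN/n)$ is bounded away from $0$ for $n$ large (this is essentially \eqref{Eq.lam2=m} together with $m = \Theta(n)$ under Condition \ref{Con(Rate+CLT)}). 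Now $n\bN^{+} = (\bN/n)^{+}$, and on the complement of the kernel the map $A \mapsto A^{+}$ is continuous at matrices of constant rank; more concretely, both $\bN/n$ and $\pmb{\Theta}$ have the same kernel $\mathrm{span}(\b1)$ (the kernel of $\bN$ is $\mathrm{span}(\b1)$ whenever $\mathcal{G}$ is connected, which it is under Condition \ref{Con(Rate+CLT)}), so writing $\bP = \pmb{I} - \b1\b1^{\top}/K$ for the projection onto $\b1^{\perp}$, both $(\bN/n)^{+}$ and $\pmb{\Theta}^{+}$ are inverses of the respective restrictions to the fixed subspace $\b1^{\perp}$. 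Since $(\bN/n)|_{\b1^\perp} \to \pmb{\Theta}|_{\b1^\perp}$ and the limit is invertible on that subspace, the inverses converge: $n\bN^{+} \to \pmb{\Theta}^{+}$. The rank of $\pmb{\Theta}^{+}$ equals the rank of $\pmb{\Theta}$, namely $K-1$.

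Finally, positivity of the diagonal entries of $\pmb{\Theta}^{+}$. I would argue this from the variance interpretation already developed in the paper: $\sigma^2 \pmb{\Theta}^{+}$ is (up to the constant $\sigma^2$) the limiting normalized variance of the LSE, and the $i$th diagonal entry is the limit of $n\,\mathbb{V}ar(\widehat{\mu}_{i,n})$. By Theorem \ref{Thm-graph.WLLN}, $\mathbb{V}ar(\widehat{\mu}_{i,n}) = O(1/m) = O(1/n)$ under Condition \ref{Con(Rate+CLT)}, so the diagonal entries are nonnegative in the limit. To see they are strictly positive, note that a diagonal entry $(\pmb{\Theta}^{+})_{ii} = 0$ would force the $i$th row of $\pmb{\Theta}^{+}$ to vanish (for a positive semidefinite matrix, a zero diagonal entry implies the whole corresponding row and column are zero), which would make $\mathrm{rank}(\pmb{\Theta}^{+}) \le K-1$ with $\b1$ no longer spanning a complementary kernel direction — more precisely it would put $\pmb{e}_i$ in $\ker(\pmb{\Theta}^{+}) = \ker(\pmb{\Theta}) = \mathrm{span}(\b1)$, which is impossible since $\pmb{e}_i$ is not a multiple of $\b1$. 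Hence every diagonal entry is strictly positive.

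The main obstacle is the first paragraph: making the passage $\bN/n \to \pmb{\Theta}$ fully rigorous when Condition \ref{Con(Rate+CLT)} only controls ratios along \emph{one} spanning tree, since off-tree edge counts are only known to be $O(n)$ and need not individually converge without passing to a subsequence. The fix is either to phrase the conclusion along subsequences (every subsequence has a further subsequence along which $\pmb{\Theta}$ exists, and the rank/positivity conclusions are the same for every such limit), or to note that the statement ``$\pmb{\Theta} = \lim_n \bN/n$ exists'' should be read as an additional hypothesis implicitly folded into Condition \ref{Con(Rate+CLT)}; in either case the connectedness of the limiting weighted graph — which is what drives $\mathrm{rank}(\pmb{\Theta}) = K-1$ — holds because the tree $\mathcal{T}$ survives in the limit with positive weights.
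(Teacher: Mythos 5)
Your proposal is correct, and it reaches the lemma's conclusions by a partly different route while also supplying a step the paper leaves implicit. For the rank statement the paper factors $\pmb{\Theta}=\pmb{Q}\pmb{Q}^{\top}$ through a limiting weighted incidence matrix and follows Bapat's Lemma 2.2 to get $\mathrm{rank}(\pmb{Q})=K-1$; you instead observe directly that $\pmb{\Theta}$ is the Laplacian of a weighted graph that is connected because it contains the spanning tree $\mathcal{T}$ with strictly positive limiting weights, and invoke the standard kernel characterization --- the same underlying fact, obtained by citation rather than by construction. For $n\bN^{+}\to\pmb{\Theta}^{+}$ the paper appeals to Rako\v{c}evi\'c's theorem on continuity of the Moore--Penrose inverse under rank stability, whereas your argument via the common kernel $\mathrm{span}(\b1)$ and ordinary inversion on the fixed complement $\b1^{\perp}$ is more elementary and self-contained, and equally valid. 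Your positive-semidefinite argument for the strict positivity of the diagonal of $\pmb{\Theta}^{+}$ (a zero diagonal entry of a PSD matrix kills the whole row, forcing $\pmb{e}_i\in\ker(\pmb{\Theta}^{+})=\mathrm{span}(\b1)$, a contradiction) is a genuine addition: the lemma asserts this but the paper's proof never addresses it. Finally, your caveat about the existence of $\lim_n \bN/n$ --- that Condition \ref{Con(Rate+CLT)} only pins down the ratios $n_{ij}/n$ along one spanning tree, so off-tree ratios need not converge without a subsequence argument or an implicit strengthening of the hypothesis --- flags a real looseness that the paper's own proof also glosses over, and your proposed remedies are the appropriate ones since every subsequential limit is the Laplacian of a connected weighted graph and the conclusions are unaffected.
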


We are now ready for {  our first limit law.} 

\begin{theorem} \label{Thm-LST}
Suppose that Conditions \ref{iid.errors} and  \ref{Con(Rate+CLT)} hold. Then, $\sqrt{n}\left( \widehat{\bmu}_{n}-\bmu\right) \Rightarrow \mathcal{N}_{K}( \pmb{0},\sigma^{2}\pmb{\Theta }^{+})$.
\end{theorem}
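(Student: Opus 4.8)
The plan is to establish asymptotic normality of $\sqrt{n}(\widehat{\bmu}_n - \bmu)$ by writing the LSE as an explicit linear functional of the centered sums $\bS - \bN\bmu$ and then applying a multivariate central limit theorem, with the limiting covariance identified via Lemma \ref{cov:insure}. Recall from Theorem \ref{Thm-UniqueE} (with $\bv = \b1$, which is without loss of generality by Corollary \ref{Cor-EstDiff}) that under Condition \ref{Con(Rate+CLT)} the graph is eventually connected, so $\widehat{\bmu}_n = \bN^+ \bS$, and since $\bmu \in \mathrm{im}(\bN)$ we have $\bN^+ \bN\bmu = \bmu$. Hence $\sqrt{n}(\widehat{\bmu}_n - \bmu) = \sqrt{n}\,\bN^+(\bS - \bN\bmu) = (n\bN^+)\cdot \frac{1}{\sqrt{n}}(\bS - \bN\bmu)$. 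The first factor converges to $\pmb{\Theta}^+$ by Lemma \ref{cov:insure}, so the whole problem reduces to showing $\frac{1}{\sqrt{n}}(\bS - \bN\bmu) \Rightarrow \mathcal{N}_K(\pmb{0}, \sigma^2 \pmb{\Theta})$ and then invoking Slutsky's theorem together with the continuous mapping theorem for the linear map $\pmb{x} \mapsto \pmb{\Theta}^+ \pmb{x}$.

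For the CLT on $\frac{1}{\sqrt{n}}(\bS - \bN\bmu)$, I would first note that $S_i - (\bN\bmu)_i = \sum_{j \neq i}\sum_{k=1}^{n_{ij}} \epsilon_{ijk}$, so that $\bS - \bN\bmu = \sum_{(i,j) \in \mathcal{E}} \sum_{k=1}^{n_{ij}} \epsilon_{ijk}(\pmb{e}_i - \pmb{e}_j)$, a sum of independent (across edges and across $k$, using the convention $\epsilon_{ijk} = -\epsilon_{jik}$) mean-zero vectors. Its exact covariance is $\sigma^2 \bN$ by the computation already displayed before \eqref{var.mu.hat}, so $\frac{1}{n}\mathbb{V}ar(\bS - \bN\bmu) = \sigma^2 \bN/n \to \sigma^2 \pmb{\Theta}$. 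To get the limiting distribution I would use the Cramér–Wold device: fix $\pmb{t} \in \mathbb{R}^K$ and consider $\frac{1}{\sqrt{n}}\pmb{t}^\top(\bS - \bN\bmu) = \frac{1}{\sqrt{n}}\sum_{(i,j)\in\mathcal{E}}\sum_{k=1}^{n_{ij}} (t_i - t_j)\epsilon_{ijk}$, a triangular array of independent summands with variance $\sigma^2 \pmb{t}^\top(\bN/n)\pmb{t} \to \sigma^2 \pmb{t}^\top\pmb{\Theta}\pmb{t}$. I would verify the Lindeberg condition: the individual summand $\frac{1}{\sqrt{n}}(t_i-t_j)\epsilon_{ijk}$ has magnitude controlled by $\frac{\max_{i,j}|t_i-t_j|}{\sqrt{n}}|\epsilon_{ijk}|$, and since the $\epsilon$'s are IID with finite variance (Condition \ref{iid.errors}) while the number of vertices $K$ is fixed, the Lindeberg sum is bounded by $\frac{C}{n}\sum_{(i,j),k}\mathbb{E}[\epsilon_{ijk}^2 \mathbf{1}\{|\epsilon_{ijk}| > \delta\sqrt{n}/C'\}] = \frac{Cn}{n}\mathbb{E}[\epsilon^2 \mathbf{1}\{|\epsilon| > \delta\sqrt{n}/C'\}] \to 0$ by dominated convergence. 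This gives $\frac{1}{\sqrt{n}}\pmb{t}^\top(\bS-\bN\bmu) \Rightarrow \mathcal{N}(0,\sigma^2\pmb{t}^\top\pmb{\Theta}\pmb{t})$ for every $\pmb{t}$, hence the joint convergence $\frac{1}{\sqrt{n}}(\bS - \bN\bmu) \Rightarrow \mathcal{N}_K(\pmb{0},\sigma^2\pmb{\Theta})$.

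Finally, combining the two ingredients: since $n\bN^+ \to \pmb{\Theta}^+$ is a deterministic limit and $\frac{1}{\sqrt{n}}(\bS - \bN\bmu)$ converges weakly, Slutsky's lemma yields $\sqrt{n}(\widehat{\bmu}_n - \bmu) = (n\bN^+)\cdot\frac{1}{\sqrt{n}}(\bS-\bN\bmu) \Rightarrow \pmb{\Theta}^+ \cdot \mathcal{N}_K(\pmb{0},\sigma^2\pmb{\Theta}) = \mathcal{N}_K(\pmb{0}, \sigma^2 \pmb{\Theta}^+ \pmb{\Theta} \pmb{\Theta}^+)$. The identity $\pmb{\Theta}^+ \pmb{\Theta} \pmb{\Theta}^+ = \pmb{\Theta}^+$ is a standard Moore–Penrose property, giving the claimed covariance $\sigma^2\pmb{\Theta}^+$. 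The main obstacle, or at least the only point requiring genuine care, is the Lindeberg verification — specifically making sure that the fixed (not growing) dimension $K$ keeps all the coefficients $t_i - t_j$ uniformly bounded and that the number of edges stays fixed so the per-edge counts $n_{ij}$ grow like $n$ under Condition \ref{Con(Rate+CLT)}; one must also confirm that Condition \ref{Con(Rate+CLT)} indeed forces $\bN/n$ to converge (this is exactly Lemma \ref{cov:insure}) and that the antisymmetry convention $\epsilon_{ijk} = -\epsilon_{jik}$ is handled consistently so that each comparison is counted once in the edge sum. Everything else is a routine assembly of Cramér–Wold, Lindeberg–Feller, Slutsky, and the pseudoinverse algebra.
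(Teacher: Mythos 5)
Your proof is correct and follows essentially the same route as the paper's: both reduce the claim to a CLT for $n^{-1/2}(\bS-\bN\bmu)$ with limiting covariance $\sigma^{2}\pmb{\Theta}$ and then apply the deterministic convergence $n\bN^{+}\to\pmb{\Theta}^{+}$ from Lemma \ref{cov:insure} via Slutsky. The only cosmetic difference is that the paper obtains the joint normality from edge-wise one-dimensional CLTs for the independent sums $S_{ij}$ followed by a fixed linear map $\bS=\pmb{L}\bV$, whereas you use the Cram\'er--Wold device with an explicit Lindeberg verification; both arguments rest on the same independence structure across edges and the same pseudoinverse algebra.
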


We conclude that under Condition \ref{Con(Rate+CLT)} the centered LSE, scaled by $\sqrt{n}$,
approximately follows a normal distribution in large samples. The asymptotic variance in Theorem \ref{Thm-LST} is determined only by the limit of the graph Laplacian, i.e., by connectivity properties of $\mathcal{G}$. In particular the limiting variance depends only on only those $(i,j)\in \mathcal{E}$ for which $n_{ij}=O(n)$, all others are awash. 

\begin{remark}
Following a comment by a referee we note that using Proposition \ref{Prop.LSE.uv} and Theorem \ref{Thm-LST} we deduce that for any proper constraint vector $\pmb{u}$ we have $\sqrt{n}(\widehat{\bmu}_{n}(\pmb{u})-\bmu(\pmb{u})) \Rightarrow \mathcal{N}_{K}( \pmb{0},\sigma^{2}\pmb{C}_{\pmb{u}}\pmb{\Theta }^{+}\pmb{C}_{\pmb{u}}^{\top})$, where $\bmu(\pmb{u})=\bC_{\bu}\bmu$ and $\bmu \in \rm{im}(\bN)$.
\end{remark}

Further note that Condition \ref{Con(Rate+CLT)}, which implies Lemma \ref{cov:insure}, is necessary for Theorem \ref{Thm-LST} as exemplified below. 

\begin{example}\label{example.clt.with3.1}
Consider the tree--structured comparison graph with $K=3$,  $n_{12}=m$ and $n_{23}=m^2$ so $n=m+m^2$. Note that if $m\to\infty$ Condition \ref{Con(AlgConnec)} holds but condition \ref{Con(Rate+CLT)} does not. It is easy to see that
$$
\pmb{N}= \begin{pmatrix}
m & -m & 0\\
-m & m^2+m & -m^2\\
0 & -m^2 & m^2
\end{pmatrix} \text{ and }
\pmb{N}^+=\frac{1}{9m^2} \begin{pmatrix}
4m+1 & 1-2m & -2(m+1)\\
1-2m & m+1 & m-2\\
-2(m+1) & m-2 & m+4
\end{pmatrix}.
$$
Observe that $\pmb{N}/n$ converges to a matrix with rank one, whereas the matrix $n\pmb{N}^+$ diverges. Thus, the conclusions of Lemma \ref{cov:insure} do not hold. These linear algebraic facts hide the probabilistic relations
$$
\frac{S_{12}-m(\mu_{1}-\mu_{2})}{\sqrt{m}}\Rightarrow \mathcal{N}(0,\sigma^2)
\ \quad \ \mbox{and} \quad \quad \frac{S_{23}-m^2(\mu_{2}-\mu_{3})}{m}\Rightarrow \mathcal{N}(0,\sigma^2),
$$ 
i.e., convergence occurs at different rates and these two limits can not be combined as in the proof of Theorem \ref{Thm-LST}. However,   
\begin{align*}
\widehat{\bmu}_{n}=&~\bN^{+}\pmb{S} = \bN^{+} \begin{pmatrix}
    S_{12}\\-S_{12}+S_{23}\\-S_{23} 
\end{pmatrix}
= \bN^{+} \begin{pmatrix}
    m(\mu_1-\mu_2)+ \sqrt{m}\sigma Z_{12}+o_p(\sqrt{m})\\
    m(\mu_2-\mu_1)+m^2(\mu_2-\mu_3)-\sqrt{m}\sigma Z_{12}+m\sigma Z_{23}+o_p(m)\\
    m^2(\mu_3-\mu_2)-m\sigma Z_{23}+o_p(m)
\end{pmatrix}   \\
=&~ \bN^{+} \{\begin{pmatrix}
    m(\mu_1-\mu_2)\\ m(\mu_2-\mu_1)+m^2(\mu_2-\mu_3)\\ m^2(\mu_3-\mu_2)
\end{pmatrix}
+ \sigma  \begin{pmatrix}
    \sqrt{m} & 0 \\
    -\sqrt{m} & m\\
    0 & -m
\end{pmatrix} \begin{pmatrix}
    Z_{12}\\ Z_{23}
\end{pmatrix} + o_p(m)\}\\
=&~ \bmu +  \frac{\sigma}{9m^2} \begin{pmatrix}
    6m\sqrt{m} & 3m\\
    -3m\sqrt{m} & 3m\\
     -3m\sqrt{m} & -6m
\end{pmatrix}\begin{pmatrix}
    Z_{12}\\ Z_{23}
\end{pmatrix}+ o_p(1),
\end{align*}
where $Z_{12},Z_{23}$ are independent $\mathcal{N}(0,1)$ RVs. It follows that
\begin{align*}
\sqrt{m}(\widehat{\bmu}_{n}- \bmu)= \frac{\sigma}{9} 
\begin{pmatrix}
    6 & 3m^{-1/2}\\
   -3 & 3m^{-1/2}\\
   -3 & -6m^{-1/2}
\end{pmatrix}\begin{pmatrix}
Z_{12}\\ Z_{23}
\end{pmatrix}+ o_p(1) \Rightarrow   
\begin{pmatrix}
    2/3 \\
   -1/3 \\
   -1/3 
\end{pmatrix}\sigma Z_{12},
\end{align*}
as $m\to\infty$. We conclude that a limiting distribution for the LSE is possible when Condition \ref{Con(AlgConnec)} holds but Condition \ref{Con(Rate+CLT)} does not. However, in the absence of Condition \ref{Con(Rate+CLT)} the limiting distribution can not be scaled by $\sqrt{n}$ but by the much smaller $\sqrt{m}$, moreover, the resulting distribution is not supported on the two dimensional subspace spanned by the columns of $\bN$ but a one dimensional subspace thereof.   
\end{example}

We are now ready for a second limit law which generalizes Example \ref{example.clt.with3.1}.   

\begin{theorem} \label{theorem clt with 3.1}
If Conditions \ref{iid.errors} and \ref{Con(AlgConnec)} hold then 
\begin{align} \label{Eq.LSD.II}
\sqrt{m}(\widehat{\bmu}_n-\bmu) \Rightarrow  \mathcal{N}_K(\pmb{0},\sigma^2\pmb{\Psi})    
\end{align}
where $m$ is defined in \eqref{m:maxmin} and $\pmb{\Psi}=\lim_m m\bN^{+}$. 
\end{theorem}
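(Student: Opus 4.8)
The plan is to write $\sqrt{m}(\widehat{\bmu}_n-\bmu)$ as a sum, over the edges of $\mathcal{G}$, of a bounded number of independent random vectors (at most $\binom{K}{2}$, since $K$ is fixed here) and then invoke the Lindeberg--Feller central limit theorem. With $\bv=\b1$, \eqref{mu.hat} reduces to $\widehat{\bmu}_n=\bN^{+}\bS$ because $\bN^{+}\b1=\bzero$, and $\bN^{+}\bN\bmu=\bmu$ since $\bmu\in\mathrm{im}(\bN)$; hence $\sqrt{m}(\widehat{\bmu}_n-\bmu)=\sqrt{m}\,\bN^{+}(\bS-\bN\bmu)$. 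For $(i,j)\in\mathcal{E}$ with $i<j$ put $W_{ij}=S_{ij}-n_{ij}(\mu_i-\mu_j)=\sum_{k=1}^{n_{ij}}\epsilon_{ijk}$; by Condition \ref{iid.errors} these are independent across edges, centered, with $\mathbb{V}ar(W_{ij})=n_{ij}\sigma^{2}$, and $\bS-\bN\bmu=\sum_{i<j}W_{ij}(\pmb{e}_i-\pmb{e}_j)$. Writing $U_{ij}=W_{ij}/\sqrt{n_{ij}}$ and $\pmb{b}^{(n)}_{ij}=\sqrt{m\,n_{ij}}\,\bN^{+}(\pmb{e}_i-\pmb{e}_j)$ gives $\sqrt{m}(\widehat{\bmu}_n-\bmu)=\sum_{i<j}\pmb{b}^{(n)}_{ij}U_{ij}$. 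Since $\sum_{i<j}n_{ij}(\pmb{e}_i-\pmb{e}_j)(\pmb{e}_i-\pmb{e}_j)^{\top}=\bN$, one has the identity
\[
\sum_{i<j}\pmb{b}^{(n)}_{ij}(\pmb{b}^{(n)}_{ij})^{\top}=m\,\bN^{+}\bN\bN^{+}=m\,\bN^{+},
\]
so the total covariance of the summands, $\sigma^{2}m\bN^{+}$, converges to $\sigma^{2}\pmb{\Psi}$.

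The second step is to bound the deterministic coefficients $\pmb{b}^{(n)}_{ij}$. The nonzero eigenvalues of $\bN^{+}$ lie in $(0,\lambda_2^{-1}]$, so $(\bN^{+})^{2}\preceq\lambda_2^{-1}\bN^{+}$ and hence $\|\bN^{+}(\pmb{e}_i-\pmb{e}_j)\|^{2}\le\lambda_2^{-1}(\pmb{e}_i-\pmb{e}_j)^{\top}\bN^{+}(\pmb{e}_i-\pmb{e}_j)=\lambda_2^{-1}R_{ij}$, where $R_{ij}$ is the effective resistance between $i$ and $j$ in the electrical network with conductances $\{n_{kl}\}$. Rayleigh monotonicity gives $R_{ij}\le 1/n_{ij}$, while routing along the path joining $i$ and $j$ in the maxmin spanning tree $\mathcal{T}_m$ gives $R_{ij}\le (K-1)/m$. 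Combined with $\lambda_2=\Theta(m)$ (established for Theorem \ref{Thm-graph.WLLN}; cf.\ \eqref{Eq.lam2=m}), the first estimate yields $\|\pmb{b}^{(n)}_{ij}\|^{2}\le m/\lambda_2=O(1)$ for every edge, and the second yields $\|\pmb{b}^{(n)}_{ij}\|^{2}\le (K-1)n_{ij}/\lambda_2\to 0$ whenever $n_{ij}=o(m)$, in particular whenever $n_{ij}$ remains bounded.

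With these bounds the Lindeberg condition for the array $\{\pmb{b}^{(n)}_{ij}U_{ij}\}_{i<j}$ is routine. Let $M$ bound $\|\pmb{b}^{(n)}_{ij}\|$ over all edges and all $n$; since $\|\pmb{b}^{(n)}_{ij}U_{ij}\|\le M|U_{ij}|$, it suffices to show $\sum_{i<j}\|\pmb{b}^{(n)}_{ij}\|^{2}\,\mathbb{E}\bigl[U_{ij}^{2}\mathbf{1}\{|U_{ij}|>\delta/M\}\bigr]\to 0$ for each $\delta>0$, and because there are at most $\binom{K}{2}$ summands it is enough to treat each edge separately. If $n_{ij}\to\infty$ then $U_{ij}$ is a normalized partial sum of i.i.d.\ finite--variance errors, whence $\mathbb{E}[U_{ij}^{2}\mathbf{1}\{|U_{ij}|>\delta/M\}]\to 0$; if $n_{ij}$ fails to diverge along a subsequence then $\|\pmb{b}^{(n)}_{ij}\|^{2}\to 0$ along that subsequence while $\mathbb{E}[U_{ij}^{2}\mathbf{1}\{\cdot\}]\le\mathbb{V}ar(U_{ij})=\sigma^{2}$; a standard subsequence argument covers the remaining case. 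The Lindeberg--Feller theorem, together with the covariance convergence $\sigma^{2}m\bN^{+}\to\sigma^{2}\pmb{\Psi}$, then delivers $\sqrt{m}(\widehat{\bmu}_n-\bmu)\Rightarrow\mathcal{N}_K(\pmb{0},\sigma^{2}\pmb{\Psi})$.

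The crux is the linear algebra of the second step --- the facts that $\lambda_2=\Theta(m)$ and $\bN^{+}=O(1/m)$ together with the two effective--resistance estimates --- which builds on the spectral decomposition of $\bN^{+}$, Rayleigh monotonicity, and the graph--decomposition arguments already used for Theorem \ref{Thm-graph.WLLN}; the probabilistic half is then easy precisely because the number of summands is fixed. One caveat worth stating explicitly: the conclusion presupposes that $\pmb{\Psi}=\lim_m m\bN^{+}$ exists. This holds, for instance, under Condition \ref{Con(Rate+CLT)}, where $m=\Theta(n)$ and $m\bN^{+}=(m/n)(n\bN^{+})\to(\lim m/n)\,\pmb{\Theta}^{+}$ by Lemma \ref{cov:insure}, so that Theorem \ref{Thm-LST} is recovered; it need not follow from Condition \ref{Con(AlgConnec)} alone (e.g.\ if $n_{ij}/m$ oscillates on some edge), and in that generality the argument above still yields the stated limit law along any subsequence on which $m\bN^{+}$ converges.
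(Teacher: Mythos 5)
Your decomposition is the same one the paper uses -- writing $\bS-\bN\bmu=\sum_{i<j}W_{ij}(\pmb{e}_i-\pmb{e}_j)$ and pushing a central limit theorem through the deterministic map $\sqrt{m}\,\bN^{+}$ -- and your second step (the coefficient bounds via $(\bN^{+})^{2}\preceq\lambda_2^{-1}\bN^{+}$, effective resistance, Rayleigh monotonicity, and $\lambda_2=\Theta(m)$) is correct and in fact cleaner and more explicit than what the paper writes down. Your closing caveat about the existence of $\pmb{\Psi}$ is also well taken: the paper simply assumes $\lim_m m\bN^{+}$ exists as part of the statement.

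There is, however, a genuine error in the probabilistic half. You claim that if $n_{ij}\to\infty$ then $\mathbb{E}\bigl[U_{ij}^{2}\mathbf{1}\{|U_{ij}|>\delta/M\}\bigr]\to 0$. This is false: $U_{ij}=n_{ij}^{-1/2}\sum_k\epsilon_{ijk}$ converges in distribution to a nondegenerate $\mathcal{N}(0,\sigma^{2})$, and since $\mathbb{E}(U_{ij}^{2})=\sigma^{2}$ the squares are uniformly integrable, so $\mathbb{E}\bigl[U_{ij}^{2}\mathbf{1}\{|U_{ij}|>c\}\bigr]\to\sigma^{2}\,\mathbb{E}\bigl[Z^{2}\mathbf{1}\{\sigma|Z|>c\}\bigr]>0$ for every fixed $c$. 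The Lindeberg condition for the \emph{grouped} array $\{\pmb{b}^{(n)}_{ij}U_{ij}\}$ therefore fails precisely on the edges that carry the limit (those with $n_{ij}\to\infty$ and $\|\pmb{b}^{(n)}_{ij}\|\not\to 0$), which is as it must be: with a bounded number of summands each contributing non-negligible variance, uniform asymptotic negligibility cannot hold, so Lindeberg--Feller is the wrong tool at this level of aggregation. The repair is routine and does not disturb your architecture: either (i) apply the CLT edge by edge -- for each $(i,j)$ with $n_{ij}\to\infty$, $U_{ij}\Rightarrow\mathcal{N}(0,\sigma^{2})$ and $\pmb{b}^{(n)}_{ij}U_{ij}$ converges (granting edgewise convergence of $\pmb{b}^{(n)}_{ij}$, which you need anyway and which does not follow from convergence of the sum $m\bN^{+}$ alone), the remaining edges vanish by your second coefficient bound and Chebyshev, and a finite sum of independent weakly convergent terms converges to the convolution of the limits -- or (ii) apply Lindeberg--Feller at the level of the individual $\epsilon_{ijk}$, where each coefficient has squared norm at most $m\lambda_2^{-1}R_{ij}\le (K-1)/\lambda_2\to 0$ so negligibility genuinely holds. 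Route (i) is essentially the paper's proof, which first replaces each $S_{ij}$ by $n_{ij}(\mu_i-\mu_j)+\sigma\sqrt{n_{ij}}Z_{ij}+o_p(\sqrt{n_{ij}})$ and then treats $\widehat{\bmu}_n$ as a linear image of a Gaussian vector; your coefficient analysis would actually make that argument more rigorous than the paper's own sketch.
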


Theorem \ref{theorem clt with 3.1} shows that Condition \ref{Con(AlgConnec)} is sufficient for a limiting distribution of the merits. However, the scaling factor $\sqrt{m}$ in \eqref{Eq.LSD.II} may be much smaller than the scaling factor $\sqrt{n}$ appearing in Theorem \ref{Thm-LST}. In fact, under Condition \ref{Con(AlgConnec)} it is possible for $m/n \to 0$. Moreover, while the rank of $\sigma^2\pmb{\Theta}^{+}$ is $K-1$, the rank of $\pmb{\Psi}$ is typically smaller. Finally, it is not difficult to verify that approximating the distributions of $\sqrt{n}(\widehat{\bmu}_n-\bmu)$ and $\sqrt{m}(\widehat{\bmu}_n-\bmu)$ by their plug--in--values results in the same expression. Thus, differences arise only asymptotically. Elucidating possible limits under Theorem \ref{theorem clt with 3.1} requires a detailed knowledge of how $n_{ij}\to \infty$ for all $(i,j)\in\mathcal{E}$ from which the limiting eigenvalues and eigenvectors can be deduced.

\medskip

Next, we focus on the associated ranks. Recall that corresponding to the vector of merits $\bmu$ we have a vector of ranks $\pmb{r}=\pmb{r}\left( \bmu\right)=(r_1,\ldots,r_K)^{\top}$ where $r_{i}=\sum_{j}\mathbb{I}_{\left( \mu _{i}\leq \mu _{j}\right) }$. Clearly $\pmb{r}$ is a permutation of the integers $1,\ldots ,K$. Note that if $r_{i}=1$ then the $i^{th}$ score is the largest score, if $r_{j}=2$ then the $j^{th}$ score is the second largest and so forth. 

\begin{theorem} \label{Thm-AS}
Let $\bmu\in \mathbb{R}^{K}$ with $\mu _{i}\neq \mu _{j}$ for all $i \neq j$. If Conditions \ref{iid.errors} and \ref{Con(AlgConnec)} hold then \linebreak 
$\mathbb{P}\left( \pmb{r}( \widehat{\bmu}_{n}) \neq \pmb{r}\left( \bmu\right) \right) \rightarrow 0$ as $n\rightarrow \infty$. Moreover, if the errors in \eqref{model.Y_ijl} are subgaussian and Condition \ref{Con(Rate+CLT)} holds then there exist positive constants $C_{1}$ and $C_{2}$ for which $\mathbb{P}( \pmb{r}( \widehat{\bmu}_{n}) \neq \pmb{r}( \bmu) ) \leq C_{1}\exp (-nC_{2})$.
\end{theorem}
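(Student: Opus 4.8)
The plan is to reduce both assertions to a single quantitative statement about the sup--norm error $\|\widehat{\bmu}_n-\bmu\|_\infty$, and then invoke consistency for the first part and a subgaussian concentration bound for the second. Put $\delta=\min_{i\neq j}|\mu_i-\mu_j|$, which is strictly positive by hypothesis. The elementary observation is that if $\|\widehat{\bmu}_n-\bmu\|_\infty<\delta/2$ then for every pair $i\neq j$ the sign of $\widehat{\mu}_{i,n}-\widehat{\mu}_{j,n}$ equals that of $\mu_i-\mu_j$: indeed $\mu_i<\mu_j$ forces $\mu_j-\mu_i\geq\delta$, hence $\widehat{\mu}_{j,n}-\widehat{\mu}_{i,n}>(\mu_j-\mu_i)-\delta\geq 0$, so $\widehat{\bmu}_n$ has no ties among the relevant coordinates and $\pmb{r}(\widehat{\bmu}_n)=\pmb{r}(\bmu)$. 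Consequently
\begin{equation*}
\mathbb{P}\bigl(\pmb{r}(\widehat{\bmu}_n)\neq\pmb{r}(\bmu)\bigr)\leq\mathbb{P}\bigl(\|\widehat{\bmu}_n-\bmu\|_\infty\geq\delta/2\bigr),
\end{equation*}
and everything reduces to bounding the right--hand side. For the first claim, Theorem \ref{Thm-graph.WLLN} gives $\widehat{\bmu}_n\to\bmu$ in probability, so $\mathbb{P}(\|\widehat{\bmu}_n-\bmu\|_\infty\geq\delta/2)\to 0$; equivalently, $\{\pmb{x}:\pmb{r}(\pmb{x})=\pmb{r}(\bmu)\}$ is an open neighbourhood of $\bmu$ whose probability tends to one.

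For the exponential bound I would first write the error of the LSE in a convenient linear form. Let $\bB$ be the $K\times n$ signed incidence matrix whose column indexed by a comparison $(i,j,k)$ with $i<j$ equals $\pmb{e}_i-\pmb{e}_j$, and let $\pmb{e}$ collect the independent, mean--zero, subgaussian variables $Y_{ijk}-\mu_{ij}$ over $i<j$, $1\leq k\leq n_{ij}$ (restricting to $i<j$ matters because $\epsilon_{jik}=-\epsilon_{ijk}$, so these are the genuinely independent errors). A direct bookkeeping computation then gives $\bB\,\pmb{e}=\bS-\bN\bmu$ and $\bB\bB^\top=\bN$, so that, using the representation $\widehat{\bmu}_n=\bN^{+}\bS$ from Section 2 and $\bN^{+}\bN\bmu=\bmu$,
\begin{equation*}
\widehat{\bmu}_n-\bmu=\bN^{+}(\bS-\bN\bmu)=\bN^{+}\bB\,\pmb{e}.
\end{equation*}
The $i$--th coordinate is therefore a linear functional $\langle\pmb{a}_i,\pmb{e}\rangle$ of independent subgaussians with $\pmb{a}_i=\bB^\top\bN^{+}\pmb{e}_i$, whence, by the Moore--Penrose identity $\bN^{+}\bN\bN^{+}=\bN^{+}$,
\begin{equation*}
\|\pmb{a}_i\|_2^2=\pmb{e}_i^\top\bN^{+}\bB\bB^\top\bN^{+}\pmb{e}_i=\pmb{e}_i^\top\bN^{+}\bN\bN^{+}\pmb{e}_i=(\bN^{+})_{ii}.
\end{equation*}

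Now I invoke the rate hypothesis: under Condition \ref{Con(Rate+CLT)}, Lemma \ref{cov:insure} gives $n\bN^{+}\to\pmb{\Theta}^{+}$, so $(\bN^{+})_{ii}\leq c_0/n$ for a constant $c_0$ and all large $n$. Hence each $\widehat{\mu}_{i,n}-\mu_i$ is subgaussian with variance proxy at most $\tau^2 c_0/n$, where $\tau$ is the subgaussian parameter of the errors, and the standard Chernoff tail bound together with a union bound over $i=1,\dots,K$ yields
\begin{equation*}
\mathbb{P}\bigl(\|\widehat{\bmu}_n-\bmu\|_\infty\geq\delta/2\bigr)\leq 2K\exp\Bigl(-\frac{n\delta^2}{8\tau^2 c_0}\Bigr)
\end{equation*}
for all large $n$; enlarging the constant to absorb the finitely many small values of $n$ (where the probability is trivially at most $1$) gives the stated $C_1\exp(-nC_2)$ with $C_2=\delta^2/(8\tau^2 c_0)$.

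I expect the only genuinely delicate point to be the incidence--matrix bookkeeping that simultaneously produces $\bB\,\pmb{e}=\bS-\bN\bmu$ and $\bB\bB^\top=\bN$ while keeping track of the antisymmetry $\epsilon_{jik}=-\epsilon_{ijk}$, since this is what collapses the coordinatewise variance proxy to the clean quantity $(\bN^{+})_{ii}$ controlled by Lemma \ref{cov:insure}; everything after that is the subgaussian Chernoff bound and a union bound. A secondary point worth spelling out is why $\|\widehat{\bmu}_n-\bmu\|_\infty<\delta/2$ rules out ties and hence pins down the entire rank vector, not merely the pairwise orderings.
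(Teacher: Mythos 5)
Your proof is correct, and while the first part matches the paper's argument in substance (the paper reduces the event $\{\pmb{r}(\widehat{\bmu}_{n})\neq\pmb{r}(\bmu)\}$ to a union of adjacent reversals $\{\widehat{\mu}_{i}<\widehat{\mu}_{i+1}\}$ and invokes consistency, which is equivalent to your sup--norm neighbourhood argument), the exponential bound is obtained by a genuinely different decomposition. The paper bounds each $\mathbb{P}(\widehat{\mu}_{i}<\widehat{\mu}_{i+1})$ by writing $(\bN_{i}^{+}-\bN_{i+1}^{+})\bS$ as a weighted sum $\sum_{k<l}\xi_{i,kl}^{n}\overline{S}_{kl}$ of the independent edge--wise sample means, and then applies a subgaussian MGF bound with the weights $\xi_{i,kl}^{n}$ controlled asymptotically through Lemma \ref{cov:insure}; this requires tracking the limits $n\,n_{ij}^{+}\to\theta_{ij}^{+}$ entry by entry. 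You instead bound $\mathbb{P}(\|\widehat{\bmu}_{n}-\bmu\|_{\infty}\geq\delta/2)$ directly via the representation $\widehat{\bmu}_{n}-\bmu=\bN^{+}\bB\,\pmb{e}$ (your $\bB$ is $\bM^{\top}$ in the paper's notation, so $\bB\bB^{\top}=\bN$ and $\bB\pmb{e}=\bS-\bN\bmu$ are exactly the identities in \eqref{example:cov:line:graph}), which collapses the coordinatewise variance proxy to the single clean quantity $(\bN^{+})_{ii}=O(1/n)$ supplied by Lemma \ref{cov:insure}. What your route buys is a tidier variance computation and an explicit constant $C_{2}=\delta^{2}/(8\tau^{2}c_{0})$; what the paper's route buys is a bound organized around the pairwise differences $\widehat{\mu}_{i}-\widehat{\mu}_{i+1}$ themselves, whose constants are expressed through the gaps $\delta_{i,i+1}$ and the limiting matrix $\pmb{\Theta}^{+}$, which is the form reused elsewhere in the paper. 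Both arguments rely on the same two ingredients --- Condition \ref{Con(Rate+CLT)} via Lemma \ref{cov:insure} and concentration of linear functionals of independent subgaussians --- so your proof is a legitimate, slightly more economical alternative.
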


Theorem \ref{Thm-AS} shows that the ranks can be identified with probability one at an exponentially fast rate if the tails of the error distribution decay fast enough. Let $\rho(\cdot,\cdot)$ be any metric on ranks such as the Kendall, Caley and the Footnote distances (cf. Marden 1995). Theorem \ref{Thm-AS} insures that $\mathbb{P}( \rho(\pmb{r}( \widehat{\bmu}_{n}), \pmb{r}( \bmu) )>0 )$ is exponentially small as $n$ grows to $\infty$. 

\begin{remark}
In fact, since: $(i)$ $\widehat{\bmu}_{n} \rightarrow\bmu$ with probability one; and $(ii)$ the function  $\pmb{r}(\bmu)$ is continuous on the set $(\mu_1-\epsilon, \mu_1+\epsilon)\times \cdots \times (\mu_K-\epsilon, \mu_K+\epsilon)$ where $\epsilon < \min\{|\mu_i-\mu_j|/2:1\leq i,j\leq K\}$; it follows by the continuous mapping theorem that $\lim \pmb{r}(\widehat{\bmu}_{n}) = \pmb{r}(\bmu)$ with probability one. 
\end{remark}

\medskip

We conclude this section with a selection of hypothesis testing problems, formulated assuming Condition \ref{Con(Rate+CLT)} hold, which can be addressed using the theory developed thus far. 

\begin{enumerate}
 
\item Consider the hypothses $H_0^{(1)}:\, \bmu = \pmb{0}$ versus $H_1^{(1)}:\, \bmu \neq \pmb{0}$. Thus, under the null all items are equivalent whereas under the alternative they are not. Hotelling's statistic for this problem is  
\begin{align} \label{test:all:same}
T_{n,K}^{(1)} =n \widehat{\bmu}^{\top} (\widehat{\sigma}^2\pmb{\Theta }_n^{+})^{+} \widehat{\bmu}= \frac{1}{\widehat{\sigma}^2}\widehat{\bmu}^{\top} \bN \widehat{\bmu} =\frac{1}{\widehat{\sigma}^2} \sum_{i\neq j}n_{ij}(\widehat \mu_i- \widehat \mu_{j})^2, 
\end{align}
where $\pmb{\Theta }_n=\bN/n$, $\widehat{\sigma}^2= Q(\widehat{\bmu})/n$ is a natural estimator of $\sigma^2$ and the last equality above follows from Lemma 4.3 in  Bapat (2010). Furthermore, by combining Theorem \ref{Thm-LST} with Theorem 2 of Moore (1977) we find that under local alternatives, i.e., when the vector of merits equals $\bmu/\sqrt{n}$, we have 
$$T_{n,K}^{(1)} \Rightarrow \chi^{2}(K-1, \bmu^{\top}\pmb{\Theta }\bmu),$$ 
Therefore the null is rejected if the right--hand--side of \eqref{test:all:same} exceeds the $1-\alpha$ quantile of $\chi^{2}(K-1)$. This procedure can be modified for testing the equality among any subset of merits. For example, when testing $H_0^{(1)}:\, \bmu_R= \bzero$ against $H_1^{(1)}:\, \bmu_R\neq \bzero$ where $\bmu_R= (\mu_{i_1}, \mu_{i_2}, \ldots, \mu_{i_R})^{\top}$ and $i_1,\ldots,i_R$ is a strict subset of $1,\ldots,K$ then \eqref{test:all:same} reduces to $T_{n,R}^{(1)} = \frac{n}{\widehat{\sigma}^2} \widehat{\bmu}_R^{\top} \pmb{\Theta}_{n,R} \widehat{\bmu}_R$ and $T_{n,R}^{(1)}\Rightarrow \chi^{2}(R,\bmu_R^{\top} \pmb{\Theta}_R \bmu_R)$ under local alternatives, where $\pmb{\Theta}_R^{-1}$ is the submatrix of $\pmb{\Theta}^{+}$ corresponding to items $\{i_1,\ldots, i_R\}$. In particular if $r=2$ then 
$$T_{n,2}^{(1)}=\frac{1}{\widehat{\sigma}^2} n(\widehat\mu_i, \widehat\mu_j)
\begin{pmatrix}
    \theta_{n,ii}^{+} & \theta_{n,ij}^{+}\\
    \theta_{n,ij}^{+} & \theta_{n,jj}^{+}
\end{pmatrix}^{-1} 
\begin{pmatrix}
\widehat\mu_i\\ \widehat\mu_j
\end{pmatrix}
\Rightarrow \chi^{2} (2, (\mu_i,\mu_j) \begin{pmatrix}
    \theta_{ii}^{+} & \theta_{ij}^{+}\\
    \theta_{ij}^{+} & \theta_{jj}^{+}
\end{pmatrix}^{-1} \begin{pmatrix}
\mu_i\\ \mu_j
\end{pmatrix}),$$
where $\theta_{n,ij}^{+}$ and $\theta_{ij}^{+}$ are $(i,j)^{th}$ element of $\pmb{\Theta}_n^{+}$ and $\pmb{\Theta}^{+}$, respectively.

\item Next, consider an ANOVA type hypothesis of the type $H_0^{(2)}:\, \mu_{i_1}= \mu_{i_2}= \ldots= \mu_{i_R}$ against $H_0^{(2)}:\, \mu_{i_l}\neq \mu_{i_s}$ for some $i_1\leq l\neq s \leq i_R$. This system can be reformulated as $H_0^{(2)}:\,\pmb{R}\bmu_{R} = \bzero$ versus $H_0^{(2)}:\, \pmb{R}\bmu_{R} \neq  \bzero$ where $\pmb{R}$ is a $(R-1) \times R$ matrix whose $i^{th}$ row is a contrast comparing the $i^{th}$ and $(i+1)^{th}$ elements of $\bmu_R$. The natural test statistic is 
\begin{align} \label{anova:type_stat}
T_{n,R}^{(2)} = \frac{n}{\widehat{\sigma}^2} \widehat\bmu_{R}^{\top} \pmb{R}^{\top} \pmb{R} \pmb{\Theta }_{n,R} \pmb{R}^{\top} \pmb{R}\widehat\bmu_{R},
\end{align}
which converges in distribution (under local alternatives) to a $\chi^2(R-1, \bmu_{R}^{\top} \pmb{R}^{\top} \pmb{R} \pmb{\Theta }_R \pmb{R}^{\top} \pmb{R}\bmu_{R})$ RV. In particular if $R=2$, i.e., items $i$ and $j$ are being compared then \eqref{anova:type_stat} reduces to 
$$T_{n,2}^{(2)} = \frac{n}{\widehat{\sigma}^2} (\widehat\mu_i - \widehat\mu_j)^2/(\theta_{n,ii}^{+}+ \theta_{n,jj}^{+}-2 \theta_{n,ij}^{+}) \Rightarrow \chi^2 (1, (\mu_i - \mu_j)^2).$$

\item In some circumstances it is of interest to test whether all items have different merits. Formally, this amounts to testing $H_0^{(3)}:\,\min_{1\leq i\neq j\leq K} (\mu_i- \mu_j)^2=0$ against $H_1^{(3)}:\,\min_{1\leq i\neq j\leq K} (\mu_i- \mu_j)^2>0$. Clearly this is an intersection--union type test (Casella and Berger, 2021), in which the null is the union of $\cup_{i\neq j}\{\mu_i=\mu_j\}$ and the alternative is of the form $\cap_{i\neq j}\{\mu_i \neq \mu_j\}$. An appropriate test statistic is 
$$
T_{n}^{(3)}=  \min_{i \neq j} n(\widehat\mu_i- \widehat\mu_j)^2.
$$
It is not hard to see that the least favorable configuration under the null hypothesis is attained at $\bmu = \pmb{0}$. Let $\pmb{D}$ be an $\binom{K}{2}\times K$ matrix of pairwise contrasts, e.g., its first row is of the form $(1,-1,0,\ldots,0)$ comparing the first and second items. The limiting distribution of $T_{n}^{(3)}$ is equal to the distribution of $T^{(3)} = \min_{1\leq i\leq \binom{K}{2}}\{{W}_i^{2}\}$ where $\pmb{W}$ is a $\mathcal{N}_{\binom{K}{2}}(\pmb{0},\sigma^2\pmb{D} \pmb{\Theta}^{+}\pmb{D}^{\top})$ RV. Hence $H_0^{(3)}$ can be tested by comparing $T_{n}^{(3)}$ with the $1-\alpha$ quantile of the distribution of $T^{(3)}$ which is easily computed by simulation. 

\item Many questions regarding the ranking of items are of interest. For example one may be interested in testing whether a specific item, the first say, is better than all others. Nettleton (2009) studied such a problem for multinomial cell probabilities. The complement of the latter testing problem is $H_0^{(4)}:\, \mu_1\leq \mu_j$ for all $j\neq 1$ against $H_1^{(4)}:\, \mu_1 > \mu_j$ for some $j\neq 1$; clearly under the null item $1$ is associated with the smallest merit whereas under the alternative it is not. Note that the null can be expressed as $\cap_{2\leq i\leq K} \{\mu_1\leq \mu_i \}$ and the alternative as $\cup_{2\leq i\leq K} \{\mu_1 > \mu_i \}$. A natural test statistic is for this problem is
$$ T^{(4)}_n = \sqrt{n} ( \widehat\mu_1- \min_{2\leq i\leq K}\widehat\mu_i)
$$
Clearly the least favourable configuration under the null is attained when $\bmu = \bzero$. Let $\pmb{E}$ be $(K-1)\times K$ matrix whose rows are contrasts with respect to the first item, i.e., in $i^{th}$ row first element is $1$, $(i+1)^{th}$ element is $-1$ and other elements are zero. The limiting distribution of $T_{n}^{(4)}$ is equal to $T^{(4)}=\min_{1\leq i\leq K-1} \{Y_1\,\ldots,Y_{K-1}\}$ where $\bY\sim \mathcal{N}_{K-1}( \pmb{0}, \sigma^{2}\pmb{E}\pmb{\Theta }^{+}\pmb{E}^{\top})$. The critical value can be easily obtained by Monte--Carlo methods. 

\end{enumerate}

There are many other related problems involving inference on item ranks (Xie et al. 2009). Unlike the classical theory PCGs induce a dependence among the estimated merits which complicates the analysis. See also Hung and Fithian (2019) for a related problem.

\section{Incorporating covariates} \label{section with covariates}

There are situations in which it is advisable and effective to incorporate covariates into models for paired comparisons. For example, Harville (2003) modeled the expected outcome of NCAA basketball league games between teams $i$ and $j$ by $\mu_i-\mu_j +\beta (x_{ik}-x_{jk})$ where $x_{ik} = 1$ for a home game and $x_{ik} = 0$ otherwise. Clearly, if $x_{ik} = 1$ then $x_{jk} = 0$ and vice versa. Therefore $x_{ik}-x_{jk}$ equals $1$ if the game is played on the home--court of team $i$ and $-1$ if the game is played on the home--court of team $j$. The parameter $\beta$ measures the effect of playing on ones own turf. More generally each paired comparison is associated with a pair $(\bx_{ik},\bx_{jk})$ of covariates. The resulting data structure which we denote by 
\begin{equation} 
\label{PCG+Covariates}
(\mathcal{G},\mathcal{Y},\mathcal{X})    
\end{equation}
is a PCG with covariates. Note that $\mathcal{G}$ and $\mathcal{Y}$ are as before while 
$\mathcal{X} = \{(\bx_{ik},\bx_{jk}): k=1,2,\ldots,n_{ij}; i,j=1,2,\ldots,K \}$. 

There are numerous ways of modeling the effect of covariates on the outcomes. We adopt the following model: 
\begin{equation}
\label{model:with:covariate}
Y_{ijk} = \mu_i - \mu_j + \bx_{ijk}^\top \pmb{\beta} + \epsilon_{ijk} 
\end{equation}
which is a natural, ANCOVA type, extension of \eqref{model.Y_ijl}. Covariates are incorporated by setting $\bx_{ijk} = \psi(\bx_{ik},\bx_{jk})$ for some function $\psi$. For example a simple and reasonable choice is $\psi(\pmb{u},\pmb{v})=\pmb{u}-\pmb{v}$ in which case $\bx_{ijk}= \bx_{ik}-\bx_{jk}$ mimicking the modelling choice of Harville (2003) and mirroring the relationship $Y_{ijk}=-Y_{jik}$. Implicitly this choice of $\psi$ corresponds to the very natural situation in which the merit of the $i^{th}$ item in the presence of a covariate $\bx$ is $\mu_i+\bx^{\top}\pmb{\beta}$. Many other options are possible including models in which each item is associated with an individual regression parameter. However, for the simplicity of the exposition we will not venture beyond the relatively general model \eqref{model:with:covariate}. Furthermore, we will not address model selection with respect to $\psi(\cdot,\cdot)$, which is of course an important problem, and assume that $\psi(\cdot,\cdot)$ is given a priori. We shall refer to $\bx_{ijk}$ as the covariate and emphasize that its value may vary across comparisons even among the same two items. It is assumed here that $\bx_{ijk}\in \mathbb{R}^{p}$ where $p$ is a given positive integer. The covariates may be fixed constants or stochastic; if so they are assumed, as usual, to be independent of the errors.  

The parameters in \eqref{model:with:covariate} may be estimated by
\begin{equation}  
\label{def.mu.hat.with.cov}
\widehat{\pmb{\theta}}^{\top}=(\widehat{\bmu}^{\top},\widehat{\pmb{\beta}}^{\top})=\arg \min \{Q(\bmu, \pmb{\beta}) : \bv^{\top}\bmu=0\},
\end{equation}
where ${\pmb{\theta}}^{\top}=({\bmu}^{\top},\pmb{\beta}^{\top})$ is the parameter of interest, $\bv$ was defined earlier, and  
\begin{align}
Q(\bmu,\pmb{\beta}) =\sum_{1\leq i<j\leq
K}\sum_{k=1}^{n_{ij}}(Y_{ijk}-\left( \mu _{i}-\mu _{j}\right) -
\pmb{\beta }^\top\bx_{ijk})^{2},
\label{obj:covariates}
\end{align}
is a sum of squares. We will refer to the resulting estimator as the LSE. It will be useful to view \eqref{model:with:covariate} as the following multiple linear regression model: 
\begin{align}  \label{joint-eq-LM-with-cov}
\bY = \pmb{H} \pmb{\theta} + \pmb{\epsilon},
\end{align}
where $\bY$ is the $n \times 1$ vector of outcomes arranged by their lexicographical order, i.e.,  
\begin{align*}
\bY = (Y_{121},\ldots,Y_{12n_{12}}, Y_{131},\ldots,Y_{13n_{13}}, \ldots, Y_{K-1,K1},\ldots,Y_{K-1,Kn_{K-1,K}})^{\top},
\end{align*}
$\pmb{\epsilon}$ is the corresponding vector of errors and $\pmb{H}=(\bM, \bX)$ is the design matrix. The matrix $\bM$ is a $n \times K$ matrix whose elements $\bM_{ij} \in \{-1,0,1\}$. Specifically if the $l^{th}$ element of  $\bY$ is $Y_{ijk}$ where $i<j$,  then the elements of the $l^{th}$ row of $\bM$ are $\bM_{li}=1,\bM_{lj}=-1$ and $\bM_{lk}=0$ for all $k \notin \{i,j\}$. The matrix $\bX$ is a $n\times p$ matrix whose $a^{th}$ column corresponds to the $a^{th}$ covariates arranged by their lexicographic order. With these notations it is easy to verify that
\begin{equation}
\label{example:cov:line:graph}
    \bS=\bM^{\top}\bY \text{ and } \bM^{\top}\bM=\bN.
\end{equation}

To fix ideas an example is provided. 
\begin{example}
\label{example1:with:covariates}
Consider a PCG $(\mathcal{G},\mathcal{Y},\mathcal{X})$ where $\mathcal{G}$ is a line graph with three vertices, $n_{12}= 1$ and $n_{23}=3$, and $p=2$. Here,  
\begin{align*}
\bY =\begin{pmatrix}
    Y_{121}\\ Y_{231}\\ Y_{232}\\ Y_{233}
\end{pmatrix},\, 
    \bM = \begin{pmatrix}
        1 & -1 & 0 \\
        0 & 1 & -1 \\
        0 & 1 & -1 \\
        0 & 1 & -1 
    \end{pmatrix},\,
    \bX = \begin{pmatrix}
        x_{1211} & x_{1212}\\ x_{2311} & x_{2312}\\ x_{2321} & x_{2322}\\ x_{2331} & x_{2332}
    \end{pmatrix} \,
    \pmb{\theta}= \begin{pmatrix}
        \mu_1\\ \mu_2\\ \mu_3\\ \beta_1\\ \beta_2
    \end{pmatrix}
     \text{and } 
    \pmb{\epsilon} = \begin{pmatrix} \epsilon_{121}\\ \epsilon_{231}\\ \epsilon_{232}\\ \epsilon_{233}\end{pmatrix}.
\end{align*}
Further, $\bS = (Y_{121},-Y_{121} + Y_{231}+ Y_{232}+ Y_{233}, -Y_{231}- Y_{232}- Y_{233})^{\top}$ and 
\begin{equation*}
\bN = \begin{pmatrix}
        1 & -1 & 0 \\
       -1 & 4 & -3 \\
        0 & -3 & 3 \\       
    \end{pmatrix}.
\end{equation*}
\end{example}

\begin{condition}     \label{condition:covariate:existence}
The matrices $\bM$ and $\bX$ satisfy 
$$\mathrm{rank}(\bM)=K-1 \text{ and }
\mathrm{rank}((\pmb{I}-\bM\bM^{+})\bX)=p.$$
\end{condition} 

By Theorem 5 in Puntanen et al. (2011) Condition \ref{condition:covariate:existence} is equivalent to $\mathrm{rank}(\pmb{H})=K+p-1$. Now $\mathrm{rank}(\bM)=\mathrm{rank}(\bM^{\top}\bM)=K-1$ if and only if $\mathcal{G}$ is connected, which is the sufficient condition for the existence of the LSE without covariates. It is not difficult to show that Condition \ref{condition:covariate:existence} holds provided: (i) $\mathrm{rank}(\bM)=K-1$; (ii) $\mathrm{rank}(\bX)=p$; and (iii) $\mathrm{im}(\bM)\cap \mathrm{im}(\bX) = \{\pmb{0}\}$ , i.e., the columns of $\bM$ and $\bX$ span different non--null subspaces. In particular, if $\bX$ is comprised of a single covariate, i.e.,  $\bX=\bx$, then verifying Condition \ref{condition:covariate:existence} is exceptionally simple.   

\begin{proposition} \label{span:M:condition}
Let $\bx\in\mathbb{R}^n$ then $\bx\in \mathrm{im}(\bM)$ if and only if $x_{ijk} = z_i-z_j$ for some constants $z_1,\ldots,z_K$ in which case Condition \ref{condition:covariate:existence} does not hold.  
\end{proposition}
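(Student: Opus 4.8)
The plan is to translate membership in $\mathrm{im}(\bM)$ into a system of scalar equations using the explicit description of $\bM$ given just before Example \ref{example1:with:covariates}, and then to read off the last assertion from the projector property of the Moore--Penrose inverse.

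First I would recall the structure of $\bM$: by construction, if the $l^{th}$ entry of $\bY$ is $Y_{ijk}$ with $i<j$, then the $l^{th}$ row of $\bM$ has a $1$ in column $i$, a $-1$ in column $j$, and zeros elsewhere. Hence for any $\pmb{z}=(z_1,\ldots,z_K)^{\top}\in\mathbb{R}^K$, the $l^{th}$ coordinate of $\bM\pmb{z}$ equals $z_i-z_j$. Indexing the entries of $\bx$ in the same lexicographic fashion as $\bY$, i.e.\ writing them as $x_{ijk}$, the vector equation $\bx=\bM\pmb{z}$ is literally the family of scalar identities $x_{ijk}=z_i-z_j$ for all $(i,j)\in\mathcal{E}$ and $k=1,\ldots,n_{ij}$. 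This immediately gives both directions of the equivalence: if $\bx\in\mathrm{im}(\bM)$, choose $\pmb{z}$ with $\bx=\bM\pmb{z}$ and read off $x_{ijk}=z_i-z_j$ (in particular $x_{ijk}$ cannot depend on $k$); conversely, if $x_{ijk}=z_i-z_j$ for some scalars $z_1,\ldots,z_K$, then $\pmb{z}=(z_1,\ldots,z_K)^{\top}$ satisfies $\bM\pmb{z}=\bx$ coordinatewise, so $\bx\in\mathrm{im}(\bM)$. (The $z_i$ are pinned down only up to a common additive constant, mirroring the non-estimability of $\bmu$ in \eqref{model.Y_ijl}, but only existence is needed.)

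For the final clause I would use that $\bM\bM^{+}$ is the orthogonal projection onto $\mathrm{im}(\bM)$, so $\pmb{I}-\bM\bM^{+}$ is the orthogonal projection onto $\mathrm{im}(\bM)^{\perp}$. When $\bX$ consists of the single column $\bx$ we have $p=1$; if in addition $\bx\in\mathrm{im}(\bM)$, then $(\pmb{I}-\bM\bM^{+})\bX=(\pmb{I}-\bM\bM^{+})\bx=\pmb{0}$, whence $\mathrm{rank}((\pmb{I}-\bM\bM^{+})\bX)=0\neq 1=p$, so Condition \ref{condition:covariate:existence} fails.

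There is no genuine obstacle here --- the argument is essentially a bookkeeping exercise with the definition of $\bM$. The only conceptual point worth stressing is the interpretation: $\bx\in\mathrm{im}(\bM)$ forces the covariate to be constant across all repeated comparisons of a fixed pair and to be expressible as a vertex-difference $z_i-z_j$, which is exactly the degenerate situation (a ``covariate'' that is already absorbed into the merits) in which the design matrix $\pmb{H}=(\bM,\bx)$ loses rank.
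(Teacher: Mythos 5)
Your argument is correct and follows essentially the same route as the paper's proof: both directions of the equivalence are read off coordinatewise from the identity $\bx=\bM\pmb{z}$ using the explicit row structure of $\bM$. You additionally spell out the final clause (that $(\pmb{I}-\bM\bM^{+})\bx=\pmb{0}$ kills Condition \ref{condition:covariate:existence}), which the paper's own proof leaves implicit; that is a welcome completion, not a deviation.
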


The following theorem extends Theorem \ref{Thm-UniqueE} to the situation where covariates are present.

\begin{theorem} \label{Thm-Est.Covariates}
The LSE \eqref{def.mu.hat.with.cov} is unique if and only if Condition \ref{condition:covariate:existence} holds. 	Moreover, assuming $\bv=\b1$, we have: 
\begin{align}
    \label{LSE:cov:mu}
    \widehat{\bmu} =\bN^{+} (\bS-\bM^{\top}\bX\widehat{\pmb{\beta}}),
\end{align}
  where 
  \begin{align}
      \label{LSE:cov:beta}
      \widehat{\pmb{\beta}} = (\bX^\top (\pmb{I}-\bM\bN^{+}\bM^{\top}) \bX)^{-1}\bX^\top(\pmb{I}-\bM\bN^{+}\bM^{\top})\bY.
  \end{align}
\end{theorem}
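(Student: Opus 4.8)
The plan is to reduce the claim to a standard fact about least squares in the full-rank linear model \eqref{joint-eq-LM-with-cov} and then to make the partitioned structure $\pmb{H}=(\bM,\bX)$ explicit. First I would argue existence and uniqueness: the constrained minimizer \eqref{def.mu.hat.with.cov} is unique if and only if, after removing the one-dimensional nonidentifiability of $\bmu$ along $\pmb{1}$ (absorbed by the constraint $\bv^\top\bmu=0$), the design has full column rank on the relevant subspace. Concretely, writing $Q(\bmu,\pmb{\beta})=\|\bY-\bM\bmu-\bX\pmb{\beta}\|_2^2$, the objective depends on $\bmu$ only through $\bM\bmu$, and since $\ker(\bM)=\mathrm{span}(\pmb{1})$ exactly when $\mathcal{G}$ is connected (equivalently $\mathrm{rank}(\bM)=K-1$, using $\bM^\top\bM=\bN$ from \eqref{example:cov:line:graph}), the constraint $\bv^\top\bmu=0$ with $\bv^\top\pmb{1}\neq 0$ picks out a unique representative of each coset. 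Uniqueness of the pair $(\bM\bmu,\pmb{\beta})$ is then equivalent to $\mathrm{rank}(\pmb{H})=K+p-1$, which by the cited Theorem 5 of Puntanen et al.\ (2011) is exactly Condition \ref{condition:covariate:existence}; I would spell out the ``only if'' direction by exhibiting, when the rank condition fails, a nonzero direction $(\pmb{\delta},\pmb{\gamma})$ with $\bM\pmb{\delta}+\bX\pmb{\gamma}=\pmb{0}$ and $\bv^\top\pmb{\delta}=0$, along which $Q$ is constant.

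Next I would derive the closed forms \eqref{LSE:cov:mu}--\eqref{LSE:cov:beta} by the usual profiling (Frisch--Waugh--Lovell) argument. Fixing $\pmb{\beta}$, the inner minimization over $\bmu$ subject to $\bv=\pmb{1}$ is precisely the covariate-free problem of Theorem \ref{Thm-UniqueE} applied to the adjusted responses $\bY-\bX\pmb{\beta}$; since $\bS=\bM^\top\bY$, the adjusted ``score'' vector is $\bM^\top(\bY-\bX\pmb{\beta})=\bS-\bM^\top\bX\pmb{\beta}$, and the minimizing $\bmu$ (with $\bv=\pmb{1}$) is $\bN^{+}(\bS-\bM^\top\bX\pmb{\beta})$, giving \eqref{LSE:cov:mu} once $\pmb{\beta}$ is replaced by $\widehat{\pmb{\beta}}$. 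Substituting $\widehat{\bmu}(\pmb{\beta})=\bN^{+}(\bS-\bM^\top\bX\pmb{\beta})$ back into $Q$ and using $\bN=\bM^\top\bM$, the cross terms collapse so that $\bM\widehat{\bmu}(\pmb{\beta})=\bM\bN^{+}\bM^\top(\bY-\bX\pmb{\beta})$, i.e.\ the fitted merit part is the orthogonal projection of $\bY-\bX\pmb{\beta}$ onto $\mathrm{im}(\bM)$ (here $\bM\bN^{+}\bM^\top=\bM\bM^{+}$ is that projector, and I would note this identity). Hence the profiled objective is $\|(\pmb{I}-\bM\bM^{+})(\bY-\bX\pmb{\beta})\|_2^2$, whose minimizer is the OLS coefficient from regressing $(\pmb{I}-\bM\bM^{+})\bY$ on $(\pmb{I}-\bM\bM^{+})\bX$, namely $\widehat{\pmb{\beta}}=(\bX^\top(\pmb{I}-\bM\bM^{+})\bX)^{-1}\bX^\top(\pmb{I}-\bM\bM^{+})\bY$, which is \eqref{LSE:cov:beta} after rewriting $\bM\bM^{+}=\bM\bN^{+}\bM^\top$. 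The matrix $\bX^\top(\pmb{I}-\bM\bM^{+})\bX$ is invertible precisely because $\mathrm{rank}((\pmb{I}-\bM\bM^{+})\bX)=p$, which is the second half of Condition \ref{condition:covariate:existence}.

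The main obstacle, and the step I would be most careful about, is the bookkeeping around the idempotent $\bM\bN^{+}\bM^\top$: one must verify that it equals the orthogonal projector onto $\mathrm{im}(\bM)$ despite $\bN$ being singular. This rests on the facts that $\bN^{+}=(\bM^\top\bM)^{+}$, that $\mathrm{im}(\bM^\top)=\mathrm{im}(\bN)$, and that $\bM\bM^{+}\bM=\bM$; a clean way is to check $\bM\bN^{+}\bM^\top\bM=\bM$ and symmetry/idempotency directly, so that $\bM\bN^{+}\bM^\top$ fixes $\mathrm{im}(\bM)$ and annihilates its orthogonal complement. Everything else---the Frisch--Waugh reduction, the quadratic-expansion cancellation of cross terms, and invoking Theorem \ref{Thm-UniqueE} and Puntanen et al.\ (2011)---is routine once this projector identity is in hand.
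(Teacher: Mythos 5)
Your proposal is correct, and it reaches the result by a somewhat different route than the paper. For uniqueness, the paper works through the Lagrangian of the constrained problem, shows the multiplier vanishes, writes the full solution set as $(\pmb{H}^{\top}\pmb{H})^{+}\pmb{H}^{\top}\bY+(\pmb{I}-(\pmb{H}^{\top}\pmb{H})^{+}\pmb{H}^{\top}\pmb{H})\pmb{\xi}$, and then argues about the dimension of $\mathrm{ker}(\pmb{H}^{\top}\pmb{H})$ versus the single linear constraint; you instead argue directly on cosets of $\mathrm{ker}(\pmb{H})$ and the rank additivity $\mathrm{rank}(\pmb{H})=\mathrm{rank}(\bM)+\mathrm{rank}((\pmb{I}-\bM\bM^{+})\bX)$, which is logically the same content but skips the multiplier bookkeeping. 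For the explicit formulas, the paper reparametrizes the design as $\bY=\bM\pmb{c}_1+\widetilde{\bX}\pmb{c}_2+\pmb{\epsilon}$ with $\widetilde{\bX}=(\pmb{I}-\bM\bN^{+}\bM^{\top})\bX$ so that the normal equations become block diagonal, whereas you profile $\bmu$ out via Frisch--Waugh--Lovell, reusing Theorem \ref{Thm-UniqueE} on the adjusted scores $\bS-\bM^{\top}\bX\pmb{\beta}$ and then minimizing $\|(\pmb{I}-\bM\bM^{+})(\bY-\bX\pmb{\beta})\|_2^2$. These are two faces of the same orthogonal decomposition, and you correctly flag the one identity everything hinges on, namely $\bM\bN^{+}\bM^{\top}=\bM\bM^{+}$ (which follows from $\bM^{+}=(\bM^{\top}\bM)^{+}\bM^{\top}$ and $\bN=\bM^{\top}\bM$), together with invertibility of $\bX^{\top}(\pmb{I}-\bM\bM^{+})\bX$ from the second half of Condition \ref{condition:covariate:existence}. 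What your approach buys is a shorter, more modular derivation that leans on the already-proved covariate-free case; what the paper's approach buys is the intermediate general formula \eqref{cov:thm1:eq4}, valid for an arbitrary constraint vector $\bv$, which the authors explicitly want for later use. One small point to make explicit if you write this up: in passing from the profiled minimizer $\widehat{\pmb{\beta}}$ back to \eqref{LSE:cov:mu} you should note that the joint minimizer must satisfy $\widehat{\bmu}=\widehat{\bmu}(\widehat{\pmb{\beta}})$ because the inner constrained minimizer is unique when $\mathcal{G}$ is connected; you gesture at this but it is the one joint that should not be left implicit.
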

Note that the expression for $\widehat{\pmb{\beta}}$ appearing in \eqref{LSE:cov:beta} is reminiscent of the weighted LSE in a linear regression model, whereas  $\widehat{\bmu}$ is the LSE of a graph--LM without covariates with (adjusted) outcomes $\widehat{Y}_{ijk}=\mu_i -\mu_j -\widehat{\pmb{\beta}}^{\top} \bx_{ijk}$. In the proof of Theorem \ref{Thm-Est.Covariates} a general expression for the LSE for any constraint vector $\bv$ is provided see \eqref{cov:thm1:eq4}. Observe that the normal equations associated with \eqref{joint-eq-LM-with-cov} are 
$\pmb{H}^{\top}\pmb{H}\pmb{\theta}= \pmb{H}^{\top} \bY$ which can also be written as
\begin{equation} \label{eq:18}
\begin{pmatrix}
    \bN & \bM^{\top}\bX\\
    \bX^{\top} \bM & \bX^{\top}\bX
\end{pmatrix}
\pmb{\theta} =
\begin{pmatrix}
    \pmb{S} \\ \bX^{\top}\bY
\end{pmatrix}.
\end{equation}
The Laplacian appearing above highlights the dependence of the estimators on the structure of the graph. Proposition \ref{span:M:condition} indicates that the LSE is well defined and unique for all reasonable design matrices. 

Let 
$$\phi_n=\measuredangle(\bM,\bX)=\inf\{\measuredangle(\pmb{u},\pmb{v}): \pmb{u}\in\mathrm{im}(\bM),\, \pmb{v}\in\mathrm{im}(\bX)\}$$ 
denote the smallest principal angle between the subspaces spanned by columns of the matrices $\bM$ and $\bX$ where the index $n$ on $\phi$ emphasizes its implicit dependence on the total sample size. 

\begin{condition} \label{cond:wlln:with:covariates}
The smallest eigenvalue of $\bX^{\top}\bX$ diverges and for all large $n$ we have $\phi_n \ge \phi$ for some $\phi\in(0,\pi/2)$. 
\end{condition}

\begin{lemma} \label{lemma:cod:with:cov:wlln}
If Conditions \ref{Con(AlgConnec)} and \ref{cond:wlln:with:covariates} hold then the second smallest eigenvalue of $\pmb{H}^{\top}\pmb{H}$ diverges. 
\end{lemma}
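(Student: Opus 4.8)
The plan is to show that the second smallest eigenvalue of $\pmb{H}^{\top}\pmb{H}=\begin{pmatrix}\bN & \bM^{\top}\bX\\ \bX^{\top}\bM & \bX^{\top}\bX\end{pmatrix}$ diverges by exhibiting a $K+p-1$ dimensional subspace on which the quadratic form $\pmb{\theta}^{\top}\pmb{H}^{\top}\pmb{H}\pmb{\theta}=\|\pmb{H}\pmb{\theta}\|_2^2$ is bounded below by a quantity that diverges. Since $\ker(\pmb{H})=\mathrm{span}(\b1_K,\pmb{0}_p)^{\top}$ (this is exactly the content of Condition \ref{condition:covariate:existence}, which holds under our assumptions by the argument following that condition), the smallest eigenvalue is $0$ with a one dimensional eigenspace, and the second smallest eigenvalue equals $\inf\{\|\pmb{H}\pmb{\theta}\|_2^2 : \|\pmb{\theta}\|_2=1,\ \pmb{\theta}\perp(\b1_K,\pmb{0}_p)^{\top}\}$. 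First I would decompose $\pmb{\theta}=(\bmu^{\top},\pmb{\beta}^{\top})^{\top}$ and write $\|\pmb{H}\pmb{\theta}\|_2^2=\|\bM\bmu+\bX\pmb{\beta}\|_2^2$, then exploit the principal angle bound: for vectors $\pmb{a}\in\mathrm{im}(\bM)$ and $\pmb{b}\in\mathrm{im}(\bX)$ one has $\|\pmb{a}+\pmb{b}\|_2^2\ge (1-\cos\phi_n)(\|\pmb{a}\|_2^2+\|\pmb{b}\|_2^2)$, and under Condition \ref{cond:wlln:with:covariates} this constant is bounded below by $1-\cos\phi>0$ for all large $n$.

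With $\pmb{a}=\bM\bmu$ and $\pmb{b}=\bX\pmb{\beta}$, it then suffices to lower bound $\|\bM\bmu\|_2^2+\|\bX\pmb{\beta}\|_2^2$ on the unit sphere orthogonal to $(\b1_K,\pmb{0}_p)^{\top}$. Here $\|\bX\pmb{\beta}\|_2^2\ge\lambda_{\min}(\bX^{\top}\bX)\|\pmb{\beta}\|_2^2$, which diverges times $\|\pmb{\beta}\|_2^2$ by Condition \ref{cond:wlln:with:covariates}, and $\|\bM\bmu\|_2^2=\bmu^{\top}\bN\bmu\ge\lambda_2(\bN)\|\bmu-(\b1^{\top}\bmu/K)\b1\|_2^2$; since we may reduce to the constraint $\b1^{\top}\bmu=0$ via the orthogonality (the component of $\pmb{\theta}$ along $(\b1_K,\pmb{0}_p)^{\top}$ being zero forces $\b1^{\top}\bmu=0$ only after a small argument — see below), this is $\ge\lambda_2(\bN)\|\bmu\|_2^2$, and $\lambda_2(\bN)=\lambda_2\to\infty$ under Condition \ref{Con(AlgConnec)} as established in the proof of Theorem \ref{Thm-graph.WLLN}. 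Splitting the unit sphere into the regime $\|\pmb{\beta}\|_2^2\ge 1/2$ (where the $\bX$ term forces divergence) and $\|\bmu\|_2^2\ge 1/2$ (where, after handling the constraint, the $\bN$ term forces divergence) gives a uniform diverging lower bound. Combining with the factor $1-\cos\phi$ yields $\lambda_2(\pmb{H}^{\top}\pmb{H})\to\infty$.

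The main obstacle is the bookkeeping around the constraint: orthogonality of $\pmb{\theta}$ to $(\b1_K,\pmb{0}_p)^{\top}$ means $\b1^{\top}\bmu=0$, so that part is actually clean; the subtler point is that the infimum defining $\lambda_2(\pmb{H}^{\top}\pmb{H})$ is over the \emph{joint} unit sphere, so when $\bmu$ is small but nonzero and $\pmb{\beta}$ is correspondingly near unit norm, one must be sure the cross term $2\bmu^{\top}\bM^{\top}\bX\pmb{\beta}$ does not destroy the bound — this is precisely why the principal angle inequality, applied \emph{before} invoking the two eigenvalue bounds, is the right tool, since it absorbs the cross term once and for all. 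A cleaner alternative I would consider is to use the block formula $\lambda_{\min}$ of a Schur complement: the nonzero eigenvalues of $\pmb{H}^{\top}\pmb{H}$ restricted appropriately are controlled by $\lambda_2(\bN)$ and by $\lambda_{\min}$ of the Schur complement $\bX^{\top}(\pmb{I}-\bM\bN^{+}\bM^{\top})\bX=\bX^{\top}(\pmb{I}-\bM\bM^{+})\bX$, and one shows this Schur complement has smallest eigenvalue at least $(\sin^2\phi_n)\,\lambda_{\min}(\bX^{\top}\bX)$, which diverges; this routes through quantities already named in Theorem \ref{Thm-Est.Covariates} and may be the more economical write-up.
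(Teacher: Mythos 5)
Your proposal is correct, and your primary argument takes a genuinely different route from the paper's. You bound the Rayleigh quotient of $\pmb{H}^{\top}\pmb{H}$ directly on the unit sphere orthogonal to $(\b1^{\top},\pmb{0}^{\top})^{\top}$, absorbing the cross term $2\bmu^{\top}\bM^{\top}\bX\pmb{\beta}$ via $\|\pmb{a}+\pmb{b}\|^{2}\ge(1-\cos\phi_n)(\|\pmb{a}\|^{2}+\|\pmb{b}\|^{2})$ and then splitting into the cases $\|\pmb{\beta}\|^{2}\ge 1/2$ and $\|\bmu\|^{2}\ge 1/2$; this is sound, since orthogonality to the kernel gives $\b1^{\top}\bmu=0$ and hence $\bmu^{\top}\bN\bmu\ge\lambda_2(\bN)\|\bmu\|^{2}$. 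The paper instead orthogonalizes the design: it replaces $\pmb{H}=(\bM,\bX)$ by $\widetilde{\pmb{H}}=(\bM,\widetilde{\bX})$ with $\widetilde{\bX}=(\pmb{I}-\bM\bN^{+}\bM^{\top})\bX$, argues that the two Gram matrices have the same kernel and realize the same quadratic-form values so their second eigenvalues diverge together, and then uses the block-diagonal structure: $\lambda_2(\bN)\to\infty$ by Theorem \ref{Thm-graph.WLLN} and $\lambda_{\min}(\widetilde{\bX}^{\top}\widetilde{\bX})\ge(1-\cos^{2}\phi)\,\lambda_{\min}(\bX^{\top}\bX)\to\infty$ by the Pythagorean identity applied to $\bx=\bX\pmb{l}$. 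That is exactly the ``cleaner alternative'' in your last sentence (note $\bM\bN^{+}\bM^{\top}=\bM\bM^{+}$ is the projector onto $\mathrm{im}(\bM)$, so your Schur complement is the paper's $\widetilde{\bX}^{\top}\widetilde{\bX}$, the matrix already appearing in Theorem \ref{Thm-Est.Covariates}). Your direct route buys independence from the paper's change-of-variables step relating the spectra of $\pmb{H}^{\top}\pmb{H}$ and $\widetilde{\pmb{H}}^{\top}\widetilde{\pmb{H}}$, which is not an isometry and so deserves the extra care you avoid; the paper's route buys the sharper constant $\sin^{2}\phi$ in place of your $1-\cos\phi$. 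One point worth making explicit in your write-up: Conditions \ref{Con(AlgConnec)} and \ref{cond:wlln:with:covariates} imply Condition \ref{condition:covariate:existence} for all large $n$, so the kernel of $\pmb{H}^{\top}\pmb{H}$ is exactly one dimensional and Courant--Fischer with the single constraint $\pmb{\theta}\perp(\b1^{\top},\pmb{0}^{\top})^{\top}$ does characterize $\lambda_2$.
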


In fact, Conditions \ref{Con(AlgConnec)} and \ref{cond:wlln:with:covariates} are necessary and sufficient for the conclusion of Lemma \ref{lemma:cod:with:cov:wlln} and, as we show below, imply consistent estimation. It is well known, e.g., Fahrmeir and Kaufmann (1985), that consistent estimation in standard regression problems requires that the smallest eigenvalue of $\bX^{\top} \bX$ diverges. The latter together with Condition \ref{cond:wlln:with:covariates} guarantee that the norm of the residual of $\bX$ projected onto the span of $\bM$ does not vanish as $n\to\infty$. Otherwise the second smallest eigenvalue of $\pmb{H}^{\top} \pmb{H}$ may not diverge. Two relevant examples are provided.

\begin{example}
Consider a PCG $(\mathcal{G},\mathcal{Y},\mathcal{X})$ where $\mathcal{G}$ satisfies Condition \ref{Con(AlgConnec)}, $p=1$, so the matrix $\bX$ is just a vector $\bx$, and $n$ is large. Suppose that for the first $n_1$ observations we have $\mathrm{rank}(\pmb{H}_{n_1})=K$. Next assume that for all $i>n_1$ we set $x_i = M_{i1}$, i.e., we equate the $i^{th}$ elements of $\bx$ and $\bM_{1}$ the first column of $\bM$. Clearly, $\bM_1\in\mathrm{span}(\bM)$ and $\|\bx-\bM_1\|<\infty$. Since the orthogonal projection $\bM \bN^{+} \bM^{\top}\bx$ is the nearest point to $\bX$ in $\mathrm{span}(\bM)$, we have 
\begin{align} \label{example:3.2:eq1}
\|\bx-\bM \bN^{+} \bM^{\top}\bx\| \leq \|\bx-\bM_1\|< \infty.
\end{align}
Further, the angle between $\bx$ and $\bM_1$ is 
\begin{align*}
\cos^{-1}(\frac{\bx^{\top}\bM_1}{\|\bx\|\|\bM_1\|})\to 0 \text{ as } n\to\infty.
\end{align*}
Consequently, Condition \ref{cond:wlln:with:covariates} does not hold. Note that \eqref{example:3.2:eq1} implies that ${\bx}^{\top}(\pmb{I}- \bM \bN^{+} \bM^{\top})\bx$ does not diverge; however $\bx^{\top}\bx$ does diverge. Consequently, the second smallest eigenvalue of  $\pmb{H}^{\top} \pmb{H}$ is finite as $n\to\infty$.
\end{example}

\begin{example} \label{example_4.3}
Consider a PCG $(\mathcal{G},\mathcal{Y},\mathcal{X})$ with $K=2$ and $p=1$. Clearly, Condition \ref{Con(AlgConnec)} is satisfied. The first and second columns of the matrix $\bM$ are $\b1$ and $-\b1$ respectively. We choose $\bx$ such that $x_i=-1$ if $i=j^2$ for some $j\in\mathbb{N}$ and $x_i=1$ otherwise. Clearly, $\mathrm{rank}(\pmb{H})=2$, $\mathrm{span}(\bM)=\mathrm{span}(\b1)$ and $\bx^{\top}\bx\to\infty$ as $n\to\infty$. Next, the orthogonal projection of $\bx$ onto $\mathrm{span}(\bM)$ is
	$$
	\frac{\b1^{\top}\bx}{\|\b1\|^2}\b1= \frac{n-2m}{n}\b1, \text{ where } m = \lfloor\sqrt{n}\rfloor.
	$$
	Therefore,
	$$
	\cos \phi_n = \frac{n-2m}{n}\to 1 \text{ as }n\to\infty.
	$$
	Consequently, $\phi_n\to 0$ as $n\to\infty$. However, the second smallest eigenvalue of  $\pmb{H}^{\top}\pmb{H}$ is
	$$
	\lambda_2= \frac{3n}{2}-\frac{1}{2} \sqrt{9n^2-32nm+32m^2}.
	$$
	With some algebra, it can be shown that $\lambda_2\to\infty$ as $n\to\infty$ so Condition \ref{cond:wlln:with:covariates} is not necessary for the divergence of the second smallest eigenvalue of $\pmb{H}$.
\end{example}

Condition \ref{cond:wlln:with:covariates} is trivially satisfied when the spaces spanned by the columns of $\bM$ and $\bX$ are orthogonal. In such situations the matrix on the LHS of \eqref{eq:18} reduced to $\mathrm{BlockDiag}(\bN,\bX^{\top}\bX)$ and the conclusion of Lemma \ref{lemma:cod:with:cov:wlln} is immediate. Interestingly, it can also be shown that if $\bX^{\top}\bM \pmb{\xi}=\pmb{0}$ where $\pmb{\xi}$ is the Fiedler vector of the graph $\mathcal{G}$ (i.e., $\pmb{\xi}$ is the eigenvector associated with the second smallest eigenvalue of $\bN$) then Condition \ref{cond:wlln:with:covariates} holds as well. Finally, we note that Condition \ref{cond:wlln:with:covariates} is unlikely not to hold in practice. For if it were so at least one column of $\bX$, or a linear combination thereof, would need to at least approximately satisfy the conditions of Proposition \ref{span:M:condition}. This seems very unlikely. Further support for the assertion that Condition \ref{cond:wlln:with:covariates} will hold in applications is given in Lemma \ref{proposition:random:cov:properties} appearing below.   

\begin{lemma} \label{proposition:random:cov:properties}
If the columns of $\bX$ are independent and the components in each column of $\bX$ are IID with zero mean and finite variance, then as $n\to\infty$ the smallest eigenvalue of $\bX^{\top}\bX$ diverges and $\phi_n\to \pi/2$ with probability one.
\end{lemma}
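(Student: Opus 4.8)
The plan is to treat the two assertions separately, since the divergence of the smallest eigenvalue of $\bX^{\top}\bX$ is a standard strong-law argument while the statement $\phi_n \to \pi/2$ requires showing that the normalized columns of $\bX$ become asymptotically orthogonal to the fixed $(K-1)$-dimensional subspace $\mathrm{im}(\bM)$. For the first part, write $\bX = (\bx^{(1)},\ldots,\bx^{(p)})$ and observe that the $(a,b)$ entry of $\bX^{\top}\bX$ is $\sum_{l=1}^{n} X_{la}X_{lb}$. By the strong law of large numbers, $n^{-1}\sum_{l=1}^n X_{la}^2 \to \mathbb{E}(X_{1a}^2) =: \tau_a^2 > 0$ and, by independence of distinct columns together with the zero-mean assumption, $n^{-1}\sum_{l=1}^n X_{la}X_{lb} \to 0$ for $a\neq b$, all almost surely. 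Hence $n^{-1}\bX^{\top}\bX \to \mathrm{Diag}(\tau_1^2,\ldots,\tau_p^2)$ a.s., a positive definite matrix; by continuity of eigenvalues the smallest eigenvalue of $n^{-1}\bX^{\top}\bX$ converges a.s.\ to $\min_a \tau_a^2 > 0$, so the smallest eigenvalue of $\bX^{\top}\bX$ itself diverges with probability one.

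For the angle, recall $\cos\phi_n = \sup\{ |\pmb{u}^{\top}\pmb{v}| : \pmb{u}\in\mathrm{im}(\bM),\,\pmb{v}\in\mathrm{im}(\bX),\,\|\pmb{u}\|=\|\pmb{v}\|=1\}$, so it suffices to show this supremum tends to $0$ a.s. Equivalently, letting $\bP_M = \bM\bN^{+}\bM^{\top}$ be the orthogonal projection onto $\mathrm{im}(\bM)$, we have $\cos^2\phi_n = \lambda_{\max}\big( (\bX^{\top}\bX)^{-1/2}\bX^{\top}\bP_M\bX(\bX^{\top}\bX)^{-1/2}\big)$, so it is enough to prove that $\bX^{\top}\bP_M\bX$ stays bounded (in operator norm) while $\bX^{\top}\bX$ diverges. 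Since $\bP_M$ has rank $K-1$, write $\bP_M = \sum_{r=1}^{K-1}\pmb{q}_r\pmb{q}_r^{\top}$ for an orthonormal basis $\pmb{q}_1,\ldots,\pmb{q}_{K-1}$ of $\mathrm{im}(\bM)$; then the $(a,b)$ entry of $\bX^{\top}\bP_M\bX$ is $\sum_{r=1}^{K-1} (\pmb{q}_r^{\top}\bx^{(a)})(\pmb{q}_r^{\top}\bx^{(b)})$. Each $\pmb{q}_r^{\top}\bx^{(a)} = \sum_{l} q_{r,l} X_{la}$ is a weighted sum of IID zero-mean finite-variance RVs with $\sum_l q_{r,l}^2 = 1$; its variance is $\tau_a^2$, bounded uniformly in $n$. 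Thus $\mathbb{E}\|\bX^{\top}\bP_M\bX\| = O(1)$, which already yields $\cos\phi_n \to 0$ in probability, hence $\phi_n\to\pi/2$ in probability. To upgrade to almost-sure convergence one shows $\pmb{q}_r^{\top}\bx^{(a)} = o(n^{1/2-\delta})$ a.s.\ for some $\delta>0$ — this requires controlling the weights $q_{r,l}$. The subtlety is that the orthonormal vectors $\pmb{q}_r$ depend on $n$ (the rows of $\bM$ grow) and their entries need not be small individually; however, because $\mathrm{im}(\bM)$ is spanned by the columns of $\bM$ whose entries lie in $\{-1,0,1\}$ with each row having exactly two nonzeros, one can express $\pmb{q}_r^{\top}\bx^{(a)}$ in terms of the partial sums $S_{ij} = \sum_k X_{ijk,a}$ over edges. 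A Kolmogorov-type maximal inequality (or the Marcinkiewicz–Zygmund SLLN applied edgewise, using $n_{ij}\to\infty$ along the spanning tree from Condition \ref{Con(AlgConnec)}) gives $S_{ij} = o(n_{ij}^{1/2}\log n_{ij})$ a.s., whence each coordinate of $\bM^{\top}\bX$ is $o(\sqrt{n}\,\mathrm{polylog}(n))$ a.s. Combining with $n\bN^{+}$ behaving like a bounded matrix (or at worst growing polylogarithmically) shows $\bX^{\top}\bP_M\bX = \bX^{\top}\bM\bN^{+}\bM^{\top}\bX$ grows strictly slower than $\bX^{\top}\bX \sim n\,\mathrm{Diag}(\tau_a^2)$, and therefore $\cos\phi_n\to 0$ a.s.

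The main obstacle is the second part, and specifically the almost-sure (rather than in-probability) statement: one must quantify how fast $\bM^{\top}\bX$ can grow relative to $\bX^{\top}\bX$, uniformly over the $n$-dependent subspace $\mathrm{im}(\bM)$. The cleanest route is to bypass the eigenvectors $\pmb{q}_r$ altogether and bound $\cos^2\phi_n \le \|\bN^{+}\|\cdot\|\bM^{\top}\bX(\bX^{\top}\bX)^{-1}\bX^{\top}\bM\|$; since $\|\bN^{+}\| = O(1/\lambda_2(\bN))$ and $\lambda_2(\bN)\to\infty$ under Condition \ref{Con(AlgConnec)} (by the proof of Theorem \ref{Thm-graph.WLLN}), while $\bM^{\top}\bX(\bX^{\top}\bX)^{-1}\bX^{\top}\bM$ is a $K\times K$ matrix each of whose entries is a bounded quadratic form in the a.s.-convergent quantities $n^{-1}\bX^{\top}\bX$ and the a.s.-small quantities $n^{-1/2}\bM^{\top}\bX$, the product vanishes a.s. This reduces everything to the edgewise SLLN bound on $S_{ij}$, which is routine given $n_{ij}\to\infty$.
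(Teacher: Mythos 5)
Your first part (divergence of $\lambda_{\min}(\bX^{\top}\bX)$ via the strong law, $n^{-1}\bX^{\top}\bX\to\mathrm{diag}(\tau_1^2,\ldots,\tau_p^2)$ a.s.) is correct, and your in-probability argument for the angle --- writing $\cos^2\phi_n$ as the top eigenvalue of $(\bX^{\top}\bX)^{-1/2}\bX^{\top}\pmb{P}_{\bM}\bX(\bX^{\top}\bX)^{-1/2}$ with $\pmb{P}_{\bM}=\bM\bN^{+}\bM^{\top}$, and using that $\pmb{P}_{\bM}$ has rank at most $K-1$ so that $\mathbb{E}\,\mathrm{tr}(\bX^{\top}\pmb{P}_{\bM}\bX)=O(1)$ --- is clean and correct. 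You are in fact more careful here than the paper, whose proof only checks that the angle between \emph{individual columns} of $\bM$ and $\bX$ tends to $\pi/2$ (normalizing $\bM_i^{\top}\bX_i$ and citing Lemma 3.2.4 of Vershynin (2020) for the almost sure convergence); that column-by-column reduction does not by itself control the smallest principal angle between the two subspaces, which is the quantity $\phi_n$ that actually appears in Condition \ref{cond:wlln:with:covariates}.

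The genuine gap is in your almost-sure upgrade. Your ``cleanest route'' bounds $\cos^2\phi_n\le\|\bN^{+}\|\cdot\|\bM^{\top}\bX(\bX^{\top}\bX)^{-1}\bX^{\top}\bM\|$ and then needs $n\bN^{+}$ to be bounded, or at least $\lambda_2(\bN)$ to diverge faster than the $\log\log n$ factor that the law of the iterated logarithm attaches to the entries of $\bM^{\top}\bX$. Neither Condition \ref{Con(AlgConnec)} nor Condition \ref{Con(Rate+CLT)} is a hypothesis of the lemma, and even under Condition \ref{Con(AlgConnec)} one only has $\lambda_2(\bN)=O(m)$ with $m$ possibly growing arbitrarily slowly; if one edge carries a bounded number of comparisons forever then $\lambda_2(\bN)=O(1)$ and your bound delivers only $\cos^2\phi_n\le O(\log\log n)$, which proves nothing. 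The bound is simply too lossy: replacing $\bN^{+}$ by $\|\bN^{+}\|\pmb{I}$ inside the quadratic form discards exactly the structure that saves the argument. The fix is to stay with the projection you already set up: every vector in $\mathrm{im}(\bM)$ is constant on each edge-block of observations, so $\mathrm{im}(\bM)$ is contained in the span of the $|\mathcal{E}|$ block indicators, whence $\|\pmb{P}_{\bM}\bX_a\|^2\le\sum_{(i,j)\in\mathcal{E}}S_{ij,a}^2/n_{ij}$, where $S_{ij,a}$ is the sum of the $a$-th covariate over the $(i,j)$ block and $\bX_a$ is the $a$-th column of $\bX$. Each summand is $O(\log\log n_{ij})$ a.s.\ by the LIL when $n_{ij}\to\infty$ and is eventually constant otherwise, and $|\mathcal{E}|\le K(K-1)/2$ is fixed; hence $\|\pmb{P}_{\bM}\bX_a\|^2=o(n)$ a.s., while $\|\bX\pmb{d}\|^2\ge\lambda_{\min}(\bX^{\top}\bX)\|\pmb{d}\|^2\sim n(\min_a\tau_a^2)\|\pmb{d}\|^2$, giving $\cos\phi_n\to0$ a.s.\ with no graph conditions whatsoever.
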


Thus, if the covariates are centered and random samples, then condition \ref{cond:wlln:with:covariates} holds with probability one.  

\begin{condition} \label{cond:clt:with:covariates}
The smallest eigenvalue of $\lim_n\bX^{\top}\bX/n$ is positive and for all large $n$ we have $\phi_n \ge \phi$ for some $\phi\in(0,\pi/2)$.
\end{condition}

If Condition \ref{cond:clt:with:covariates} holds then all eigenvalues of $\bX^{\top}\bX$ grow at the same rate. Clearly Condition \ref{cond:clt:with:covariates}
implies Condition \ref{cond:wlln:with:covariates} and is an analogue of Condition \ref{Con(Rate+CLT)} for PCGs without covariates. 

\begin{lemma} \label{lemma:cod:with:cov:clt}
Let $\pmb{\Sigma}=\lim_n\pmb{H}^{\top}\pmb{H}/n$. If Conditions \ref{Con(Rate+CLT)} and \ref{cond:clt:with:covariates} hold then $\mathrm{rank}(\pmb{\Sigma})=K+p-1$.
\end{lemma}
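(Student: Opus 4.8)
The plan is to work with the block form
\[
\pmb{H}^{\top}\pmb{H}=\begin{pmatrix}\bN & \bM^{\top}\bX\\ \bX^{\top}\bM & \bX^{\top}\bX\end{pmatrix}
\]
recorded in \eqref{example:cov:line:graph} and to pin down the kernel of the limiting matrix $\pmb{\Sigma}$. Since each $\pmb{H}^{\top}\pmb{H}/n$ is non--negative definite, so is $\pmb{\Sigma}$, and therefore $\mathrm{rank}(\pmb{\Sigma})=K+p-\dim\ker(\pmb{\Sigma})$ with $\ker(\pmb{\Sigma})$ equal to the set of vectors $\pmb{z}=(\pmb{a}^{\top},\pmb{b}^{\top})^{\top}$ (with $\pmb{a}\in\mathbb{R}^{K}$, $\pmb{b}\in\mathbb{R}^{p}$) satisfying $\pmb{z}^{\top}\pmb{\Sigma}\pmb{z}=0$. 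So it suffices to show that this null set is exactly $\mathrm{span}\{(\b1^{\top},\bzero^{\top})^{\top}\}$, a one--dimensional subspace. (If one does not wish to take the existence of $\pmb{\Sigma}$ for granted, the same argument applies verbatim to any subsequential limit, so the rank is unambiguous.)

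First I would record the two diagonal blocks of $\pmb{\Sigma}$. By Condition \ref{Con(Rate+CLT)} together with Lemma \ref{cov:insure}, the top--left block is $\pmb{\Theta}=\lim_n\bN/n$ with $\mathrm{rank}(\pmb{\Theta})=K-1$; since $\bN\b1=\bzero$ for every $n$ we get $\pmb{\Theta}\b1=\bzero$, hence $\ker(\pmb{\Theta})=\mathrm{span}(\b1)$. By Condition \ref{cond:clt:with:covariates}, the bottom--right block $\lim_n\bX^{\top}\bX/n$ exists and is positive definite. The inclusion $(\b1^{\top},\bzero^{\top})^{\top}\in\ker(\pmb{\Sigma})$ is then immediate, because $(\b1^{\top},\bzero^{\top})\pmb{\Sigma}(\b1^{\top},\bzero^{\top})^{\top}=\b1^{\top}\pmb{\Theta}\b1=0$.

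The main obstacle will be the reverse inclusion, since the off--diagonal block $\bM^{\top}\bX/n$ need not converge and is only bounded (by Cauchy--Schwarz, because the columns of $\bM$ and $\bX$ have squared norms of order $n$). Here I would invoke the uniform angle bound in Condition \ref{cond:clt:with:covariates}: if $\phi_n\ge\phi>0$ for all large $n$, then for every $\pmb{u}\in\mathrm{im}(\bM)$ and $\pmb{v}\in\mathrm{im}(\bX)$ one has $|\pmb{u}^{\top}\pmb{v}|\le\cos\phi_n\,\|\pmb{u}\|\,\|\pmb{v}\|\le\cos\phi\,\|\pmb{u}\|\,\|\pmb{v}\|$, and hence, using $2\|\pmb{u}\|\|\pmb{v}\|\le\|\pmb{u}\|^{2}+\|\pmb{v}\|^{2}$,
\[
\|\pmb{u}+\pmb{v}\|^{2}\ \ge\ (1-\cos\phi)\bigl(\|\pmb{u}\|^{2}+\|\pmb{v}\|^{2}\bigr).
\]
Applying this with $\pmb{u}=\bM\pmb{a}$ and $\pmb{v}=\bX\pmb{b}$, noting $\pmb{H}(\pmb{a}^{\top},\pmb{b}^{\top})^{\top}=\bM\pmb{a}+\bX\pmb{b}$ and $\|\bM\pmb{a}\|^{2}=\pmb{a}^{\top}\bN\pmb{a}$, and dividing by $n$ gives, for all large $n$,
\[
\tfrac1n(\pmb{a}^{\top},\pmb{b}^{\top})\pmb{H}^{\top}\pmb{H}(\pmb{a}^{\top},\pmb{b}^{\top})^{\top}\ \ge\ (1-\cos\phi)\Bigl(\tfrac1n\pmb{a}^{\top}\bN\pmb{a}+\tfrac1n\pmb{b}^{\top}\bX^{\top}\bX\pmb{b}\Bigr).
\]
Letting $n\to\infty$ and using Lemma \ref{cov:insure} and Condition \ref{cond:clt:with:covariates} on the right--hand side, any $(\pmb{a}^{\top},\pmb{b}^{\top})^{\top}\in\ker(\pmb{\Sigma})$ must satisfy $\pmb{a}^{\top}\pmb{\Theta}\pmb{a}=0$ and $\pmb{b}^{\top}\bigl(\lim_n\bX^{\top}\bX/n\bigr)\pmb{b}=0$. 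Positive definiteness of the latter block forces $\pmb{b}=\bzero$, and $\ker(\pmb{\Theta})=\mathrm{span}(\b1)$ forces $\pmb{a}\in\mathrm{span}(\b1)$. Thus $\ker(\pmb{\Sigma})=\mathrm{span}\{(\b1^{\top},\bzero^{\top})^{\top}\}$ and $\mathrm{rank}(\pmb{\Sigma})=K+p-1$. As a consistency check, the displayed bound also shows the second smallest eigenvalue of $\pmb{\Sigma}$ is strictly positive, which quantitatively refines Lemma \ref{lemma:cod:with:cov:wlln}.
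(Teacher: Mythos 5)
Your proof is correct, and it takes a genuinely different route from the paper's. The paper forms the projected covariate matrix $\widetilde{\bX}=(\pmb{I}-\bM\bN^{+}\bM^{\top})\bX$, uses the angle bound (in essentially the same Pythagorean form as your inequality) to show that $\mathrm{rank}(\lim_n\widetilde{\bX}^{\top}\widetilde{\bX}/n)=p$, and then invokes a rank theorem for partitioned matrices (Matsaglia and Styan, 1974, Theorem 19) together with $\mathrm{rank}(\pmb{\Theta})=K-1$ from Lemma \ref{cov:insure} to conclude $\mathrm{rank}(\pmb{\Sigma})\geq K+p-1$; the reverse inequality follows from $\mathrm{rank}(\bM,\bX)\leq K+p-1$. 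You replace the projection and the external rank theorem with a direct characterization of $\ker(\pmb{\Sigma})$ via the coercivity bound $\|\bM\pmb{a}+\bX\pmb{b}\|^{2}\geq(1-\cos\phi)\left(\pmb{a}^{\top}\bN\pmb{a}+\pmb{b}^{\top}\bX^{\top}\bX\pmb{b}\right)$, which is self--contained (only Cauchy--Schwarz and $2ab\leq a^{2}+b^{2}$) and buys two things the paper's argument does not: a quantitative lower bound on the second smallest eigenvalue of $\pmb{\Sigma}$, namely $(1-\cos\phi)\min\{\lambda_{2}(\pmb{\Theta}),\lambda_{1}(\lim_n\bX^{\top}\bX/n)\}$ via Courant--Fischer on the complement of $\mathrm{span}\{(\b1^{\top},\bzero^{\top})^{\top}\}$, and a cleaner treatment of the fact that the off--diagonal block $\bM^{\top}\bX/n$ need not converge, which the paper leaves implicit in the definition of $\pmb{\Sigma}$. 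Both proofs ultimately rest on the same geometric fact --- the uniform angle separation prevents $\mathrm{im}(\bM)$ and $\mathrm{im}(\bX)$ from collapsing onto one another in the limit --- but your kernel argument is the more elementary and more informative of the two.
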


Consequently the conditions in Lemma \ref{lemma:cod:with:cov:clt} guarantee the existence of an asymptotic variance of the corresponding LSEs. Note that $\mathrm{rank}(\pmb{\Sigma})=K+p-1$ implies that the second smallest eigenvalue of $\pmb{H}^{\top}\pmb{H}$ diverges, however the converse is not true, cf., Example \ref{example_4.3}. 

\begin{remark} \label{two:set:covariates}
Condition \ref{cond:wlln:with:covariates} essentially generalizes the usual condition for consistency in standard regression models of the form $\bY=\bX\pmb{\beta} +\pmb{\epsilon}$, i.e., that $\lambda_{1}(\bX^{\top}\bX) \to \infty$ to the case of partitioned matrices. That is if $\bX=(\bX_1,\bX_2)$ then consistent estimation is guaranteed if $\lambda_{1}(\bX_i^{\top}\bX_i) \to \infty$ for $i=1,2$ and that $\measuredangle(\bX_1,\bX_2) \ge \phi>0$ for all large $n$. Similarly, Condition \ref{cond:clt:with:covariates} extends the standard condition for asymptotic normality of $\widehat{\pmb{\beta}}$ in regression models, i.e., that $\lambda_{1}(\bX^{\top}\bX/n) > 0$ to the case of partitioned matrices. That is if $\bX=(\bX_1,\bX_2)$ then the asymptotic normality of the LSE is ensured if $\lambda_{1}(\bX_i^{\top}\bX_i/n) > 0$ for $i=1,2$ and $\measuredangle(\bX_1,\bX_2) \ge \phi>0$ for all large $n$.      
\end{remark}

\begin{condition} \label{cond:with:cov:clt+hajek+sidak}
Let
$$ \frac{\max_{1 \le i \le n} \pmb{h}_i \pmb{h}_i^{\top}}{\lambda_2(\pmb{H}^{\top}\pmb{H})} \to 0 \text{ as } n\to\infty $$
where $\pmb{h}_i$ is the $i^{th}$ row of  $\pmb{H}$.  
\end{condition}

Condition \ref{cond:with:cov:clt+hajek+sidak} is essentially a  Hajek--Sidak type condition. It ensures that the contribution of any finite set of observations is negligible as $n\to\infty$. 

\medskip

The following theorem extends Theorems \ref{Thm-graph.WLLN}, \ref{Thm-LST} and \ref{Thm-AS}. 

\begin{theorem}
\label{graph-GLM:wlln+clt}
We have: 
\begin{enumerate}
\item Under the assumptions of Theorem \ref{Thm-Est.Covariates} the LSE $(\widehat{\bmu}, \widehat{\pmb{\beta}})$ is unbiased.
    
\item If the conditions of Lemma \ref{lemma:cod:with:cov:wlln} hold then $\widehat{\pmb{\beta}}\to\pmb{\beta}$ and $\widehat{\bmu}\to\bmu$ in probability as $n\to\infty$.
        
\item If $\widehat{\bmu}\to\bmu$ in probability, then the derived rank vector $\pmb{r}(\widehat{\bmu})$ converges in probability to $\pmb{r}({\bmu})$ as $n\to\infty$.
         
\item Assume that the conditions of Lemma \ref{lemma:cod:with:cov:clt} hold together with Condition \ref{cond:with:cov:clt+hajek+sidak}. Then, 
\begin{align*}
\sqrt{n}(\widehat{\pmb{\theta}} - \pmb{\theta}) \Rightarrow \mathcal{N}( \pmb{0},\sigma^2\, \pmb{\Sigma}^{+} ),\text{ where } \widehat{\pmb{\theta}} = (\widehat{\bmu}^{\top},\, \widehat{\pmb{\beta}}^{\top} )^{\top}.
\end{align*}
\end{enumerate}
\end{theorem}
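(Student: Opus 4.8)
I would prove the four assertions in order, working throughout with the regression representation $\bY=\pmb{H}\pmb{\theta}+\pmb{\epsilon}$ from \eqref{joint-eq-LM-with-cov} and the identities $\bS=\bM^{\top}\bY$, $\bM^{\top}\bM=\bN$ from \eqref{example:cov:line:graph}. For assertion (1) I would start from the closed forms of Theorem~\ref{Thm-Est.Covariates}. Put $\pmb{P}=\pmb{I}-\bM\bN^{+}\bM^{\top}$; the two algebraic facts I need are that $\bM\bN^{+}\bM^{\top}$ is the orthogonal projector onto $\mathrm{im}(\bM)$, so that $\pmb{P}\bM=\pmb{O}$, and that $\bN^{+}\bN\bmu=\bmu$ because $\bmu\perp\b1=\ker(\bN)$ under the constraint. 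Substituting $\mathbb{E}(\bY)=\bM\bmu+\bX\pmb{\beta}$ into \eqref{LSE:cov:beta} yields $\mathbb{E}(\widehat{\pmb{\beta}})=(\bX^{\top}\pmb{P}\bX)^{-1}\bX^{\top}\pmb{P}\bX\pmb{\beta}=\pmb{\beta}$, and then \eqref{LSE:cov:mu} gives $\mathbb{E}(\widehat{\bmu})=\bN^{+}\bM^{\top}(\bM\bmu+\bX\pmb{\beta})-\bN^{+}\bM^{\top}\bX\pmb{\beta}=\bN^{+}\bN\bmu=\bmu$. For a general constraint vector $\bv$ one observes that every solution of the normal equations \eqref{eq:18} differs from $(\pmb{H}^{\top}\pmb{H})^{+}\pmb{H}^{\top}\bY$ by a multiple of $(\b1^{\top},\bzero^{\top})^{\top}$, and the multiple selected by $\bv^{\top}\bmu=0$ has expectation $-\bv^{\top}\bmu/(\bv^{\top}\b1)=0$.

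For assertions (2) and (3) I would reduce everything to the variance. Taking $\bv=\b1$ without loss of generality (Proposition~\ref{Prop.LSE.uv}, applied to the $\bmu$-block, transfers the conclusion to any proper $\bv$), Condition~\ref{condition:covariate:existence} gives $\ker(\pmb{H})=\mathrm{span}\{(\b1^{\top},\bzero^{\top})^{\top}\}$, so the constraint selects exactly the minimum-norm solution of the normal equations; hence $\widehat{\pmb{\theta}}=(\pmb{H}^{\top}\pmb{H})^{+}\pmb{H}^{\top}\bY$ and, by assertion (1), $\mathbb{V}ar(\widehat{\pmb{\theta}})=\sigma^{2}(\pmb{H}^{\top}\pmb{H})^{+}$. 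A spectral-decomposition argument identical to the one used for Theorem~\ref{Thm-graph.WLLN}, but applied to $\pmb{H}^{\top}\pmb{H}$ in place of $\bN$, shows that $(\pmb{H}^{\top}\pmb{H})^{+}$ is $O(1/\lambda_{2}(\pmb{H}^{\top}\pmb{H}))$ element-wise, and Lemma~\ref{lemma:cod:with:cov:wlln} gives $\lambda_{2}(\pmb{H}^{\top}\pmb{H})\to\infty$; therefore $\mathbb{V}ar(\widehat{\pmb{\theta}})\to\pmb{O}$, and Chebyshev's inequality yields $\widehat{\pmb{\theta}}\to\pmb{\theta}$ in probability, which is (2). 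Assertion (3) is then the continuous-mapping argument already used for Theorem~\ref{Thm-AS}: on the event $\{\|\widehat{\bmu}-\bmu\|<\varepsilon\}$ with $\varepsilon<\tfrac12\min_{i\neq j}|\mu_{i}-\mu_{j}|$ the rank vector is unchanged, so $\mathbb{P}(\pmb{r}(\widehat{\bmu})\neq\pmb{r}(\bmu))\le\mathbb{P}(\|\widehat{\bmu}-\bmu\|\ge\varepsilon)\to0$.

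For assertion (4) the plan is to write $\sqrt{n}\,(\widehat{\pmb{\theta}}-\pmb{\theta})=(\pmb{H}^{\top}\pmb{H}/n)^{+}\bigl(n^{-1/2}\pmb{H}^{\top}\pmb{\epsilon}\bigr)$, using $\widehat{\pmb{\theta}}-\pmb{\theta}=(\pmb{H}^{\top}\pmb{H})^{+}\pmb{H}^{\top}\pmb{\epsilon}$ from above. The second factor is handled by Cram\'er--Wold: for fixed $\pmb{c}$ write $\pmb{c}^{\top}n^{-1/2}\pmb{H}^{\top}\pmb{\epsilon}=n^{-1/2}\sum_{i=1}^{n}a_{ni}\epsilon_{i}$ with $a_{ni}=\pmb{c}^{\top}\pmb{h}_{i}^{\top}$; then $n^{-1}\sum_{i}a_{ni}^{2}=\pmb{c}^{\top}(\pmb{H}^{\top}\pmb{H}/n)\pmb{c}\to\pmb{c}^{\top}\pmb{\Sigma}\pmb{c}$, while $\max_{i}a_{ni}^{2}\big/\sum_{j}a_{nj}^{2}\le\|\pmb{c}\|^{2}\,\bigl(\max_{i}\|\pmb{h}_{i}\|^{2}/n\bigr)\big/\pmb{c}^{\top}(\pmb{H}^{\top}\pmb{H}/n)\pmb{c}\to0$, because $\max_{i}\|\pmb{h}_{i}\|^{2}/\lambda_{2}(\pmb{H}^{\top}\pmb{H})\to0$ by Condition~\ref{cond:with:cov:clt+hajek+sidak} and $\lambda_{2}(\pmb{H}^{\top}\pmb{H})$ grows exactly at rate $n$ by Lemma~\ref{lemma:cod:with:cov:clt}. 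Hence the Hajek--Sidak CLT (the errors being i.i.d.\ with variance $\sigma^{2}$ by Condition~\ref{iid.errors}) gives $n^{-1/2}\sum_{i}a_{ni}\epsilon_{i}\Rightarrow\mathcal{N}(0,\sigma^{2}\pmb{c}^{\top}\pmb{\Sigma}\pmb{c})$ for every $\pmb{c}$ (for $\pmb{c}\in\ker\pmb{\Sigma}=\mathrm{span}\{(\b1^{\top},\bzero^{\top})^{\top}\}$ the sum is identically zero, since $\pmb{H}(\b1^{\top},\bzero^{\top})^{\top}=\bzero$), so $n^{-1/2}\pmb{H}^{\top}\pmb{\epsilon}\Rightarrow\mathcal{N}_{K+p}(\bzero,\sigma^{2}\pmb{\Sigma})$, a Gaussian supported on $\mathrm{im}(\pmb{\Sigma})$. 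For the first factor, $\pmb{H}^{\top}\pmb{H}/n\to\pmb{\Sigma}$ with $\mathrm{rank}(\pmb{H}^{\top}\pmb{H}/n)=\mathrm{rank}(\pmb{\Sigma})=K+p-1$ for all large $n$ (Condition~\ref{condition:covariate:existence} and Lemma~\ref{lemma:cod:with:cov:clt}), so continuity of the Moore--Penrose inverse on matrices of constant rank gives $(\pmb{H}^{\top}\pmb{H}/n)^{+}\to\pmb{\Sigma}^{+}$. Slutsky's theorem and the identity $\pmb{\Sigma}^{+}\pmb{\Sigma}\pmb{\Sigma}^{+}=\pmb{\Sigma}^{+}$ then deliver $\sqrt{n}\,(\widehat{\pmb{\theta}}-\pmb{\theta})\Rightarrow\mathcal{N}_{K+p}(\bzero,\sigma^{2}\pmb{\Sigma}^{+})$.

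The projection identities and the Chebyshev and continuous-mapping steps are routine; the delicate points are concentrated in assertion (4). First, the verification of the Lindeberg/Hajek--Sidak condition hinges on knowing that the smallest positive eigenvalue of $\pmb{H}^{\top}\pmb{H}$ grows exactly at rate $n$, so that Condition~\ref{cond:with:cov:clt+hajek+sidak} is genuinely a vanishing-leverage condition; this is where Lemma~\ref{lemma:cod:with:cov:clt} is essential. Second, the passage $(\pmb{H}^{\top}\pmb{H}/n)^{+}\to\pmb{\Sigma}^{+}$ cannot be read off from entrywise convergence alone because $\pmb{\Sigma}$ is singular, and must be justified through the rank-stability of $\pmb{H}^{\top}\pmb{H}/n$; keeping that bookkeeping honest is the main obstacle.
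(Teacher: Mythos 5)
Your proposal is correct and follows essentially the same route as the paper: unbiasedness via the normal equations, consistency by showing $\mathbb{V}ar(\widehat{\pmb{\theta}})=\sigma^{2}(\pmb{H}^{\top}\pmb{H})^{+}\to\pmb{O}$ using the divergence of $\lambda_{2}(\pmb{H}^{\top}\pmb{H})$, the continuous-mapping argument for the ranks, and the Cram\'er--Wold plus Hajek--Sidak argument with $(\pmb{H}^{\top}\pmb{H}/n)^{+}\to\pmb{\Sigma}^{+}$ for the limit law. The only (immaterial) difference is in Part~1, where you verify unbiasedness from the explicit block formulas \eqref{LSE:cov:mu}--\eqref{LSE:cov:beta} while the paper uses the identity $(\pmb{H}^{\top}\pmb{H})^{+}\pmb{H}^{\top}\pmb{H}\pmb{\theta}=\pmb{\theta}$ for $\pmb{\theta}\in\mathrm{im}(\pmb{H}^{\top}\pmb{H})$.
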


Theorem \ref{graph-GLM:wlln+clt} shows that under mild conditions the LSE and the derived rank vector are consistent and that the LSE is asymptotically normally distributed for graph--LMs with covariates. The asymptotic variance is easily estimated so inference on $(\widehat{\bmu}, \widehat{\pmb{\beta}})$ is readily carried out. 

\medskip
\medskip

Next we examine the situation where the true model is \eqref{model:with:covariate} but a model without covariates, i.e., model \eqref{model.Y_ijl}, is fit to the data. Note first that the LSE given in \eqref{def.mu.hat} is generally biased and not consistent if the columns of $\bM$ and $\bX$ are not orthogonal (cf., Zimmerman 2020). However, orthogonality among $\bM$ and $\bX$ may arise. For example, consider a round--robin tournament where the covariate indicates whether the game was played on the home court or not, cf. Harville (2003). If the tournament is completely balanced, i.e., teams $i$ and $j$ compete on both home courts the same number of times, then it is easy to verify that columns of $\bM$ and $\bX$ are orthogonal. Therefore the merits can be unbiasedly and consistently estimated. Furthermore, it turns out that the LSE for the model without covariates is often consistent when the true model does include covariates. This, and related themes, are explored in the next theorem for which we introduce the following notation. Let $f_t$, indexed by the parameters $(\bmu_t,\pmb{\beta})$, denote the density associated with the true model, i.e., the model with covariates. Similarly let $f_m$, indexed only by the parameter $\bmu_m$, denote the density of the misspecified model, i.e., the model without covariates. Let 
\begin{align*}
\mathrm{AKL}(t,m) =&~\frac{1}{n} \mathbb{E}_{f_t} (\log(\frac{f_t(\bY)}{f_m(\bY)})|\bX).
\end{align*}
denote the conditional average Kullback--Leibler (AKL) divergence between the two models. 

\begin{theorem} \label{misspecified:model:normal:errors:thm}
We have the following: 
\begin{enumerate}
\item If the errors in \eqref{model.Y_ijl} are IID $\mathcal{N}(0,\sigma^2)$ RVs and $\bX^{\top}\bM/n =o(1)$ as $n\to\infty$ then $\mathrm{AKL}$ is minimized with respect to $\bmu_m$ when $\bmu_m=\bmu_t+o(1)$.  
\item Suppose that the conditions in Theorem \ref{graph-GLM:wlln+clt} hold. Suppose further that the vector of merits is estimated by $\widehat{\bmu}_m = \bN^{+}\bS$ and that $ \pmb{b}=\lim \bN^{+} \bM^{\top} \bX\pmb{\beta}$ exists. If so,  
\begin{equation*}
\sqrt{n}( \widehat{\bmu}_{m}-\bmu_t-\pmb{b})
\Rightarrow \mathcal{N}_{K}( \pmb{0},\sigma ^{2}\pmb{\Theta}^{+}) .
\end{equation*}
Moreover if $\bM^{\top}\bX/\sqrt{n}=o(1)$ then $\sqrt{n}\,\pmb{b}=o(1)$ and $\sqrt{n}(\widehat{\bmu}_m-{\bmu}_t)\Rightarrow \mathcal{N}_{K}( \pmb{0},\sigma ^{2}\pmb{\Theta}^{+})$.
\item Suppose that the columns of $\bX$ are comprised of IID zero mean finite variance RVs. Then the LSE \eqref{def.mu.hat.with.cov} satisfies
\begin{align*}
\widehat{\bmu}_t &~= \bN^{+}\bS +~ \pmb{\xi}_{1n},\\
\widehat{\pmb\beta}&~= (\bX^{\top}\bX)^{-1}\bX^{\top}\bY +~ \pmb{\xi}_{2n},
\end{align*}
where $\sqrt{n}\,\pmb{\xi}_{1n}=O_p(1)$ and $\sqrt{n}\,\pmb{\xi}_{2n}=O_p(1)$. Furthermore $(\widehat{\bmu}_t,\widehat{\pmb\beta})^{\top}$ are asymptotically normally distributed as specified in Theorem \ref{graph-GLM:wlln+clt} with $\pmb{\Sigma}= \mathrm{BlockDiag}(\pmb{\Theta},\,\pmb\Sigma_{\bX})$ where $\pmb\Sigma_{\bX}$ is variance of covariate.
\end{enumerate}
\end{theorem}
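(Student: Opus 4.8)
For Part 1, I would start from the Gaussian log--likelihood ratio, $\log(f_t(\bY)/f_m(\bY))=\tfrac{1}{2\sigma^{2}}(\|\bY-\bM\bmu_m\|^{2}-\|\bY-\bM\bmu_t-\bX\pmb{\beta}\|^{2})$, take the conditional expectation under $f_t$ and divide by $n$. Since the two noise--variance contributions are each $n\sigma^{2}$ and cancel, this yields
\[
\mathrm{AKL}(t,m)=\frac{1}{2n\sigma^{2}}\big\|\bM(\bmu_t-\bmu_m)+\bX\pmb{\beta}\big\|^{2},
\]
a least squares criterion in $\bmu_m$. Imposing the identifying constraint $\b1^{\top}\bmu_m=0$ (which $\bmu_t$ also satisfies) and projecting $\bM\bmu_t+\bX\pmb{\beta}$ onto $\mathrm{im}(\bM)$, the unique minimiser is $\bmu_m^{\star}=\bmu_t+\bN^{+}\bM^{\top}\bX\pmb{\beta}$, where $\bN^{+}\b1=\bzero$ absorbs the constraint. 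Writing the correction as $(n\bN^{+})(\bM^{\top}\bX/n)\pmb{\beta}$ and using the hypothesis $\bM^{\top}\bX/n=o(1)$ together with the boundedness of $n\bN^{+}$ (Lemma \ref{cov:insure}) gives $\bmu_m^{\star}=\bmu_t+o(1)$.

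For Part 2, substituting the true model $\bY=\bM\bmu_t+\bX\pmb{\beta}+\pmb{\epsilon}$ into $\widehat{\bmu}_m=\bN^{+}\bS=\bN^{+}\bM^{\top}\bY$ and using $\bN^{+}\bN\bmu_t=\bmu_t$ (valid because $\bmu_t\in\mathrm{im}(\bN)$) gives the exact decomposition $\widehat{\bmu}_m-\bmu_t=\bN^{+}\bM^{\top}\bX\pmb{\beta}+\bN^{+}\bM^{\top}\pmb{\epsilon}$. The stochastic part is precisely the object treated in the proof of Theorem \ref{Thm-LST}: it has mean $\bzero$ and variance $\sigma^{2}\bN^{+}\bN\bN^{+}=\sigma^{2}\bN^{+}$, so $\sqrt{n}\,\bN^{+}\bM^{\top}\pmb{\epsilon}$ has variance $\sigma^{2}n\bN^{+}\to\sigma^{2}\pmb{\Theta}^{+}$ and, by the Lindeberg argument for its independent summands that the Hajek--Sidak hypothesis legitimises, converges to $\mathcal{N}_{K}(\bzero,\sigma^{2}\pmb{\Theta}^{+})$. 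Since the deterministic part tends to $\pmb{b}$, subtracting $\pmb{b}$ and using Slutsky yields the first display. For the ``moreover'' clause, $\sqrt{n}\,\bN^{+}\bM^{\top}\bX\pmb{\beta}=(n\bN^{+})(\bM^{\top}\bX/\sqrt{n})\pmb{\beta}=O(1)\cdot o(1)=o(1)$, so $\sqrt{n}\,\pmb{b}=\bzero$ and $\sqrt{n}(\widehat{\bmu}_m-\bmu_t)$ has the same Gaussian limit.

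For Part 3, the engine is the bound $\bM^{\top}\bX=O_p(\sqrt{n})$ (each entry $\sum_{l}M_{la}X_{lc}$ has bounded coefficients and variance $O(n_a)=O(n)$), together with $\bX^{\top}\bX/n\to\pmb{\Sigma}_{\bX}$ by the law of large numbers and $n\bN^{+}=O(1)$ (Lemma \ref{cov:insure}). Writing $\pmb{P}=\bM\bN^{+}\bM^{\top}$ for the orthogonal projection onto $\mathrm{im}(\bM)$, Theorem \ref{Thm-Est.Covariates} gives $\widehat{\pmb{\beta}}-\pmb{\beta}=(\bX^{\top}(\pmb{I}-\pmb{P})\bX)^{-1}\bX^{\top}(\pmb{I}-\pmb{P})\pmb{\epsilon}$ (since $(\pmb{I}-\pmb{P})\bM=\pmb{O}$); as $\bX^{\top}\pmb{P}\bX=(\bX^{\top}\bM)\bN^{+}(\bM^{\top}\bX)=O_p(1)$ and $(\bX^{\top}\bM)(\bN^{+}\bM^{\top}\pmb{\epsilon})=O_p(1)$ while $\bX^{\top}\bX=O_p(n)$ and $\bX^{\top}\pmb{\epsilon}=O_p(\sqrt{n})$, this is $O_p(n^{-1/2})$, and $(\bX^{\top}\bX)^{-1}\bX^{\top}\bY=\pmb{\beta}+(\bX^{\top}\bX)^{-1}\bX^{\top}\bM\bmu_t+(\bX^{\top}\bX)^{-1}\bX^{\top}\pmb{\epsilon}=\pmb{\beta}+O_p(n^{-1/2})$ as well, so $\sqrt{n}\,\pmb{\xi}_{2n}=O_p(1)$ and $\widehat{\pmb{\beta}}=O_p(1)$. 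Then $\widehat{\bmu}_t-\bN^{+}\bS=-\bN^{+}\bM^{\top}\bX\widehat{\pmb{\beta}}$ with $\bN^{+}\bM^{\top}\bX=(n\bN^{+})(\bM^{\top}\bX/n)=O_p(n^{-1/2})$ gives $\sqrt{n}\,\pmb{\xi}_{1n}=O_p(1)$. Finally I would verify the hypotheses of Theorem \ref{graph-GLM:wlln+clt}(4) almost surely --- $\phi_n\to\pi/2$ via Lemma \ref{proposition:random:cov:properties}, $\lambda_{1}(\bX^{\top}\bX/n)>0$ via the LLN, and Condition \ref{cond:with:cov:clt+hajek+sidak} from the almost sure bound that the largest squared row--norm of $\bX$ is $o(n)$ for IID finite--variance rows --- so the asymptotic normality of Theorem \ref{graph-GLM:wlln+clt} holds conditionally on $\bX$ with $\pmb{\Sigma}=\lim n^{-1}\pmb{H}^{\top}\pmb{H}$; since the cross block $\bM^{\top}\bX/n\to\pmb{O}$, this limit is $\mathrm{BlockDiag}(\pmb{\Theta},\pmb{\Sigma}_{\bX})$ and correspondingly $\pmb{\Sigma}^{+}=\mathrm{BlockDiag}(\pmb{\Theta}^{+},\pmb{\Sigma}_{\bX}^{-1})$.

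The main obstacle is the stochastic order bookkeeping in Part 3 --- showing that the cross terms $\bX^{\top}\bM$, $\bX^{\top}\pmb{P}\bX$, $\bX^{\top}\pmb{P}\pmb{\epsilon}$ and $\bN^{+}\bM^{\top}\bX$ are each small enough that $\pmb{\xi}_{1n},\pmb{\xi}_{2n}=O_p(n^{-1/2})$ and that the off--diagonal block of $\pmb{\Sigma}$ genuinely vanishes --- together with, in Part 2, checking that the deterministic bias attains its limit $\pmb{b}$ fast enough not to be disturbed by the $\sqrt{n}$ scaling.
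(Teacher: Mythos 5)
Your proposal is correct and, for Parts 1 and 2, follows the paper's own route essentially verbatim: the same expansion of the Gaussian AKL into the quadratic form $\tfrac{1}{2n\sigma^2}\|\bM(\bmu_t-\bmu_m)+\bX\pmb{\beta}\|^2$ (you go one step further and exhibit the exact minimiser $\bmu_t+\bN^{+}\bM^{\top}\bX\pmb{\beta}$, where the paper merely discards the cross term), and the same decomposition $\widehat{\bmu}_m-\bmu_t=\bN^{+}\bM^{\top}\bX\pmb{\beta}+\bN^{+}\bM^{\top}\pmb{\epsilon}$ combined with the CLT of Theorem \ref{Thm-LST} and Lemma \ref{cov:insure}; the caveat you flag about the deterministic bias having to reach its limit $\pmb{b}$ faster than $n^{-1/2}$ is genuine, and the paper glosses over it in exactly the same way. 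In Part 3 your execution differs: the paper perturbs the scaled normal equations \eqref{eq:18:modified} using only $\bM^{\top}\bX/n=o_p(1)$ and then asserts that the rates $\sqrt{n}\,\pmb{\xi}_{1n}=O_p(1)$, $\sqrt{n}\,\pmb{\xi}_{2n}=O_p(1)$ ``follow from Part 2,'' whereas you work directly from the closed forms \eqref{LSE:cov:mu}--\eqref{LSE:cov:beta} and extract the rates from the sharper stochastic bound $\bM^{\top}\bX=O_p(\sqrt{n})$ (each entry being a zero--mean sum of bounded--coefficient terms with $O(n)$ variance). Your version buys an explicit and self--contained derivation of the $\sqrt{n}$-rates that the paper leaves implicit, at the cost of more bookkeeping; both arguments arrive at $\pmb{\Sigma}=\mathrm{BlockDiag}(\pmb{\Theta},\pmb{\Sigma}_{\bX})$ for the same reason, namely that the off--diagonal block $\bM^{\top}\bX/n$ vanishes in probability.
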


In the proof of Part $1$ of Theorem \ref{misspecified:model:normal:errors:thm} an expression for $\mathrm{AKL}(t,m)$, under normality, is found and shown to be minimized at $\bmu_m=\bmu_t+o(1)$ when $n^{-1}\bX^{\top}\bM =o(1)$, i.e., when $\bX$ and $\bM$ are ``nearly'' orthogonal. This is a distributional result which suggests that under near orthogonality the merit parameters in the true and misspecified model have asymptotically similar values. Implications for estimation are explored in Parts $2$ and $3$ of Theorem \ref{misspecified:model:normal:errors:thm}. Part $2$ of Theorem \ref{misspecified:model:normal:errors:thm} explores the limiting distribution of the LSE $\widehat{\bmu}_{m}$ when the true model holds. In particular the form of the bias term $\pmb{b}$ is found. The bias can be shown to vanish when $\bM^{\top}\bX/n=o(1)$. However removing the bias from the asymptotic distribution requires that $\bM^{\top}\bX/\sqrt{n}=o(1)$ which is a stronger form of near orthogonality. See section 5.2.2 in Claeskens and Hjort (2008) for related analyses. Finally, in Part $3$ of Theorem \ref{misspecified:model:normal:errors:thm} a stochastic representation of the LSE when the covariates are centered RVs is found. In essence it is shown that random covariates enable decoupling the estimation of the merits and regression coefficients. This phenomenon is a consequence of the geometry of high--dimensional spaces (Vershynin 2020).

\section{Infinite number of items}

So far we have treated $K$ as fixed. In this section graph-LMs are extended to the case where $K\rightarrow \infty$. For simplicity we will first assume that no covariates are present. In this setting one can imagine a sequence of nested comparison graphs $(\mathcal{G}_i,\mathcal{Y}_i)$, $i=1,2,\ldots$. A common practice in such settings is to choose a reference item, denoted by $v$, and set $\mu_{v}=0$. In this way the merits of all other items are fixed. However, we will continue with the assumption that $\sum_{i=1}^{K}\mu _{i}=0$ which is fully justified by Corollary \ref{Cor-EstDiff} but implies that the value of any merit may vary with $K$.  

There are numerous comparison regimes to consider when $K\rightarrow \infty$. Although it is possible that some $n_{ij} \rightarrow \infty$ when $K\rightarrow \infty$ such scenarios are less likely; moreover, as will become evident, such situations are much easier to analyze. Therefore in this section we shall assume that $n_{ij} \le 1$ for all pairs $(i,j)\in \{1,\ldots,K\}^2$. The following theorem describes the behavior of the vector of estimated merits in a round--robin tournament in which the number of items grows to infinity.

\begin{theorem}
\label{Thm-LST.LargeK}
Consider the model \eqref{model.Y_ijl}. Let $n_{ij}=1$ for all $1\leq i\neq j\leq K$. If $K \rightarrow \infty$ then for every finite subset of indices $\{i_1,\ldots,i_r\}\subset \{1,2,\ldots,K\}$ the elements of the LSE satisfy
\begin{equation}
\sqrt{K}((\widehat{\mu }_{i_{1}},\ldots ,\widehat{\mu }_{i_{r}})-(\mu_{i_{1}},\ldots ,\mu _{i_{r}}))^\top\Rightarrow \mathcal{N}_{r}\left(\pmb{0},\sigma^2 \pmb{I}\right).  \label{RR.CLT}
\end{equation}
Additionally, if the errors are subgaussian then the maximum estimation error satisfies  
\begin{equation}
\mathbb{E}\{\max_{1\leq i\leq K}| \widehat{\mu }_{i}-\mu _{i}|\}=O(\sqrt{\frac{{\log (K)}}{K}}).  \label{max.MuDiff}
\end{equation}
\end{theorem}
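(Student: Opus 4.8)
The plan is to exploit the special structure of the round--robin design: when $n_{ij}=1$ for all pairs, $\bN$ is the Laplacian of the complete graph on $K$ vertices, i.e. $\bN = K\pmb{I} - \b1\b1^{\top}$, whose Moore--Penrose inverse is $\bN^{+} = K^{-1}\pmb{I} - K^{-2}\b1\b1^{\top}$. Since the constraint $\sum_{i}\mu_i=0$ corresponds to $\bv=\b1$ and $\b1^{\top}\bN^{+}=\bzero^{\top}$, Theorem \ref{Thm-UniqueE} gives $\widehat{\bmu}_{n}=\bN^{+}\bS$; and since $\bmu\in\mathrm{im}(\bN)$ we have the exact identity $\widehat{\bmu}_{n}-\bmu=\bN^{+}(\bS-\mathbb{E}\bS)$, where $(\bS-\mathbb{E}\bS)_{l}=\sum_{j\neq l}\epsilon_{lj1}$. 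The crucial observation, which I expect to be the heart of the argument, is that the rank--one part $K^{-2}\b1\b1^{\top}$ annihilates the centered score vector: by the antisymmetry $\epsilon_{lj1}=-\epsilon_{jl1}$ one has $\b1^{\top}(\bS-\mathbb{E}\bS)=\sum_{l\neq j}\epsilon_{lj1}=0$. This leaves the clean representation
\[
\widehat{\mu}_{i,n}-\mu_{i} \;=\; \frac{1}{K}\sum_{j\neq i}\epsilon_{ij1},\qquad i=1,\ldots,K,
\]
after which every claim becomes a statement about normalized sums of i.i.d. errors.

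For the central limit theorem \eqref{RR.CLT}, fix $\{i_{1},\ldots,i_{r}\}$. From the displayed representation, $\sqrt{K}(\widehat{\mu}_{i_{s},n}-\mu_{i_{s}})=K^{-1/2}\sum_{j\neq i_{s}}\epsilon_{i_{s}j1}$ is a normalized sum of $K-1$ i.i.d.\ mean--zero, variance--$\sigma^{2}$ variables, so it has variance $\sigma^{2}(1-1/K)\to\sigma^{2}$; for $s\neq s'$ the two sums share exactly the single pair $\epsilon_{i_{s}i_{s'}1}=-\epsilon_{i_{s'}i_{s}1}$, so their covariance equals $-\sigma^{2}/K\to0$. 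I would then apply the Cram\'er--Wold device: for any $\pmb{a}\in\mathbb{R}^{r}$, the linear combination $\sum_{s}a_{s}\sqrt{K}(\widehat{\mu}_{i_{s},n}-\mu_{i_{s}})$ rewrites as $K^{-1/2}\sum c^{(K)}_{lj}\epsilon_{lj1}$ summed over the games of the tournament, with coefficients $c^{(K)}_{lj}$ bounded uniformly in $K$ and total variance converging to $\sigma^{2}\|\pmb{a}\|^{2}$. The Lindeberg condition for this triangular array is immediate because each summand is $O(K^{-1/2})$ with uniformly integrable square, giving convergence to $\mathcal{N}(0,\sigma^{2}\|\pmb{a}\|^{2})$; hence the joint limit is $\mathcal{N}_{r}(\bzero,\sigma^{2}\pmb{I})$, the vanishing cross--covariances producing the diagonal limiting matrix.

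For the maximal bound \eqref{max.MuDiff}, note that under the subgaussian assumption $X_{i}:=\widehat{\mu}_{i,n}-\mu_{i}=K^{-1}\sum_{j\neq i}\epsilon_{ij1}$ is a scaled sum of $K-1$ independent subgaussian variables, hence itself subgaussian with variance proxy at most $C/K$, where $C$ depends only on the subgaussian norm of the errors. No independence among the $X_{i}$ is needed: the standard maximal inequality $\mathbb{E}[\max_{1\leq i\leq K}|X_{i}|]\leq c\sqrt{(\max_{i}\|X_{i}\|_{\psi_{2}}^{2})\log K}$ depends only on the individual subgaussian norms, and together with $\|X_{i}\|_{\psi_{2}}^{2}=O(1/K)$ this yields $\mathbb{E}\{\max_{1\leq i\leq K}|\widehat{\mu}_{i,n}-\mu_{i}|\}=O(\sqrt{\log K/K})$. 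I would verify the precise constant in the subgaussian maximal inequality (equivalently, run a union bound over the Gaussian--type tail estimates for the $X_{i}$), but this is routine; the only genuine subtlety in the whole proof is the cancellation step that produces the displayed representation of $\widehat{\bmu}_{n}-\bmu$.
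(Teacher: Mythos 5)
Your proposal is correct and follows essentially the same route as the paper: both reduce the LSE to the exact representation $\widehat{\mu}_i-\mu_i=K^{-1}\sum_{j\neq i}\epsilon_{ij}$ via the structure $\bN^{+}=\bN/K^{2}$ and the cancellation $\b1^{\top}\bS=0$, compute the same variances and vanishing covariances for \eqref{RR.CLT}, and obtain \eqref{max.MuDiff} from the subgaussian maximal bound (which the paper simply derives by hand through the moment--generating--function/union argument rather than citing it as a named inequality).
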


Equation \eqref{RR.CLT} shows that the CLT holds for any finite collection of items. Note that the independence among the estimated scores is a consequence of the fact that $n_{ij}/n_{i}\rightarrow0$ for all $i$ and $j$. Moreover, for convenience, the scaling factor in \eqref{RR.CLT} is $\sqrt{K}$, the square root of the number of items and not the total number of comparisons as in Theorems \ref{Thm-LST} and \ref{graph-GLM:wlln+clt}. Since $n(K)=K(K-1)/2$, \eqref{RR.CLT} holds also when scaled by $(2n)^{1/4}$ instead of $K^{1/2}$. Further note that \eqref{RR.CLT} implies that the RVs $\sqrt{K}(\widehat{\mu}_i-\mu_i)$'s behave as a sequence of independent zero mean normal RVs with variance $\sigma^2$, the largest of which is of order $O(\sqrt{\log K})$. Thus \eqref{max.MuDiff} holds and can be understood as the rate at which the largest $| \widehat{\mu }_{i}-\mu _{i}|$ converges to $0$. Theorem \ref{Thm-AS} can also be used in conjunction with Theorem \ref{Thm-LST.LargeK} to show that the probability of correct ranking converges exponentially fast for any finite subset of indices. 

In a complete graph there are $K(K-1)/2$ distinct comparisons. Next, we consider sparse graphs in which the number of comparisons is much smaller than $O(K^2)$. One way of doing so is by letting the topology of the comparison graph be determined randomly. To this affect we employ Erd\H{o}s--R\'enyi random graphs in which the $(i,j)^{th}$ comparison (edge) is present with probability $p_K$, independently from every other edge (Janson et al., 2000). Concretely, consider the PCG $(\mathcal{G},\mathcal{Y})$ where the graph $\mathcal{G}$ is a Erd\H{o}s--R\'enyi graph with parameters $(K,p_K)$. Its Laplacian is a random matrix with elements
{ 
\begin{equation*}
N_{ij}=\left\{ 
\begin{array}{ccc}
\sum_{j}B_{ij} & \text{if} & i=j \\ 
-B_{ij} & \text{if} & i\neq j,
\end{array}
\right.
\end{equation*}}
where $B_{ij}$ are IID $\mathrm{Ber}(p_K)$ RVs. The number of paired comparisons on an Erd\H{o}s--R\'enyi graph is of the order $O(p_{K}K^2)$ and it is well known that such graphs are connected with probability one, as $K\to\infty$, if and only if $p_{K} \ge (1+c)\log(K)/K$ for some $c>0$ (Erdos and Renyi 1961). Clearly, the merits can not be estimated unless the comparison graph is connected.  

\begin{condition} \label{cond:inf:s1}
For all large $K$ assume that $p_K\geq (1+c)\log(K)/K$ for some $c>0$.
\end{condition}

Then,

{ 
\begin{theorem} \label{inf:lse:properties}
Consider model \eqref{model.Y_ijl}. If Condition \ref{cond:inf:s1} holds then for any finite set of indices $\{i_1,\ldots,i_r\}$ we have

\begin{equation} \label{random:clt}
{\sqrt{Kp_K}}((\widehat{\mu}_{i_1},\ldots ,\widehat{\mu}_{i_l})-(\mu_{i_{1}},\ldots ,\mu_{i_{r}}))^\top\Rightarrow \mathcal{N}_{r}\left( \pmb{0},\sigma^2 \pmb{I}\right).  
\end{equation}
Moreover, if the errors are subgaussian then
\begin{align} \label{inf:ER:wlln}
\mathbb{E}(\max_{i=1,\ldots ,K}| \widehat{\mu }_{i}-\mu _{i}|) =O(\sqrt{\frac{{\log (K)}}{Kp_K}}).
\end{align}
\end{theorem}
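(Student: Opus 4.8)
The plan is to condition on the random graph $\mathcal{G}$, which reduces both assertions to statements about linear combinations of the IID errors whose coefficients are controlled by the pseudoinverse of the (random) Laplacian. Since $\bv=\b1$, Theorem~\ref{Thm-UniqueE} together with the identities $\bS=\bM^{\top}\bY$, $\bM^{\top}\bM=\bN$ of \eqref{example:cov:line:graph} gives $\widehat{\bmu}=\bN^{+}\bM^{\top}\bY$, hence, using $\b1^{\top}\bmu=0$,
\begin{equation} \label{eq:plan:decomp}
\widehat{\bmu}-\bmu=\bN^{+}\bM^{\top}\pmb{\epsilon}.
\end{equation}
Let $\pmb{P}=\pmb{I}-\b1\b1^{\top}/K$ be the orthogonal projection onto $\b1^{\perp}$. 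Conditionally on $\mathcal{G}$ the right side of \eqref{eq:plan:decomp} is a linear combination of the zero--mean independent $\epsilon_{ijk}$ (one replicate per edge, since $n_{ij}\le1$); it is conditionally unbiased with conditional covariance $\sigma^{2}\bN^{+}\bN\bN^{+}=\sigma^{2}\bN^{+}$. Thus, for a fixed index set $R=\{i_{1},\dots,i_{r}\}$, the scaled vector in \eqref{random:clt} has, given $\mathcal{G}$, mean $\pmb{0}$ and covariance $\sigma^{2}Kp_{K}[\bN^{+}]_{R}$, and $\widehat{\mu}_{i}-\mu_{i}$ is, given $\mathcal{G}$, a weighted sum of independent subgaussian errors with squared weights summing to $N_{ii}^{+}$, hence conditionally subgaussian with variance proxy $\tau^{2}N_{ii}^{+}$, where $\tau$ is the subgaussianity constant of the errors. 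It then remains to control $[\bN^{+}]_{R}$ and $\max_{i}N_{ii}^{+}$ and to invoke conditional limit theorems.

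The key step is a concentration lemma for the Erd\H{o}s--R\'enyi Laplacian: on an event $\mathcal{A}_{K}$ with $\mathbb{P}(\mathcal{A}_{K})\to1$ one has (i) $\mathcal{G}$ connected and $\lambda_{2}(\bN)\ge c'Kp_{K}$ for some $c'=c'(c)>0$, and (ii) $Kp_{K}[\bN^{+}]_{R}\to\pmb{I}$ in probability for every fixed $R$. For (i) I would write $\bN=\bD-\pmb{A}=\pmb{P}\bD\pmb{P}-\pmb{P}\pmb{A}\pmb{P}$ (with $\bD$ the degree matrix and $\pmb{A}$ the adjacency matrix), use the sharp bound $\|\pmb{A}-\mathbb{E}\pmb{A}\|_{\mathrm{op}}=O(\sqrt{Kp_{K}})$ (valid under Condition~\ref{cond:inf:s1}) so that $\|\pmb{P}\pmb{A}\pmb{P}\|_{\mathrm{op}}\le p_{K}+\|\pmb{A}-\mathbb{E}\pmb{A}\|_{\mathrm{op}}=O(\sqrt{Kp_{K}})$, note that $\pmb{P}\bD\pmb{P}$ is a low--rank perturbation of $\bD$ whose eigenvalues interlace the degree sequence, and combine this with the fact that under Condition~\ref{cond:inf:s1} the minimum degree $\delta(\mathcal{G})$ is of order $Kp_{K}$ (a Chernoff union bound over the $K$ binomial degrees); Weyl's inequality on $\b1^{\perp}$ then gives $\lambda_{2}(\bN)\ge\delta(\mathcal{G})-O(\sqrt{Kp_{K}})=\Omega(Kp_{K})$, so in particular $\max_{i}N_{ii}^{+}\le\|\bN^{+}\|_{\mathrm{op}}=1/\lambda_{2}(\bN)=O(1/(Kp_{K}))$ on $\mathcal{A}_{K}$, which already settles the variance--proxy bound for \eqref{inf:ER:wlln}. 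For (ii) a naive operator--norm perturbation of $\bN^{+}$ is too weak, since $\|\bN-\mathbb{E}\bN\|_{\mathrm{op}}$ need not be $o(Kp_{K})$ near the threshold; instead I would write $\bN^{+}=(Kp_{K})^{-1}\pmb{P}+\pmb{E}$ and bound $\pmb{E}$ in \emph{Frobenius} norm, noting that on $\mathcal{A}_{K}$
\begin{equation*}
\|\pmb{E}\|_{F}^{2}=\sum_{l\ge2}\left(\frac{1}{\lambda_{l}}-\frac{1}{Kp_{K}}\right)^{2}\le\frac{1}{(c'Kp_{K})^{2}(Kp_{K})^{2}}\sum_{l\ge2}(\lambda_{l}-Kp_{K})^{2},
\end{equation*}
and, by the interlacing together with $\|\pmb{P}\pmb{A}\pmb{P}\|_{\mathrm{op}}=O(\sqrt{Kp_{K}})$, the last sum is $O_{p}(\sum_{i}(n_{i}-Kp_{K})^{2})=O_{p}(K^{2}p_{K})$ (its expectation being $\sum_{i}\mathrm{Var}(n_{i})=O(K^{2}p_{K})$), whence $\|\pmb{E}\|_{F}^{2}=O_{p}((K^{2}p_{K}^{3})^{-1})$. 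Since the distribution of $\pmb{E}$ is invariant under relabelling the $K$ vertices and $\mathcal{A}_{K}$ is a graph invariant, averaging over coordinates gives $\mathbb{E}(E_{ii}^{2};\mathcal{A}_{K})\le K^{-1}\mathbb{E}(\|\pmb{E}\|_{F}^{2};\mathcal{A}_{K})=O((K^{3}p_{K}^{3})^{-1})$ and likewise $\mathbb{E}(E_{ij}^{2};\mathcal{A}_{K})\le(K(K-1))^{-1}\mathbb{E}(\|\pmb{E}\|_{F}^{2};\mathcal{A}_{K})=O((K^{4}p_{K}^{3})^{-1})$ for $i\ne j$; since $Kp_{K}\to\infty$ these force $E_{ii},E_{ij}=o_{p}((Kp_{K})^{-1})$, so $Kp_{K}[\bN^{+}]_{R}\to\pmb{I}$. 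This is the Erd\H{o}s--R\'enyi analogue of the exact identity $\bN^{+}=K^{-1}\pmb{P}$ that underlies Theorem~\ref{Thm-LST.LargeK}.

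Granting the lemma, \eqref{random:clt} follows from a conditional Lindeberg--Feller argument: on $\mathcal{A}_{K}$, \eqref{eq:plan:decomp} reads $\widehat{\mu}_{i_{a}}-\mu_{i_{a}}=\sum_{(j,l)\in\mathcal{E},\,j<l}(N_{i_{a}j}^{+}-N_{i_{a}l}^{+})\epsilon_{jl}$, a weighted sum of independent errors with weights uniformly $O((Kp_{K})^{-1})$ and conditional variance $\sigma^{2}N_{i_{a}i_{a}}^{+}=\sigma^{2}(Kp_{K})^{-1}(1+o_{p}(1))$, so the Lindeberg ratio is $O_{p}((Kp_{K})^{-1})\to0$; with the Cram\'er--Wold device and the limit of $Kp_{K}[\bN^{+}]_{R}$ this gives, for $\mathcal{G}\in\mathcal{A}_{K}$, that $\sqrt{Kp_{K}}((\widehat{\mu}_{i_{1}},\dots,\widehat{\mu}_{i_{r}})-(\mu_{i_{1}},\dots,\mu_{i_{r}}))^{\top}$ given $\mathcal{G}$ converges weakly to $\mathcal{N}_{r}(\pmb{0},\sigma^{2}\pmb{I})$, and, the conditional law converging in probability to this deterministic limit, \eqref{random:clt} follows unconditionally. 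For \eqref{inf:ER:wlln}, the subgaussian maximal inequality gives $\mathbb{E}(\max_{i\le K}|\widehat{\mu}_{i}-\mu_{i}|\mid\mathcal{G})\le C\tau\sqrt{(\log K)/\lambda_{2}(\bN)}=O(\sqrt{(\log K)/(Kp_{K})})$ on $\mathcal{A}_{K}$, and a truncation on $\mathcal{A}_{K}^{c}$---using $\mathbb{E}(\lambda_{2}(\bN)^{-1/2};\mathcal{G}\text{ connected})=O((Kp_{K})^{-1/2})$, which follows from standard estimates on the algebraic connectivity of Erd\H{o}s--R\'enyi graphs---disposes of the remainder after taking expectations.

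I expect the main obstacle to be step (ii) of the lemma, i.e.\ the sharp entrywise limit $Kp_{K}[\bN^{+}]_{R}\to\pmb{I}$. Near the connectivity threshold permitted by Condition~\ref{cond:inf:s1} a handful of atypically low-- or high--degree vertices create outlier Laplacian eigenvalues with localized eigenvectors, so $\|\bN-\mathbb{E}\bN\|_{\mathrm{op}}=\Theta(Kp_{K})$ and a global operator--norm perturbation of $\bN^{+}$ yields only $N_{ii}^{+}=O(1/(Kp_{K}))$---which suffices for \eqref{inf:ER:wlln} but not for the exact asymptotic variance $\sigma^{2}\pmb{I}$ in \eqref{random:clt}. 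Pinning down the constant $Kp_{K}N_{ii}^{+}\to1$ for a fixed $i$ is precisely what forces the Frobenius--norm--plus--exchangeability argument above, exploiting that any fixed vertex is typical with probability tending to one; all remaining steps are routine and parallel Theorems~\ref{Thm-LST.LargeK} and~\ref{Thm-AS}.
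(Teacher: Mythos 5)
Your strategy is genuinely different from the paper's. You condition on $\mathcal{G}$, write $\widehat{\bmu}-\bmu=\bN^{+}\bM^{\top}\pmb{\epsilon}$, and reduce everything to an entrywise limit $Kp_K[\bN^{+}]_R\to\pmb{I}$ plus a conditional Lindeberg--Feller argument. The paper instead proves \eqref{random:clt} first for the row--sum estimator $\tilde{\mu}_i=S_i/N_{ii}$ (via a CLT for random sums with non--random norming, Finkelstein et al.\ 1994, together with a direct covariance computation), and then shows $\sqrt{Kp_K}(\widehat{\mu}_i-\tilde{\mu}_i)=o_p(1)$ by a Moore--Penrose perturbation inequality (Meng and Zheng 2010) comparing $\bN$ with the scaled complete--graph Laplacian $p_K\bN_{\#}$, using cited spectral asymptotics for Erd\H{o}s--R\'enyi Laplacians. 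Your Frobenius--norm--plus--exchangeability device for pinning down $Kp_KN_{ii}^{+}\to1$ is an interesting alternative to that perturbation step, and your treatment of \eqref{inf:ER:wlln} via conditional subgaussianity parallels the paper's reuse of the maximal--inequality argument from Theorem \ref{Thm-LST.LargeK}.

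There is, however, a genuine gap at the foundation of your concentration lemma: the claim that under Condition \ref{cond:inf:s1} the minimum degree is of order $Kp_K$ by a Chernoff union bound, and hence $\lambda_2(\bN)\ge c'Kp_K$ on $\mathcal{A}_K$. The union bound over $K$ vertices gives $\mathbb{P}(\exists\, i:\ n_i\le(1-\delta)Kp_K)\le K\exp(-\delta^2Kp_K/2)=K^{1-\delta^2(1+c)/2}$, which tends to zero only if $\delta^2(1+c)>2$ --- impossible for $\delta\le 1$ unless $c>1$, whereas Condition \ref{cond:inf:s1} permits any $c>0$. In fact, at $p_K=(1+c)\log K/K$ the minimum degree is of order $c\log K/\log\log K=o(Kp_K)$ with probability tending to one, and since $\lambda_2(\bN)\le\kappa(\mathcal{G})\le\delta(\mathcal{G})$ (Fiedler's bound through the vertex connectivity), one actually has $\lambda_2(\bN)=o(Kp_K)$ in this regime, so no choice of $c'$ makes $\mathbb{P}(\mathcal{A}_K)\to1$. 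This breaks (a) the variance--proxy bound $\max_iN_{ii}^{+}\le 1/\lambda_2(\bN)=O(1/(Kp_K))$ on which \eqref{inf:ER:wlln} rests, (b) the denominator $(c'Kp_K)^2$ in your Frobenius estimate for $\pmb{E}$, hence the entrywise limit $Kp_KN_{ii}^{+}\to1$ and the asymptotic variance $\sigma^2\pmb{I}$ in \eqref{random:clt}, and (c) the tail estimate $\mathbb{E}(\lambda_2(\bN)^{-1/2};\mathcal{G}\ \text{connected})=O((Kp_K)^{-1/2})$. You correctly anticipate that localized low--degree vertices are the obstacle near the threshold, but your proposed fix still routes through the uniform spectral gap and therefore does not close the gap; some form of entrywise (rather than worst--case spectral) control of $\bN^{+}$ for typical vertices would be needed. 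It is fair to note that the paper's own appeal to the eigenvalue asymptotics of Juh\'asz (1991) and Jiang (2012), with error term $O_p(\sqrt{\log K/K})$ for $\bN/K$, faces the same difficulty when $p_K$ is of exactly threshold order, but that does not repair your argument.
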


Theorem \ref{inf:lse:properties} shows that the LSE is consistent and asymptotically normal provided the Erd\H{o}s--R\'enyi graph is connected as $K\to\infty$. The proof is based on the asymptotic equivalence of the LSE $\widehat{\bmu}$ and a randomized version of the so--called row--sum method (Huber, 1963) which we denote by $\tilde{\bmu}$ and  whose $i^{th}$ element is given by
$$\tilde{\mu}_{i}=\frac{S_i}{N_{ii}}.$$ 
Here $S_i$ is the random sum $\sum_{j\neq i}Y_{ij}B_{ij}$ satisfying 
\begin{equation*}
\mathbb{E}(S_{i}) = \mathbb{E}(\sum_{j\neq i}Y_{ij}B_{ij}) =
\sum_{j\neq i}\mathbb{E}(Y_{ij}B_{ij}) = 
\sum_{j\neq i}p_K(\mu_i-\mu_j) = Kp_K\mu_i, 
\end{equation*}
and $N_{ii}$ is a binomial RV with parameters $(K-1,p_K)$. Therefore, it is clear that both $S_i/(Kp_K)$ and $\tilde{\mu}_{i}$ are consistent for $\mu_i$. An advantage of $\tilde{\bmu}$ is that its computation does not involve the inversion of the Laplacian.   

We conclude this section by demonstrating that the LSE is uniformly consistent even on sparse graphs. Markov's inequality, Jensen's inequality and Equation \eqref{inf:ER:wlln} together imply that
\begin{align*}
\mathbb{P}(\|\widehat{\bmu}-\bmu\|_{\infty} \leq \epsilon) \geq 
1- \frac{\mathbb{E}\|\widehat{\bmu}-\bmu\|_{\infty}^r}{\epsilon^r}
\geq 
1- \frac{(\mathbb{E}\|\widehat{\bmu}-\bmu\|_{\infty})^r}{\epsilon^r}
= 1- {O(({\frac{\log K}{\epsilon^2Kp_K}})^{r/2})}.
\end{align*}
Choosing $\epsilon = {{\log K}/{Kp_K}}^{1/r}$ and $p_K= {(\log K)^{1+c}}/{K}$ the display above may be rewritten as
\begin{align} \label{prob:bound}
\mathbb{P}(\|\widehat{\bmu}-\bmu\|_{\infty} \leq ({\frac{\log K}{Kp_K}})^{1/r})\geq 
1- O((\log K)^{cr(1/r-1/2)}). 
\end{align}
Thus, for any $c>0$ and $r>2$, the RHS of \eqref{prob:bound} converges to $1$, which implies in turn that the LSE in sparse random comparison graphs is uniformly consistent. We believe that the rate of convergence in \eqref{prob:bound} is not the best possible. 
}

\begin{remark}
Theorem \ref{inf:lse:properties} can be extended to situations where $\mathcal{G}$ is a random graph with varying edge probabilities (cf. Janson et al., 2000) and to the general case where $n_{ij}$ are arbitrary independent RVs. Such a graph may arise when the degree of an item depends on its merit, i.e., more popular items are compared more than less popular ones. Examples include IMDB, a movie rating website and Amazon, an online retailer. In such a case, we believe that consistency of the LSE will hold if the graph is connected and the degrees of all items increase to infinity as $K\to\infty$. However, we do not attempt to establish such a result in the current article.
\end{remark}

\begin{remark}
Finally we note that Theorems \ref{Thm-LST.LargeK} and \ref{inf:lse:properties} can be readily adapted to situations where covariates are present. The sampling procedure generating Erd\H{o}s--R\'enyi comparison graphs guarantees that the covariates are random variables. The high dimensional geometry implies that the columns of $\bX$ are orthogonal to columns of $\bM$ with high probability. Therefore, the conclusions of Part $3$ of Theorem \ref{misspecified:model:normal:errors:thm} apply and consequently the vector of merits can be consistently estimated without considering the covariates. 
\end{remark}

\section{Numerical experiments} \label{numerical.expriments}
\FloatBarrier

We evaluate the performance of the LSEs on graph-LMs by simulation in three different settings: 

\begin{enumerate}

\item \underline{\bf{Simple graph-LMs:}} Motivated by Theorems \ref{Thm-graph.WLLN} and \ref{Thm-AS} we start our numerical investigation by studying the rate at which the LSE, as well as the derived ranks converge to their true values. 

\item \underline{\bf{Graph--LMs with covariates:}} Next, we investigate the properties of LSEs in PCGs with covariates. Our focus is on assessing the influence of covariates on the accuracy of the estimators described in Theorem \ref{misspecified:model:normal:errors:thm}.  

\item \underline{\bf{Large graph--LMs:}} Guided by Theorems \ref{Thm-LST.LargeK} and \ref{inf:lse:properties}, we investigate the performance of the LSE in large and sparse PCGs. 

\end{enumerate}

\subsection{Simple graph-LMs}
\label{example_without_covariates_K=8}
\FloatBarrier

In our simulation study of model \eqref{model.Y_ijl} we assume a complete graph with $K=8$ and $\bmu=\bmu_0\times 10^{-\gamma}$ where $\bmu_0 = (-7,-5,-3,-1,1,3,5,7)$ and $\gamma\in\{0,1,2\}$. The number of paired comparisons $n_{ij}=m\in\{10,20,\ldots,1000 \}$. We experimented with three error distributions, namely $\mathcal{T}_2$, a t--distributed RV with two degrees of freedom, $\mathcal{T}_3/\sqrt{3}$ and $\mathcal{N}(0,1)$. Note that a t--distributed RV with two degrees of freedom has a mean $0$ but does not have a variance. The variance of the RV $\mathcal{T}_3/\sqrt{3}$ is one. In Figure \ref{fig:mse_normal_rvs_without_cov} we plot the empirical mean squared error (MSE), i.e., $\sum_{i=1}^{R}(\|\widehat\bmu_{i,n}-\bmu_0\|_2)/R$, against $m$, where $\widehat\bmu_{i,n}$ is the LSE computed at the $i^{th}$ simulation run and $R=1000$ is the number of runs. In Figure \ref{fig:corrrect_ranking_without_cov} we plot the probability of correct ranking, i.e., $\mathbb{P}\left( \pmb{r}( \widehat{\bmu}_{n}) = \pmb{r}\left( \bmu\right) \right)$, against $m$ for three values of $\gamma$ assuming the errors are $\mathcal{N}(0,1)$. 

\medskip
\centerline{Figure \ref{fig:MSE_without_cov} Comes Here}
\medskip

It is clear from Figure \ref{fig:mse_normal_rvs_without_cov} that in all three cases the MSE decreases as $m$ increases. The MSEs when the errors are either $\mathcal{T}_3/\sqrt{3}$ or $\mathcal{N}(0,1)$ seem similar. Nevertheless the MSEs associated with $\mathcal{N}(0,1)$ are always smaller than those associated with $\mathcal{T}_3/\sqrt{3}$ errors; the difference among the aforementioned MSEs is most pronounced for small values of $m$. In both cases the MSE decays at the expected rate. The MSEs associated with $\mathcal{T}_2$ errors behaves quite differently. They seem to decrease at a much slower rate and the empirical MSE curve is not smooth. This is caused by occasional very large errors. In fact since $\mathcal{T}_2$ RVs do not have a finite variance the theoretical MSE, i.e., $\mathbb{E}(\|\widehat\bmu_{n}-\bmu_0\|_2)$ does not exist. This means that once in a while the value of the empirical MSE will shoot up. Nevertheless by Theorem \ref{Thm-graph.SLLN} the LSE is consistent even in this case. Figure \ref{fig:corrrect_ranking_without_cov} provides a complementary view. Note that the probability of correct ranking increase with $m$. When the spacing's among the means is large this probability is very close to unity. Clearly, it is difficult to rank correctly when the spacing's are small. 

The simulation experiments described above were also carried out for path and cycle graphs with qualitatively similar results.

\subsection{Graph--LMs with covariates}

We assume the same graph topology, $K$, $\bmu$ and $n_{ij}$'s used in Section \ref{example_without_covariates_K=8}. We set $\pmb{\beta}=(1,1)^{\top}$ and let the bivariate covariate be independent Rademacher RVs, i.e., $\mathbb{P}(X_i=\pm 1)=1/2$ for $i=1,2$. Data were simulated from model \eqref{model:with:covariate} assuming the error distribution is $\mathcal{N}(0,1)$.

We fit the model with and without covariates. When fitting the model with covariates we observe, as expected from Theorem \ref{graph-GLM:wlln+clt} that the LSE is consistent and converges at the expected rate. Next we fit the model ignoring the covariates. Recall that by Theorem \ref{misspecified:model:normal:errors:thm} the LSE for the vector of merits is also consistent when the covariates are ignored. 

Probabilities of correct ranking are presented in Figure \ref{fig:rank_with_cov_est_cov1}. Observe that the derived ranks are estimated correctly with high probability even after ignoring covariates, although less efficiently. For example, when $m=1000$ and $\bmu=10^{-2}\times \bmu_0$ the probability of correct ranking when covariates are included in the model is close to $0.4$ and only $0.1$ when the covariates are omitted.  

\medskip
\centerline{Figure \ref{fig:rank_with_cov_est_cov1} Comes Here}
\medskip

The simulation experiments described above were also carried out for other distributions of covariates, e.g., the $\mathcal{N}_2(\bzero, \pmb{I})$ distribution, as well as correlated covariates with similar qualitative results.

\subsection{Large graph-LMs}

We consider model \eqref{model.Y_ijl} with $K\in\{100,150,\ldots,500 \}$. We set $\bmu= (-(K-1), -(K-3)\ldots, -1, 1,3,\ldots, K-1)$ and the errors are IID $\mathcal{N}(0,1)$ RVs. The $n_{ij}$'s are IID Bernoulli RVs with parameter $p\in\{1,\, 0.5,\, K^{-1}(\log K)^3,\,\sqrt{K^{-1}(\log K)^3} \}$ for $1\leq i<j\leq K$. Clearly when $p=1$ we have $n_{ij}=1$ for all $1\leq i,j\leq K$, i.e., the graph is complete graph. The other values of $p$ are the same as those in Han et al. (2022)

The value of $\max_{1\leq i\leq K}|\widehat\mu_i-\mu_i|$ against $K$ is plotted, for the aforementioned values of $p$, in Figure \ref{fig:large_dim_example}. Clearly for any fixed $p$ the estimation error is decreasing in $K$. Similarly for every fixed $K$ the estimation error decreases as $p$ increases. We also experimented with $p=O({K^{-1}(\log K)^2})$, a value which is slightly above the theoretical threshold required for convergence, cf. Theorem \ref{inf:lse:properties}. We have found the convergence of the LSE for this value of $p$ to be very slow.   

\medskip
\centerline{Figure \ref{fig:large_dim_example} Comes Here}
\medskip

\section{Illustrative example}

Ranking sports teams is of general public interest and an active application area for ranking methodologies. For illuminating details see the book by Langville and Meyer (2012).  

Here, we illustrate the proposed methodology by consider the ranking of NBA teams in 2022–2023 season. The data were obtained from the Basketball Reference Website (\url{https://www.basketball-reference.com/leagues/}) and is comprised of $1320$ regular season games played by the $30$ teams comprising the league. In addition to the score and the point--spread, our outcome variable, we record whether the game was played at home or away, and the ranking of the teams in the past two seasons, i.e., seasons 2020--21 and 2021--22. We fit the model
\begin{equation} \label{model.example}
Y_{ijk}=\mu_i-\mu_j+\beta_1(x_{ik,1}-x_{jk,1})+\beta_2(x_{i2}-x_{j2}) +\beta_3(x_{i3}-x_{j3})+\epsilon_{ijk}.   
\end{equation}
Here $x_{ik,1}=1$ for a home game, $x_{ik,1}=0$ otherwise, $x_{i2}$ and $x_{i3}$ denote the expanded standings of the team in seasons 2020--21 and 2021--22 respectively. Note that the first regressor $x_{ik,1}-x_{jk,1}\in \{-1,1\}$ the second regressor $x_{i2}-x_{j2} \in \{-29,\dots,29\}$, as is the third regressor. The parameter $\beta_1$ quantifies the effect of playing at home, whereas $\beta_2$ and $\beta_3$ quantify historical effects. 

Table \ref{table:ranks} list the rankings derived by fitting four nested submodels of \eqref{model.example} referred to as Models I--IV. Model I, corresponds to a model without any covariates, subsequent models incorporate the covariates, as they appear in \eqref{model.example}, culminating in Model IV which includes all three covariates. 
Teams are listed from left to right according to their expanded standing for the 2022--23 season. Observe that Models I and II yield identical rankings, whereas Models III and IV also yield similar rankings which differ quite substantially from those given by the models without covariates. 

\medskip
\centerline{Table \ref{table:ranks} Comes Here}
\medskip

Table \ref{table.rank.distance} provides the Cayley distance among the different models. We note that the full model is highly significant so there is clear evidence that Model IV should be adopted. The fact that rankings derived from Model I and Model IV are different indicates that $\bM$ and $\bX$ are not orthogonal and the Conditions in Theorem \ref{misspecified:model:normal:errors:thm} do not hold in this example and omitting the covariates may result in an improper ranking.   

\medskip
\centerline{Table \ref{table.rank.distance} Comes Here}
\medskip

The accuracy and variability of the estimated ranks were assessed by nonparametric bootstrap. Concretely, rankings were estimated in $200$ bootstrap samples and the resulting boxplots of ranks are given in Figure \ref{fig:real_eg}.  

\medskip
\centerline{Figure \ref{fig:real_eg} here.}
\medskip

Figure \ref{fig:real_eg} reveals several interesting features. First, variability is not even among all teams. The interquartile interval, that is, $(Q_1,Q_3)$, for the rank of the Boston Celtics is $(1,2)$ in Model I and $(2,7)$ in Model IV. This is a team whose ranking exhibits little variability. In contrast, the corresponding intervals for the Oklahoma City Thunders are $(11,19)$ in Model I and $(3,6)$ in Model IV. It is also clear that overall variability is high for middle ranked teams and less so for the best and worst teams. We also observe that the variability in Model IV is somewhat smaller than the variability in Model I although the difference is not very large. 

\section{Discussion} \label{discussion}

Despite the simplicity of models for cardinal PCDs and the extensive literature on the LSE, the statistical properties of the LSE have not been been well studied. Consequently, and surprising, methods for conducting inference were not sufficiently grounded. This paper fills this gap by developing a graph--based, rigorous statistical theory, that enables the conduct of inference for such models.  

Our investigation starts with model \eqref{model.Y_ijl}. We find explicit necessary and sufficient graph--theoretic conditions that guarantee consistency, {  asymptotic normality and exponential rates of convergence for rank estimators}. Next, we analyzed PCGs with covariates. To the best of our knowledge this paper is the first to do so in any generality. Conditions for consistency and asymptotic normality are provided. It is also shown that in many situations arising in practice the merits can be estimated consistently even if covariates are omitted from the model. This surprising result is a consequence of the geometry of high--dimensional spaces. Note also Remark \ref{two:set:covariates} where the connection between PCGs with covariates and standard regression problems with two disjoint sets of explanatory variables is made. Another extension provided is to situations in which the number of items compared is large, i.e., $K\to\infty$ and the number of paired comparisons between any two items is small, i.e., $0$ or $1$. 

\bigskip

{  This paper focuses on graph--LMs, i.e., graphical linear models for cardinal outcomes. In the following we contrast our results with those available in the literature on binary choice models with a focus on the Bradley--Terry (BT) model given in \eqref{Eq.BT.model}. 

\begin{enumerate}

\item Existence and uniqueness of estimators (Theorem \ref{Thm-UniqueE}): The merit parameters in BT models are typically estimated by maximizing the likelihood subject to the constraint that $\mu_1=1$. Other constraints such as $\sum_{i=1}^{K}\mu_i=1$ are also commonly employed (Firth and Turner 2012). As noted the LSE can also be viewed as an MLE if the errors in \eqref{model.Y_ijl} are normally distributed. The LSE for cardinal PCD requires that the comparison graph is connected, cf. Theorem \ref{Thm-UniqueE}, BT models require, in addition, that for every partition of the items into two nonempty sets, an item in the second set has been preferred over an item in the first at least once (Yan, 2016). This condition is necessary for the existence of the MLE and is, essentially, an "overlapping" condition for logistic regression (Speckman et al. 2009). For finite sample sizes the aforementioned (additional) condition may not hold if $\lambda_i \gg \lambda_j$ for all $i\in\mathcal{V}'$ and $j\in\mathcal{V}\setminus\mathcal{V}'$ where $\mathcal{V}'$ is some subset of $\mathcal{V}$.    

\medskip
\item Large sample theory for finite graphs (Theorems \ref{Thm-graph.WLLN}, \ref{Thm-graph.SLLN}, \ref{Thm-LST}, \ref{theorem clt with 3.1} and \ref{Thm-AS}): The large sample properties of the merit estimators for BT models have been analyzed in the literature, e.g., Bradley and Gart (1962). However, in that paper, and to the best of our knowledge all others, it assumed, often only implicitly, that all $n_{ij}$ increase to $\infty$ at the same rate whenever $(i,j)\in\mathcal{E}$. In some papers only the assumption that $n_i=\theta_i n$ is mentioned. Arguing as we have it is not difficult to establish that that Conditions \ref{Con(AlgConnec)} and \ref{Con(Rate+CLT)} are necessary and sufficient for consistency and asymptotic normality in BT models. Clearly the formula for the asymptotic variance under BT models is more complicated than \eqref{var.mu.hat}, see Simons and Yao (1999). 

Although, several authors (e.g., Shah et al. 2015) have noted that bounds on the mean square error of the MLE in BT models depend on the algebraic connectivity. They did not, connect their bounds with the necessary graph based conditions. We are also unaware of the equivalents of Theorems \ref{Thm-graph.SLLN}, \ref{Thm-LST} and \ref{theorem clt with 3.1} for BT models. We believe such extensions are possible.  
\medskip

\item Theory for models with covariates (Theorems \ref{Thm-Est.Covariates}, \ref{graph-GLM:wlln+clt} and \ref{misspecified:model:normal:errors:thm}): BT models with covariates have also appeared in the literature, see Stern (2011)  and more recently Fan et al. (2022). However, we are not aware of any authors who have formalized Conditions such as  \ref{condition:covariate:existence}, \ref{cond:wlln:with:covariates}, \ref{cond:clt:with:covariates} and \ref{cond:with:cov:clt+hajek+sidak} which are necessary for  guarantying consistency and asymptotic normality. We strongly believe that the conditions formalized in Section 4 are also necessary for limit theorems for BT models with covariates. We note that Fan et al. (2022) studied large sparse BT models with covariates. The principal difference between their approach and ours is that Fan et al. (2022) assume orthogonality in the parameter space. Our assumptions do not involve the parameters, only the design matrix $(\bM,\bX)$ and are therefore more natural to pose, and as importantly, to verify. We believe that our formulation leads to a crisp analysis and a clear view of the role of covariates in all models for PCD. 

\item Models with infinite number of items (Theorems, \ref{Thm-LST.LargeK} and \ref{inf:lse:properties}): BT models in which $K\to\infty$ have appeared in the literature. Simons and Yao (1999) were the first to analyze such models assuming $n_{ij}=1$ for all $1 \le i \neq j \le K$ whereas Han et al. (2022), among others, analyzed the sparse case. We have found that in cardinal PCD consistency and asymptotic normality hold provided $p_K=(1+c)\log K/{K}$ for some $c>0$ as $K\to\infty$ whereas uniform consistency requires that $p_K= O(\log(K)^{(1+c)}/K)$ for some $c>0$. In BT models, both complete and sparse, more stringent conditions are necessary. For example, Simons and Yao (1999) show that the MLEs exist only if $M_K=o(\sqrt{\log K/K})$ 
where $M_{K}=\sup_{1\leq i\neq j\leq K}({\mu_i}/{\mu_j})$ whereas asymptotic normality requires that $M_K=o(K^{1/10}/(\log K)^5)$. These restrictions on the parameter space do not exist in our cardinal models. Han et al. (2020), who analyzed large sparse graphs, found that if $M_{K}$, defined above, is fixed, then $p_K\ge {(\log K)^{{(1+c)}/5}}/{K}^{1/10}$ for some $c>0$ is required for asymptotic normality, and uniform consistency holds as long as $p_K\ge {(\log K)^{3}}/{K}$. Finally, if  $M_{K}$ is allowed to vary then higher levels of connectivity are required. Next we note that Han et al. (2023) considered $p_K={(\log K)^{3+c}}/{K}$ for some $c>0$ and found that the merit vector in  the Bradley--Terry model satisfies the uniform bound
$\|\widehat{\bmu}-\bmu\|_{\infty} \leq O(\sqrt{{(\log K)^3}/{Kp_K}}$ with probability at least $1-K^{-2}$. They attained the same bound for their cardinal model assuming some restrictions on the parameter space. From \eqref{prob:bound}, we see that the uniform consistency holds for graph--LMs on much sparser graphs compared with Han et al. (2023). 
\end{enumerate}

One future avenue of research is rigorously establishing points (2) and (3), listed above, for BT models. More ambitiously, the theory developed in this paper, addressing  the simplest type of PCD, can be used as a foundation for the analysis of much more general models. 

\medskip

Finally, we conclude by listing some directions where substantive extensions are possible.
}

\begin{enumerate}

\item \textbf{Robust methods for cardinal PCD:} Our simulation results indicated that even when the errors are $\mathcal{T}_2$ the merits can be estimated consistently but at a slower rate compared to the situations when the variance of the error is finite. For error distributions without a mean, e.g., the Cauchy distribution, the estimator is not consistent. Thus there is a need for suitably modified robust estimators which can handle heavy--tailed error distributions. Huber--type estimators, or least absolute deviation type estimators are possibilities which require a detailed investigation.

\item \textbf{Optimal design for PCD:} The planning of experiments for PCD is another area which has not received adequate attention. In the context of cardinal PCDs one can use the form of $\pmb{\Theta }^{+}$ to construct optimal experimental designs for eliciting information on $\bmu$. There is some work in this area (Grasshoff, 2004) but not for cardinal PCD. It would be interesting to examine A--optimal, D--optimal and other optimality criteria (cf. Pukelsheim, 2006) for sequential and non--sequential designs for estimating $\bmu$. It is also clear that different experimental  objectives, e.g., finding the best item, the $k$--best and so forth, will yield different designs.

\item \textbf{Non--transitive and cyclical PCD:} Model \eqref{model.Y_ijl} imposes, what is called a linear transitive ordering, cf. Latta (1979), Oliveira et al. (2018), on the items $\{1,\ldots,K\}$. However the model may not fit the data, i.e., the expected value of $Y_{ijk}$ may not equal $\mu_i-\mu_j$ for all pairs $(i,j)\in \mathcal{E}$. In such cases the model $Y_{ijk}=\nu_{ij}+\epsilon_{ijk}$ where $\nu_{ij}=\mathbb{E}(Y_{ij})$ must be adopted. The most general model imposes no restrictions on the vector of means $\bnu = (\nu_{12},\ldots,\nu_{1K},\nu_{23},\ldots,\nu_{2K}, \ldots,\nu_{K-1,K})$. It is important to note that the model indexed by $\bnu$ captures a much broader range of preference relations compared to the model indexed by $\bmu$. In particular it may capture a variety of transitivity relations as well as preference relations which are not transitive. Such relations are called cyclical preference relations. The problems of non--transitivity and cyclicality are, unresolved, long standing problems in the field. We have made progress towards a principled resolution of the aforementioned problem, which will be published separately.

\item \textbf{Inference under weak asymptotic connectivity:} In Section \ref{section large sample theory} we studied the behavior of the LSE under two asymptotic regimes specified by Conditions \ref{Con(AlgConnec)} and \ref{Con(Rate+CLT)} respectively. The behavior of the LSE under Condition \ref{Con(Rate+CLT)} is well understood. This is not so under Condition \ref{Con(AlgConnec)}. For example, what more can be said about $\pmb{\Psi}$, the limiting variance in Theorem \ref{theorem clt with 3.1}? Are there classes of inferential problems that can not be addressed using Theorem \ref{theorem clt with 3.1}? How does the theorem extend to models with covariates? Answering such  questions requires an understanding of the sampling process governing the evolution of $\bN$ as  well as some experience with their respective empirical performance. These are all very interesting problems to be addressed elsewhere.

\item \textbf{Order restricted inference for PCD:} Finally we note that many scientific questions arising in the context of paired comparison data are questions about order. Such problems are best formulated and addressed within the context of order restricted statistical inference (cf. Silvapulle and Sen 2005). Very little work in this direction has been carried out. For example, by specifying an orderings on subsets of items, i.e., $\mu _{i}\geq \mu _{j}$ for all $(i,j)\in \mathcal{R}$ where $\mathcal{R}$ is a set of prespecified order relations there is a potential of improving the overall accuracy. This is especially true for large sparse graphs where unstable rankings (e.g., Hsieh et al. 2011) are common. The power of tests on the merits and ranks can be also improved by imposing order restrictions. The relations with the classical problems of selection of the best population(s) is obvious.

\end{enumerate}

\section*{Acknowledgments}
The work of Rahul Singh was conducted while a post doctoral fellow at the University of Haifa. The work of Ori Davidov was partially supported by the Israeli Science Foundation Grants No. 456/17 and 2200/22 and gratefully acknowledged.

\newpage



\clearpage
\section*{\underline{Figures and Tables:}}

\bigskip
\bigskip
\bigskip

\begin{figure}[!htb]
	\centering
	\begin{subfigure}[!htb]{0.33\textwidth}            
		\tikz \graph[circular placement, radius=3cm, nodes={circle,draw}] { subgraph K_n [--, n=8, clockwise, radius=1.6cm] };
		\caption{Complete}
		\label{fig1:complete}
	\end{subfigure}%
	\begin{subfigure}[!htb]{0.33\textwidth}
		\centering
		\tikz \graph[circular placement, radius=3cm, nodes={circle,draw}] { subgraph C_n [--, n=8, clockwise, radius=1.6cm] };
		\caption{Cycle}
		\label{fig1:cycle}
	\end{subfigure}
	\begin{subfigure}[!htb]{0.33\textwidth}
		\centering
		\tikz \graph[circular placement, radius=3cm, nodes={circle,draw}] { subgraph P_n [--, n=8, clockwise, radius=1.6cm] };
		\caption{Path}
		\label{fig1:path}
	\end{subfigure}
	\begin{subfigure}[!htb]{0.33\textwidth}
		\centering
		\begin{tikzpicture}
			\def \radius {1.6cm}
			
			\node[draw, circle] at (360:0mm) (center) {1};
			\foreach \i  in {2,...,8}{
				\node[draw, circle] at ({\i*50}:\radius) (u\i) {\i};
				\draw (center)--(u\i);
			}
		\end{tikzpicture}
		\caption{Star}
		\label{fig1:star}
	\end{subfigure}
	\begin{subfigure}[!htb]{0.33\textwidth}
		\centering
		\begin{tikzpicture}
			\node[circle,draw] (v) at (0,0) {$1$};
			\foreach \i in {2,...,8}{
				\node[circle,draw] (u_\i) at (50 * \i:1.6cm) {$\i$};
				\draw (u_\i) -- (v);
			}
			\foreach \i [evaluate=\i as \j using {int(mod(\i,8)+1)}] in {2,...,7}
			{
				\draw   (u_\i) -- (u_\j);
			}
			\draw   (u_2) -- (u_8);
		\end{tikzpicture}
		\caption{Wheel}
		\label{fig1:wheel}
	\end{subfigure}
	\begin{subfigure}[!htb]{0.32\textwidth}
		\centering
		\begin{tikzpicture}%
			[ every text node part/.style={draw, align = left, inner sep = 0pt} ]
			
			\tikzset{grow'=left}
			\tikzset{edge from parent/.style = { draw,
					edge from parent path = { (\tikzparentnode.west) 
						-- +(-8pt, 0)
						|- (\tikzchildnode.east) }}}
			\tikzset{level distance = 4em}
			\tikzset{every tree node/.style = {draw, circle,inner sep=3pt, anchor = base west}}
			
			\Tree[ .{1} 
			[.{5}
			[.{7} {8}
			[.{7}  ] ]
			[.{5} {6} {5} ]]
			[.{1}
			[.{3} {4}
			[.{3}  ] ]
			[.{1} {2} {1} ]] ]
		\end{tikzpicture}
		\caption{Tournament}
		\label{fig1:tournament}
	\end{subfigure}
	\caption{Some standard graphs with $8$ vertices. }\label{graph:topology}
\end{figure}
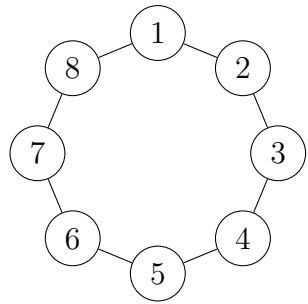
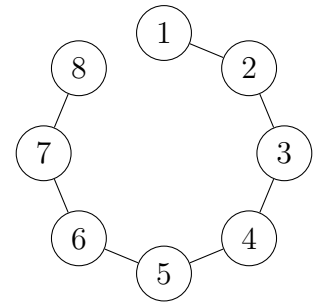
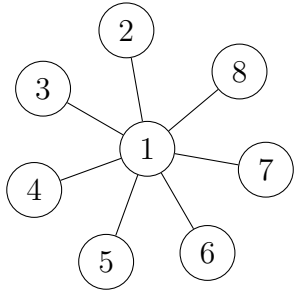
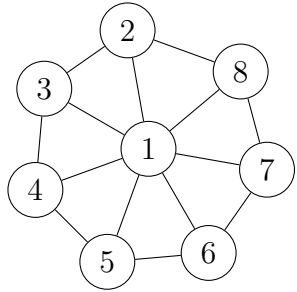
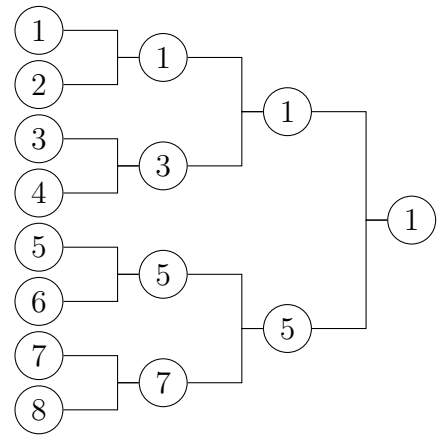
\begin{figure}[!htb]
	\centering
	\begin{subfigure}[!htb]{0.5\textwidth}            
		\includegraphics[width=\textwidth]{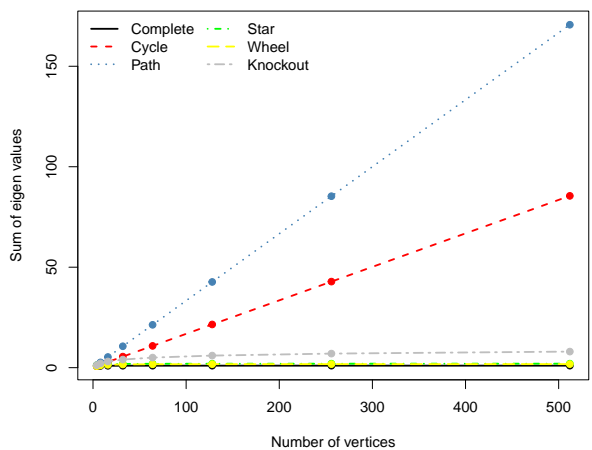}
		\caption{Sum of eigenvalues of $N^+$}
		\label{fig:sum_ev}
	\end{subfigure}%
	\begin{subfigure}[!htb]{0.5\textwidth}
		\centering
		\includegraphics[width=\textwidth]{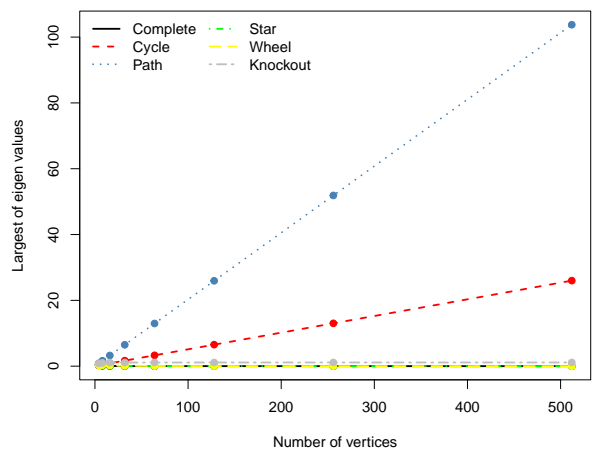}
		\caption{Largest eigenvalue of $N^+$}
		\label{fig:max_ev}
	\end{subfigure}
	\caption{Sum and largest of eigenvalues of $N^+$ against number of vertices of scaled graphs, when the number of edges is the same in all graphs. }\label{fig:precision_analysis1}
\end{figure}

\begin{figure}
	\centering
	\begin{subfigure}[!htb]{0.5\textwidth}            
		\includegraphics[width=\textwidth]{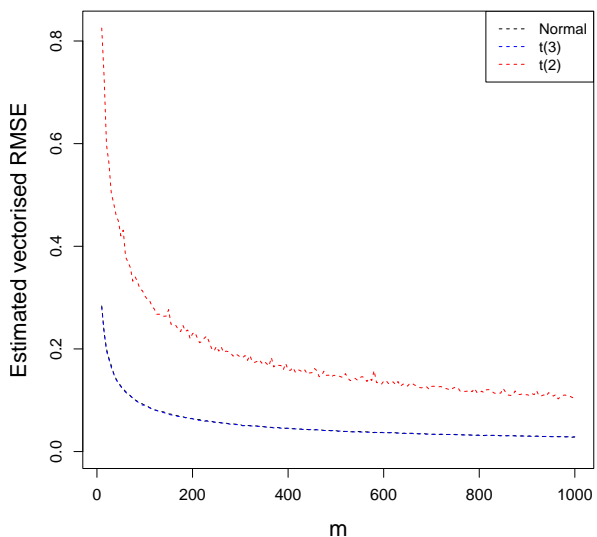}
		\caption{}
		\label{fig:mse_normal_rvs_without_cov}
	\end{subfigure}%
	\begin{subfigure}[!htb]{0.5\textwidth}
		\centering
		\includegraphics[width=\textwidth]{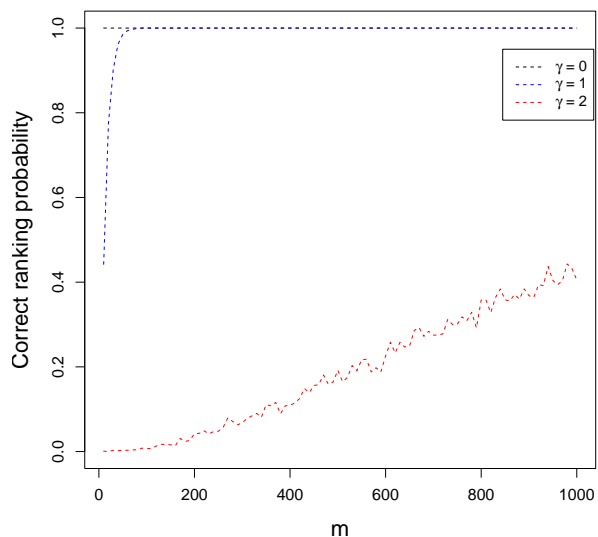}
		\caption{}
		\label{fig:corrrect_ranking_without_cov}
	\end{subfigure}
\caption{MSEs and probabilities of correct ranking} \label{fig:MSE_without_cov}
\end{figure}

\begin{figure}
\centering
\begin{subfigure}[!htb]{0.5\textwidth}    
\includegraphics[width=\textwidth]{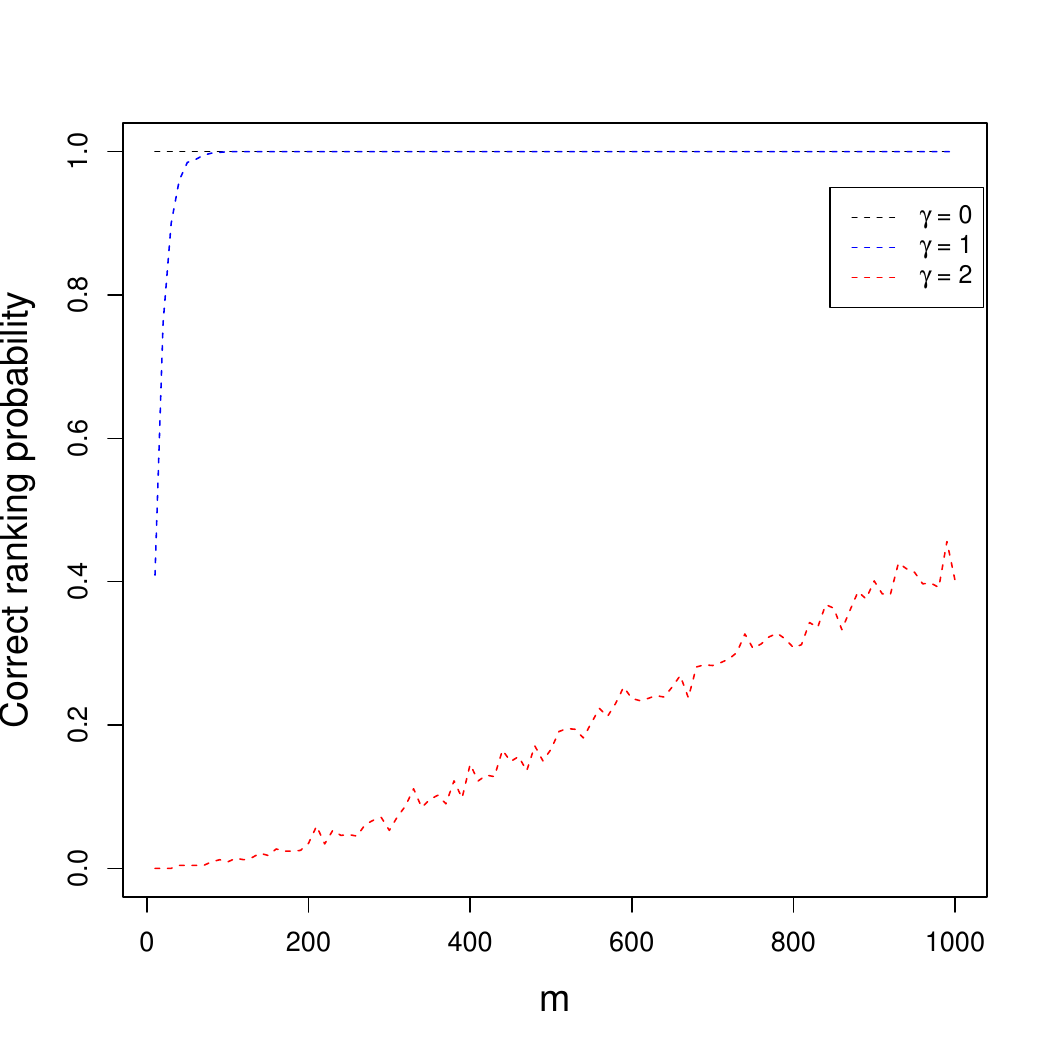}
\caption{Model with covariates}
\end{subfigure}%
\begin{subfigure}[!htb]{0.5\textwidth}    
\includegraphics[width=\textwidth]{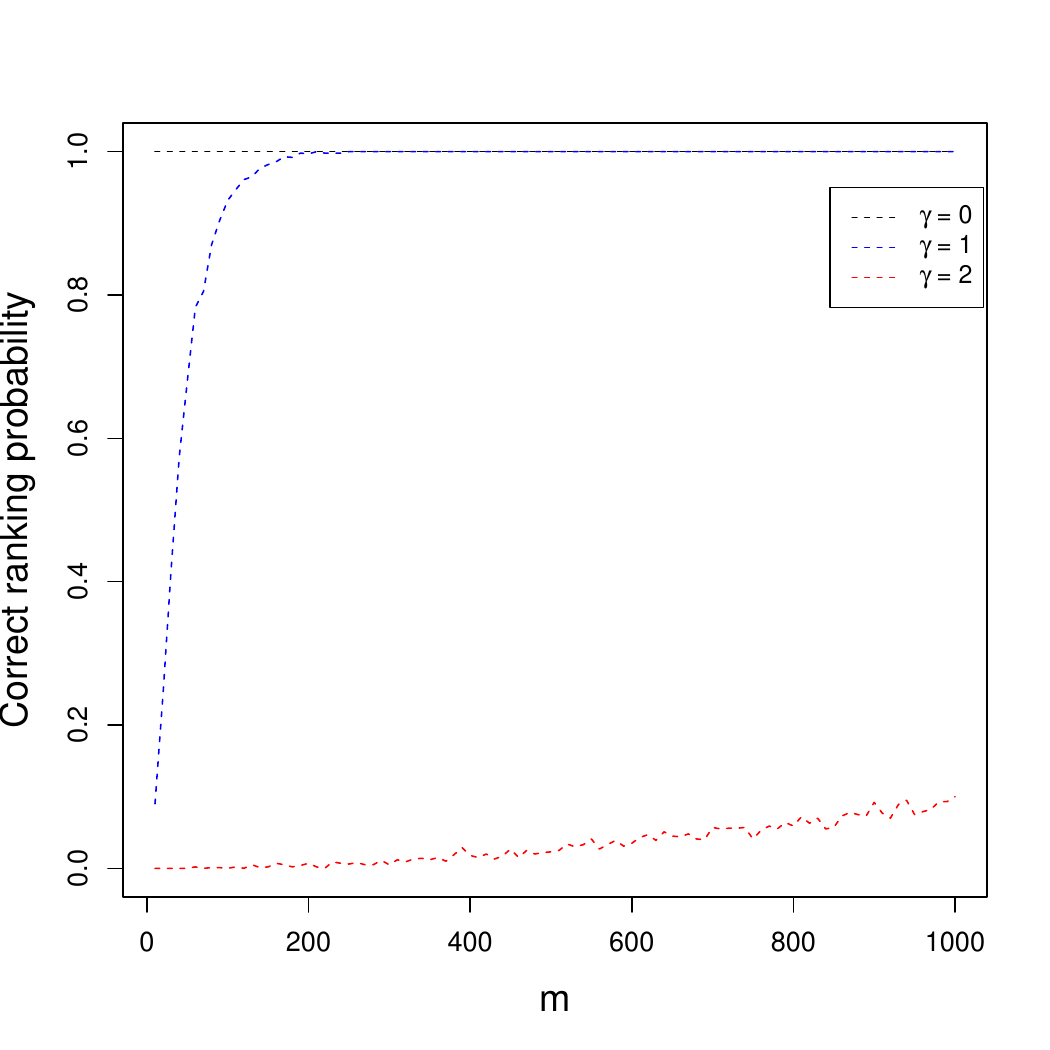}
\caption{Model without covariates}
\end{subfigure}%
\caption{Probabilities of correct ranking for Rademacher distributed covariates} \label{fig:rank_with_cov_est_cov1}
\end{figure}

\begin{figure}[!htb]
\centering   
\includegraphics[scale=0.7]{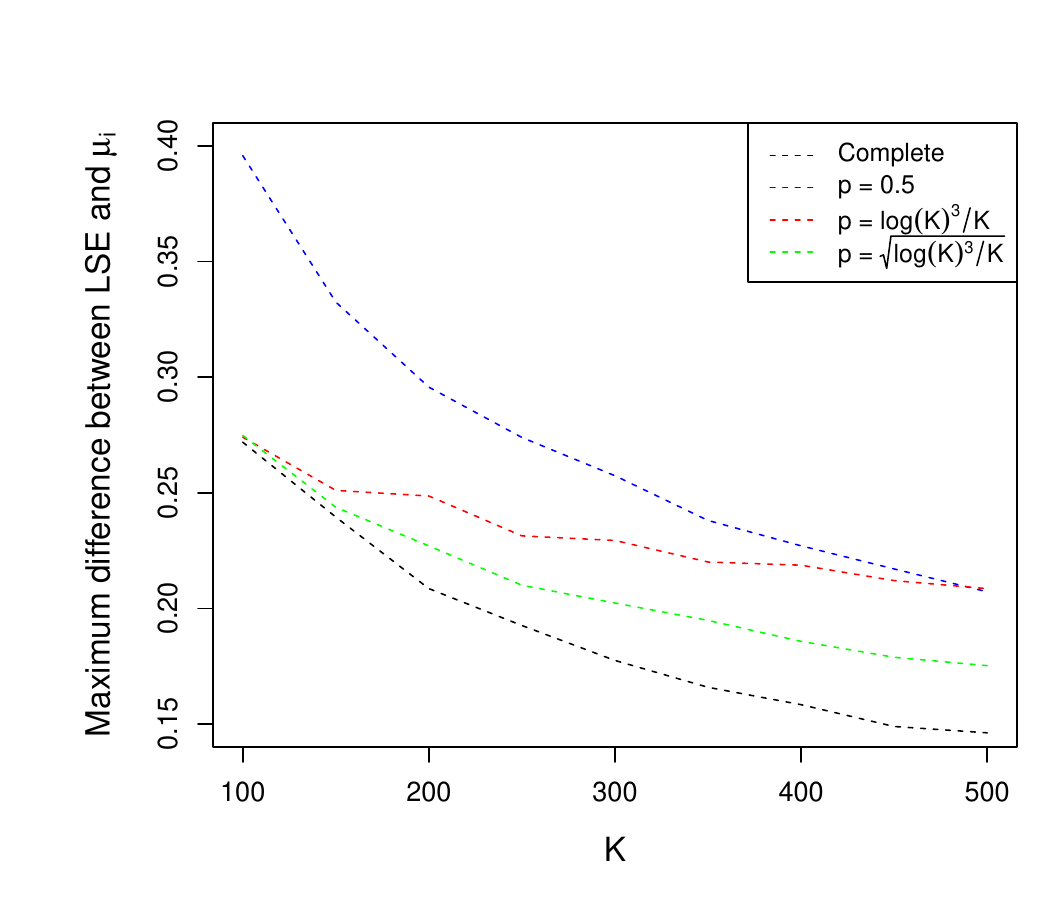}
\caption{Values of $\max_{1\leq i\leq K}|\widehat\mu_i-\mu_i|$ for complete and Erd\H{o}s--R\'enyi graphs with $p\in\{0.5,\, K^{-1}(\log K)^3,\,\sqrt{K^{-1}(\log K)^3}\}$ against $K$} \label{fig:large_dim_example}
\end{figure}

\FloatBarrier

\begin{table}[ht]
\centering
\caption{Ranking of NBA teams for different submodels of \eqref{model:with:covariate} }\label{table:ranks}
\setlength\tabcolsep{2pt}
\begin{tabular}{|l|llllllllllllllllllllllllllllll|} 
\hline
{{Teams}} &{\rotatebox{90}{Milwaukee Bucks}} & {\rotatebox{90}{Boston Celtics}} & {\rotatebox{90}{Philadelphia 76ers}} & {\rotatebox{90}{Denver Nuggets}} & {\rotatebox{90}{Cleveland Cavaliers}} & {\rotatebox{90}{Memphis Grizzlies}} & {\rotatebox{90}{Sacramento Kings}} & {\rotatebox{90}{New York Knicks}} & {\rotatebox{90}{Brooklyn Nets}} & {\rotatebox{90}{Phoenix Suns}} & {\rotatebox{90}{Golden State Warriors}} & {\rotatebox{90}{Los Angeles Clippers}} & {\rotatebox{90}{Miami Heat}} & {\rotatebox{90}{Los Angeles Lakers}} & {\rotatebox{90}{Minnesota Timberwolves}} & {\rotatebox{90}{New Orleans Pelicans}} & {\rotatebox{90}{Atlanta Hawks}} & {\rotatebox{90}{Toronto Raptors}} & {\rotatebox{90}{Chicago Bulls}} & {\rotatebox{90}{Oklahoma City Thunder}} & {\rotatebox{90}{Dallas Mavericks}} & {\rotatebox{90}{Utah Jazz}} & {\rotatebox{90}{Indiana Pacers}} & {\rotatebox{90}{Washington Wizards}} & {\rotatebox{90}{Orlando Magic}} & {\rotatebox{90}{Portland Trail Blazers}} & {\rotatebox{90}{Charlotte Hornets}} & {\rotatebox{90}{Houston Rockets}} & {\rotatebox{90}{San Antonio Spurs}} & {\rotatebox{90}{Detroit Pistons}} \\ \hline
Model I & 5 & 1 & 4 & 3 & 2 & 7 & 8 & 6 & 17 & 9 & 10 & 19 & 15 & 14 & 21 & 11 & 18 & 12 & 13 & 16 & 20 & 22 & 25 & 23 & 24 & 26 & 27 & 28 & 30 & 29 \\ 
Model II & 5 & 1 & 4 & 3 & 2 & 7 & 8 & 6 & 17 & 9 & 10 & 19 & 15 & 14 & 21 & 11 & 18 & 12 & 13 & 16 & 20 & 22 & 25 & 23 & 24 & 26 & 27 & 28 & 30 & 29 \\ 
Model III & 11 & 5 & 9 & 7 & 1 & 19 & 2 & 3 & 17 & 24 & 22 & 12 & 23 & 6 & 20 & 8 & 13 & 16 & 14 & 4 & 26 & 25 & 18 & 15 & 10 & 21 & 29 & 27 & 30 & 28 \\ 
Model IV & 11 & 5 & 9 & 8 & 1 & 19 & 2 & 4 & 17 & 24 & 21 & 15 & 23 & 6 & 20 & 7 & 14 & 13 & 12 & 3 & 25 & 26 & 18 & 16 & 10 & 22 & 29 & 27 & 30 & 28 \\ 
\hline
\end{tabular}
\end{table}

\begin{table}[!htb]
\centering
\caption{Cayley distance among derived ranking of the models} \label{table.rank.distance}
\begin{tabular}{lrrrr}
\hline
 & \multicolumn{1}{l}{Model I} & \multicolumn{1}{l}{Model II} & \multicolumn{1}{l}{Model III} & \multicolumn{1}{l}{Model IV} \\ \hline
\multicolumn{1}{l|}{Model I} & 0 & 0 & 23 & 23 \\
\multicolumn{1}{l|}{Model II} & 0 & 0 & 23 & 23 \\
\multicolumn{1}{l|}{Model III} & 23 & 23 & 0 & 8 \\
\multicolumn{1}{l|}{Model IV} & 23 & 23 & 8 & 0\\ \hline
\end{tabular}
\end{table}


\begin{figure}[!htb]
\centering
\begin{subfigure}[htb]{0.5\textwidth}    
\includegraphics[width=\textwidth]{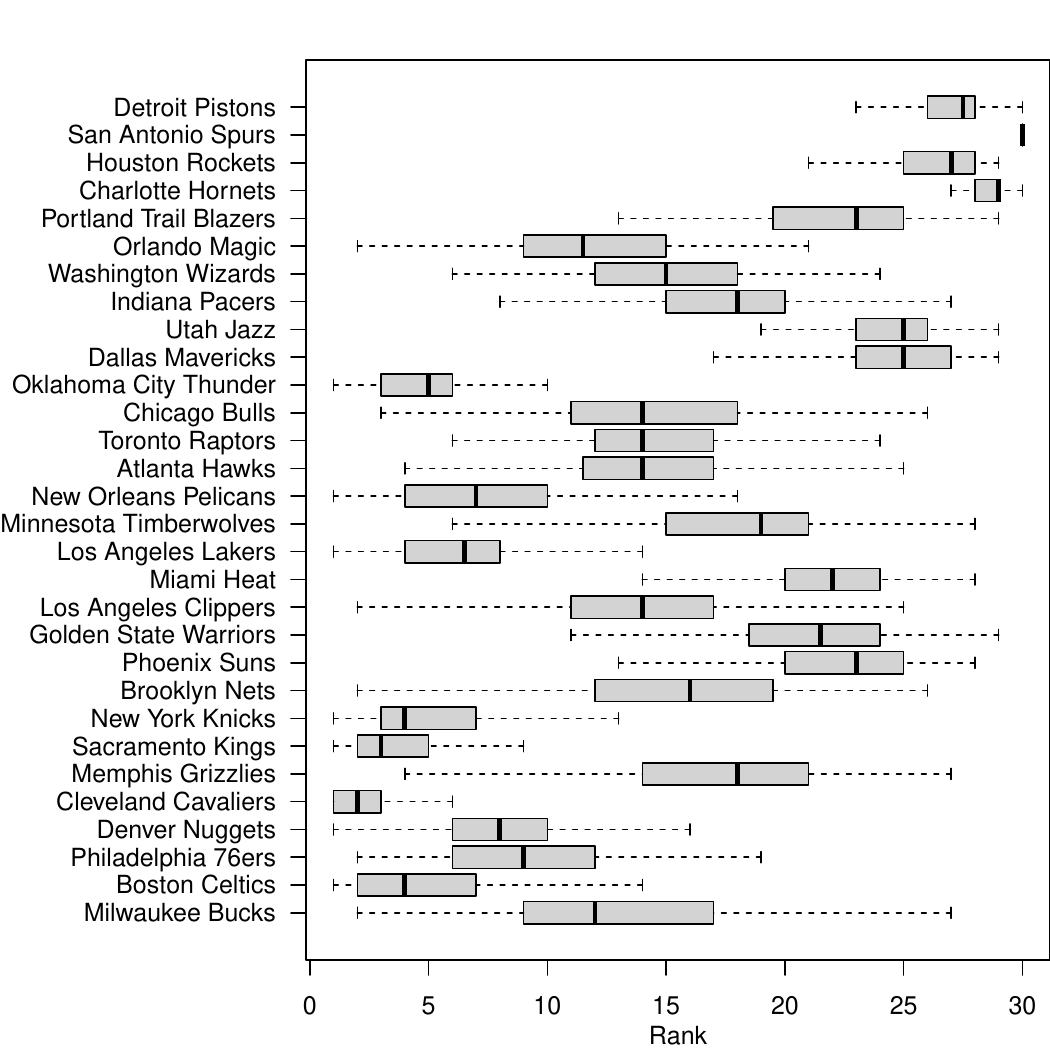}
\caption{Model with covariates}
\end{subfigure}%
\begin{subfigure}[htb]{0.5\textwidth}    
\includegraphics[width=\textwidth]{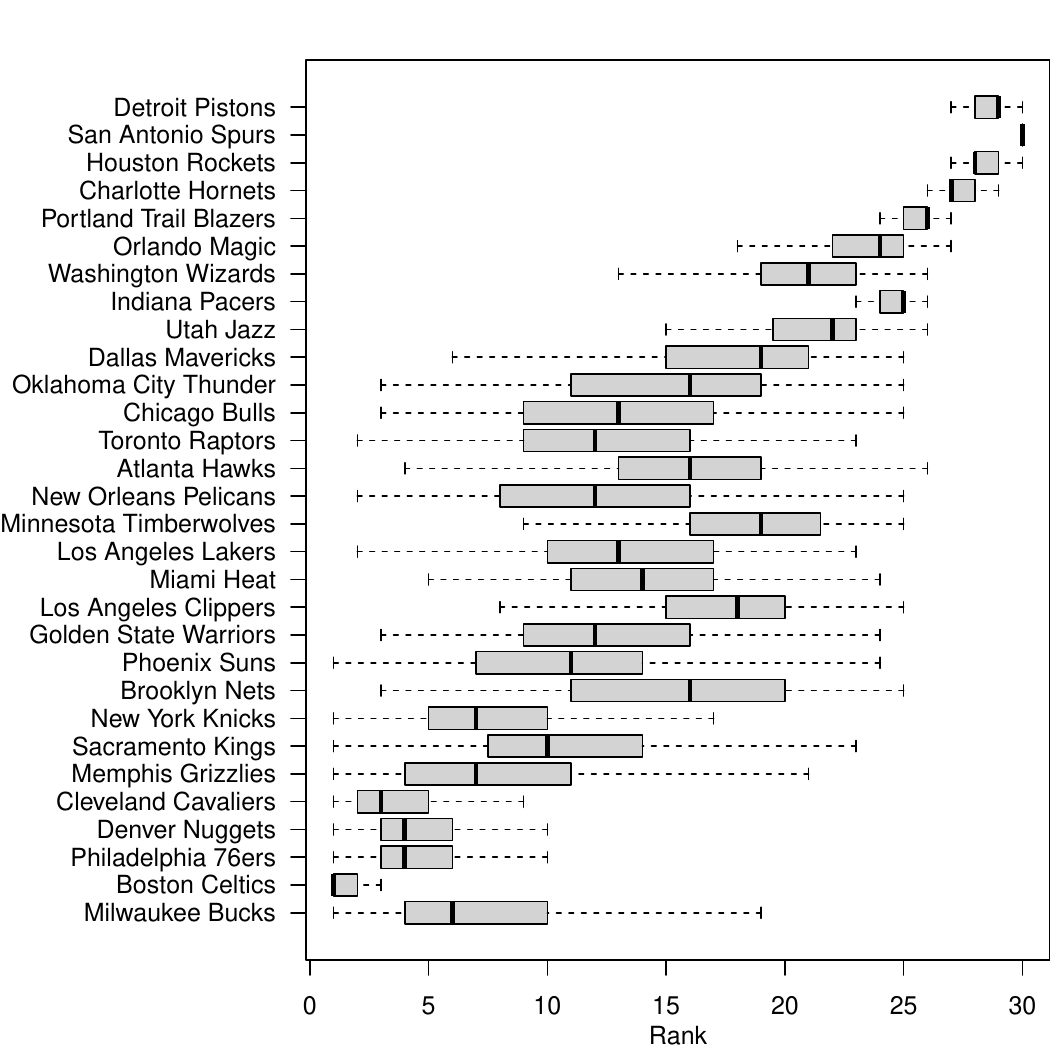}
\caption{Model without covariates}
\end{subfigure}%
\caption{Boxplots of ranking of NBA teams based on $200$ bootstrapped samples} \label{fig:real_eg}
\end{figure}
\clearpage
\appendix
\section{Supplement} \label{section:supplement}

\subsection*{Proof of Theorem \ref{Thm-UniqueE}:}

\begin{proof}
Let 
\begin{equation}
\widehat{\bmu}=\arg \min \{Q\left( \bmu\right) :
\bv^\top\bmu=0\}.  \label{Pf-1-1}
\end{equation}
where $Q\left( \bmu\right) $ is given by (\ref{Q(mu)}). Since $Q(\bmu)$ is a positive unbounded quadratic a minimizer must exist. Now, if $\mathcal{G}$ is not connected then we may decompose it as $\mathcal{G}=\cup _{c=1}^{C}\mathcal{G}_{c}$ where for each $c=1,\ldots ,C\leq K$ the graph $\mathcal{G}_{c}$ is a connected subgraph of $\mathcal{G}$. It follows that
\begin{equation*}
Q\left( \bmu\right)
=\sum_{i<j}\sum_{k=1}^{n_{ij}}(Y_{ijk}-\left( \mu _{i}-\mu _{j}\right)
)^{2}=\sum_{c=1}^{C}\sum_{\substack{ i<j  \\ i,j\in \mathcal{G}_{c}}}%
\sum_{k=1}^{n_{ij}}(Y_{ijk}-\left( \mu _{i}-\mu _{j}\right)
)^{2}=\sum_{c=1}^{C}Q_{c}\left( \bmu_{c}\right)
\end{equation*}
where $\bmu_{c}$ is the score vector and $Q_{c}$ is the sum of squares associated with the vertices in $\mathcal{G}_{c}$ and $\bmu^{\top}=( \bmu_{1}^\top,\ldots ,\bmu_{C}^\top)$. Let $\b1_{c}$ be a vector of ones of the same dimensions as $\bmu_{c}$. Let $\widehat{\bmu}^\top=( \widehat{\bmu}_{1}^\top,\ldots ,\widehat{\bmu}_{C}^\top)$ be a minimizer of (\ref{Pf-1-1}). Define
$$ 
\widehat{\bmu}_{\pmb{\alpha}} =  (\widehat{\bmu}_{1}+\alpha_1\b1_1,\ldots ,\widehat{\bmu}_{C}+\alpha_C\b1_C).
$$ 
Clearly $\widehat{\bmu}_{\pmb{\alpha}}$ satisfies $\bv^\top\widehat{\bmu}_{\pmb{\alpha}}=0$ and thus it follows that $\bv^\top\widehat{\bmu}_{\pmb{\alpha}} = \sum_{ i=1}^C \alpha_i V_i$ where $V_i =  \sum_{j\in\mathcal{G}_i} v_j$. Hence for every $\pmb{\alpha}=(\alpha_1,\ldots,\alpha_C)^{\top}$ satisfying $\sum_{ i=1}^C \alpha_i V_i=0$ we find that $\widehat{\bmu}_{\pmb{\alpha}}$ solves \eqref{Pf-1-1}. Thus the LSE is not unique if the graph $\mathcal{G}$ is not connected. 

Hence, from this point on we will assume that $\mathcal{G}$ is connected. The proof will be completed once we explicitly find $\widehat{\bmu}$ and show that it is unique. By Lemma 2 in Osei and Davidov (2022) the function $Q(\bmu)$ can be expressed as a quadratic form in $\bmu$, i.e., 
$$Q(\bmu)= (\bmu - \bN^+ \bS)^\top \bN (\bmu - \bN^+ \bS) + c$$
where $c$ is some function of the data but not of $\bmu$. The Lagrangian associated with (\ref{Pf-1-1}) is
\begin{equation*}
\mathcal{L}\left( \bmu,\lambda \right) =Q(\bmu) +\lambda (\sum_{k=1}^{K}v_{k}\mu _{k}).
\end{equation*}
It follows that 
\begin{align*}
			&~\nabla_{\bmu} \mathcal{L}(\bmu, \lambda)
			=
			2\bN ( \bmu - \bN^+ \bS) + \lambda \bv 
			=
			2(\bN \bmu -\bN\bN^+ \bS + \frac{\lambda}{2} \bv) 
			=
			2(\bN \bmu - \bS + \frac{\lambda}{2} \bv),
\end{align*}
where $\bN \bN^+\bS=\bS$ because $\bS\in \text{im}(\bN)$.
Therefore, setting $\nabla _{{\bmu ,\lambda }}\mathcal{L}( \bmu,\lambda ) =\pmb{0}$ we obtain the system
	\begin{eqnarray}
			\bN \bmu - \bS+\frac{\lambda}{2}\bv &=&%
			\pmb{0},  \label{Pf-Thm-Unique.Eq1} \\
			\bv^\top\bmu &=&0.  \label{Pf-Thm-Unique.Eq2}
		\end{eqnarray}%
It is easy to see that $\b1^\top\bS=0$ and $\b1^\top\bN=\bzero$; in addition $\b1^{\top}\bv \neq 0$ by assumption. Thus by premultiplying (\ref{Pf-Thm-Unique.Eq1}) by $\pmb{1}^\top$ we find that $\lambda =0$. Thus we get
\begin{equation}
\widehat{\bmu}=\bN^{+}\bS+(\pmb{I}-
\bN^{+}\bN)\pmb{\xi }  \label{Pf-1-2}
\end{equation}%
where $\bN^{+}$ is the Moore--Penrose inverse of $\bN$
and $\pmb{\xi }\in \mathbb{R}^{K}$ is arbitrary. The rows of $\bN^{+}$, like those of $\bN$, sum to $0$ (Bapat, 2010). Thus $\pmb{1}^\top\bN^{+}\bS=0$. The matrix $\pmb{I}-\bN^{+}\bN$ is a projection onto $\ker \left( \bN\right) .$
Since $\mathcal{G}$ is connected, $\mathrm{rank}(\bN)=K-1$ and $
\dim (\ker \left( \bN\right) )=1$, therefore $(\pmb{I}-
\bN^{+}\bN)\pmb{\xi }$ \ equals $c\pmb{1}$
for some $c\in 
\mathbb{R}$ where $c\neq 0$ if and only if $\pmb{\xi }\neq \pmb{0}.$
Thus if we premultiply (\ref{Pf-1-2}) by $\bv^\top$ we have
\begin{equation*}
0=\bv^\top\widehat{\bmu}=\bv^\top(
\bN^{+}\bS+(\pmb{I}-\bN^{+}
\bN)\pmb{\xi })=\bv^\top\bN^{+}
\bS+c\bv^\top\pmb{1}
\end{equation*}
so $c=-\bv^\top\bN^{+}\bS/\bv^\top
\pmb{1}$ which we substitute back in (\ref{Pf-1-2}) to obtain
\begin{equation*}
\widehat{\bmu}=\bN^{+}\bS-\frac{
\bv^\top\bN^{+}\bS}{\bv^\top
\b1}\b1.
\end{equation*}
The uniqueness of $\widehat{\bmu}$ follows from the uniqueness
of $\bN^{+}.$ This concludes the proof.
\end{proof}

\begin{remark} \label{rem.on.v}
The condition that $\b1^{\top}\bv \neq 0$ is necessary. For if $\b1^{\top}\bv=0$ then $\bv \in \mathrm{im}(\bN)$ and consequently the system \eqref{Pf-Thm-Unique.Eq1}--\eqref{Pf-Thm-Unique.Eq2} does not have a solution.      
\end{remark}

{
\subsection*{Proof of Proposition \ref{Prop.LSE.uv}:}
\begin{proof}
Observe that
\begin{align*}
\widehat{\bmu}(\pmb{u})= \bN^{+}\bS-\frac{\pmb{u}^\top\bN^{+}\bS}{\pmb{u}^\top\b1}\b1 =~ \bN^{+}\bS-\frac{\b1\pmb{u}^{\top}}{\pmb{u}^{\top}\b1} \bN^{+}\bS= (\pmb{I}-\frac{\b1\pmb{u}^{\top}}{\pmb{u}^{\top}\b1}) \bN^{+}\bS=~ \bC_{\pmb{u}} \bN^{+}\bS=\bC_{\pmb{u}} \widehat{\bmu}(\b1),
\end{align*}
where $\widehat{\bmu}(\b1)=\bN^{+}\bS$. If $\pmb{C}_{\bv}^{+}$ is a matrix such that $\pmb{C}_{\bu}\pmb{C}_{\bv}^{+}\pmb{C}_{\bv}= \pmb{C}_{\bu}$, then
\begin{align*}
\widehat{\bmu}(\pmb{u}) = \pmb{C}_{\bu} \widehat{\bmu}(\b1)=\pmb{C}_{\bu}\pmb{C}_{\bv}^{+}\pmb{C}_{\bv}\widehat{\bmu}(\b1),
\end{align*}
and the proof is complete. We now verify $\pmb{C}_{\bu}\pmb{C}_{\bv}^{+}\pmb{C}_{\bv}= \pmb{C}_{\bu}$ where $\pmb{C}_{\bu}$ and $\pmb{C}_{\bv}^{+}$ are defined in the statement of the proposition. First note that 
\begin{align*}
\pmb{C}_{\bv}^{+}\pmb{C}_{\bv}= ( \pmb{I} - \frac{\bv\bv^{\top}}{\bv^{\top}\bv} )( \pmb{I} - \frac{\b1\bv^{\top}}{\bv^{\top}\b1} )=
\pmb{I} - \frac{\b1\bv^{\top}}{\bv^{\top}\b1} - \frac{\bv\bv^{\top}}{\bv^{\top}\bv} + \frac{(\bv\bv^{\top})(\b1\bv^{\top})}{(\bv^{\top}\bv)(\bv^{\top}\b1)}.
\end{align*}
However, $(\bv\bv^{\top})(\b1\bv^{\top}) = \bv(\bv^{\top}\b1\bv^{\top}) = \bv(\bv^{\top}\b1)\bv^{\top} = (\bv^{\top}\b1)\bv\bv^{\top}$ and therefore $\pmb{C}_{\bv}^{+}\pmb{C}_{\bv}= \pmb{C}_{\bv}$. Next, we establish that $\pmb{C}_{\bu}\pmb{C}_{\bv}= \pmb{C}_{\bu}$. Observe that
$$\pmb{C}_{\bu}\pmb{C}_{\bv} = \pmb{I} - \frac{\b1\bv^{\top}}{\bv^{\top}\b1} - \frac{\b1\bu^{\top}}{\bu^{\top}\b1} + \frac{(\b1\bu^{\top})(\b1\bv^{\top})}{(\bu^{\top}\b1)(\bv^{\top}\b1)},$$
and $(\b1\bu^{\top})(\b1\bv^{\top}) = \b1(\bu^{\top}\b1\bv^{\top}) = \b1(\bu^{\top}\b1)\bv^{\top} = (\bu^{\top}\b1)\b1\bv^{\top}$. So, the second and fourth terms in the display above are equal, and hence $\pmb{C}_{\bu}\pmb{C}_{\bv}=\pmb{C}_{\bu}$. Therefore, 
$$\pmb{C}_{\bu}\pmb{C}_{\bv}^{+}\pmb{C}_{\bv}=\pmb{C}_{\bu}.$$
Finally, it is straightforward to verify that $\pmb{C}_{\bv}^{2}= \pmb{C}_{\bv}$ and $(\pmb{C}_{\bv}^{+})^2=\pmb{C}_{\bv}^{+}$ so both $\pmb{C}_{\bv}$ and $\pmb{C}_{\bv}^{+}$ are idempotent and that $\pmb{C}_{\bv}^{+}$ is the Moore-Penrose inverse of $\pmb{C}_{\bv}$.
\end{proof}}

\subsection*{Proof of Corollary \ref{Cor-EstDiff}:}

\begin{proof}
Note that
\begin{eqnarray*}
\widehat{\mu }_{i}(\bv)-\widehat{\mu }_{j}(\bv) &=&
\pmb{e}_{i}^\top\widehat{\bmu}(\bv)-
\pmb{e}_{j}^\top\widehat{\bmu}(\bv)=(
\pmb{e}_{i}-\pmb{e}_{j})^\top\widehat{\bmu}(
\bv)=(\pmb{e}_{i}-\pmb{e}_{j})^\top\widehat{
\bmu}(\bv) \\
&=&(\pmb{e}_{i}-\pmb{e}_{j})^\top(\bN^{+}\pmb{S}-\frac{\bv^\top\bN^{+}\bS}{\bv^\top\b1}\pmb{1)} \\
&=&(\pmb{e}_{i}-\pmb{e}_{j})^\top\bN^{+}\pmb{S}-\frac{\bv^{\top}\bN^{+}\bS}{\bv^\top\b1}(\pmb{e}_{i}-\pmb{e}_{j})^\top\b1
\\
&=&(\pmb{e}_{i}-\pmb{e}_{j})^\top\bN^{+}\pmb{S}
\end{eqnarray*}
is independent of $\bv$. Thus $\widehat{\mu }_{i}(\bv)-
\widehat{\mu }_{j}(\bv)=\widehat{\mu }_{i}(\pmb{u})-
\widehat{\mu }_{j}(\pmb{u})$ as required.
\end{proof}

\subsection*{Proof of Proposition \ref{prop1}:}
\begin{proof}
By assumption $(\mathcal{G}_1,\mathcal{Y}_1) \supseteq (\mathcal{G}_2,\mathcal{Y}_2)$. Let $\mathcal{G}_3=(\mathcal{V},\mathcal{E}_3)$ where $\mathcal{E}_3=\mathcal{E}_1\backslash\mathcal{E}_2$ and let $\mathcal{Y}_{3} =\mathcal{Y}_{1} \backslash \mathcal{Y}_2$. Clearly $(\mathcal{G}_3,\mathcal{Y}_3)$ is a comparison graph. Let $\bN_1,\bN_2,$ and $\bN_3$ be the Laplacians of $\mathcal{G}_1,\mathcal{G}_2$, and $\mathcal{G}_3$ respectively. Clearly $\bN_3 =\bN_1 -\bN_2$ is a nonnegative definite matrix so $\bN_2 \preceq\bN_1$. Furthermore the uniqueness of the LSEs implies that  $\mathcal{G}_1$ and $\mathcal{G}_2$ are connected graphs and therefore the matrices $\bN_1$ and $\bN_2$ have rank $K-1$. Therefore using  Theorem 3.1 in Milliken and Akdeniz (1977) we get $\bN_1^{+} \preceq\bN_2^{+}$. This proves the first claim. The second claim follows immediately by using the first order approximation for the variance of $\pmb{\Phi}(\widehat{\bmu})$ (see for example, page 122, in Serfling 2009).
\end{proof}
\subsection*{Proof of Theorem \ref{Thm-graph.WLLN}:}

\begin{proof}
{ 
Suppose that Condition \ref{Con(AlgConnec)} holds. If so, $\mathcal{G}$ is connected and the LSE is given by $\widehat{\bmu}_{n}=\bN^{+}\bS$ and its variance is $\mathbb{V}ar(\widehat{\bmu}_n)=\, \sigma^2\bN^{+}$. The Laplacian is symmetric and non--negative definite; its spectral decomposition is $\bN=\pmb{U\Lambda U}^\top$ where $\pmb{U}$ is an orthonormal matrix and $\pmb{\Lambda }=\mathrm{diag}(\lambda _{K},\ldots ,\lambda _{1})$ with $\lambda _{K}\geq \lambda_{K-1}\geq \cdots \geq \lambda _{2}\geq \lambda _{1}$ where $\lambda _{j}=\lambda _{j}\left( \bN\right)$ are the eigenvalues of $\bN$. Since $\mathcal{G}$ is connected $\lambda _{2}>0$ and $\lambda _{1}=0.$ Furthermore $\pmb{N}^{+}=\pmb{U}\pmb{\Lambda }^{+}\pmb{U}^\top$ where $\pmb{\Lambda }^{+}=\mathrm{diag}(1/\lambda _{K},\ldots ,1/\lambda_{2},0)$. 

Let $\mathcal{T}$ be a tree satisfying Condition \ref{Con(AlgConnec)}. Let $\mathcal{G}_{1}$ be the subgraph of $\mathcal{G}$ for which $n_{ij}=t$ for all edges in $\mathcal{T}$ where $t=\min\{n_{ij}:(i,j) \in \mathcal{T}\}$. Further define $\mathcal{G}_{2}=\mathcal{G}\backslash \mathcal{G}_{1}$ so by construction $\mathcal{G}=\mathcal{G}_{1}\cup \mathcal{G}_{2}$. Let $\bN_{1}$ and $\bN_{2}$ be the Laplacians of $\mathcal{G}_{1}$ and $\mathcal{G}_{2}$ respectively. It is easy to see that
\begin{equation*}
\bN=\bN_{1}+\bN_{2}.
\end{equation*}
By Weyl's inequality (Theorem 4.3.1, Horn and Johnson, 2007) we have
\begin{equation*}
\lambda _{2}( \bN) \geq \lambda _{2}( \bN_{1}) +\lambda _{1}( \bN_{2}) =t\lambda _{2}( \bN_{1}/t)
\end{equation*}
where $\bN_{1}/t$ is the Laplacian of a tree with muliplicity one on the vertices of $\mathcal{G}_{1}$ and $\lambda_{1}( \bN_{2}) =0$. Moreover, a spanning tree is a connected graph so $\lambda _{2}( \bN_{1}/t) >0$. It follows that $\lambda _{2}( \bN) \rightarrow \infty $ as $n\rightarrow \infty$. 
Furthermore $\lambda _{j}( \bN) \geq \lambda
_{2}\left( \bN\right) $ and consequently $1/\lambda_j(\bN) \to 0$ as $n\to \infty$ for all $j\geq 2$. Now
\begin{equation}\label{eq:N+:decomposition}
{\bN}^{+}=\pmb{U}\pmb{\Lambda }^{+}\pmb{U}^\top= \sum_{j=2}^{K}\frac{1}{\lambda_j(\bN)}\, \pmb u_j\pmb u_j^{\top}
\end{equation}
where $\pmb u_1,\ldots,\pmb u_K$ are the orthonormal columns of $\pmb U$. The orthonormality implies that the absolute value of all elements of $\pmb U$ must be bounded by $1$. Consequently the display above shows that $\bN^{+}\rightarrow \pmb{O}$, the zero matrix. Therefore $\mathbb{V}ar(\widehat{\bmu}_{n})\rightarrow \pmb{O}$. We have already shown that $\mathbb{E}(\widehat{\bmu}_{n})=\bmu$ and it
now follows that $\widehat{\bmu}_{n}$ is consistent, i.e., $\widehat{\bmu}_{n}\overset{p}{\rightarrow }\bmu$
as required.

Next, suppose that Condition \ref{Con(AlgConnec)} does not hold, if so, $m$ as defined in \eqref{m:maxmin}, satisfies $m<\infty$. Let $\mathcal{G}_{1}$ be the subgraph of $\mathcal{G}$ for which $n_{ij}=m$ for all edges in $\mathcal{T}_m$. Define $\mathcal{G}_{2}$, $\bN_{1}$ and $\bN_{2}$ as before. Now $\mathcal{G}_{2}$ is either a connected or disconnected graph. If $\mathcal{G}_{2}$ is disconnected then by Weyl's inequality we have
\begin{equation*}
\lambda _{2}\left( \bN\right) \leq \lambda _{2}\left( \bN_{2}\right) +\lambda _{K}\left( \bN_{1}\right).
\end{equation*}
Since $\mathcal{G}_{2}$ is disconnected $\mathrm{rank}(\bN_{2})\leq K-2$ so $\lambda _{2}\left( \bN_{2}\right) =0$. Further note that $\lambda _{K}\left( \bN_{1}\right) \leq mC$ for some finite constant $C$ and thus $\lambda _{2}\left( \bN\right) $ is finite even if $n\rightarrow \infty$. This shows that $\bN^{+}$ does not converge to $\pmb{O}$ and consequently $\mathbb{V}ar(\widehat{\bmu}_{n})\not\to \pmb{O}$. If, however, $\mathcal{G}_{2}$ is connected, then using the same procedure we can extract a tree from $\mathcal{G}_{2}$ and denote the remainder by $\mathcal{G}_{3}$. This procedure can be repeated until the remainder graph $\mathcal{G}_{l}$ is disconnected. Since $K$ is finite, $l$ must be finite. Thus applying Weyl's inequality $l-1$ times find that $\lambda _{2}(\bN)$ is finite as $n\to\infty$. Consequently $\bN^{+}\not\to\pmb{O}$ as $n\to\infty$.  

Finally, it follows from \eqref{eq:N+:decomposition} that 
$$\mathbb{V}ar(\widehat{\bmu}_n) = O(\frac{1}{\lambda_2(\bN)}) $$
Therefore, in order to prove that $\mathbb{V}ar(\widehat{\mu}_{i,n}) =O({1}/{m})$ we need to show that $\lambda_2(\bN)=O(m)$. To do this, we bound $\lambda_2(\bN)$ by $O(m)$ from above and below. Let $\mathcal{T}_{m}$ denote the tree on which the maximum is attained and let $\bN_{\mathcal{T}_{m}}$ denote the corresponding Laplacian. It follows that
$$\lambda_{2}(\bN_{\mathcal{T}_{m}}) \ge m \{2(1-\cos(\frac{\pi}{K}))\}.$$
where the bound $2(1-\cos(\pi/K)$ is the algebraic connectivity of a path graph, (e.g., Cvetkovic et al. 2009). Consequently $\lambda_{2}(\bN_{\mathcal{T}_{m}}) \geq O(m)$. By Proposition \ref{prop1}, $\bN_{\mathcal{T}_m} \preceq\bN$ and therefore 
\begin{align} \label{lambda2 lower}
\lambda_2(\bN)\geq O(m).
\end{align}
Next, let $\mathcal{G}_{1}$ be a subgraph of $\mathcal{G}$ for which $n_{ij}=m$ for all edges in $\mathcal{T}_m$. Define $\mathcal{G}_{2}$, $\bN_{1}$ and $\bN_{2}$ as before. Now $\mathcal{G}_{2}$ is either a connected or disconnected graph. If $\mathcal{G}_{2}$ is disconnected then by Weyl's inequality we have
\begin{equation*}
\lambda _{2}\left( \bN\right) \leq \lambda _{2}\left( \bN_{2}\right) +\lambda _{K}\left( \bN_{1}\right).
\end{equation*}
Since $\mathcal{G}_{2}$ is disconnected $\mathrm{rank}(\bN_{2})\leq K-2$ so $\lambda _{2}\left( \bN_{2}\right) =0$. Further note that 
$\lambda _{K}\left( \bN_{1}\right) \leq mC$ for some finite constant $C$ and thus $\lambda _{2}\left( \bN\right)\leq O(m) $. If, however, $\mathcal{G}_{2}$ is connected, then using the same procedure we can extract a tree from $\mathcal{G}_{2}$ and denote the remainder by $\mathcal{G}_{3}$. This procedure can be repeated until the remainder graph $\mathcal{G}_{l}$ is disconnected. Since $K$ is finite, $l$ must be finite. Thus applying Weyl's inequality $l-1$ times we find that 
\begin{align} \label{lambda2 upper}
\lambda_2(\bN)\leq O(m).
\end{align}
Equations \eqref{lambda2 lower} and \eqref{lambda2 upper} imply that $\lambda_2(\bN)= O(m)$. Therefore \eqref{var.mu.i.hat} holds, completing the proof. }
\end{proof}

\subsection*{Proof of Theorem \ref{Thm-graph.SLLN}:}

The proof of Theorem \ref{Thm-graph.SLLN} requires additional notation and preliminary lemma's. In what follows denote by $\bN_{i}^{j}$ and $\bN_{is}^{jt}$ the submatrices arising from $\bN$ by deleting the $i^{th}$ row and $j^{th}$ column and the $i^{th}$ and $s^{th}$ rows and $j^{th}$ and $t^{th}$ columns, respectively. The $(i,j)$ entry of $\bN^{+}$ is denoted by $n_{ij}^{+}$. The determinant of matrix $\pmb{A}$ is denoted by $|\pmb{A}|$. The diagonal matrix with main diagonal $(n_{12},\ldots ,n_{1K},n_{23},\ldots ,n_{2K},\ldots ,n_{K-1,K})$ is denoted by $\pmb{D}$. It will be convenient to label the rows and columns of this and every other matrix having $K(K-1)/2$ rows and/or columns using double labels which are ordered lexicographically as are the $n_{ij}$'s in $\pmb{D}$. Let $\overline{\bS}$ be the vector of sample means $n_{ij}^{-1}\sum_{k =1}^{n_{ij}}Y_{ijk }$, where $1\leq i<j\leq K.$ If $n_{ij}=0$ for some pair $(i,j)$ then set $\overline{S}_{ij}=0$. Let $\pmb{C}$ be the $(K(K-1)/2)\times K$ contrast matrix each row of which contains a unity, a negative unity and zeros, such that 
\begin{equation*}
\pmb{\delta }=\pmb{C}\bmu=(\mu _{1}-\mu
_{2},\ldots ,\mu _{1}-\mu _{K},\mu _{2}-\mu _{3},\ldots ,\mu _{2}-\mu
_{K},\ldots ,\mu _{K-1}-\mu _{K})^\top.
\end{equation*}%
Then we have the following.

\begin{lemma}
\label{C properties}(a) $\bS=\pmb{C}^\top\pmb{D}\overline{
\bS}$; (b) $\bN=\pmb{C}^\top\pmb{D}
\pmb{C}$; (c) $\pmb{C}^\top\pmb{C}\pmb{\delta }=K
\bmu$; (d) $\pmb{C}\bN^{+}\pmb{C}^\top
\pmb{D}\pmb{\delta }=\pmb{\delta }$.
\end{lemma}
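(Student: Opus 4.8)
The plan is to read $\pmb{C}$ as an oriented incidence (contrast) matrix for $\mathcal{G}$ and to verify all four identities by short entrywise computations, proving them in the order (b), (a), (c), (d) since the later parts build on the earlier ones. Throughout I would use the orientation convention built into $\pmb{C}$ — the row labelled $(i,j)$ with $i<j$ carries $+1$ in column $i$ and $-1$ in column $j$ — together with the skew symmetry $S_{ji}=-S_{ij}$ coming from $Y_{jik}=-Y_{ijk}$, and the normalization $\b1^{\top}\bmu=0$ in force at this point (so that $\bmu\in\mathrm{im}(\bN)$ when $\mathcal{G}$ is connected).

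For (b) I would compute the $(i,j)$ entry of $\pmb{C}^{\top}\pmb{D}\pmb{C}$ as $\sum_{(k,l):\,k<l} C_{(k,l),i}\, n_{kl}\, C_{(k,l),j}$: a pair $(k,l)$ contributes only when both columns $i$ and $j$ are ``touched'', so the diagonal entry equals $\sum_{l:\,(i,l)\in\mathcal{E}} n_{il}=n_i$, and for $i\neq j$ the unique contributing pair is $\{i,j\}$, giving $(+1)(-1)n_{ij}=-n_{ij}$; this is exactly $\bN$. For (a), note that $\pmb{D}\overline{\bS}$ has $(i,j)$-th coordinate $n_{ij}\overline{S}_{ij}=S_{ij}$ (valid also when $n_{ij}=0$, both sides being $0$). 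Then the $i$-th coordinate of $\pmb{C}^{\top}\pmb{D}\overline{\bS}$ is $\sum_{l>i}S_{il}-\sum_{k<i}S_{ki}=\sum_{l>i}S_{il}+\sum_{k<i}S_{ik}=\sum_{j\neq i}S_{ij}=S_i$, where the middle step uses $S_{ki}=-S_{ik}$; hence $\bS=\pmb{C}^{\top}\pmb{D}\overline{\bS}$.

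For (c), observe that $\pmb{C}^{\top}\pmb{C}$ is the unweighted Laplacian of the complete graph on the $K$ vertices, i.e. $\pmb{C}^{\top}\pmb{C}=K\pmb{I}_K-\b1\b1^{\top}$ — the special case $n_{ij}\equiv 1$ of (b). Using $\pmb{\delta}=\pmb{C}\bmu$ and $\b1^{\top}\bmu=0$ then gives $\pmb{C}^{\top}\pmb{\delta}=\pmb{C}^{\top}\pmb{C}\bmu=(K\pmb{I}_K-\b1\b1^{\top})\bmu=K\bmu-\b1(\b1^{\top}\bmu)=K\bmu$. For (d), I would first show $\pmb{C}^{\top}\pmb{D}\pmb{\delta}=\bN\bmu$: indeed $\pmb{D}\pmb{\delta}$ has $(i,j)$-th entry $n_{ij}(\mu_i-\mu_j)$, so the $i$-th coordinate of $\pmb{C}^{\top}\pmb{D}\pmb{\delta}$ is $\sum_{j\neq i}n_{ij}(\mu_i-\mu_j)=(\bN\bmu)_i$ (equivalently, substitute $\pmb{\delta}=\pmb{C}\bmu$ into (b)). Consequently $\pmb{C}\bN^{+}\pmb{C}^{\top}\pmb{D}\pmb{\delta}=\pmb{C}\bN^{+}\bN\bmu=\pmb{C}\bmu=\pmb{\delta}$, where $\bN^{+}\bN\bmu=\bmu$ because $\bN^{+}\bN$ is the orthogonal projector onto $\mathrm{im}(\bN)$ and $\bmu\in\mathrm{im}(\bN)$ (for connected $\mathcal{G}$, $\ker\bN=\mathrm{span}(\b1)$ and $\b1^{\top}\bmu=0$).

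There is no serious obstacle: the lemma is entirely bookkeeping with the incidence matrix. The only points needing a moment's care are the orientation/sign conventions — making sure the $+1$/$-1$ pattern of $\pmb{C}$ is used consistently with $S_{ji}=-S_{ij}$ and $\delta_{ji}=-\delta_{ij}$ — and, in part (d), the justification of $\bN^{+}\bN\bmu=\bmu$, which rests on the membership $\bmu\in\mathrm{im}(\bN)$ guaranteed by connectedness of $\mathcal{G}$ and the normalization $\b1^{\top}\bmu=0$.
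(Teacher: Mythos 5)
Your proof is correct and follows essentially the same route as the paper: the paper dismisses (a)--(c) as ``easy to see'' (your entrywise incidence-matrix computations are exactly the omitted bookkeeping), and its proof of (d) is verbatim your chain $\pmb{C}\bN^{+}\pmb{C}^{\top}\pmb{D}\pmb{\delta}=\pmb{C}\bN^{+}\pmb{C}^{\top}\pmb{D}\pmb{C}\bmu=\pmb{C}\bN^{+}\bN\bmu=\pmb{C}\bmu=\pmb{\delta}$, resting on the same fact $\bN^{+}\bN\bmu=\bmu$. You also correctly identify the two points the paper uses implicitly, namely the sign convention $S_{ji}=-S_{ij}$ in (a) and the normalization $\b1^{\top}\bmu=0$ needed for (c) and for $\bmu\in\mathrm{im}(\bN)$ in (d).
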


\begin{proof}
(a), (b) and (c) are easy to see. For (d) use (b) and the fact that $%
\bN^+\bN\bmu=\bmu$ to get $
\pmb{C}\bN^+\pmb{C}^T\pmb{D}\pmb{
\delta }= \pmb{C}\bN^+\pmb{C}^T\pmb{D}
\pmb{C}\bmu= \pmb{C}\bN^+\bN
\bmu= \pmb{C}\bmu= \pmb{\delta}$.
\end{proof}

\begin{lemma}
\label{N+}If the graph is connected we have 
\begin{equation}
n_{ij}^{+}=K^{-2}|\bN_{1}^{1}|^{-1}(-1)^{i+j}\sum_{\substack{ s=1 
\\ s\neq i}}^{K}\sum_{\substack{ t=1  \\ t\neq j}}
^{K}(-1)^{s+I(s>i)+t+I(t>j)}|\bN_{is}^{jt}|.  \label{ij of N+}
\end{equation}
\end{lemma}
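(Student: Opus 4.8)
The plan is to derive the formula in Lemma \ref{N+} by combining the all--minors matrix tree theorem with the explicit representation of $\bN^{+}$ in terms of a bordered inverse. First I would recall the standard fact (see Bapat, 2010) that for a connected graph the Moore--Penrose inverse of the Laplacian can be written as
$$
\bN^{+} = \left(\bN + \tfrac{1}{K}\b1\b1^{\top}\right)^{-1} - \tfrac{1}{K}\b1\b1^{\top},
$$
so that $n_{ij}^{+}$ equals the $(i,j)$ entry of $(\bN + \tfrac1K\b1\b1^{\top})^{-1}$ minus $1/K$. Alternatively, and this is the route I would actually push through, one uses the representation of $\bN^{+}$ via any fixed reduced Laplacian: deleting row and column $1$ gives $\bN_{1}^{1}$, which is invertible since $\mathcal{G}$ is connected, and there is a classical identity expressing $n_{ij}^{+}$ in terms of the entries of $(\bN_{1}^{1})^{-1}$ together with the centering $\pmb{C}_{\b1}$. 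Concretely, $\bN^{+} = \bC_{\b1}\,\widetilde{\bN}\,\bC_{\b1}^{\top}$ where $\widetilde{\bN}$ is the $K\times K$ matrix obtained by bordering $(\bN_{1}^{1})^{-1}$ with a zero first row and column, and $\bC_{\b1} = \pmb{I} - \b1\b1^{\top}/K$.

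Next I would apply Cramer's rule to $(\bN_{1}^{1})^{-1}$: its $(s,t)$ entry is $(-1)^{s+t}|\bN_{1s}^{1t}|/|\bN_{1}^{1}|$ up to the bookkeeping of indices induced by having already removed row/column $1$ (this is the source of the $I(s>i)$, $I(t>j)$ shift factors in the statement). Substituting this into $\bC_{\b1}\widetilde{\bN}\bC_{\b1}^{\top}$ and expanding the product, the centering matrices contribute the sums over $s\neq i$ and $t\neq j$, the $(-1)^{i+j}$ prefactor, and the overall $K^{-2}$ normalization. The key algebraic input is then the all minors matrix tree theorem (Chen 1976, Chaiken 1982): every minor $|\bN_{is}^{jt}|$ of the Laplacian, and in particular $|\bN_{1}^{1}|$, is a signed sum over two--forests (or spanning trees in the single--deletion case) of products of edge multiplicities $n_{ij}$. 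I would invoke this theorem to justify that all the determinants appearing in \eqref{ij of N+} are well defined and, crucially for the subsequent SLLN argument, that $|\bN_{1}^{1}|\neq 0$ precisely because connectedness guarantees at least one spanning tree. One should also check that the right--hand side of \eqref{ij of N+} is symmetric in $(i,j)$, which follows from the symmetry of $\bN$ and hence of $\bN^{+}$, and that the rows sum to zero, consistent with $\b1^{\top}\bN^{+}=\bzero$.

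The main obstacle I anticipate is the index--bookkeeping: tracking exactly which sign and which shift factor $(-1)^{s+I(s>i)}$ arises when one deletes a row from a matrix that has already had another row deleted, and making sure the cofactor expansion of the bordered inverse is matched correctly against the claimed double sum. This is purely combinatorial sign--chasing but is where errors are easiest to make, so I would set it up carefully by first treating the case $i=j$ and then the case $i\neq j$ separately, and by verifying the formula on a small example such as the path graph on three vertices (which already appears in Example \ref{example.clt.with3.1}) before asserting the general identity. Once the sign conventions are pinned down, the proof is a direct substitution: Cramer's rule for the reduced inverse, bordering and centering to recover $\bN^{+}$, and the all minors matrix tree theorem to express everything through the $n_{ij}$'s, which is the form needed for the strong law in Theorem \ref{Thm-graph.SLLN}.
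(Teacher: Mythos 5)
Your first instinct --- the Chebotarev--Shamis identity $\bN^{+}=(\bN+K^{-1}\pmb{J})^{-1}-K^{-1}\pmb{J}$ --- is exactly the route the paper takes: one writes the $(i,j)$ entry of $(\bN+K^{-1}\pmb{J})^{-1}$ as a cofactor over $|\bN+K^{-1}\pmb{J}|=K|\bN_{1}^{1}|$, expands the determinant of the rank--one update $\bN_{i}^{j}+K^{-1}\b1\b1^{\top}$ via $|\pmb{A}+K^{-1}\b1\b1^{\top}|=|\pmb{A}|+K^{-1}\sum_{s,t}\mathrm{cof}_{st}(\pmb{A})$, and the term $(-1)^{i+j}|\bN_{i}^{j}|/(K|\bN_{1}^{1}|)=1/K$ cancels the $-K^{-1}\pmb{J}$ correction, leaving precisely the stated double sum with the $K^{-2}$ prefactor. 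Had you pushed that route through, you would have been on the paper's track.

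Instead you switch to the bordered reduced--Laplacian representation $\bN^{+}=\bC_{\b1}\widetilde{\bN}\bC_{\b1}^{\top}$, and this is where the gap lies. That representation is correct, but Cramer's rule applied to $(\bN_{1}^{1})^{-1}$ expresses $n_{ij}^{+}$ as a linear combination of the minors $|\bN_{1s}^{1t}|$ --- row $1$ and column $1$ are \emph{always} among the deleted indices --- whereas the lemma asserts a combination of the minors $|\bN_{is}^{jt}|$, in which the deleted rows are $\{i,s\}$ and the deleted columns are $\{j,t\}$. For general $i,j$ these are different families of minors, and equating the two expressions is a genuine identity among second--order Laplacian minors that your proposal never supplies (it would require repeated application of something like Lemma \ref{Nklij}, or an independent combinatorial argument). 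The shift factors $I(s>i)$ and $I(t>j)$ in the statement are the telltale sign: they arise from renumbering after deleting row $i$ and column $j$ \emph{first}, i.e., from the cofactor expansion of $(\bN+K^{-1}\pmb{J})^{-1}$, not from anything anchored at vertex $1$; attributing them to "having already removed row/column $1$" is a misreading of where the formula comes from. Two smaller points: the all--minors matrix tree theorem is not needed for this lemma (only the invertibility of $\bN_{1}^{1}$, which connectivity gives; the tree expansions are used later, in Lemmas \ref{CNCD bounded} and \ref{same limit}), and the claim that the centering matrices "contribute the sums over $s\neq i$ and $t\neq j$" does not survive inspection --- the $-\b1\b1^{\top}/K$ part of $\bC_{\b1}$ sums over \emph{all} indices, so the restriction $s\neq i$, $t\neq j$ must come from elsewhere.
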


\begin{proof}
Chebotarev and Shamis (1998) showed that if the graph is connected then $
\bN^{+}=(\bN+K^{-1}\pmb{J})^{-1}-K^{-1}
\pmb{J}$ where $\pmb{J}$ is the $K\times K$ matrix of ones. By
doing the required algebraic manipulations it can be verified that the $i,j$
entry of this matrix is as in \eqref{ij of N+}.
\end{proof}


\begin{lemma}
\label{Nklij}Let $1\leqslant \alpha _{1}<\alpha _{2}\leqslant K$, $%
1\leqslant \beta _{1}<\beta _{2}\leqslant K$. Then, for any $\gamma \in
\{1,\ldots ,K\}\backslash \{\alpha _{1},\alpha _{2}\}$ we have 
\begin{equation}
|\bN_{\alpha _{1},\alpha _{2}}^{\beta _{1},\beta
_{2}}|=(-1)^{\gamma +1+\alpha _{1}+I(\alpha _{2}>\gamma )}|\bN
_{\alpha _{2},\gamma }^{\beta _{1},\beta _{2}}|+(-1)^{\gamma +1+\alpha
_{2}+I(\gamma >\alpha _{1})}|\bN_{\alpha _{1},\gamma }^{\beta
_{1},\beta _{2}}|.  \label{Bapat2013}
\end{equation}
\end{lemma}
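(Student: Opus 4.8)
The identity \eqref{Bapat2013} is purely algebraic: it holds for any $K\times K$ matrix whose columns all sum to zero, and the plan is to use only this property of the Laplacian $\bN$. First I would delete the two columns $\beta_1,\beta_2$ once and for all, writing $B$ for the resulting $K\times(K-2)$ matrix and $b_1,\dots,b_K$ for its rows; since the columns of $\bN$ (and hence of $B$) sum to zero, $\sum_{i=1}^{K}b_i=\bzero$. With this notation $|\bN_{\alpha_1,\alpha_2}^{\beta_1,\beta_2}|$ is the determinant of the submatrix $M$ of $B$ whose rows are $b_i$, $i\in\{1,\dots,K\}\setminus\{\alpha_1,\alpha_2\}$, listed in increasing order of $i$; because $\gamma\notin\{\alpha_1,\alpha_2\}$ the row $b_\gamma$ occurs in $M$, say in position $p=\gamma-I(\alpha_1<\gamma)-I(\alpha_2<\gamma)$.

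The key step is to substitute, inside that single row of $M$,
$$b_\gamma=-b_{\alpha_1}-b_{\alpha_2}-\sum_{i\notin\{\gamma,\alpha_1,\alpha_2\}}b_i,$$
and to expand $\det M$ by multilinearity in the $p$-th row. For every index $i\notin\{\gamma,\alpha_1,\alpha_2\}$ the row $b_i$ already appears elsewhere in $M$, so the corresponding determinant has two proportional rows and vanishes. Hence only two terms survive, $\det M=\det M'+\det M''$, where $M'$ (respectively $M''$) is $M$ with its $p$-th row replaced by $-b_{\alpha_1}$ (respectively $-b_{\alpha_2}$). Pulling out the factor $-1$ and then moving the inserted row $b_{\alpha_1}$ (respectively $b_{\alpha_2}$) from position $p$ back to its natural slot --- a cyclic shift of a contiguous block of rows --- turns $\det M'$ into $\pm|\bN_{\alpha_2,\gamma}^{\beta_1,\beta_2}|$ and $\det M''$ into $\pm|\bN_{\alpha_1,\gamma}^{\beta_1,\beta_2}|$.

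It then remains to pin down the signs. Since a cyclic shift of a block of length $\ell$ has sign $(-1)^{\ell-1}$, everything reduces to computing that block length, i.e.\ the distance between position $p$ and the natural position of $\alpha_1$ (respectively $\alpha_2$) in the ordered index set $\{1,\dots,K\}\setminus\{\alpha_2,\gamma\}$ (respectively $\{1,\dots,K\}\setminus\{\alpha_1,\gamma\}$). I would do this by splitting into the three cases $\gamma<\alpha_1<\alpha_2$, $\alpha_1<\gamma<\alpha_2$, $\alpha_1<\alpha_2<\gamma$; in each case the parities collapse, after reduction modulo $2$, to the exponent $\gamma+1+\alpha_1+I(\alpha_2>\gamma)$ for the first term and $\gamma+1+\alpha_2+I(\gamma>\alpha_1)$ for the second, which is exactly \eqref{Bapat2013}. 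The only genuine work --- and the step to be careful with --- is this bookkeeping of permutation signs across the three cases; the rest is immediate from multilinearity of the determinant and the vanishing of the column sums of $\bN$.
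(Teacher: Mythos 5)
Your proposal is correct, but it takes a different route from the paper: the paper's entire proof of this lemma is a citation, namely that for $\alpha_1<\alpha_2<\gamma$ the identity coincides with Lemma 2 of Bapat (2013), with the remaining orderings asserted to "follow along the same lines." You instead prove the identity from scratch, using only that the columns of $\bN$ (hence of the column-deleted matrix $B$) sum to zero, substituting $b_\gamma=-\sum_{i\neq\gamma}b_i$ into the $p$-th row, killing all but two terms by row repetition, and tracking the permutation signs of the block shifts. I checked the sign bookkeeping: with $p=\gamma-I(\alpha_1<\gamma)-I(\alpha_2<\gamma)$ and natural positions $q_1=\alpha_1-I(\gamma<\alpha_1)$, $q_2=\alpha_2-1-I(\gamma<\alpha_2)$, the exponents $1+p-q_1$ and $1+p-q_2$ reduce modulo $2$ exactly to $\gamma+1+\alpha_1+I(\alpha_2>\gamma)$ and $\gamma+1+\alpha_2+I(\gamma>\alpha_1)$, using $I(a<b)+I(b<a)=1$ for $a\neq b$. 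That last observation also means your planned three-way case split on the position of $\gamma$ is unnecessary: the parity computation collapses uniformly. What your approach buys is a self-contained verification covering all three orderings at once, rather than leaning on a cited result that is stated for only one of them; what the paper's approach buys is brevity. The underlying mechanism (multilinearity plus the zero column sums of the Laplacian) is the same in both.
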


\begin{proof}
When $\alpha_1<\alpha_2<\gamma$, \eqref{Bapat2013} coincides to the equation
given in Bapat (2013, Lemma 2). The proof for the
cases $\gamma<\alpha_1<\alpha_2$ and $\alpha_1<\gamma<\alpha_2$ follows
among the same lines.
\end{proof}

\begin{lemma}
\label{CNCD} Let $1\leqslant k<\ell\leqslant K$, $1\leqslant i<j\leqslant K$
. The $(k,\ell;i,j)$ entry of matrix $\pmb{C}\bN^+
\pmb{C}^T$ is $(-1)^{k+\ell+i+j}|\bN_{k\ell}^{ij}|/|
\bN_1^1|$ while that of $\pmb{C}\bN^+
\pmb{C}^T\pmb{D}$ is $(-1)^{k+\ell+i+j}n_{ij}|\bN
_{k\ell}^{ij}|/|\bN_1^1|$.
\end{lemma}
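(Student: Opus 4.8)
The plan is to turn the asserted matrix identity into a scalar identity about the entries of $\bN^{+}$ and then to prove that scalar identity by combining Lemmas \ref{N+} and \ref{Nklij}. First I would use the defining structure of the contrast matrix: the row of $\pmb{C}$ indexed by the pair $(k,\ell)$ with $k<\ell$ is $\pmb{e}_{k}^{\top}-\pmb{e}_{\ell}^{\top}$, so the $(k,\ell;i,j)$ entry of $\pmb{C}\bN^{+}\pmb{C}^{\top}$ equals $(\pmb{e}_{k}-\pmb{e}_{\ell})^{\top}\bN^{+}(\pmb{e}_{i}-\pmb{e}_{j})=n_{ki}^{+}-n_{kj}^{+}-n_{\ell i}^{+}+n_{\ell j}^{+}$. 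Since $\pmb{D}$ is diagonal with $(i,j)$ diagonal entry $n_{ij}$, right--multiplication by $\pmb{D}$ merely rescales the $(i,j)$ column, so the statement for $\pmb{C}\bN^{+}\pmb{C}^{\top}\pmb{D}$ is an immediate corollary of the one for $\pmb{C}\bN^{+}\pmb{C}^{\top}$. It therefore suffices to prove, for a connected graph (so that $|\bN_{1}^{1}|\neq 0$), that
\begin{equation*}
n_{ki}^{+}-n_{kj}^{+}-n_{\ell i}^{+}+n_{\ell j}^{+}=(-1)^{k+\ell+i+j}\,\frac{|\bN_{k\ell}^{ij}|}{|\bN_{1}^{1}|}.
\end{equation*}

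For this scalar identity I would substitute the Chebotarev--Shamis representation of Lemma \ref{N+} for each of the four entries $n_{ab}^{+}$, $(a,b)\in\{(k,i),(k,j),(\ell,i),(\ell,j)\}$. This writes the left--hand side, over the common denominator $K^{2}|\bN_{1}^{1}|$, as an explicit signed double sum of two--deletion minors $|\bN_{as}^{bt}|$. I would then group the summands by their deleted column pair and apply the three--term recursion of Lemma \ref{Nklij} repeatedly, each time choosing the free index $\gamma$ inside $\{k,\ell,i,j\}$ and relabelling the deleted row pair in increasing order before each application, so as to bring every minor in the sum to the common ``target'' pair of deleted rows $\{k,\ell\}$ and deleted columns $\{i,j\}$. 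Once this is done the sums over $s$ and $t$ telescope, and the relation $\bN\b1=\bzero$ (rows of $\bN$ summing to zero) annihilates the residual boundary terms, leaving — after the $K^{\pm2}$ factors cancel — exactly $(-1)^{k+\ell+i+j}|\bN_{k\ell}^{ij}|/|\bN_{1}^{1}|$. The conceptual content, and a guide through the signs, is that by the all--minors matrix--tree theorem (Chen 1976; Chaiken 1982) the right--hand side equals $|\bN_{1}^{1}|^{-1}$ times the weighted number of spanning $2$--forests of $\mathcal{G}$ with $k$ and $i$ in one component and $\ell$ and $j$ in the other, minus those with $k,j$ together and $\ell,i$ together, and the forest expansion of $\bN^{+}$ underlying Lemma \ref{N+} assigns precisely this signed count to the left--hand side.

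The main obstacle is the sign bookkeeping in the second step: one must simultaneously carry the exponents $s+I(s>a)$ and $t+I(t>b)$ produced by Lemma \ref{N+}, the exponents $\gamma+1+\alpha_{1}+I(\alpha_{2}>\gamma)$ and $\gamma+1+\alpha_{2}+I(\gamma>\alpha_{1})$ produced by Lemma \ref{Nklij}, and the reassignment of which index plays the role of $\alpha_1$ at the next step, and then verify that every non--target contribution cancels in pairs. No separate argument is needed when $\{k,\ell\}$ and $\{i,j\}$ overlap, since Lemmas \ref{N+} and \ref{Nklij} impose no relation between the deleted rows and the deleted columns; a direct computation for $K=2$ and $K=3$ pins down the overall constant and the sign convention.
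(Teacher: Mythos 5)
Your proposal follows essentially the same route as the paper's proof: reduce the claim to the scalar identity $n_{ki}^{+}-n_{\ell i}^{+}-n_{kj}^{+}+n_{\ell j}^{+}=(-1)^{k+\ell+i+j}|\bN_{k\ell}^{ij}|/|\bN_{1}^{1}|$, substitute the representation of Lemma \ref{N+} for each entry, and repeatedly apply the recursion of Lemma \ref{Nklij} (first in the row index with $\gamma=\ell$, then in the column index with $\gamma=j$) so that the residual double sums cancel in the differences and only the single minor $|\bN_{k\ell}^{ij}|$ survives, with the $\pmb{D}$ version following by column rescaling. The only cosmetic difference is that the paper cancels the leftover sums directly by pairing the expansions of $n_{ki}^{+}$ and $n_{\ell i}^{+}$ rather than invoking $\bN\b1=\bzero$, and it does not need the matrix--tree interpretation at this stage (that enters only in Lemma \ref{CNCD bounded}).
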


\begin{proof}
It is easy to verify that the $(k,\ell ;i,j)$ entry of $\pmb{C}
\bN^{+}\pmb{C}^\top$ equals $n_{ki}^{+}-n_{\ell
i}^{+}-n_{kj}^{+}+n_{\ell j}^{+}$. Note that 
\begin{align*}
K^{2}|\bN_{1}^{1}|(-1)^{k+i}n_{ki}^{+}=& \sum_{s\neq
k}\sum_{t\neq i}(-1)^{s+I(s>k)+t+I(t>i)}|\bN_{ks}^{it}| \\
=& \sum_{s<k}\sum_{t\neq i}(-1)^{s+t+I(t>i)}|\bN
_{ks}^{it}|+\sum_{k<s<\ell }\sum_{t\neq i}(-1)^{s+1+t+I(t>i)}|\bN%
_{ks}^{it}| \\
& +(-1)^{\ell +1}\sum_{t\neq i}(-1)^{t+I(t>i)}|\bN_{k\ell
}^{it}|+\sum_{s>\ell }\sum_{t\neq i}(-1)^{s+1+t+I(t>i)}|\bN
_{ks}^{it}|
\end{align*}
which by using \eqref{Bapat2013} with $\gamma =\ell $ becomes 
\begin{align*}
& \sum_{s<k}\sum_{t\neq i}(-1)^{s+t+I(t>i)}\{(-1)^{\ell +1+s}|\bN
_{k\ell }^{it}|+(-1)^{\ell +1+k+1}|\bN_{\ell s}^{it}|\} \\
& +\sum_{k<s<\ell }\sum_{t\neq i}(-1)^{s+1+t+I(t>i)}\{(-1)^{\ell +1+k}|
\bN_{\ell s}^{it}|+(-1)^{\ell +1+s+1}|\bN_{k\ell
}^{it}|\} \\
+& (-1)^{\ell +1}\sum_{t\neq i}(-1)^{t+I(t>i)}|\bN_{k\ell }^{it}|
\\
+& \sum_{s>\ell }\sum_{t\neq i}(-1)^{s+1+t+I(t>i)}\{(-1)^{\ell +1+k+1}|
\bN_{\ell s}^{it}|+(-1)^{\ell +1+s+1}|\bN_{k\ell
}^{it}|\} \\
=& \{(k-1)+(\ell -k+1)+1+(K-\ell )\}(-1)^{\ell +1}\sum_{t\neq
i}(-1)^{t+I(t>i)}|\bN_{k\ell }^{it}| \\
& +(-1)^{k+\ell }\{\sum_{s<k}\sum_{t\neq i}(-1)^{s+t+I(t>i)}|\bN
_{\ell s}^{it}|+\sum_{k<s<\ell }\sum_{t\neq i}(-1)^{s+t+I(t>i)}|\pmb{N
}_{\ell s}^{it}| \\
&+ {(-1)^{k+\ell}\bigg\{}\sum_{s>\ell }\sum_{t\neq
i}(-1)^{s+1+t+I(t>i)}|\bN_{\ell s}^{it}|\} \\
=& (K-1)(-1)^{\ell +1}\sum_{t\neq i}(-1)^{t+I(t>i)}|\bN_{k\ell
}^{it}|+(-1)^{k+\ell }\sum_{s\neq k,\ell }\sum_{t\neq i}(-1)^{s+I(s>\ell
)+t+I(t>i)}|\bN_{\ell s}^{it}|.
\end{align*}
On the other hand, 
\begin{align*}
K^{2}|\bN_{1}^{1}|(-1)^{\ell +i}n_{\ell i}^{+}=& \sum_{s\neq \ell
}\sum_{t\neq i}(-1)^{s+I(s>\ell )+t+I(t>i)}|\bN_{\ell s}^{it}| \\
=& (-1)^{k}\sum_{t\neq i}(-1)^{t+I(t>i)}|\bN_{k\ell
}^{it}|+\sum_{s\neq k,\ell }\sum_{t\neq i}(-1)^{s+I(s>\ell )+t+I(t>i)}|
\bN_{\ell s}^{it}|.
\end{align*}
Hence, 
\begin{align*}
K^{2}|\bN_{1}^{1}|& (n_{ki}^{+}-n_{\ell i}^{+}) \\
=& (-1)^{k+i}\bigg\{(K-1)(-1)^{\ell +1}\sum_{t\neq i}(-1)^{t+I(t>i)}|
\bN_{k\ell }^{it}|+ \\
& \phantom{(-1)^{k+i}\bigg\{}(-1)^{k+\ell }\sum_{s\neq k,\ell }\sum_{t\neq
i}(-1)^{s+I(s>\ell )+t+I(t>i)}|\bN_{\ell s}^{it}|\bigg\}- \\
& (-1)^{\ell +i}\bigg\{(-1)^{k}\sum_{t\neq i}(-1)^{t+I(t>i)}|\bN%
_{k\ell }^{it}|+\sum_{s\neq k,\ell }\sum_{t\neq i}(-1)^{s+I(s>\ell
)+t+I(t>i)}|\bN_{\ell s}^{it}|\bigg\} \\
=& K(-1)^{k+\ell +i+1}\sum_{t\neq i}(-1)^{t+I(t>i)}|\bN_{k\ell
}^{it}|.
\end{align*}
Similarly, we get $K^{2}|\bN_{1}^{1}|(n_{kj}^{+}-n_{\ell
j}^{+})=K(-1)^{k+\ell +j+1}\sum_{t\neq j}(-1)^{t+I(t>j)}|\bN%
_{k\ell }^{jt}|$. Next, write 
\begin{align*}
\sum_{t\neq i}(-1)^{t+I(t>i)}|\bN_{k\ell }^{it}|=&
\sum_{t<i}(-1)^\top|\bN_{k\ell }^{it}|+\sum_{i<t<j}(-1)^{t+1}|%
\bN_{k\ell }^{it}|+ \\
& (-1)^{j+1}|\bN_{k\ell }^{ij}|+\sum_{t>j}(-1)^{t+1}|\pmb{N%
}_{k\ell }^{it}|
\end{align*}
and apply once more \eqref{Bapat2013} with $\gamma =j$ to re-express it as 
\begin{align*}
& \sum_{t<i}(-1)^t\{(-1)^{j+1+t}|\bN_{k\ell
}^{ij}|+(-1)^{j+1+i+1}|\bN_{k\ell }^{jt}|\}+ \\
& \sum_{i<t<j}(-1)^{t+1}\{(-1)^{j+1+i}|\bN_{k\ell
}^{jt}|+(-1)^{j+1+t+1}|\bN_{k\ell }^{ij}|\}+(-1)^{j+1}|
\bN_{k\ell }^{ij}|+ \\
& \sum_{t>j}(-1)^{t+1}\{(-1)^{j+1+i+1}|\bN_{k\ell
}^{jt}|+(-1)^{j+1+t+1}|\bN_{k\ell }^{ij}|\} \\
=& \{(i-1)+(j-i+1)+1+(K-j)\}(-1)^{j+1}|\bN_{k\ell }^{ij}|+ \\
& (-1)^{i+j}\sum_{t<i}(-1)^\top|\bN_{k\ell
}^{jt}|+(-1)^{i+j}\sum_{i<t<j}(-1)^\top|\bN_{k\ell
}^{jt}|+(-1)^{i+j}\sum_{t>j}(-1)^{t+1}|\bN_{k\ell }^{jt}| \\
=& (K-1)(-1)^{j+1}|\bN_{k\ell }^{ij}|+(-1)^{i+j}\sum_{t\neq
i,j}(-1)^{t+I(t>j)}|\bN_{k\ell }^{jt}|.
\end{align*}%
Since 
\begin{equation*}
\sum_{t\neq j}(-1)^{t+I(t>j)}|\bN_{k\ell }^{jt}|=(-1)^{i}|
\bN_{k\ell }^{ij}|+\sum_{t\neq i,j}(-1)^{t+I(t>j)}|\bN
_{k\ell }^{jt}|,
\end{equation*}
we get that 
\begin{align*}
K^{2}|\bN_{1}^{1}|& (n_{ki}^{+}-n_{\ell i}^{+}-n_{kj}^{+}+n_{\ell
j}^{+}) \\
=& K(-1)^{k+\ell +i+1}\bigg\{(K-1)(-1)^{j+1}|\bN_{k\ell
}^{ij}|+(-1)^{i+j}\sum_{t\neq i,j}(-1)^{t+I(t>j)}|\bN_{k\ell
}^{jt}|\bigg\}- \\
& K(-1)^{k+\ell +j+1}\bigg\{(-1)^{i}|\bN_{k\ell
}^{ij}|+\sum_{t\neq i,j}(-1)^{t+I(t>j)}|\bN_{k\ell }^{jt}|\bigg\}
\\
=& K^{2}(-1)^{k+\ell +i+j}|\bN_{k\ell }^{ij}|
\end{align*}
which is equivalent to the stated result. The result for $\pmb{C}
\bN^{+}\pmb{C}^\top\pmb{D}$ is now straihtforward.
\end{proof}


Let $\textfrak{T}$ be the set of all spanning trees and $\textfrak{T}_{ij}$
the set of spanning trees containing edge $(i,j)$. For any $\mathcal{T}\in\textfrak{T}$ let $w(\mathcal{T})=\prod_{(i,j)\in\mathcal{T}}n_{ij}$ be its weight. Then we have the
following:

\begin{lemma}
\label{CNCD bounded} All entries of $\pmb{C}\bN^+%
\pmb{C}^T\pmb{D}$ have absolute value bounded by one.
\end{lemma}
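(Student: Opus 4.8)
The plan is to evaluate the generic entry with Lemma~\ref{CNCD} and then bound a minor of the Laplacian by a weighted count of spanning trees. Throughout we are in the connected case, so $|\bN_1^1|>0$ and the (classical) matrix tree theorem gives $|\bN_1^1|=\sum_{\mathcal{T}\in\textfrak{T}}w(\mathcal{T})$. By Lemma~\ref{CNCD} the $(k,\ell;i,j)$ entry of $\pmb{C}\bN^{+}\pmb{C}^\top\pmb{D}$ equals $(-1)^{k+\ell+i+j}n_{ij}|\bN_{k\ell}^{ij}|/|\bN_1^1|$, so it suffices to prove $n_{ij}|\bN_{k\ell}^{ij}|\le|\bN_1^1|$. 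If $n_{ij}=0$ the entry is zero and there is nothing to prove, so assume $(i,j)\in\mathcal{E}$.

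The first step is to expand $\det\bN_{k\ell}^{ij}$ combinatorially. By the all minors matrix tree theorem (Chen, 1976; Chaiken, 1982), $\det\bN_{k\ell}^{ij}$ is, up to an overall sign, an alternating sum of the weights $w(F)=\prod_{(a,b)\in F}n_{ab}$ taken over the family $\mathcal{F}$ of two--component spanning forests $F$ of $\mathcal{G}$ in which $k$ and $\ell$ lie in different trees and $i$ and $j$ lie in different trees (this description also covers the degenerate cases in which $\{k,\ell\}$ and $\{i,j\}$ overlap). Only the absolute value matters here, so by the triangle inequality
\[
|\bN_{k\ell}^{ij}|\le\sum_{F\in\mathcal{F}}w(F).
\]

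The second step is an injection $\mathcal{F}\hookrightarrow\textfrak{T}_{ij}$. For $F\in\mathcal{F}$ the edge $(i,j)\in\mathcal{E}$ joins the two components of $F$, hence $F\cup\{(i,j)\}$ is a spanning tree containing $(i,j)$, with weight $w\big(F\cup\{(i,j)\}\big)=n_{ij}\,w(F)$; and $F\mapsto F\cup\{(i,j)\}$ is injective because $F$ is recovered by deleting the edge $(i,j)$ from the tree. Therefore
\[
n_{ij}\sum_{F\in\mathcal{F}}w(F)=\sum_{F\in\mathcal{F}}w\big(F\cup\{(i,j)\}\big)\le\sum_{\mathcal{T}\in\textfrak{T}_{ij}}w(\mathcal{T})\le\sum_{\mathcal{T}\in\textfrak{T}}w(\mathcal{T})=|\bN_1^1|,
\]
which combined with the previous display gives $n_{ij}|\bN_{k\ell}^{ij}|\le|\bN_1^1|$, as required.

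The delicate point is identifying exactly which spanning forests occur in the all minors expansion of $\det\bN_{k\ell}^{ij}$ — namely that the constraint is precisely ``$k,\ell$ separated and $i,j$ separated'' — while observing that the signs are irrelevant once we pass to absolute values. As a check, the entry admits an electrical reading: it is the current carried by the edge $(i,j)$ (conductance $n_{ij}$) when a unit current is driven from $k$ to $\ell$ in the resistor network with conductances $\{n_{ab}\}$, and applying the maximum principle for the associated harmonic potential $W=\bN^{+}(\pmb{e}_k-\pmb{e}_\ell)$ to the cut $\{v:W_v\ge W_i\}$ yields the same bound of one.
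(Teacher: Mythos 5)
Your proof is correct and follows essentially the same route as the paper's: both reduce the claim, via Lemma \ref{CNCD} and the all minors matrix tree theorem, to the observation that $n_{ij}|\bN_{k\ell}^{ij}|$ is a signed sum of spanning-tree weights over a subset of $\textfrak{T}_{ij}$, which the triangle inequality bounds by $|\bN_1^1|=\sum_{\mathcal{T}\in\textfrak{T}}w(\mathcal{T})$. The only difference is that you make explicit what the paper delegates to Chaiken (1982) — namely that the relevant minor expands over two-component spanning forests separating $i$ from $j$, which the injection $F\mapsto F\cup\{(i,j)\}$ identifies with trees containing $(i,j)$ — which is a welcome clarification but not a different argument.
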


\begin{proof}
By the all minors matrix tree theorem (Chen, 1976, Chaiken, 1982) we get  
that $|\bN_{1}^{1}|=\sum_{\mathcal{T} \in \textfrak{T}}w(\mathcal{T} )$ and $n_{ij}|\bN_{k\ell }^{ij}|=\sum_{\mathcal{T} \in C_{ij}}s_{\mathcal{T},ij}w(\mathcal{T} )$, where $\textfrak{C}_{ij}\subseteq \textfrak{T}_{ij}$ is described in (Chaiken  1982) and $s_{\mathcal{T} ,ij}$
is some sign. Thus, 
$$\big|(-1)^{k+\ell +i+j}n_{ij}|\bN_{k\ell
}^{ij}|/|\bN_{1}^{1}|\big|\leqslant \sum_{\mathcal{T} \in C_{ij}}w(\mathcal{T}
)/\sum_{\mathcal{T} \in \textfrak{T}}w(\mathcal{T} )\leqslant 1.$$
\end{proof}

\medskip

Assume now that $n=\sum_{i,j}n_{i,j}\to\infty$ and Condition \ref{Con(AlgConnec)} holds. Note that there may exist $n_{ij}$'s which don't go to infinity and even if they
do they may go to infinity at different rates. For a spanning tree $\mathcal{T}$
let $r_\mathcal{T}(n)$ be the rate at which $w(\mathcal{T})$ goes to infinity with $
r_\mathcal{T}(n)=1$ if $w(\mathcal{T})\nrightarrow\infty$. Let $\tilde{r}(n)$ be the
maximum rate over $\mathcal{T}\in\textfrak{T}$. Let $\tilde{\textfrak{T}}=\{\mathcal{T}\in
\textfrak{T}: w(\mathcal{T})=O(\tilde{r}(n))\}$ and $\tilde{\mathcal{E}}$ be the set
of edges $(i,j)$ which belong to some $\mathcal{T}\in\tilde{\textfrak{T}}$. Let $
\tilde{\pmb{D}}=\mathrm{diag}(\tilde{n}_{12},\ldots,\tilde{n}_{1K},
\tilde{n}_{23},\ldots,\tilde{n}_{2K},\ldots,\tilde{n}_{K-1,K})$, where $
\tilde{n}_{ij}=n_{ij}$ if $(i,j)\in\tilde{\mathcal{E}}$ and zero otherwise.
Set finally $\tilde{\bN}=\pmb{C}^T\tilde{\pmb{D}}
\pmb{C}$ and let $\tilde{\bN}^+$ be its Moore-Penrose inverse. Note that $\tilde{\bN}$ is also the Laplacian of a multigraph. Then, we have the following.

\begin{lemma}\label{same limit} 
$\lim_{n\uparrow\infty}(\pmb{C}\bN^+
\pmb{C}^T\pmb{D}-\pmb{C}\tilde{\bN}^+
\pmb{C}^T\tilde{\pmb{D}})=\pmb{O}$.
\end{lemma}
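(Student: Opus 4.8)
The plan is to compare the two matrices entry by entry using the closed form obtained in Lemma \ref{CNCD}, namely that the $(k,\ell;i,j)$ entry of $\pmb{C}\bN^{+}\pmb{C}^{\top}\pmb{D}$ equals $(-1)^{k+\ell+i+j}n_{ij}|\bN_{k\ell}^{ij}|/|\bN_{1}^{1}|$, and likewise for $\tilde{\bN}$ with $\tilde{n}_{ij}$ and $\tilde{\bN}$ in place of $n_{ij}$ and $\bN$. Using the all minors matrix tree theorem exactly as in the proof of Lemma \ref{CNCD bounded}, I would write $|\bN_{1}^{1}|=\sum_{\mathcal{T}\in\textfrak{T}}w(\mathcal{T})$ and $n_{ij}|\bN_{k\ell}^{ij}|=\sum_{\mathcal{T}\in\textfrak{C}_{ij}}s_{\mathcal{T},ij}w(\mathcal{T})$, and similarly $|\tilde{\bN}_{1}^{1}|=\sum_{\mathcal{T}\in\tilde{\textfrak{T}}}w(\mathcal{T})$ and $\tilde{n}_{ij}|\tilde{\bN}_{k\ell}^{ij}|=\sum_{\mathcal{T}\in\textfrak{C}_{ij}\cap\tilde{\textfrak{T}}}s_{\mathcal{T},ij}w(\mathcal{T})$, since a spanning tree that uses an edge outside $\tilde{\mathcal{E}}$ either contributes $0$ (when $\tilde{n}_{ij}=0$ for one of its edges) or is simply absent from the reduced sums. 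So each entry of the difference is a ratio of signed sums of tree weights.

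The key observation is a separation of rates: every $\mathcal{T}\in\tilde{\textfrak{T}}$ has $w(\mathcal{T})=\Theta(\tilde r(n))$, while every $\mathcal{T}\in\textfrak{T}\setminus\tilde{\textfrak{T}}$ has $w(\mathcal{T})=o(\tilde r(n))$ by the very definition of $\tilde r(n)$ and $\tilde{\textfrak{T}}$. Hence $|\bN_{1}^{1}| = |\tilde{\bN}_{1}^{1}| + o(\tilde r(n))$ and, because $\textfrak{C}_{ij}\subseteq\textfrak{T}_{ij}\subseteq\textfrak{T}$, also $n_{ij}|\bN_{k\ell}^{ij}| = \tilde{n}_{ij}|\tilde{\bN}_{k\ell}^{ij}| + o(\tilde r(n))$. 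Now I would compute the difference of the two ratios over a common denominator:
\begin{equation*}
\frac{n_{ij}|\bN_{k\ell}^{ij}|}{|\bN_{1}^{1}|} - \frac{\tilde{n}_{ij}|\tilde{\bN}_{k\ell}^{ij}|}{|\tilde{\bN}_{1}^{1}|}
= \frac{n_{ij}|\bN_{k\ell}^{ij}|\,|\tilde{\bN}_{1}^{1}| - \tilde{n}_{ij}|\tilde{\bN}_{k\ell}^{ij}|\,|\bN_{1}^{1}|}{|\bN_{1}^{1}|\,|\tilde{\bN}_{1}^{1}|}.
\end{equation*}
Substituting $n_{ij}|\bN_{k\ell}^{ij}| = \tilde{n}_{ij}|\tilde{\bN}_{k\ell}^{ij}| + o(\tilde r(n))$ and $|\bN_{1}^{1}| = |\tilde{\bN}_{1}^{1}| + o(\tilde r(n))$ into the numerator, the two leading products $\tilde{n}_{ij}|\tilde{\bN}_{k\ell}^{ij}|\,|\tilde{\bN}_{1}^{1}|$ cancel, leaving a numerator that is $o(\tilde r(n))$ times a quantity of order $O(\tilde r(n))$ (since $|\tilde{\bN}_{1}^{1}| = \Theta(\tilde r(n))$ and $\tilde{n}_{ij}|\tilde{\bN}_{k\ell}^{ij}|$ is bounded by $|\tilde{\bN}_{1}^{1}|$ in absolute value by Lemma \ref{CNCD bounded}); so the numerator is $o(\tilde r(n)^2)$. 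Meanwhile the denominator $|\bN_{1}^{1}|\,|\tilde{\bN}_{1}^{1}| = \Theta(\tilde r(n)^2)$. Therefore each entry of the difference tends to $0$, and since there are finitely many entries this proves $\pmb{C}\bN^{+}\pmb{C}^{\top}\pmb{D} - \pmb{C}\tilde{\bN}^{+}\pmb{C}^{\top}\tilde{\pmb{D}} \to \pmb{O}$.

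The main obstacle I anticipate is making the rate bookkeeping rigorous: one has to be careful that $\tilde r(n)$ is well defined (it requires knowing the asymptotics of $w(\mathcal{T})$ for each spanning tree, which under Condition \ref{Con(AlgConnec)} means some trees grow and others stay bounded or grow more slowly), and that the $o$- and $\Theta$-statements survive the cross-multiplication — in particular, one needs $|\tilde{\bN}_{1}^{1}|$ to be bounded below by a positive multiple of $\tilde r(n)$, which follows because at least one spanning tree achieving the maximal rate belongs to $\tilde{\textfrak{T}}$ by construction. A secondary subtlety is that a minor $|\bN_{k\ell}^{ij}|$ is a sum over a set $\textfrak{C}_{ij}$ of \emph{subgraphs} described in Chaiken (1982), not literally spanning trees of $\mathcal{G}$; but each such subgraph is still a spanning tree of $\mathcal{G}$ (or a forest whose components span, contributing via the matrix-tree identity with signs $s_{\mathcal{T},ij}$), so the same rate comparison applies term by term, and the signs $s_{\mathcal{T},ij}$ are identical in the two expansions because they depend only on the combinatorics of the tree and the indices, not on the edge multiplicities. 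Once this is in place the rest is the routine algebra sketched above.
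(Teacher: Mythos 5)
Your proposal is correct and follows essentially the same route as the paper's proof: both reduce to the entrywise formula from Lemma \ref{CNCD}, expand numerators and denominators via the all minors matrix tree theorem, use the rate separation $|\bN_{1}^{1}|=|\tilde{\bN}_{1}^{1}|+o(\tilde r(n))$ and $n_{ij}|\bN_{k\ell}^{ij}|=\tilde{n}_{ij}|\tilde{\bN}_{k\ell}^{ij}|+o(\tilde r(n))$, and conclude that each entry of the difference is an $o(\tilde r(n)^2)$ numerator over a denominator of order $\tilde r(n)^2$. Your added remark that the denominator must be bounded \emph{below} by a positive multiple of $\tilde r(n)^2$ is a valid point of care that the paper leaves implicit.
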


\begin{proof}
By Lemma \ref{CNCD}, the $(k,\ell ;i,j)$ entries of $\pmb{C}
\bN^{+}\pmb{C}^\top\pmb{D}$ and $\pmb{C}
\tilde{\bN}^{+}\pmb{C}^\top\tilde{\pmb{D}}$ are $
(-1)^{k+\ell +i+j}n_{ij}$ $|\bN_{k\ell }^{ij}|/|\bN
_{1}^{1}|$ and $(-1)^{k+\ell +i+j}\tilde{n}_{ij}|\tilde{\bN}
_{k\ell }^{ij}|/|\tilde{\bN}_{1}^{1}|$, respectively. By applying
once more the all minors matrix tree theorem we get $|\tilde{\bN}
_{1}^{1}|=\sum_{\mathcal{T} \in \tilde{\textfrak{T}}}w(\mathcal{T} )$ and $\tilde{n}_{ij}|
\tilde{\bN}_{k\ell }^{ij}|=\sum_{\mathcal{T} \in C_{ij}\cap \tilde{
\textfrak{T}}}s_{\mathcal{T} ,ij}w(\mathcal{T} )$ by construction. Since $r_{\mathcal{T} }(n)=o(
\tilde{r}(n))$ for $\mathcal{T} \notin \tilde{\textfrak{T}}$ we may write $|
\bN_{1}^{1}|=|\tilde{\bN}_{1}^{1}|+o(\tilde{r}(n))$
and $n_{ij}|\bN_{k\ell }^{ij}|=\tilde{n}_{ij}|\tilde{\pmb{N%
}}_{k\ell }^{ij}|+o(\tilde{r}(n))$. Thus, 
\begin{equation*}
\frac{n_{ij}|\bN_{k\ell }^{ij}|}{|\bN_{1}^{1}|}-\frac{
\tilde{n}_{ij}|\tilde{\bN}_{k\ell }^{ij}|}{|\tilde{\bN}
_{1}^{1}|}=\frac{\tilde{n}_{ij}|\tilde{\bN}_{k\ell }^{ij}|+o(
\tilde{r}(n))}{|\tilde{\bN}_{1}^{1}|+o(\tilde{r}(n))}-\frac{
\tilde{n}_{ij}|\tilde{\bN}_{k\ell }^{ij}|}{|\tilde{\bN}%
_{1}^{1}|}=\frac{|\tilde{\bN}_{1}^{1}|o(\tilde{r}(n))-\tilde{n}
_{ij}|\tilde{\bN}_{k\ell }^{ij}|o(\tilde{r}(n))}{|\tilde{
\bN}_{1}^{1}|\{|\tilde{\bN}_{1}^{1}|+o(\tilde{r}(n))\}}
\end{equation*}
which converges to zero since the numerator is $o(\tilde{r}(n)^{2})$ and the
denominator $O(\tilde{r}(n)^{2})$.
\end{proof}


\subsubsection*{Proof of Theorem \ref{Thm-graph.SLLN}:}


\begin{proof}
Let $\widehat{\pmb{\delta }}=\pmb{C}\widehat{\bmu}$. By
applying Lemma \ref{C properties}(c) with $\widehat{\bmu}$ and $%
\widehat{\pmb{\delta }}$ in the place of $\bmu$ and $%
\pmb{\delta }$ we conclude that $\widehat{\bmu}$ is
consistent for $\bmu$ if and only if $\widehat{\pmb{\delta }}
$ is consistent for $\pmb{\delta }$.

By the strong law of large numbers, if $n_{ij}\rightarrow \infty $ (at
whatever rate) then $\bar{Y}_{ij}\rightarrow \mu _{i}-\mu _{j}$ almost
surely. Therefore, we may write $\bar{\bS}\overset{a.s.}{
\longrightarrow }\tilde{\pmb{\delta }}=(\tilde{\delta}_{11},\ldots ,
\tilde{\delta}_{1K},\tilde{\delta}_{23},\ldots ,\tilde{\delta}_{2K},\ldots ,
\tilde{\delta}_{K-1,K})$ where $\tilde{\delta}_{ij}=\mu _{i}-\mu _{j}$ or $
\bar{Y}_{ij}$ according to if $n_{ij}\rightarrow \infty $ or not.

Suppose first that Condition \ref{Con(AlgConnec)} is satisfied. By applying Lemma \ref{C
properties} with $\tilde{\bN}$ instead of $\bN$ we get 
$\pmb{C}\tilde{\bN}^{+}\pmb{C}^\top\tilde{
\pmb{D}}\tilde{\pmb{\delta }}=\tilde{\pmb{\delta }}$. However, if $
(i,j)\notin \tilde{\mathcal{E}}$ then the $(i,j)$th column of $\pmb{C}
\tilde{\bN}^{+}\pmb{C}^\top\tilde{\pmb{D}}$ contains
only zeros. Since $\tilde{\pmb{\delta }}_{ij}=\mu _{i}-\mu _{j}$ for
all $(i,j)\in \tilde{\mathcal{E}}$ we conclude that $\pmb{C}\tilde{%
\bN}^{+}\pmb{C}^\top\tilde{\pmb{D}}\tilde{
\pmb{\delta }}=\pmb{C}\tilde{\bN}^{+}\pmb{C}%
^\top\tilde{\pmb{D}}\pmb{\delta }=\pmb{\delta }$. Thus, 
\begin{align*}
\widehat{\pmb{\delta }}=& \pmb{C}\bN^{+}\pmb{C}
^\top\pmb{D}\bar{\bS} \\
=& (\pmb{C}\bN^{+}\pmb{C}^\top\pmb{D}-
\pmb{C}\tilde{\bN}^{+}\pmb{C}^\top\tilde{\pmb{
D}})\bar{\bS}+\pmb{C}\tilde{\bN}^{+}\pmb{%
C}^\top\tilde{\pmb{D}}(\bar{\bS}-\tilde{\pmb{\delta }
})+\pmb{C}\tilde{\bN}^{+}\pmb{C}^\top\tilde{%
\pmb{D}}\tilde{\pmb{\delta }} \\
=& (\pmb{C}\bN^{+}\pmb{C}^\top\pmb{D}-
\pmb{C}\tilde{\bN}^{+}\pmb{C}^\top\tilde{\pmb{
D}})\bar{\bS}+\pmb{C}\tilde{\bN}^{+}\pmb{
C}^\top\tilde{\pmb{D}}(\bar{\bS}-\tilde{\pmb{\delta }
})+\pmb{\delta }\overset{a.s.}{\longrightarrow }\pmb{\delta },
\end{align*}%
since the first term and second terms go almost surely to $\pmb{0}$:
the first by Lemma \ref{same limit} and the almost sure boundedness of $\bar{
\bS}$ while the second by the strong law and the fact that $
\pmb{C}\tilde{\bN}^{+}\pmb{C}^\top\tilde{\pmb{
D}}$ is bounded as shown in Lemma \ref{CNCD bounded}.

Suppose finally that Condition \ref{Con(AlgConnec)} does not hold. In this case the set $
\{1,\ldots ,K\}$ can be partitioned into two subsets $A_{1}$, $A_{2}$ so
that the limiting vector $\tilde{\pmb{\delta }}$ contains the 
differences $\mu _{i}-\mu _{j}$ only if both $i,j$ belong to $A_{1}$ or to 
$A_{2}$. This means that no difference $\mu _{i}-\mu _{j}$ with $i\in A_{1}$, 
$j\in A_{2}$ can be consistently estimated and thus $\widehat{\pmb{\delta 
}}$ can not be consistent.
\end{proof}

\subsection*{Proof of Lemma \ref{cov:insure}:}
\begin{proof}
Let $(\mathcal{G}_1,\mathcal{Y}_1)$ be the comparison graph obtained from $(\mathcal{G},\mathcal{Y})$ by retaining the edges $\mathcal{E}_1 \subset \mathcal{E}$ only if $n_{ij}=O(n)$. Let $\bN_1$ and $\bN$ be their corresponding Laplacians. Condition \ref{Con(Rate+CLT)} implies that there is a spanning tree $\mathcal{T}$ in $\mathcal{G}_1$ whose edges represent $O(n)$ paired comparisons and consequently  $\mathcal{G}_1$ is a connected graph. Let $\pmb{\Theta}=\lim_{n\to\infty}\bN/n$ and $\pmb{\Theta}_{1}=\lim_{m_1\to\infty}\bN_1/m_1$ where $m_1=\sum_{(i,j)\in\mathcal{E}_1}n_{ij}$. By construction the relations $n-m_1=o(n)$ and $\pmb{\Theta}=\pmb{\Theta}_1$ hold as $n\to\infty$.

Let $|\mathcal{E}_{1}|$ be the cardinality of $\mathcal{E}_{1}$ and let $\pmb{Q}=(q_{ij})$ be the $K \times |\mathcal{E}_{1}|$ matrix whose rows correspond to the items and column correspond to the edges in $\mathcal{E}_{1}$ arranged lexicographically. If the $l^{th}$ edge in $\mathcal{E}_{1}$ joins the vertices $i$ and $j$ then
\begin{align*}
	q_{il} = 
	\begin{cases}
		\sqrt{|\theta_{ij}|} & \text{ if } i < j\\
		-\sqrt{|\theta_{ij}|} &  \text{ if } i > j\\
		0 & \text{ otherwise},
	\end{cases}
\end{align*}
where $\theta_{ij}$ is the $(i,j)^{th}$ element of $\pmb{\Theta}$. 
Following the arguments in the proof of Lemma 2.2 in Bapat (2010, p. 14), it is easy to see that $\mathrm{rank}(\pmb{Q})=K-1$. Further $\pmb{\Theta}=\pmb{Q}\pmb{Q}^\top$ and consequently $\mathrm{rank}(\pmb{\Theta})=K-1$ as claimed. Now, the relation $n\bN^{+}\to\pmb{\Theta}^{+}$ follows immediately by applying Theorem 4.2 in Rako\^cevi\'c (1997). 
\end{proof}

\begin{remark}
The matrix $\pmb{Q}$ appearing in the proof of Lemma \ref{cov:insure} can be obtained from the incidence matrix (Bapat 2010) of the multigraph $\mathcal{G}_1$ by dividing by $\sqrt{n}$ and taking the limit. 
\end{remark}

\subsection*{Proof of Theorem \ref{Thm-LST}:}
\begin{proof}
Under Condition \ref{Con(Rate+CLT)}, using Lemma \ref{cov:insure} we have $(\lim (\bN/n))^{+}=\lim ((\bN/n)^{+})$ which implies that $n\bN^{+}\rightarrow \Theta ^{+}$ or equivalently that $\mathbb{V}ar(\sqrt{n}\widehat{\bmu}_{n}) \to\sigma ^{2}\pmb{\Theta}^{+}.$ Let $\bV$ be a random vector of dimension $K(K-1)/2$ whose components $S_{ij},$ $1\leq i<j\leq K$, are arranged by their lexicographical order.  Clearly if $n\rightarrow \infty $ and $n_{ij}=O(n)$ then by the one dimensional CLT
\begin{equation}
\label{clt:simple}
\frac{1}{\sqrt{n}}\left( S_{ij}-n_{ij}(\mu _{i}-\mu _{j})\right) \Rightarrow 
\mathcal{N}\left( 0,\sigma ^{2}\theta _{ij}^*\right) .
\end{equation}
where $\theta _{ij}^*=-\theta _{ij}=\lim_{n\to\infty} n_{ij}/n .$ If $n_{ij}=o(n)$ then the LHS of converges to $0$ in probability, in which case $\theta_{ij}=0$ and \eqref{clt:simple} remains valid. Since the components of $\bV$ are independent it follows that as $n\rightarrow \infty $
\begin{equation*}
\frac{1}{\sqrt{n}}\left( \bV-\mathbb{E}(\bV)\right)
\Rightarrow \mathcal{N}_{K(K-1)/2}( \pmb{0},\sigma ^{2}\pmb{D}_{\pmb{\theta }} )
\end{equation*}
where $\pmb{D}_{\pmb{\theta}}$ is a diagonal matrix with elements $\theta_{ij}^*$ arranged by their lexicographical order. Since $S_{i}=\sum_{j\neq i}S_{ij}$ and $S_{ij}=-S_{ji}$ there exist a $K\times K(K-1)/2$ matrix $\pmb{L}=(l_{ij})$ where $l_{ij}\in \{-1,0,1\}$ such that $\bS=\pmb{LV}$. It follows that 
\begin{equation*}
\frac{1}{\sqrt{n}}\left( \pmb{LV}-\pmb{L}\mathbb{E}(
\bV)\right) \Rightarrow \mathcal{N}_{K}( \pmb{0}
,\sigma ^{2}\pmb{L}\pmb{D}_{\pmb{\theta }} \pmb{L}^\top)
\end{equation*}
which, by matching moments, can be rewritten as 
\begin{equation*}
\frac{1}{\sqrt{n}}(\bS-\pmb{N\mu })\Rightarrow \mathcal{N}
_{K}\left( \pmb{0},\sigma ^{2}\pmb{\Theta }\right) .
\end{equation*}
Premultiplying the above by $n\bN^{+}$ and using Lemma \ref{cov:insure} we find that
\begin{equation*}
\sqrt{n}\left( \widehat{\bmu}_{n}-\bmu\right)
\Rightarrow \mathcal{N}_{K}\left( \pmb{0},\sigma ^{2}\pmb{
\Theta }^{+}\right).
\end{equation*}
\end{proof}

{ 
\subsection*{Proof of Theorem \ref{theorem clt with 3.1}}
\begin{proof}
First note that for each $(i,j)\in\mathcal{E}$ we have
\begin{equation} \label{Eq.S_ij}
S_{ij} = n_{ij}(\mu_i-\mu_j)+\sigma\sqrt{n_{ij}} Z_{ij}+o_p(\sqrt{n_{ij}})
\end{equation}
where $\{Z_{ij}:(i,j)\in\mathcal{E}\}$ is a collection of independent $\mathcal{N}(0,1)$ RVs. Next recall that $\bS =(\sum_{j\neq 1}S_{1j}, \ldots, \sum_{j\neq K}S_{Kj})^{\top}$ so by \eqref{Eq.S_ij}
\begin{equation*}
\bS = \begin{pmatrix}
    \sum_{j\neq 1}n_{1j}(\mu_1-\mu_j)\\ \vdots\\ \sum_{j\neq K}n_{Kj}(\mu_K-\mu_j)
\end{pmatrix}  +   \sigma \begin{pmatrix}
    \sum_{j\neq 1}\sqrt{n_{1j}} Z_{1j}\\ \vdots\\ \sum_{j\neq K}\sqrt{n_{Kj}} Z_{Kj}
\end{pmatrix} + \begin{pmatrix}
    \sum_{j\neq 1}o_p(\sqrt{n_{1j}})\\ \vdots\\ \sum_{j\neq K}o_p(\sqrt{n_{Kj}})
\end{pmatrix}
\end{equation*}
which we rewrite as
\begin{equation} \label{Eq.S}
\bN\bmu + \sigma \bB\bZ + \pmb{\xi}
\end{equation}
where $\bB$ is a $K\times \binom{K}{2}$ with entries
$$
B_{ij} = 
\begin{cases} 
\sqrt{n_{ij}}, & \text{ if } i<j, \\
-\sqrt{n_{ij}}, & \text{ if } i>j,
\end{cases}
$$
$\bZ$ is the $K(K-1)/2$ vector whose components $Z_{ij}$ are arranged by their lexicographical order and $\pmb{\xi}= o_p(\sqrt{M}){\b1} $ where $M=\max\{n_{ij}:(i,j)\in\mathcal{E}\}$. Using the relations $\bN\bN^{+}\bmu=\bmu$, and $\bN=\bB\bB^{\top}$ we can reexpress the LSE as
\begin{align*}
\widehat{\bmu}_{n}=\bmu+(\bB\bB^{\top})^{+} \bB  \sigma\bZ + \bN^{+}\pmb{\xi}.
\end{align*}
It is easy to verify that $(\bB\bB^{\top})^{+} \bB$ is Moore--Penrose inverse of the matrix $\bB^{\top}$. It now follows that 
\begin{align*}
\sqrt{m}(\widehat{\bmu}_n-\bmu) = \sqrt{m} (\bB^{\top})^{+} \sigma\bZ + \sqrt{m}\bN^{+}\pmb{\xi}    
\end{align*}
By Theorem 3.1 $\bN^{+}=O(1/m)$; a bit of algebra now shows that 
\begin{equation} \label{Eq.N+xsi}
\sqrt{m}\bN^{+}\pmb{\xi} =o_p(1)\b1.    
\end{equation}
Next, arguments similar to those in the proof of Theorem \ref{Thm-graph.WLLN} can be used to show that the entries of $(\bB^{\top})^{+}$ are $O(m)$. Therefore, $\sqrt{m}(\widehat{\bmu}_n-\bmu)$ has a asymptotically normal distribution with mean $\bzero$. Furthermore,
\begin{align*}
\lim_m\mathbb{V}ar(\sqrt{m}(\widehat{\bmu}_n-\bmu))= \lim_m\mathbb{V}ar( \sqrt{m} (\bB^{\top})^{+} \sigma\bZ)= \lim_m m \bN^+=\pmb{\Psi}
\end{align*}
completes the proof.
\end{proof}
}

\subsection*{Proof of Theorem \ref{Thm-AS}:}
\begin{proof}
Without loss of generality assume that $\mu _{i}>\mu _{j}$ whenever $i<j$. Note that $\pmb{r}\left( \widehat{\bmu}_{n}\right) \neq 
\pmb{r}\left( \bmu\right) $ if and only if for some $i$
the event $E_{i}=\{\widehat{\mu }_{i}<\widehat{\mu }_{i+1}\}$ occurs. It follows that
\begin{equation*}
\mathbb{P}\left( \pmb{r}\left( \widehat{\bmu}_{n}\right)
\neq \pmb{r}\left( \bmu\right) \right) =\mathbb{P}\left(
\cup _{i=1}^{K-1}E_{i}\right) \leq \sum_{i=1}^{K-1}\mathbb{P}(\widehat{\mu }_{i}<\widehat{\mu }_{i+1})\leq (K-1)\max_{1\leq i<(K-1)}\mathbb{P}(\widehat{\mu }_{i}<\widehat{\mu }_{i+1}).
\end{equation*}
Under the stated conditions $\widehat{\mu }_{i}\overset{p}{\rightarrow }\mu_{i} $ for each $i$ and therefore $\widehat{\mu }_{i}-\widehat{\mu }_{i+1}
\overset{p}{\rightarrow }\mu _{i}-\mu _{i+1}=\delta _{i,i+1}$ where $\delta_{i,i+1}>0$ by assumption. Consequently, 
\begin{align*}
\mathbb{P}(\widehat{\mu }_{i}<\widehat{\mu }_{i+1})=\mathbb{P}(\widehat{\mu }_{i}-\widehat{\mu }_{i+1}<0)
=\mathbb{P}(\widehat{\mu }_{i}-\widehat{\mu}_{i+1}-\delta_{i,i+1} <-\delta_{i,i+1})
\leq \mathbb{P}(|\widehat{\mu }_{i}-\widehat{\mu}_{i+1}-\delta_{i,i+1}|
>\delta_{i,i+1}) \rightarrow 0
\end{align*}
as $n\rightarrow \infty$. Since $K$ is finite we have $\lim_{n}\mathbb{P}( \pmb{r}( \widehat{\bmu}_{n}) \neq \pmb{r}( \bmu) ) =0$ establishing the first claim.

Next observe that
\begin{equation*}
\mathbb{P(}\widehat{\mu }_{i}<\widehat{\mu }_{i+1})=\mathbb{P}(\bN_{i}^{+}\bS<\bN_{i+1}^{+}\bS)=\mathbb{P}((\bN_{i}^{+}-\bN_{i+1}^{+})\bS<0)
\end{equation*}
where $\bN_{i}^{+}$ is the $i^{th}$ row of the matrix $\bN^{+}$. A bit of algebra shows that
\begin{equation*}
(\bN_{i}^{+}-\bN_{i+1}^{+})\bS
=\sum_{k<l}^{K}[(n_{i,k}^{+}-n_{i+1,k}^{+})-(n_{i,l}^{+}-n_{i+1,l}^{+})]S_{kl}
\end{equation*}
where $S_{kl}=\sum_{t=1}^{n_{kl}}Y_{klt},$ the sum $\sum_{k<l}^{K}$ is taken over pairs $k<l$ for which $n_{kl}>0$. Observe that
\begin{eqnarray*}
\mathbb{P}((\bN_{i}^{+}-\bN_{i+1}^{+})\bS<0)&=&\mathbb{P}((\bN_{i}^{+}-\bN_{i+1}^{+})
\bS-\delta _{i,i+1}<-\delta _{i,i+1}) \\
&=&\mathbb{P}
(\sum_{k<l}^{K}[(n_{i,k}^{+}-n_{i+1,k}^{+})-(n_{i,l}^{+}-n_{i+1,l}^{+})]S_{kl}-\delta _{i,i+1}<-\delta _{i,i+1})
\\
&=&\mathbb{P}(\sum_{k<l}^{K}\xi _{i,kl}^{n}\frac{S_{kl}}{n_{kl}}-\delta
_{i,i+1}<-\delta _{i,i+1}) \\
&= &\mathbb{P}(\sum_{k<l}^{K}\xi _{i,kl}^{n}(\overline{S}_{kl}-\mu_{kl})<-\delta _{i,i+1})
\end{eqnarray*}
where $\overline{S}_{kl} ={S_{kl}}/n_{kl}$ and $\xi _{i,kl}^{n} =({n_{kl}}/n
)n[(n_{i,k}^{+}-n_{i+1,k}^{+})-(n_{i,l}^{+}-n_{i+1,l}^{+})]$.
By construction $\delta _{i,i+1}=$ $\sum_{k<l}^{K}\xi _{i,kl}^{n}\mu _{kl}$ since the expectation of $\overline{S}_{kl}$ is $\mu _{kl}$. Note that the collection of RVs $\{S_{kl}\}$ where $k<l$ are independent. Using the fact that weighted sums of independent subgaussian RVs are also subgaussian and the form of the moment generating functions of subgaussian RVs we find that
\begin{align*}
\mathbb{P}(\sum_{k<l}^{K}\xi _{i,kl}^{n}(\overline{S}_{kl}-\mu _{kl})
<-\delta _{i,i+1})\leq&~ 
\exp(-\delta_{i,i+1}^2/\sum_{k<l}((\xi _{i,kl}^{n})^2\sigma^2/n_{kl}).
\end{align*}
Note that under Condition \ref{Con(Rate+CLT)}, using Lemma \ref{cov:insure} for all pairs $k<l$ we have that $n\, n_{i,j}^{+}=\theta _{i,j}^{+}+o\left( 1\right) $ and 
\begin{equation*}
\xi _{i,kl}=\xi _{i,kl}^{n}+o\left( 1\right) =|\theta _{kl}|[(\theta
_{i,k}^{+}-\theta _{i+1,k}^{+})-(\theta _{i,l}^{+}-\theta
_{i+1,l}^{+})](1+o\left( 1\right) )
\end{equation*}%
where $\theta _{i,j}^{+}$ is the $\left( i,j\right) $ element of the matrix $\pmb{\Theta }^{+}$. If $\xi _{i}=\max_{k<l}\xi _{i,kl}^{2}$ then
\begin{align*}
\mathbb{P}(\sum_{k<l}^{K}\xi _{i,kl}^{n}(\overline{S}_{kl}-\mu _{kl})
<-\delta _{i,i+1})\leq&~ 
\exp(-\delta_{i,i+1}^2/(\xi_i^2\sigma^2 \sum_{k<l}1/n_{kl}))\\
=&~ \exp(-n\delta_{i,i+1}^2/(\xi_i^2\sigma^2 \sum_{k<l}1/|\theta_{kl}|) + o(1)).
\end{align*}
This can be rewritten as 
\begin{equation*}
\mathbb{P(}\widehat{\mu }_{i}<\widehat{\mu }_{i+1})\leq C_{1,i}\exp
(-C_{2,i}n)
\end{equation*}
for some constants $C_{1,i}$ and $C_{2,i}$. The result follows by taking $C_{1}=K\max C_{1,i}$ and $C_{2}=\min C_{2,i}$.
\end{proof}

\subsection*{Proof of Proposition \ref{span:M:condition}}
\begin{proof}
If $x_{ijk} = z_i -z_j$ for $1\leq i<j\leq K$ then it is easy to see that 
$$\bx = \bM (z_1,\ldots,z_K)^{\top}= z_1 \bM_1 +\ldots+ z_K \bM_K.$$
Thus $\bx\in \mathrm{im}(\bM)$. Next let $\bx\in \mathrm{im}(\bM)$ then there exists $\alpha_1,\ldots,\alpha_K$ such that
\begin{align*}
\bx = \alpha_1 \bM_1 +\ldots+ \alpha_K \bM_K
    = \bM \pmb{\alpha},
\end{align*}
where $\bM_i$ is the $i^{th}$ column of $\bM$. Therefore $x_{ijk} = \alpha_i -\alpha_j$ for $1\leq i<j\leq K$.
\end{proof}

\subsection*{Proof of Theorem \ref{Thm-Est.Covariates}:}

\begin{proof} 
The Lagrangian associated with \eqref{def.mu.hat.with.cov} is
		\begin{align*}
			\mathcal{L}(\bmu,\pmb{\beta},\lambda)=
			(\bY-\bM\bmu -\bX \pmb{\beta})^{\top}
			(\bY-\bM\bmu -\bX \pmb{\beta})+ 
			\lambda \bv^{\top}\bmu.
		\end{align*} 
Its derivatives with respect to $(\bmu,\pmb{\beta})$ and $\lambda$ equated to zero yield
		\begin{align}
			&~\frac{\partial}{\partial (\bmu,\pmb{\beta})} \mathcal{L}(\bmu,\pmb{\beta},\lambda)= 
			2\begin{pmatrix}
				\bM^{\top} \\ \bX^{\top}
			\end{pmatrix} (\bY-\bM\bmu -\bX \pmb{\beta}) 
			+ {\lambda}
			\begin{pmatrix}
				\bv \\ \pmb{0}
			\end{pmatrix} = 0,\label{cov:thm1:eq1}\\
			&~\frac{\partial}{\partial \lambda} \mathcal{L}(\bmu,\pmb{\beta},\lambda)= 
			\bv^{\top}\bmu=0. \label{cov:thm1:eq2}
		\end{align}
Premultiplying \eqref{cov:thm1:eq1} by $(\b1_{k\times1}^{\top}, \pmb{0}_{p\times1}^{\top})$, and recalling that $\bv^{\top}\b1\neq0$ we find that $\lambda =0$. Therefore a solution to equation \eqref{cov:thm1:eq1} is given by
	\begin{align}
         \label{cov:thm1:eq3}
			\begin{pmatrix}
				\widehat{\bmu}\\ \widehat{\pmb{\beta}}
			\end{pmatrix}
			= \begin{pmatrix}
				\bN & \bM^{\top} \bX\\
				\bX^{\top} \bM & \bX^{\top}\bX
			\end{pmatrix}^{+}
			\begin{pmatrix}
				\bM^{\top} \\ \bX^{\top}
			\end{pmatrix} \bY+ (\pmb{I}-\begin{pmatrix}
				\bN & \bM^{\top} \bX\\
				\bX^{\top} \bM & \bX^{\top}\bX
			\end{pmatrix}^{+}\begin{pmatrix}
				\bN & \bM^{\top} \bX\\
				\bX^{\top} \bM & \bX^{\top}\bX
			\end{pmatrix})\pmb{\xi},
	\end{align} 
where $\pmb{\xi}\in\mathbb{R}^{K+p}$ is arbitrary. If the graph $\mathcal{G}$ is connected and Condition \ref{condition:covariate:existence} holds, then $\mathrm{rank}(\pmb{H})=K+p-1$. Consequently $\mathrm{rank}(\pmb{H}^{\top} \pmb{H})=K+p-1$ and $(\b1^{\top},\pmb{0}^{\top})^{\top}$ spans $\mathrm{ker}(\pmb{H}^{\top} \pmb{H})$. Thus $(\pmb{I}-(\pmb{H}^{\top} \pmb{H})^{+}\pmb{H}^{\top} \pmb{H}) \pmb{\xi}=c(\b1^{\top}, \pmb{0}^{\top})^{\top}$ for some $c\in\mathbb{R}$ where $c\neq 0$ if and only if $\pmb{\xi}\neq \pmb{0}$. Premultiplying \eqref{cov:thm1:eq3} by $(\bv^{\top}, \pmb{0}^{\top})$ and using \eqref{cov:thm1:eq2} we obtain 
\begin{align*}
\bzero = (\bv^{\top}, \pmb{0}^{\top}) (\pmb{H}^{\top} \pmb{H})^{+} \pmb{H}^{\top}\bY + c (\bv^{\top}, \pmb{0}^{\top}) (\b1^{\top},\pmb{0}^{\top})^{\top}.
\end{align*}
Therefore $c=-(\bv^{\top}, \pmb{0}^{\top}) (\pmb{H}^{\top} \pmb{H})^{+} \pmb{H}^{\top}\bY/ (\bv^{\top}, \pmb{0}^{\top}) (\b1^{\top},\pmb{0}^{\top})^{\top}$. Thus the unique solution to \eqref{cov:thm1:eq1} and \eqref{cov:thm1:eq2} is given by
	\begin{align}
    \label{cov:thm1:eq4}
		\begin{pmatrix}
			\widehat{\bmu}\\ \widehat{\pmb{\beta}}
		\end{pmatrix}
		= (\pmb{H}^{\top} \pmb{H})^{+} \pmb{H}^{\top}\bY - \frac{(\bv^{\top}, \pmb{0}^{\top}) (\pmb{H}^{\top} \pmb{H})^{+} \pmb{H}^{\top}\bY}{\bv^{\top} \b1}(\b1^{\top},\pmb{0}^{\top})^{\top}.
	\end{align}
The uniqueness of the LSE follows from the uniqueness of the Moore Penrose inverse. 

Conversely, let $\mathrm{rank}(\pmb{H})\neq K+p-1$. Then, we have $\mathrm{rank}(\bM)\leq K-1$ and consequently $\mathrm{rank}(\pmb{H})<K+p-1$. Suppose     $\mathrm{rank}(\pmb{H})=K+p-l$ where $2\leq l\leq K+p-1$. Then $(\pmb{I}-(\pmb{H}^{\top} \pmb{H})^{+}\pmb{H}^{\top} \pmb{H}) \pmb{\xi}$ can be expressed as a linear combination of $l$ orthogonal vectors in $\mathrm{ker}(\pmb{H}^{\top} \pmb{H})$. Therefore premultiplying \eqref{cov:thm1:eq3} by $(\bv^{\top}, \pmb{0}^{\top})$ and using \eqref{cov:thm1:eq2}, we get a linear equation in $l$ variables, so the associated solution is not unique. Consequently the estimator given by \eqref{cov:thm1:eq3} is not unique. For example if $\mathrm{rank}(\pmb{H})=K+p-2$, then $(\pmb{I}-(\pmb{H}^{\top} \pmb{H})^{+}\pmb{H}^{\top} \pmb{H}) \pmb{\xi}=c_1(\b1^{\top}, \pmb{0}^{\top})^{\top}+c_2\pmb\chi$, where $\pmb\chi\in \mathrm{ker}(\pmb{H}^{\top} \pmb{H})$ and $\pmb\chi \perp (\b1^{\top}, \pmb{0}^{\top})^{\top}$. Therefore premultiplying \eqref{cov:thm1:eq3} by $(\bv^{\top}, \pmb{0}^{\top})$ and using \eqref{cov:thm1:eq2}, we get 
\begin{align*}
	\bzero = (\bv^{\top}, \pmb{0}^{\top}) (\pmb{H}^{\top} \pmb{H})^{+} \pmb{H}^{\top}\bY + c_1 (\bv^{\top}, \pmb{0}^{\top}) (\b1^{\top},\pmb{0}^{\top})^{\top} + c_2 (\bv^{\top}, \pmb{0}^{\top})\pmb{\chi}.
\end{align*}
This is a linear equation in two variables, so the associated solution is not unique. This proves the first part.
        
Now, substituting $\bv = \b1$ into \eqref{cov:thm1:eq4} we find that $(\widehat{\bmu}^{\top},\widehat{\pmb{\beta}}^{\top})^{\top} = (\pmb{H}^{\top} \pmb{H})^{+} \pmb{H}^{\top}\bY$. An explicit solution is facilitated by noting that we can rewrite  
\begin{align*}
	\bY =~ \bM \bmu + \bX \pmb{\beta} + \pmb{\epsilon}
	=~ \bM \pmb{c}_1 + \widetilde{\bX} \pmb{c}_2 + \pmb{\epsilon},
\end{align*}
where $\widetilde{\bX}= (\pmb{I}- \bM \bN^{+} \bM^{\top})\bX$ and 
\begin{align*}
	\pmb{c}_2 &~= \pmb\beta,\\
	\pmb{c}_1 &~= \bmu + \bM^{+}\bM \bN^{+} \bM^{\top} \bX \pmb\beta.
\end{align*}
Note that $\widetilde{\bX}$ is the projection of $\bX$ onto the orthogonal complement of $\bM$, and consequently in terms of $(\pmb{c}_1,\pmb{c}_2)$ the normal equations are
\begin{align} \label{thm3.1:proof:eq5}
	\begin{pmatrix}
		\bN & \pmb{0}\\
		\pmb{0} & \widetilde{\bX}^{\top}\widetilde{\bX}
	\end{pmatrix}
	\begin{pmatrix}
		\pmb{c}_1 \\ \pmb{c}_2
	\end{pmatrix}
	= \begin{pmatrix}
		\bM^{\top}\bY\\ \widetilde{\bX}^{\top}\bY
	\end{pmatrix}.
\end{align}
Here ${\bX}$ is a full column rank matrix and $(\pmb{I}-\bM\bN^{+}\bM^{\top})$ is an $n\times n$ idempotent matrix. Therefore by Proposition 5.4 in Puntanen et al. (2007, p. 132) $\widetilde{\bX}^{\top}\widetilde{\bX}$ is a full rank matrix. It now follows from \eqref{thm3.1:proof:eq5} that 
\begin{align*}
    \widehat{\pmb{\beta}} = (\widetilde{\bX}^{\top}\widetilde{\bX})^{-1} \widetilde{\bX}^{\top}\bY= 
    (\bX^\top (\pmb{I}-\bM\bN^{+}\bM^{\top}) \bX)^{-1}\bX^\top(\pmb{I}-\bM\bN^{+}\bM^{\top}))\bY.
\end{align*}
Substituting the value of $\widehat{\pmb{\beta}}$ in \eqref{thm3.1:proof:eq5} and solving for $\bmu$, we obtain 
\begin{align*}
\widehat{\bmu}&=\bN^{+}(\bS-\bM^{\top}\bX\widehat{\pmb{\beta}}).
\end{align*}
\end{proof}

\subsection*{Proof of Lemma \ref{lemma:cod:with:cov:wlln}:}

\begin{proof}
Let $\widetilde{\pmb{H}}=(\bM, \widetilde{\bX})$ and note that $\mathrm{ker}(\pmb{H}^{\top}\pmb{H}) = \mathrm{ker}(\widetilde{\pmb{H}}^{\top}\widetilde{\pmb{H}}) = \mathrm{span} (\{(\b1_{k\times1}^{\top}, \pmb{0}_{p\times1}^{\top})^{\top}\})$. Consequently for every $\bv_1\in \mathbb{R}^{K+p}$ there exists a $\bv_2\in \mathbb{R}^{K+p}$ such that $\bv_1^{\top}\pmb{H}^{\top}\pmb{H} \bv_1 = \bv_2^{\top}\widetilde{\pmb{H}}^{\top}\widetilde{\pmb{H}} \bv_2$. Therefore, the second smallest eigenvalue of $\pmb{H}^{\top}\pmb{H}$ diverges if and only if the second smallest eigenvalue of $\widetilde{\pmb{H}}^{\top}\widetilde{\pmb{H}}$ diverges. Next, the second smallest eigenvalue of $\widetilde{\pmb{H}}^{\top}\widetilde{\pmb{H}}$ diverges if the second smallest eigenvalue of $\bM^{\top}\bM$ diverges and the smallest eigenvalue of $ \widetilde{\bX}^{\top}\widetilde{\bX}$ diverges. Condition \ref{Con(AlgConnec)} guarantees that the second smallest eigenvalue of $\bM^{\top}\bM$ diverges, cf., Theorem \ref{Thm-graph.WLLN}.
	
Let $\pmb{l}\in \mathbb{R}^p\backslash \{\pmb{0}\}$ and $\bx=\bX\pmb{l}$. Then, $\bx\in \mathrm{span}(\bX)$. If $\theta_n= \measuredangle(\bx, \bM)$, then $\theta_n\geq \phi_n$ and 
\begin{align*}
	\cos \theta_n = \frac{\|\bM \bN^{+} \bM^{\top}\bx\|}{\|\bx\|}.
\end{align*}
Using Pythagoras theorem, $ \|\bM \bN^{+} \bM^{\top}\bx\|^2 + \|(\pmb{I}- \bM \bN^{+} \bM^{\top})\bx\|^2 = \|\bx\|^2$. Next, $\theta_n>\phi$  for all $n>n_0$ amounts to saying that 
\begin{align} \label{lemma:cod:with:cov:wlln:proof:eq1}
	\|(\pmb{I}- \bM \bN^{+} \bM^{\top})\bx\|^2 = (1-\cos^2 \theta_n) \|\bx\|^2 > (1-\cos^2 \phi) \|\bx\|^2.
\end{align}
Thus, $\|(\pmb{I}- \bM \bN^{+} \bM^{\top})\bx\|^2$ diverges if $\|\bx\|^2$ diverges. Note that $\|\bx\|^2$ diverges for any $\pmb{l}$ if and only if the smallest eigenvalue of $\bX^{\top}\bX$ diverges. Similarly, the minimum eigenvalue of $ \widetilde{\bX}^{\top}\widetilde{\bX}$ diverging is equivalent to $\|(\pmb{I}- \bM \bN^{+} \bM^{\top})\bx\|^2$ diverging for any $\pmb{l}$. 
\end{proof}

\subsection*{Proof of Lemma \ref{proposition:random:cov:properties}}

\begin{proof}
In the light of Theorem \ref{graph-GLM:wlln+clt}, it is sufficient to show $\phi_n\to\phi >0$ in probability as $n\to\infty$. Equivalently the angle between any column of $\bM$ and any column of $\bX$ converges to a strictly positive number. Let $\bM_i$ and $\bX_i$ denote the $i^{th}$ columns of $\bM$ and $\bX$ respectively. The angle between $\bM_i$ and $\bX_i$ is given by
\begin{align*}
\phi_{n,i} = \cos^{-1} (\frac{|\bM_i^{\top}\bX_i|}{\|\bM_i\| \|\bX_i\|}) = \cos^{-1} (|(\frac{\bM_i}{\|\bM_i\|})^{\top} \frac{\bX_i}{\|\bX_i\|}|).
\end{align*}
Note that ${\bM_i}/{\|\bM_i\|}$ and ${\bX_i}/{\|\bX_i\|}$ are normalized vectors. Let $\bM_i = (M_{i,1},\ldots,M_{i,n})^{\top}$ and $\bX_i = (X_{i,1},\ldots,X_{i,n})^{\top}$ denote the $i^{th}$ columns of $\bM$ and $\bX$ respectively. Then $M_{ij}\in\{-1,0,1\}$ and
\begin{align*}
({\bM_i}/{\|\bM_i\|})^{\top}({\bX_i}/{\|\bX_i\|})=
\frac{\sum_{j=1}^{n}M_{ij}X_{ij}}{\|\bM_i\| \|\bX_i\|}
\end{align*}

So using Lemma 3.2.4 in Vershynin (2020) we have $({\bM_i}/{\|\bM_i\|})^{\top}({\bX_i}/{\|\bX_i\|}) \to 0$ with probability one as $n\to\infty$. Therefore $\phi_{n,i}\to \pi/2$ with probability one as $n\to\infty$.
\end{proof}

\subsection*{Proof of Lemma \ref{lemma:cod:with:cov:clt}:}

\begin{proof}
Assume the same notations as in the proof of Lemma \ref{lemma:cod:with:cov:wlln}. Once again, relation \eqref{lemma:cod:with:cov:wlln:proof:eq1} holds and  $ \|(\pmb{I}- \bM \bN^{+} \bM^{\top})\bx\|^2/n > (1-\cos^2 \phi) \|\bx\|^2/n>0$ for all large $n$. Therefore,  $\mathrm{rank}(\lim_n \widetilde{\bX}^{\top}\widetilde{\bX}/n) = p$ and consequently
\begin{align} \label{lemma:cod:with:cov:clt:rank:condition}
    \mathrm{rank}(\lim_n \begin{pmatrix}
    \pmb{0} & (\pmb{I}-(\bN/n)(\bN/n)^{+})\pmb{W}\\
    \pmb{W}^{\top}(\pmb{I}-(\bN/n)^{+}(\bN/n)) & \widetilde{\bX}^{\top}\widetilde{\bX}/n 
    \end{pmatrix})\geq p,
\end{align}
where $\pmb{W}= (\bM^{\top}\bX)/n$. 

From Lemma \ref{cov:insure}, Condition \ref{Con(Rate+CLT)} gives $\mathrm{rank}(\lim_n\bN/n)=K-1$. So using Masarglia and Styan (1974, Theorem 19) and \eqref{lemma:cod:with:cov:clt:rank:condition}, we get  $\mathrm{rank}(\pmb{\Sigma}) \geq K+p-1$. Next, note that  $\mathrm{rank}(\bM,\bX)\leq K+p-1$ for all $n$, and hence $\mathrm{rank}(\pmb{\Sigma}) \leq K+p-1$. Combining previous two statements, we get $\mathrm{rank}(\pmb{\Sigma}) = K+p-1$.
\end{proof}

\subsection*{Proof of Theorem \ref{graph-GLM:wlln+clt}:}

 \begin{proof}
 \begin{enumerate}
     \item Under the stated assumptions the expectation of the LSE is
		\begin{align*}
			\mathbb{E}\begin{pmatrix}
				\widehat{\bmu}\\ \widehat{\pmb{\beta}}
			\end{pmatrix} = (\pmb{H}^{\top} \pmb{H})^{+} \pmb{H}^{\top} \pmb{H} \begin{pmatrix}
			{\bmu}\\ {\pmb{\beta}}
		\end{pmatrix}.
		\end{align*}
		We have $(\b1^{\top},\pmb{0}^{\top})^{\top}\in \mathrm{ker}(\pmb{H}^{\top} \pmb{H})$ and $(\b1^{\top},\pmb{0}^{\top}) ({\bmu}^{\top}, {\pmb{\beta}}^{\top})^{\top}=0$, so $ ({\bmu}^{\top}, {\pmb{\beta}}^{\top})^{\top} \in \mathrm{im} (\pmb{H}^{\top} \pmb{H})$. Consequently $(\pmb{H}^{\top} \pmb{H})^{+} \pmb{H}^{\top} \pmb{H} ({\bmu}^{\top}, {\pmb{\beta}}^{\top})^{\top} = ({\bmu}^{\top}, {\pmb{\beta}}^{\top})^{\top}$.
    \item Observe that 
     \begin{align*}
          \mathbb{V}ar\begin{pmatrix}
             \widehat{\bmu}\\ \widehat{\pmb{\beta}}
         \end{pmatrix}= 
         (\pmb{H}^{\top} \pmb{H})^{+}\sigma^2 \b1.
     \end{align*}
     Under the assumption that the second smallest eigenvalue of $\pmb{H}^{\top} \pmb{H}$ diverges, using arguments similar to those in the proof of Theorem \ref{Thm-graph.WLLN}, we obtain $(\pmb{H}^{\top} \pmb{H})^{+}\to\pmb{O}$ as $n\to\infty$. Consequently $\mathbb{V}ar((\widehat{\bmu}^{\top},\widehat{\pmb{\beta}}^{\top})^{\top}\to\pmb{O}$.  We already have $\mathbb{E} (\widehat{\bmu}, \widehat{\pmb{\beta}})= ({\bmu}, {\pmb{\beta}})$. Therefore, we conclude that $\widehat{\pmb{\beta}}\to\pmb{\beta}$ and $\widehat{\bmu}\to\bmu$ in probability as $n\to\infty$.
     \item The proof is obvious using arguments similar to those in the proof of Theorem \ref{Thm-AS}.
     \item Under the stated assumptions we have 
 \begin{align*}
 	\begin{pmatrix}
 		\widehat{\bmu}\\ \widehat{\pmb{\beta}}
 	\end{pmatrix}
 	= \begin{pmatrix}
 		\bN & \bM^{\top} \bX\\
 		\bX^{\top} \bM & \bX^{\top}\bX
 	\end{pmatrix}^{+}
 	\begin{pmatrix}
 		\bM^{\top} \\ \bX^{\top}
 	\end{pmatrix} \bY = \begin{pmatrix}
 	{\bmu}\\ {\pmb{\beta}}
 \end{pmatrix} + \begin{pmatrix}
 	\bN & \bM^{\top} \bX\\
 	\bX^{\top} \bM & \bX^{\top}\bX
 \end{pmatrix}^{+}
\begin{pmatrix}
\bM^{\top} \\ \bX^{\top}
\end{pmatrix} \pmb{\epsilon}.
 \end{align*}
Rearranging the terms we obtain
\begin{align*}
	\sqrt{n}(\begin{pmatrix}
		\widehat{\bmu}\\ \widehat{\pmb{\beta}}
	\end{pmatrix} - \begin{pmatrix}
		{\bmu}\\ {\pmb{\beta}}
	\end{pmatrix}) 
	= (\frac{1}{n}\pmb{H}^{\top} \pmb{H})^{+} \frac{1}{\sqrt{n}} \pmb{H}^{\top}\pmb{\epsilon}.
\end{align*}
Under our assumptions $ (\pmb{H}^{\top} \pmb{H}/n)^{+} \to \pmb{\Sigma}^{+} $ as $n\to\infty$. Observe that for $\pmb{a}\not\in \mathrm{ker}(\pmb{H})$ we have
$$\pmb{a}^{\top}\pmb{H}^{\top} \pmb{\epsilon} /\sqrt{n}= \sum_{i=1}^{n}d_i\epsilon_i,$$ 
where $d_i$ and $\epsilon_i$ are the $i^{th}$ components of $\pmb{a}^{\top}\pmb{H}^{\top} /\sqrt{n}$ and $\pmb{\epsilon}$ respectively. Further note that $d_i=\pmb{a}^{\top} \pmb{h}_i$ where $\pmb{h}_i$ is the $i^{th}$ row of  $\pmb{H}$. 
Using the Rayleigh's theorem (e.g., Horn and Johnson 2007, Theorem 4.2.2, p. 234), we have
\begin{align*}
\sum_{i=1}^{n} d_i^{2} =~
 \pmb{a}^{\top} (\pmb{H}^{\top} \pmb{H}) \pmb{a} 
\geq~ \pmb{a}^{\top}\pmb{a}\, \lambda_2(\pmb{H}^{\top} \pmb{H}).
\end{align*}
Now by the Cauchy--Schwartz inequality
\begin{align*}
	d_i^{2} = (\pmb{a}^{\top} \pmb{h}_i)^{2} \leq \pmb{a}^{\top}\pmb{a}\, \max_{i} \pmb{h}_i \pmb{h}_i^{\top}.
\end{align*}
Consequently using Condition \ref{cond:with:cov:clt+hajek+sidak} we have 
\begin{align*}
	\max_{1\leq i\leq n}\frac{d_i^{2}}{\sum_{i=1}^{n} d_i^{2}} \leq 
	\frac{ \max_{i} \pmb{h}_i \pmb{h}_i^{\top} }{\lambda_2(\pmb{H}^{\top} \pmb{H})} \to 0 \text{ as } n\to\infty.
\end{align*}
Therefore, applying the Hajek--Sidak CLT (e.g., Sen and Singer 1994, Theorem 3.3.6, p. 119) and Cramer-Wold device we establish the asymptotic normality of $\pmb{H}^{\top}\pmb{\epsilon}/{\sqrt{n}}$. Next, observe that $\mathbb{V}ar (\pmb{H}^{\top}\pmb{\epsilon}/{\sqrt{n}}) =\sigma^2 (\pmb{H}^{\top} \pmb{H}/n)$ and hence 
\begin{align*}
	\sqrt{n}(\widehat{\pmb{\theta}} - {\pmb{\theta}}) \Rightarrow \mathcal{N}_{K+p} (\pmb{0}, \sigma^2\, \pmb{\Sigma}^{+}).
\end{align*}
 \end{enumerate}
 \end{proof}

\subsection{Proof of Theorem \ref{misspecified:model:normal:errors:thm}:}
\begin{proof}

First observe that given normal IID errors the density under the true model, $f_t$, is the density of a $\mathcal{N} (\bM\bmu_{t}+ \bX\pmb{\beta}, \sigma^2 \pmb{I}_n)$ RV and the density of the misspecified model, $f_m$, is that of $\mathcal{N} (\bM\bmu_{m}, \sigma^2 \pmb{I}_n)$ RV. It follows that the AKL is: 
\begin{small}
\begin{align}
\mathrm{AKL}(t,m) 
=&~\frac{1}{2n\sigma^2}\mathbb{E}_{f_t} (-(\bY-\bM\bmu_t-\bX\pmb{\beta})^{\top}(\bY-\bM\bmu_t-\bX\pmb{\beta}) + (\bY-\bM\bmu_m)^{\top}(\bY-\bM\bmu_m))\nonumber\\
=&~\frac{1}{2n\sigma^2}(\bM(\bmu_t-\bmu_m)+\bX\pmb{\beta})^{\top}(\bM(\bmu_t-\bmu_m)+\bX\pmb{\beta})\nonumber\\
=&~ \frac{1}{2n\sigma^2}((\bmu_t-\bmu_m)^{\top}\bN(\bmu_t-\bmu_m)+ 2\pmb{\beta}^{\top}\bX^{\top} \bM(\bmu_t-\bmu_m)+\pmb{\beta}^{\top}\bX^{\top}\bX\pmb{\beta}).\label{AKL:normal:errors}
\end{align}
\end{small}
It is now immediate that if $\bX^{\top}\bM/n =o(1)$ then the second term in \eqref{AKL:normal:errors} is negligible as $n\to\infty$ and therefore for a fixed value of $\bmu_t$ the value of $\mathrm{AKL}(t,m)$ is minimized, as a function of $\bmu_m$, when $\bmu_m=\bmu_t+o(1)$. This proves Part 1.  

\medskip

When the conditions of Theorem \ref{graph-GLM:wlln+clt} hold, so do those of Theorem \ref{Thm-LST}. Adapting the calculation in the proof of Theorem \ref{Thm-LST} we find that
\begin{align*}
\frac{1}{\sqrt{n}} (S_{ij}-n_{ij}(\mu_{t,i}-\mu_{t,j})- \sum_{k=1}^{n_{ij}} 
\bx_{ijk}^{\top}\pmb{\beta})\Rightarrow
 \mathcal{N}\left( 0,\sigma ^{2}\theta _{ij}^*\right)
\end{align*}
and consequently 
\begin{equation*}
\frac{1}{\sqrt{n}}( \bS-\bN\bmu_t- \bM^{\top}\bX\pmb{\beta})
\Rightarrow \mathcal{N}_{K}( \pmb{0},\sigma ^{2}\pmb{\Theta}) )
\end{equation*}
Premultiplying by $n\bN^{+}$ and using Lemma \ref{cov:insure} and the definition of $\pmb{b}$ we find that
\begin{equation*}
\sqrt{n}( \widehat{\bmu}_{m}-\bmu -\pmb{b})
\Rightarrow \mathcal{N}_{K}( \pmb{0},\sigma ^{2}\pmb{\Theta}^{+}).
\end{equation*}
Now if $\bM^{\top}\bX/\sqrt{n}=o(1)$ then $\sqrt{n}\,\pmb{b}= (n\bN^{+}) (\bM^{\top} \bX/\sqrt{n})\pmb{\beta}=O(1)\times o(1)=o(1)$ and consequently $\sqrt{n}(\widehat{\bmu}_m-{\bmu}_t)\Rightarrow \mathcal{N}_{K}( \pmb{0},\sigma ^{2}\pmb{\Theta}^{+})$. This completes the proof of Part 2. 

\medskip

The normal equations, cf. \eqref{eq:18}, can be rewritten as 
\begin{equation} \label{eq:18:modified}
\begin{pmatrix}
    \bN/n & \bM^{\top}\bX/n\\
    \bX^{\top} \bM/n & \bX^{\top}\bX/n
\end{pmatrix}
\pmb{\theta} =
\begin{pmatrix}
    \pmb{S}/n \\ \bX^{\top}\bY/n
\end{pmatrix}.
\end{equation}
Let $\bM_i = (M_{i,1},\ldots,M_{i,n})^{\top}$ and $\bX_i = (X_{i,1},\ldots,X_{i,n})^{\top}$ denote the $i^{th}$ columns of $\bM$ and $\bX$ respectively. Note that $n_i$ of elements of the vector $\bM_i$ are equal to $1$ or $-1$ and rest are zeros. So $\bM_i^{\top}\bX_j/n={n_i}/{n} ({1}/{n_i}\sum_{k=1}^{n_i}M_{i,k} X_{j,k})$. Since the components in $\bX_j$ are IID RVs with mean zero we have $({1}/{n_i})\sum_{k=1}^{n_i} M_{i,k} X_{j,k}\to 0$ in probability as $n\to\infty$.  Consequently $\bM^{\top}\bX/n= o_p(1)$ which plugged into \eqref{eq:18:modified} results in the solutions 
$$\widehat{\bmu}_t = \bN^{+}\bS + ~\pmb{\xi}_{1n} \text{ and }\widehat{\pmb\beta}~= (\bX^{\top}\bX)^{-1}\bX^{\top}\bY + ~ \pmb{\xi}_{2n}$$ 
for some random sequences $\{\pmb{\xi}_{1n}\}$ and $\{\pmb{\xi}_{2n}\}$ converging to $\bzero$ in probability. The fact that $\sqrt{n}\,\pmb{\xi}_{1n}=O_p(1)$ and $\sqrt{n}\,\pmb{\xi}_{2n}=O_p(1)$ follows from Part 2 above. Finally using $\bM^{\top}\bX/n= o_p(1)$ and $\lim_{n\to\infty}\bX^{\top}\bX/n=\pmb{\Sigma}_{\bX}$ in probability gives $\pmb{\Sigma}= \mathrm{BlockDiag}(\pmb{\Theta},\,\pmb{\Sigma}_{\bX})$. This completes the proof of Part 3.
\end{proof}

\subsection*{Proof of Theorem \ref{Thm-LST.LargeK}:}

For $1\leq i<j\leq K$ we have $n_{ij}=1$ so we drop third subscript in the following proof.

\begin{proof}
By Theorem \ref{Thm-UniqueE} we have that $\widehat{\bmu}=\bN^{+}\bS$. Here $\bN$ is a $K\times K$ matrix with elements
\begin{equation*}
			     \bN=\left( n_{ij}\right) =\left\{ 
				\begin{array}{ccc}
					K-1 & \text{if} & i=j \\ 
					-1 & \text{if} & i\neq j%
				\end{array}
				\right.
\end{equation*}
and $\bS=(S_{1},\ldots ,S_{K})^\top$ where $S_{i}=\sum_{j\neq i}Y_{ij}$ for $i=1,\ldots ,K$. It is easy to verify that 
\begin{equation*}
	\bN^{+}=\frac{1}{K^{2}}\bN=\left( n_{ij}^{+}\right) = \left\{ 
 \begin{array}{ccc}
    \frac{K-1}{K^{2}} & \text{if} & i=j \\ 
    -\frac{1}{K^{2}} & \text{if} & i\neq j%
 \end{array}
    \right. .
\end{equation*}
Thus, 
\begin{equation*}
\widehat{\mu}_i=\frac{K-1}{K^2}S_i-\sum_{j\neq i}\frac{1}{K^2}S_j
 = \frac{K-1}{K^2} \sum_{j\neq i}Y_{ij} - \sum_{j\neq i}\frac{1}{K^2} \sum_{k\neq j}Y_{jk} = \sum_{j\neq i}Y_{ik} (\frac{K-1}{K^2} + \frac{1}{K^2}) = \frac{1}{K}S_i.
\end{equation*}
Further, $\mathbb{V}ar(Y_{ij})=\sigma ^{2}$ for all $i,j$ implies $\mathbb{V}ar(K^{-1}S_i)=(K-1)\sigma ^{2}/K^{2}$ and  $\mathbb{C}ov(K^{-1}S_i,K^{-1}S_j)=-\sigma ^{2}/K^{2}$. It follows that
\begin{equation*} 
\sqrt{K}\{(\widehat{\mu}_{i_{1}},\ldots ,\widehat{\mu}_{i_{k}})^\top-(\mu
_{i_{1}},\ldots ,\mu _{i_{k}})^\top\}\Rightarrow \mathcal{N}_{k}(\pmb{0%
},\sigma ^{2}\pmb{I}_{k}).
\end{equation*}
Thus we obtain \eqref{RR.CLT}. 

Next, note that $\widehat{\mu}_{i}-\mu_{i}= \sum_{j\neq i} \epsilon_{ij}/K$. Since $\epsilon_{ij}$'s are subgaussian random variables with mean zero and parameter $\tau^2$ (where $\tau^2\ge \sigma^2$) the quantity $\widehat{\mu}_{i}-\mu_{i}$ is subgaussian with mean zero and parameter $\tau^2/K$. It follows that for any $t\in\mathbb{R}$
\begin{align}\label{eq:subexp:mgf:bound}
\mathbb{E}(\exp{(t(\widehat{\mu}_{i}-\mu_{i}))}) \leq \exp(\frac{t^2\tau^2}{K}).
\end{align}
Now for any $t>0$ we have 
\begin{align*}
    &\exp{(t\mathbb{E}(\max_{1\leq i\leq K}|\widehat{\mu}_{i}-\mu_{i}|))}
    \leq \mathbb{E}(\exp{(t\max_{1\leq i\leq K}|\widehat{\mu}_{i}-\mu_{i}|)})
    \leq \sum_{i=1}^K \mathbb{E}(\exp{(t|\widehat{\mu}_{i}-\mu_{i}|)})\\
     &\leq \sum_{i=1}^K \mathbb{E}(\exp{(t(\widehat{\mu}_{i}-\mu_{i}))}) + \sum_{i=1}^K \mathbb{E}(\exp{(-t(\widehat{\mu}_{i}-\mu_{i}))})
    \leq 2K \exp(\frac{t^2\tau^2}{2K}) ,
\end{align*}
the first inequality is by Jensen, the second is a property of the maximum function and the last inequality uses equation \eqref{eq:subexp:mgf:bound}. 

Setting $t=\sqrt{2K\log(K)}$ and taking the logarithm in the display above we find that
\begin{align*}
\mathbb{E}(\max_{1\leq i\leq K}|\widehat{\mu}_{i}-\mu_{i}|) \leq 
\frac{\log(2K)+\tau^2 \log(K)}{\sqrt{2K\log(K)}}= O(\sqrt{\frac{{\log (K)}}{K}}).
\end{align*}
for large enough $K$.
\end{proof}

\subsection*{Proof of Theorem \ref{inf:lse:properties}:}
\begin{proof}
{ 
Condition \ref{cond:inf:s1} guarantees that the comparison graph is connected with probability one as $K \rightarrow \infty$ (Erdos and Reyni 1961) so the LSE is well defined. The Theorem is established by showing that: $(a)$ the limit \eqref{random:clt} holds for an estimator $\tilde{\bmu}^{\top}=(S_{1}/N_{11},\ldots, S_{1}/N_{KK})$; and $(b)$ $\tilde{\bmu}$ and $\widehat{\bmu}$ are asymptotically equivalent, i.e., $\sqrt{Kp_{K}}(\widehat{\mu}_i-\tilde{\mu}_i)=o_{p}(1)$ for any $i=1,\ldots,K$. 

Note that for $i\in\{1,2,\ldots,K\}$ the diagonal element $N_{ii}$ of $\bN$ is a Binomial RV with parameters $(K-1,p_K)$ so $N_{ii}\to\infty$ as $K\to\infty$. Furthernote that we can express $S_{i}$ as the random sum $\sum_{j\neq i}Y_{ij}B_{ij}$. Using Lemma 2.9 in Finkelstein et al. (1994), who studied random sums with non--random normings, we have 
\begin{align*}
\sqrt{N_{ii}}(\tilde{\mu}_i - \mu_i)=\sqrt{N_{ii}}(\frac{1}{N_{ii}}S_i - \mu_i) \Rightarrow \mathcal{N}(0,\sigma^2),
\end{align*}
for all $i\in\{1,2,\ldots,K\}$ whenever $K\to\infty$. Furthermore for $1 \le i\neq j \le K$ we have
\begin{align*}
\mathbb{C}ov(\sqrt{N_{ii}}\tilde{\mu}_i,\sqrt{N_{jj}}\tilde{\mu}_j)=&~\mathbb{C}ov(\sqrt{N_{ii}}(\frac{1}{N_{ii}}S_i - \mu_i), \sqrt{N_{jj}}(\frac{1}{N_{jj}}S_j - \mu_j))\\
=&~ \mathbb{E}(\frac{1}{\sqrt{N_{ii}}}(S_i - N_{ii}\mu_i) \frac{1}{\sqrt{N_{jj}}}(S_j - N_{jj}\mu_j))\\
=&~ \mathbb{E}(\frac{1}{\sqrt{N_{ii}}} \frac{1}{\sqrt{N_{jj}}}\mathbb{E}((S_i - N_{ii}\mu_i)(S_j - N_{jj} \mu_j)\,|\, N_{ii},N_{jj}))\\
=&~ \mathbb{E}(\frac{1}{\sqrt{N_{ii}}} \frac{1}{\sqrt{N_{jj}}}(-\sigma^2 \mathbb{I}(B_{ij}=1))) \to 0.
\end{align*}
Combining the previous two statements and using the fact that $N_{ii}/(Kp_K) \to 1$ as $K\to\infty$, we obtain 
\begin{equation} \label{random:clt:tilde:mu}
{\sqrt{Kp_K}}((\tilde{\mu}_{i_1},\ldots,\tilde{\mu}_{i_r})-(\mu_{i_{1}},\ldots ,\mu_{i_{r}}))^\top\Rightarrow \mathcal{N}_{r}\left( \pmb{0},\sigma^2 \pmb{I}\right),
\end{equation}
establishing step $(a)$ of the proof.

Next, we prove step $(b)$. Let $\bN_{\#}$ denote the Laplacian of the complete graph with $K$ items and $\|\cdot\|_s$ the spectral norm of matrix. Using Theorem 1.1 in Meng and Zheng (2010) we have
\begin{align} \label{pertub:bound}
\|K\bN_{\#}^{+}/p_K-K\bN^{+}\|_s \leq \sqrt{2}\, \|K\bN_{\#}^{+}\|_s\, \|K\bN^{+}/p_K\|_s\, \|\bN/K-p_K\bN_{\#}/K\|_s.
\end{align}
Note that the nonzero eigenvalues of $p_K\bN_{\#}/K$ are equal to $p_K$ and those of $\bN/K$ are equal to $p_K + O_p(\sqrt{\log(K)/K})$ (see Juhasz 1991; Jiang 2012). So using Weyl's theorem, 
$\lambda_{\max}(\bN/K-p_K\bN_{\#}/K)= O_p(\sqrt{\log(K)/K})$ and $\lambda_{\min}(\bN/K-p_K\bN_{\#}/K)= O_p(\sqrt{\log(K)/K})$ as $K\to\infty$, and hence $\|\bN/K-p_K\bN_{\#}/K\|= O_p(\sqrt{\log(K)/K})$ as $K\to\infty$. Consequently 
\begin{align} \label{pertub:convergence}
\|K\bN_{\#}^{+}/p_K-K\bN^{+}\|_s \leq O_p(\sqrt{\log(K)/K}),
\end{align}	
as $K\to\infty$. It now follows from \eqref{pertub:convergence} that
$$
K\|\bN_i^{+} - \bN_{\# i}^{+}/p_K\| \leq O_p(\sqrt{\log(K)/K}), \text{ as } K\to\infty,
$$
where $\bN_i^{+}$ and $\bN_{\# i}^{+}$ denote the $i^{th}$ row of  $\bN^{+}$ and $\bN_{\#}^{+}$ respectively. Now using the fact that $K\bN_{\#}^{+} = \bN_{\#}/K$ we find that
\begin{align}\label{mu:si}
\widehat{\mu}_i = \frac{1}{Kp_K}S_i + o_p(\frac{1}{K})= \tilde{\mu}_i+ o_p(\frac{1}{K}) \text{ as } K\to\infty.
\end{align}
Using \eqref{random:clt:tilde:mu} and \eqref{mu:si} we obtain \eqref{random:clt}. Further using arguments similar to those in the proof of Theorem \ref{Thm-LST.LargeK} we obtain \eqref{inf:ER:wlln}.
}
\end{proof}


\begin{thebibliography}{99}

\bibitem{} Atkinson, A., Donev, A., and Tobias, R. (2007). \textit{Optimum experimental designs, with SAS} (Vol. 34). OUP Oxford.

\bibitem{} Ailon N, Charikar M and Newman A (2008). Aggregating inconsistent information: ranking and clustering. \textit{Journal of the ACM}, 55: 23--49.

\bibitem{} Alvo M and Yu PLH (2014). \textit{Statistical Methods for Ranking Data}. Springer.

\bibitem{} Balinski M and Lariki R (2010). \textit{Majority Judgement: Measuring, Ranking and Electing}. MIT Press.

\bibitem{} Bapat RB (2013). On minors of the compound matrix of a Laplacian. 
\textit{Linear Algebra and its Applications,} 439: 3378--3386.

\bibitem{} Bapat RB (2010). \textit{Graphs and Matrices}. Springer.

\bibitem{} Barrow D, Drayer I, Elliott P, Gaut G and Osting B (2013). Ranking rankings: an empirical comparison of the predictive power of sports ranking methods. \textit{Journal of Quantitative Analysis in Sports}, 9: 187--202.

\bibitem{} Bierlaire, M. (1998). Discrete choice models. In Operations research and decision aid methodologies in traffic and transportation management (pp. 203-227). Berlin, Heidelberg: Springer Berlin Heidelberg.

\bibitem{} B\"ockenholt, U. (2006). Thurstonian-based analyses: Past, present, and future utilities. Psychometrika, 71, 615-629.


\bibitem{} Bradley, R. A., and Gart, J. J. (1962). The asymptotic properties of ML estimators when sampling from associated populations. Biometrika, 49(1/2), 205-214.

\bibitem{} Bradley, R. A., and Terry, M. E. (1952). Rank analysis of incomplete block designs: I. The method of paired comparisons. {\it Biometrika}, 39(3/4), 324-345. 

\bibitem{} Buhlmann H and Huber PJ (1963). Pairwise comparisons and ranking in tournaments. \textit{The Annals of Mathematical Statistics,} 34: 501-510.

\bibitem{} Casella, G., and Berger, R. L. (2021). \textit{Statistical inference}. Cengage Learning.

\bibitem{} Cattelan M (2012). Models for paired comparison data: a review with emphasis on dependent data. \textit{Statistical Science}, 27: 412--433.

\bibitem{} Chaiken S (1982). A combinatorial proof of the all minors matrix tree theorem. \textit{SIAM Journal of Algebraic Discrete Methods}, 3: 319--329.

\bibitem{} Chebotarev PV and Shamis EV (1998). On proximity measures for graph vertices. \textit{Automation and Remote Control}, 59: 1443--1459.

\bibitem{} Chen WK (1976). \textit{Applied Graph Theory, Graphs and Electrical Networks,} 2nd ed. North-Holland, New York.


\bibitem{} Claeskens G, Hjort NL (2008). \textit{Model Selection and Model Averaging}. Cambridge University Press.

\bibitem{} Cormen TH, Leiserson CE, Rivest RL, and Stein, C. (2022). {\it Introduction to Algorithms}, 4th ed. MIT press.


\bibitem{} Cremonesi P, Koren Y and Turrin R (2010). Performance of recommender algorithms on top-n recommendation tasks. \textit{Proceedings of the fourth ACM conference on Recommender systems, ACM}, 39--46.

\bibitem{} Csato L (2015). A graph interpretation of least squares ranking method. \textit{Social Choice and Welfare}, 44: 51--69.

\bibitem{} Cururingu M (2015). SYNC--RANK: Robust ranking, constrained ranking and rank aggregation via eigenvectors and SDP synchronization. arXiv preprint arXiv:1504.01070.

\bibitem{} Cvetkovic D, Rowlinson P and Simic S (2009). \textit{An Introduction to the Theory of Graph Spectra}. Cambridge University Press.

\bibitem{} David HA (1988). \textit{The Method of Paired Comparisons}. Hodder Arnold.

\bibitem{} Davis--Stober CP (2009). Analysis of multinomial models under inequality constraints: applications to measurement theory. \textit{Journal of Mathematical Psychology}, 53: 1--13.

\bibitem{} Erdos, P. and Renyi, A. (1961). On the strength of connectedness of a random graph. \textit{Acta Mathematica Hungarica}, 12(1), 261-267.

\bibitem{} Fahrmeir, L. and Kaufmann, H. (1985). Consistency and asymptotic normality of the maximum likelihood estimator in generalized linear models. \textit{Annals of Statistics}, 13(1), 342-368.

\bibitem{} Fan, J., Hou, J., and Yu, M. (2022). Uncertainty quantification of MLE for entity ranking with covariates. arXiv preprint arXiv:2212.09961

\bibitem{} Finkelstein, M., Kruglov, V. M., and Tucker, H. G. (1994). Convergence in law of random sums with non-random centering. \textit{Journal of Theoretical Probability}, 7, 565-598.

\bibitem{} Furnkranz J and Hullermeier E (2010). \textit{Preference Learning}. Springer.

\bibitem{} Firth, D., and Turner, H. (2012). Bradley-Terry models in R: the BradleyTerry2 package. Journal of Statistical Software, 48(9).


\bibitem{} Ghosh, A., Boyd, S., and Saberi, A. (2008). Minimizing effective resistance of a graph. SIAM review, 50(1), 37-66.

\bibitem{} Ghosh, S. and Davidov, O. (2020). Graph-based estimators for paired comparison data. \textit{Journal of Statistical Planning and Inference}, 209, 1-11.


\bibitem{} Gould R (2012). \textit{Graph Theory}. Dover.

\bibitem{} Govan AY (2008). \textit{Ranking Theory with Application to Popular Sports}. Applied Mathematics Ph.D. Thesis, North Carolina State
University.

\bibitem{} Grasshoff U, Grossmann H, Holling H and Schwabe R (2004). \textit{Optimal designs for main effects in linear paired comparison models}. Journal of Statistical Planning and Inference, 126: 361--376.


\bibitem{} Han, R., Xu, Y., and Chen, K. (2022). A general pairwise comparison model for extremely sparse networks. \textit{Journal of the American Statistical Association}, 1-11.

\bibitem{} Han, R., Ye, R., Tan, C., and Chen, K. (2020). Asymptotic theory of sparse Bradley–Terry model. \textit{Ann. Appl. Probab}. 30(5), 2491-2515.

\bibitem{} Harville DA (2003). The selection or seeding of college basketball or football teams for postseason competition. \textit{Journal of
the American Statistical Association}, 98: 17--27.

\bibitem{} Horn RA and Johnson CR (2007). \textit{Matrix Analysis}. Cambridge University Press.

\bibitem{} Hsieh F, McAssey MP and McCowen B (2011). Computing a ranking network with confidence bounds from a graph based Beta random field. \textit{Proceedings of the Royal Society}, doi: 10.1098/rspa.2011.0268.

\bibitem{} Huber, P. J. (1963). Pairwise comparison and ranking: optimum properties of the row sum procedure. The annals of mathematical statistics, 511-520.

\bibitem{} Hung, K., and Fithian, W. (2019). Rank verification for exponential families. \textit{Annals of Statistics}. 47(2) 758-782.

\bibitem{} Hunter D R (2004). MM algorithms for generalized Bradley-Terry models. The annals of statistics, 32(1), 384-406.

\bibitem{} Janson S, Luczak T and Rucinski A (2000). \textit{Random Graphs}. Wiley.

\bibitem{} Jiang, T. (2012). Low eigenvalues of Laplacian matrices of large random graphs. Probability Theory and Related Fields, 153, 671-690.

\bibitem{} Juhasz, F. (1991). The asymptotic behaviour of Fiedler's algebraic connectivity for random graphs. Discrete mathematics, 96(1), 59-63.

\bibitem{} Kwiesielewicz M (1996). The logarithmic least squares and the generalized pseudoinverse in estimating ratios. \textit{European Journal of Operations Research}, 93: 611--619.

\bibitem{} Langville AM and Meyer CD (2012). \textit{Who's \#1?: The Science of Rating and Ranking}. Princeton University Press.

\bibitem{} Li, J., Guo, J. M., and Shiu, W. C. (2014). Bounds on normalized Laplacian eigenvalues of graphs. Journal of Inequalities and Applications, 2014(1), 1-8.

\bibitem{} Luce R.D. (1959). {\it Individual choice behavior} (Vol. 4). New York: Wiley.

\bibitem{} Luce R.D. (1977). Thurstone's discriminal processes fifty years later. Psychometrika, 42(4), 461-489.

\bibitem{} Marden JI (1995). \textit{Analyzing and Modelling Rank Data}. Chapman and Hall.

\bibitem{} Matsaglia, G., and PH Styan, G. (1974). Equalities and inequalities for ranks of matrices. \textit{Linear and multilinear Algebra}, 2(3), 269-292.

\bibitem{} Meng, L., and Zheng, B. (2010). The optimal perturbation bounds of the moore–penrose inverse under the frobenius norm. Linear algebra and its applications, 432(4), 956-963.

\bibitem{} Milliken, G. A., and Akdeniz, F. (1977). A theorem on the difference of the generalized inverses of two nonnegative matrices. Communications in Statistics-Theory and Methods, 6(1), 73-79.

\bibitem{} Moore, D. S. (1977). Generalized inverses, Wald's method, and the construction of chi-squared tests of fit. \textit{Journal of the American Statistical Association}, 72(357), 131-137.

\bibitem{} Mosteller F (1951) Remarks on the method of paired comparisons: The least squares solution assuming equal standard deviations and equal correlations. \textit{Psychometrika,} 16: 3--9.

\bibitem{} Nettleton, D. (2009). Testing for the supremacy of a multinomial cell probability. \textit{Journal of the American Statistical Association}, 104(487), 1052-1059.

\bibitem{} Oliveira, I. F. D., Zehavi, S., and Davidov, O. (2018). Stochastic transitivity: Axioms and models. \textit{Journal of Mathematical Psychology}, 85, 25-35.

\bibitem{} Osei, P. P., and Davidov, O. (2022). Bayesian linear models for cardinal paired comparison data. \textit{Computational Statistics \& Data Analysis}, 172, 107481.

\bibitem{} Pukelsheim F (2006). \textit{Optimal Design of Experiments}. SIAM.

\bibitem{} Puntanen, S., Styan, G. P., and Isotalo, J. (2011). Matrix tricks for linear statistical models: our personal top twenty.

\bibitem{} Rakočević, V. (1997). On continuity of the Moore-Penrose and Drazin inverses. Matematichki Vesnik, 49, 163-172.

\bibitem{} Regenwetter, M., Dana, J., and Davis-Stober, C. P. (2011). Transitivity of preferences. Psychological review, 118(1), 42.

\bibitem{} Satorra A and Neudecker H (2015). A theorem on the rank of a product of matrices with illustration of its use in goodness of fit testing. \textit{Psychometrika}, 80, 938--948.

\bibitem{} Sen A (1986). \textit{Social choice theory}. Handbook of mathematical economics.

\bibitem{} Sen, P. K., and Singer, J. M. (1994). Large sample methods in statistics: an introduction with applications (Vol. 25). CRC press.

\bibitem{} Serfling, R. J. (2009). Approximation theorems of mathematical statistics. John Wiley \& Sons.

\bibitem{} Shah NB, Balakrishnan S, Bradley J, Parekh A, Ramchandran K and  Wainwright MJ (2015). Estimation from pairwise comparisons: sharp minimax bounds with topology dependence. \textit{Journal of Machine Learning Research,} 17: 1--47.

\bibitem{} Silvapulle MJ and Sen PK (2005). \textit{Constrained Statistical Inference}. John Wiley \& Sons.

\bibitem{} Simons G and Yao YC (1999). Asymptotics when the number of parameters tends to infinity in the Bradley--Terry model for paired comparisons. \textit{The Annals of Statistics}, 27: 1041-1060.

\bibitem{} Slutzki G, and Volij O (2005) Ranking participants in generalized tournaments. \textit{International Journal of Game Theory}, 33:255--270.

\bibitem{} Speckman, P. L., Lee, J., and Sun, D. (2009). Existence of the MLE and propriety of posteriors for a general multinomial choice model. Statistica Sinica, 731-748.

\bibitem{} Stern, S. E. (2011). Moderated paired comparisons: a generalized Bradley–Terry model for continuous data using a discontinuous penalized likelihood function. Journal of the Royal Statistical Society Series C: Applied Statistics, 60(3), 397-415.

\bibitem{} Thurstone LL (1927). A law of comparative judgment. \textit{Psychology Review}, 34: 273-286.

\bibitem{} Vershynin, R. (2020). High-dimensional probability. University of California, Irvine.

\bibitem{} Xie M, Singh K and Zhang CH (2009). Confidence intervals for population ranks in the presence of ties. \textit{Journal of the American Statistical Association}, 104: 775--787. 

\bibitem{} Xu Q, Xiong J and Huang Q (2014). Robust statistical ranking: theory and algorithms. \textit{arXiv preprint} arXiv:1408.3467.

\bibitem{} Yan, T. (2016). Ranking in the generalized Bradley–Terry models when the strong connection condition fails. Communications in Statistics-Theory and Methods, 45(2), 340-353.

\bibitem{} Zimmerman, D. L. (2020). Linear model theory: With examples and exercises. \textit{Springer Nature}.

\end{thebibliography}
\end{document}